\theoremstyle{definition}
\newtheorem{definition}{Definition}[section]
\newtheorem{theorem}{Theorem}[section]
\newtheorem{lemma}[theorem]{Lemma}
\numberwithin{equation}{section}
\newcommand{\D}{\mathrm{d}}
\newcommand*{\medsubset}{\mathbin{\scalebox{1.1}{\ensuremath{\subset}}}}%
\newcommand*{\medcap}{\mathbin{\scalebox{1.3}{\ensuremath{\cap}}}}%
\newcommand*{\medcup}{\mathbin{\scalebox{1.3}{\ensuremath{\cup}}}}%
\title{Towards the Parametric Renormalization of the S-matrix - I} 
\author[S]{Pinaki Banerjee}\emailAdd{pinaki.banerjee@iitgn.ac.in}
\author[\Theta]{\!\!, Harsh}\emailAdd{harsh22@cmi.ac.in}
\author[\Theta]{\!\!, Alok Laddha}\emailAdd{aladdha@cmi.ac.in}
\affiliation[S]{Indian Institute of Technology Gandhinagar, Department of Physics, Gujarat 382355, India}
\affiliation[\Theta]{Chennai Mathematical Institute, Siruseri, SIPCOT, Chennai 603103}
\abstract{Zimmermann's forest formula is the corner stone of perturbative renormalization in QFT.  By renormalizing individual Feynman graphs, it generates the UV finite S-matrix. This approach to renormalization makes the graph and all its forests  center pieces in the theory of renormalization.  On the other hand  the positive geometry program delegate the role of Feynman graphs as secondary to the amplitude itself, which are generated by canonical forms associated to positive geometries.  State of the art in this program is the convergence of S-matrix theory in local QFTs and string theory as the scattering amplitudes in QFT arise as integrals over certain moduli spaces. These integrals are known as curve integrals. For theories such as $\textrm{Tr}(\Phi^{3})$ theory with massive colored scalars, these integrals are divergent in the UV and have to be regularized. It is then natural to ask if there is a ``forest-like formula'' for these integrals which produce a renormalized amplitude without needing to  explicitly invoke the forests associated to divergent subgraphs. In this paper, we initiate such a program by deriving forest-like formula for planar massive $\textrm{Tr}(\Phi^{3})$ amplitudes in $D = 4$ dimensions.  Our analysis relies on the insightful manifestation of the forest formula derived by Brown and Kreimer in  \cite{Brown:2011pj}, that lead us to a definition of ``tropical counter-term'' for the bare amplitude.}
\begin{document}

\setcounter{tocdepth}{2}
\maketitle\setcounter{page}{2}
\pagebreak

\makeatletter
\g@addto@macro\bfseries{\boldmath}
\makeatother

%%%%%%%%%%%%%%%%%%%%%%%%%%%%%%%%
%%%%%%%%%%%%%%%%%%%%%%%%%%%%%%%%
\section{Introduction} 
S-matrix in QFT comes in many avatars. These include, (a) perturbative series in terms of Feynman diagrams, (b)  on-shell functions that can be constructed recursively \cite{Britto:2005fq}, (c) integral of a top form over moduli space of punctured Riemann surface, as in the CHY (Cachazo, He, Yuan) formalism \cite{Cachazo:2013hca},  (d)  volume of certain convex polytope in momentum twistor space \cite{Hodges:2009hk, Arkani-Hamed:2010wgm, Arkani-Hamed:2013jha}, (e)  solution to the S-matrix bootstrap constraints and, (f)  in the case of AdS$_{d+1}$,  representation in terms of conformal correlator of $\textrm{CFT}_{d}$.  However  behind these appearances, all of which are derived from completely disparate starting points, lies a class of functions all of which have several very specific analyticity properties that reflect unitarity and locality of the S-matrix.  In fact, as shown in the remarkable paper \cite{Douglas:2022ynw}, renormalized scattering amplitudes of perturbative quantum field theory can be classified as tame functions defined with respect to so called O-minimal structures. 

Positive geometry program of the S-matrix is built on a premise that the specificity of an analytic S-matrix is due to the fact that behind the S-matrix resides a certain class of geometries which exist in the kinematic space of external data (momenta and spin of the particles). These geometries are a specific class of semi-algebraic domains in the kinematic space that admit a unique canonical form.  Canonical form of every positive geometry discovered so far in the kinematic space is either a tree-level amplitude or loop level integrand of \emph{some} local and unitary QFT! The geometries are known as positive geometries.

 In the case of ${\cal N} = 4$ super Yang Mills theory, it has now been established the geometry in question is the amplituhedron whose combinatorial and geometric properties define S-matrix of the theory. 
 
In the world of non-supersymmetric QFTs, positive geometries have already given us a radically new perspective on the analytic S-matrix since the seminal work of Arkani-Hamed, Bai, He and Yan (ABHY)  in 2017, \cite{Arkani-Hamed:2017mur}.  It is now becoming clearer that the  tree-level amplitudes and one loop integrands of a large class of scalar QFTs are canonical forms corresponding to an increasing catalogue of positive geometries that include amplituhedron \cite{Arkani-Hamed:2013jha}, the associahedron \cite{Stasheff1, Stasheff2, Arkani-Hamed:2017mur},  the accordiohedra \cite{manneville2017,Jagadale:2019byr,Raman:2019utu}, $\widehat{D}_{n}$ polytope \cite{afpst,Arkani-Hamed:2019vag,Jagadale:2020qfa}, and many others such as the recently discovered positive geometries known as cosmohdra, \cite{Arkani-Hamed:2024jbp}. For a review of the program, we refer the reader to \cite{Herrmann:2022nkh, Ferro:2020ygk} and for a mathematicians perspective which has been indispensable to many of the developments in this program,  \cite{Arkani-Hamed:2017tmz,  Bazier-Matte:2018rat, PPPP,Brown:2025jjg}. 

The class of positive geometries which generate scattering amplitudes of flat space QFTs are  defined via combinatorial structure associated to dissections of a disc with marked points, \cite{PPPP}. Some of the ``hard postulates'' of the analytic S-matrix program, namely \emph{locality} and \emph{unitarity} then emerge as corollaries due to facet structures of these geometries and their canonical forms. 

In the past five years in fact, we have come to realize a more striking universal aspect that underlying an entire class of tree-level S-matrices with scalar particles. That a single positive geometry, viz. the associahedron is a master polytope for a large class of tree-level amplitudes that range from $\textrm{Tr}(\Phi^{3})$ theory \cite{Arkani-Hamed:2017mur}, $\textrm{Tr}(\Phi^{p})$ theories \cite{Banerjee:2018tun, Raman:2019utu}, multi-scalar field theories \cite{Jagadale:2019byr, Jagadale:2021iab, Jagadale:2022rbl, Jagadale:2023hjr}, effective field theories \cite{Jagadale:2021iab}) and even non-linear sigma model (NLSM) \cite{Arkani-Hamed:2024nhp}. 
Although these developments clearly show that the the classification of functions in the kinematic space which correspond to local and unitary amplitudes is highly conceivable in this program,  there are several bottle-necks to overcome so that some of the most intricate questions posed in the analytic S-matrix program can be analyzed with a new perspective.  
For e.g. until recently, the program was most developed for tree-level amplitudes and one-loop integrands and, even in these cases,  it was not clear if positive geometries naturally defined tree-level amplitudes of  gauge theories such as Yang-Mills theory and effective field theories such as Non-linear Sigma model (NLSM). 

Over the past two years we have seen striking advances in answers to some of these questions for amplitudes dressed by color factors. In a series of seminal papers \cite{Arkani-Hamed:2023lbd}, \cite{Arkani-Hamed:2023mvg} and \cite{Arkani-Hamed:2024vna},  the authors showed that canonical forms defined by positive geometries can in fact be re-expressed as  integrals over a space of  \emph{``global Schwinger parameters''}. Unlike Schwinger parameters which are valued in ${\bf R}^{+}$, a global Schwinger parameter is valued in ${\bf R}$. In the case of tree-level amplitudes and one-loop integrals, these ``Schwinger-like''  integrals are entirely fixed by the very same data that realizes positive geometries as convex polytopes in the kinematic space. However the most striking aspect of this new perspective is that it has led us to a new definition of amplitudes for colored scalar theories at \emph{all order} in loop and topological `t hooft expansion, as opposed to the canonical form on the positive geometries which have hitherto been \emph{explicitly} analyzed only in tree-level and one-loop cases. (It should however be noted that we expect positive geometries to exist at all orders in the loop expansion thanks to the upcoming work by Arkani-Hamed, Frost, Plamondon, Salvatori and Thomas \cite{Surfaceohedra}).  This formulation of S-matrix is known as curve integral formalism. It has now been shown that the curve integral of $\textrm{Tr}(\Phi^{3})$ theory can be ``deformed'' to generate amplitudes of NLSM \cite{Arkani-Hamed:2024yvu} as well as Yang-Mills amplitudes as shown in \cite{Arkani-Hamed:2023swr}, \cite{Arkani-Hamed:2023jry}, \cite{Arkani-Hamed:2024tzl}, \cite{Backus:2025njt} and \cite{Laddha:2024qtn}.

The curve integrals are not derived from any principles that canonically express them in terms Feynman graphs, but they can always be written as sum over Feynman graphs where each graph labels the various cones that positive geometry generates naturally. In principle, the renormalization of these curve integral is already guaranteed by the Zimmermann's Forest formula.  That is, inside each cone, one can apply the forest formula \cite{Zimmermann:1968mu, Zimmermann:1969jj} as written by Appelquist \cite{Appelquist:1969iv}, Brown and Kreimer \cite{Brown:2011pj}, giving us a roadmap from combinatorics of dissection quivers to renormalized amplitudes in a large class of QFTs.  

However there is a catch. As such, the positive geometries that generate loop level integrands  delegate Feynman diagrams as secondary  derived objects, not indispensable for the existence of an S-matrix. In other words, if we are to hope that the existence of a (well-defined) S-matrix does not rely on existence of Feynman diagrams, then we  should be able to renormalize the curve integral formula without ever relying on its decomposition into Feynman graphs. 

Search for such a \emph{``renormalized positive geometry''} for the S-matrix in fact leads us to a more primitive question : Is there a way to  renormalize the UV divergences in the curve integrals in the global Schwinger space without taking direct input from the Zimmermann forest formula which depends explicitly on the Feynman graph, $\Gamma$ and all of it's forests? It is this question that we try to answer in this paper for the simplest case of color ordered amplitudes in $\textrm{Tr}(\Phi^{3})$ theory in $D = 4$ dimensions.

In order to appreciate this question better, let us pose it at one-loop, as already for $L = 1$,  the tension between forest formula for graphs and positive geometries becomes explicit. It is well known that the color ordered one-loop integrand of $\textrm{Tr}(\Phi^{3})$ theory is volume of the dual of a convex polytope. The polytope in this case turns out to be the convex realisation of the so called $\widehat{D}_{n}^{\star}$ polytope. Both, these polytope and it's realisation were discovered by Arkani-Hamed, Plamondon, Salvatori, He and Thomas  \cite{afpst,Arkani-Hamed:2019vag}, and a review of their construction can be found in \cite{Jagadale:2022rbl}.   The volume of the dual polytope $\widehat{D}_{n}^{\star}$ is a measure which is defined locally at every point in the kinematic space for one-loop integrands and hence is a function of the loop momentum $\ell^{\mu}$. The question we would like to ask is the following. Given, $\textrm{Vol}(\widehat{D}_{n}^{\star})$, is there a counter-term measure $\textrm{Vol}^{\textrm{c.t.}}(\widehat{D}_{n}^{\star})$ such that
\begin{align}
\int\, \frac{\mathrm{d}^{D} \ell}{(2\pi)^{D}}\, [\, \textrm{Vol}(\widehat{D}_{n}^{\star})\, -\, \textrm{Vol}(\widehat{D}_{n}^{\star})^{\textrm{c.t.}}\, ] = {\cal A}_{1,n}^{R}
\end{align}
where  ${\cal A}_{1,n}^{R}$  denotes the color ordered renormalized  $n$-point amplitude at one-loop.  We find $\int \frac{\mathrm{d}^{D} \ell}{(2\pi)^{D}} \textrm{Vol}(\widehat{D}_{n}^{\star})^{\textrm{c.t,}}$ in the ``curve integral form'' and then show  how to find the corresponding counter-terms for $ L > 1$. This paper is organized as follows. 

In section \ref{sec:forest-formula-review} we review the parametric representation of the Feynman integral $\Phi_{\Gamma}$ for a fixed one particle-irreducible (1PI) graph $\Gamma$ followed by a review of the parametric forest formula. We illustrate this formula for graphs in $D = 4$ massive $\Phi^{3}$ theory and review the simplest possible illustration of the remarkable \emph{ decomposed Feynman rules} which were derived by Brown and Kreimer in their seminal work in \cite{Brown:2011pj}.   In section \ref{sec:curve-int-formula}, we briefly review the curve integral formula discovered in \cite{Arkani-Hamed:2023lbd,Arkani-Hamed:2023mvg} with emphasis on the results relevant for subsequent sections. In particular, we review the derivation of the \emph{surface Symanzik polynomials} with respect to a specific reference dissection of the underlying surface. This dissection is dual to the so-called $n$-point $L$-loop, tadpole fat graph (for example see Fig.~\ref{fig:1l2p_fatgraph}).  The remaining sections are divided in terms of the results that we summarize below. 
\section*{Summary of the results}
\begin{enumerate}
\item  In section \ref{sffpa}, we derive, what we call, a  ``surface-forest formula'' which is the parametric version of Zimmermann's forest formula, but applied to the curve integrals. This formula offers us a proof of principle that curve integrals can be renormalized even though they sum over all the Feynman graphs which are not necessarily 1PI graphs. The surface forest formula essentially use the fact that given a planar fat graph $\Gamma$ with a set of forests $F(\Gamma)$,  there exists a set of graphs $\Gamma^{\prime}$ whose corresponding set of all forests is also $F(\Gamma)$. Combining this observation with the parametric forest formula for graphs, we obtain the surface forest formula, eqn.(\ref{sf_formula}). However as this formula implicitly depends on the 1PI divergent subgraphs and hence takes us away from the basic premise of the positive geometry program, we do not pursue it in greater detail. Reader interested in the primary results of this paper can skip this section in the first reading.

\item In sections \ref{dtci} and, \ref{sec:para_reno_1l_planar} we study divergence of planar one-loop amplitude which corresponds to sum over Feynman diagrams that do not include tadpole graphs. We do this by isolating ``chambers'' inside the global Schwinger space which is orthogonal to union over all the cones associated to tadpole graphs.  We call this region tadpole-free region. We then show how the curve integral over the tadpole free region has a clear separation between UV divergent domains and it's compliment. This leads us directly to a proposal for (what we call) a \emph{tropical counter term}, which as we show renormalizes one-loop planar amplitude.
\item In section \ref{sec:reno_beyond_1loop} we outline a detailed strategy as to how to renormalize curve integral beyond one-loop and implement it at two loops in section \ref{sec:2-loop-2-point-renormalization}. Although our analysis is for all $n$ at two loops, we give explicit formula for renormalized curve integral at $n=2$ in this case.
\item And finally, in appendix \ref{sec:pseudo triangulation}, we propose a pseudo-triangulation model for computing the $g$-vector fan associated to dissections of a disc with $n$ marked points and two punctures. We show how, this model leads us to the same fan as the one derived in \cite{Arkani-Hamed:2023lbd,Arkani-Hamed:2023mvg}. However our method is inspired by the pseudo-triangulation model for cluster polytopes \cite{Ceballos_2015}, and we believe it can lead us directly to convex realisation of the corresponding surfaceohedra along the lines of $\widehat{D}_{n}$ polytopal realisation at one-loop.
\end{enumerate}
In the conclusion section, we discuss our perspective on the renormalization of positive geometries and remaining appendices contain details of several other computations used in the main body of the paper. 

We end this section with a disclaimer. Our approach to parametric renormalization does not employ dimensional regularization. Renormalization of parametric Feynman integrals via dimensional regularization produces period integrals in the Lee-Pomeransky representation, \cite{Klausen:2019hrg}. It is plausible that, esp. for scalar amplitudes, a more efficient approach to parametric renormalization of the curve integral may indeed be based on dimensional regularization and associated tropical subtraction scheme. \cite{Borinsky:2020rqs}, \cite{Borinsky:2023jdv}, \cite{Hillman:2023ezp}. More recently, subtraction scheme for tropical Euler integrals (which generate log divergent amplitudes) have been devised in \cite{Salvatori:2024nva} which can provide an alternative route to parametric renormalization of curve integrals.  We also emphasize that the striking result proved in \cite{Douglas:2022ynw},  that the perturbative amplitudes are tame functions relies on the existence of renormalized Feynman graphs as period integrals. However, in this paper we follow the route adopted by Brown and Kreimer \cite{Brown:2011pj} and  use a momentum subtraction scheme as opposed to dimensional regularization. 

%%%%%%%%%%%%%%%%%%%%%%%%%%%%%%%%
%%%%%%%%%%%%%%%%%%%%%%%%%%%%%%%%
\section{Review of the Zimmermann's forest formula} \label{sec:forest-formula-review}
Given a  1PI graph $\Gamma_{L}$ with $L$ cycles, parametric representation of the (regularized) Feynman integral $\Phi_{\Gamma_{L}}(\epsilon)$ in $D$ dimensions is given in terms of a graph polynomials. In the theories without massless particles, the integral is convergent asymptotically and requires only the UV regularization at the \emph{origin} of the Schwinger parameter space ${\bf R}_{+}^{\vert E_{\textrm{int}}\vert}$, where $E_{\textrm{int}}$ is the set of the internal edges in $E(\Gamma_L)$. %\textcolor{red}{Can one of you complete this review?} 

%%%%%%%%%%%%%%%%%%%%%%%%%%%%%%%%
\subsection{Quick review of parametric representation}

In this subsection we very briefly review the parametric representation of Feynman graphs. The purpose is to setup the notations, for details we refer the readers to \cite{Nakanishi:1971, Bogner:2010kv}.

Consider a scalar Feynman graph $\Gamma$, in spacetime dimension $D$. The bare Feynman integral $\Phi_{\Gamma}$ is given by integration of the loop momenta ($\ell_a$) and the integrand contains products of Feynman propagators\footnote{Throughout the paper we will be using these planar variables $X_{ij}$, which are defined as follows, \begin{align*}
X_{ij} \displaystyle= \bigg(\sum_{a = i}^{j-1} p_a\bigg)^2 + m^2.
	\end{align*} } ($1/X_{i j}$). We can use the Schwinger parametrization to each propagator,
\begin{align}
\frac{1}{X_{ij}} = \int_{0}^{\infty} \D \alpha_{ij} \, e^{-\,  \alpha_{ij} X_{i j}},
\end{align}
where $\alpha_{ij}$ the schwinger parameters corresponding  the propagator $\frac{1}{X_{ij}}$. Since $X_{i j}$ are quadratic in momenta, we can readily perform the loop integrations. This leaves us with the left over integrals over all  the schwinger parameters $({\bm \alpha} := \{\alpha_{ij}\})$,
\begin{align}
\Phi_{\Gamma} &= \int_{0}^{\infty} \D {\bm \alpha} \, I_{\Gamma}\\
\text{with,\ } I_{\Gamma} &=  \frac{e^{ \frac{F_{\Gamma}}{U_{\Gamma}} \, - \,  \bm{m}\cdot \bm{\alpha}}}{U_{\Gamma}^{D/2}}.
\end{align}

In the above equation $U_\Gamma$   and $F_\Gamma$ are the first and second Symanzik polynomials respectively. They are defined as follows. 

\begin{itemize}
\item {\emph{First Symanzik polynomial} $U_{\Gamma}$: A connected and simply connected subgraph $T \medsubset \Gamma$ is called spanning tree of $\Gamma_L$, if $T$ and $\Gamma$ share the identical \emph{vertices} i.e. $T^{[0]} =  \Gamma^{[0]}$. Then the first Symanzik polynomial is defined as 
\begin{align}
U_{\Gamma} = \sum_{T} \prod_{e \, \notin \, T}  \alpha_e,
\end{align}
where the sum is over all spanning trees of $\Gamma$.
}
\item {\emph{Second Symanzik polynomial} $F_{\Gamma}$ : A spanning two-forest is a pair of connected and simply connected subgraphs $T_1, T_2 \medsubset \Gamma$ such that, 
\begin{align}
T_1 \medcap T_2 = \emptyset, \quad \text{and} \quad T_1^{[0]} \medcup  T_2^{[0]}  =  \Gamma^{[0]}.
\end{align}
Then the second Symanzik polynomial is defined as,
\begin{align}
F_{\Gamma} =  \sum_{ T_1 \medcup T_2 } P(T_1)  \cdot P(T_2) \,  \prod_{e \, \notin \, T_1^{[1]} \medcup  T_2^{[1]}}  \alpha_e,
\end{align}
where the sum is over all spanning two-forest  of $\Gamma$,  $P(T_i)$ is sum of all momenta flowing into the tree  $T_i$ and $T_i^{[1]}$ denotes the \emph{edges} of the subgraph $T_i$.
}
\end{itemize}
Note that below we use analogue of these polynomials that are associated with fat graphs and hence for corresponding surfaces. They are called \emph{surface Symanzik polynomials} and are denoted by $\mathcal{U}_\Gamma$ and $\mathcal{F}_\Gamma$ respectively.

%%%%%%%%%%%%%%%%%%%%%%%%%%%%%%%%
\subsection{Review of parametric forest formula}\label{sec:review_parametric_forest}
The Zimmermannn forest formula for Feynman integrals is one of the two fundamental pillars of renormalization theory.\footnote{The other being Wilsonian approach to renormalization.}  Since the seminal work by Appelquist \cite{Appelquist:1969iv}, there has been considerable effort in renormalizing parametric Feynman integral using the forest formula \cite{Brown:2011pj,Arkani-Hamed:2022cqe}. %\cite{brown-kreimer, mizera-nima, hillmann, toric}.
For a comprehensive modern review of the forest formula, we refer the reader to \cite{Brown:2015qmm}%\cite{brown2015}
.  For a comprehensive account of the forest formula to renormalize the UV behaviour as well as define infra-red safe quantities, we refer the reader to \cite{Borinsky:2020rqs, Hillman:2023ezp,Borinsky:2023jdv}. 

In this section, we review the forest formula in the context of renormalized Feynman graph for the $\phi^{3}$ theory in $D = 4$ dimensions. This is a rather trivial case study of the forest formula as the  one of the beautiful aspects of the  formula in fact resides in its ability to deal with overlapping divergences \cite{Kreimer:1998iv}, 
that  are absent in this theory.  We focus on this simplest  example of the forest formula as it will serve as  an inspiration for us in the subsequent sections when we renormalize the curve integrals.  

We review the Zimmermannn forest formula in the parametric representation  as derived by Brown and Kreimer in their seminal work \cite{Brown:2011pj}.   Among the many beautiful results derived in \cite{Brown:2011pj} (including the detailed analysis of quadratic divergences), perhaps the most striking result is the derivation of so called \emph{decomposed Feynman rules} which express the renormalized Feynman integral in terms of manifestly UV finite functions which either depend on a specific scale $S$ (which can be taken as \emph{e.g.} the total incoming energy in center of mass frame), or on so called \emph{angles}, $\frac{X_{ij}}{S}\, :=\, \Theta_{ij}$.  The derivation of the decomposed Feynman rules for generic graphs emerged thanks to the the equivalence between renormalized Feynman graphs and  characters of the Connes-Kreimer Hopf algebra \cite{Kreimer:1997dp,Connes:1998qv,Connes:1999yr}, a review of which is beyond the scope of this section. In the case of a super-renormalizable theory,  the derivation of these rules follows from some algebraic manipulations of the forest formula as we now review. 

Given any one-particle-irreducible (1PI) graph $\Gamma$,  a forest $f$ is a collection of 1PI divergent subgraphs such that no two elements in the collection have non-trivial intersection. In other words, if $f$ is a forest comprising of a set of 1PI divergent subgraphs $\{\gamma_{1}, \dots, \gamma_{\vert f\vert}\}$ then 
\begin{align}
\gamma_{i} \medsubset \gamma_{j}\ \textrm{or}\ \gamma_{j} \medsubset \gamma_{i}\ \textrm{or}\ \gamma_{i} \medcap \gamma_{j} = \emptyset.
\end{align} 
Zimmermann's forest formula expresses the renormalized Feynman integral $\Phi_{\Gamma}^R$ can be written as follows. Let $\Phi_\Gamma$ be the bare Feynman integral which is a function of $\{X_{ij}\, \vert 1 \leq\, i\, < j+1\, \leq\, n-1\}$.\footnote{We suppress the dependence on the mass parameters and will assume that the there is a unique single particle state with mass $m$.} 
Following Brown and Kreimer, we reparametrize kinematic variables as
\begin{align}
\{X_{ij}\}\, \rightarrow\, \{\, X_{13} = S,\, \frac{X_{ij}}{X_{13}}\, :=\, \Theta_{ij}\, , \,  \Theta_{m} = \frac{m^{2}}{S} \} .
\end{align}
And hence, 
\begin{align}
\Phi_{\Gamma}(X_{13}, X_{24}, \dots)\, =\, \Phi_{\Gamma}(S, \Theta_{24}, \dots\,  ,\Theta_{m}) \, .
\end{align}
The renormalized Feynman integral (which is the result of the forest formula)  is a function $\Phi_{\Gamma}^{R}(S, S_{0},  \{\Theta_{ij}\}\, \Theta_{m}, \Theta_{m^{0}})$ which is defined as follows. 

Given a graph $\Gamma$ with cubic scalar vertices, the renormalized Feynman graph in $D = 4$ dimensions is defined as, 
\begin{align}
\Phi_{\Gamma}^{R}(S, \{ \Theta\},\, \{\Theta_{0}\} )  = \sum_{f} (-1)^{\vert f\vert}\, \phi_{f}^{(0)}(S_{0}, \Theta_{0})\, \phi_{\sfrac{\Gamma}{f}}(S, \{ \Theta\}) 
\end{align}
where 
\begin{align}
\{\Theta\}\, :=\, \big\{\, \{\Theta_{ij}\}_{(i,j)\, \neq\, (1,3)},\, \Theta_{m}\, \big\}, 
\end{align}
and the sum is over all the forests.

We will now derive the decomposed Feynman rules for $\Phi_{R}(\Gamma_{L})$, where $\Gamma_{L}$ is two-point 1PI $L$-loop grap necklace graph obtained by decorating a bubble with $L-1$ bubbles on as shown in the figure \ref{fig:necklace-graph}. 

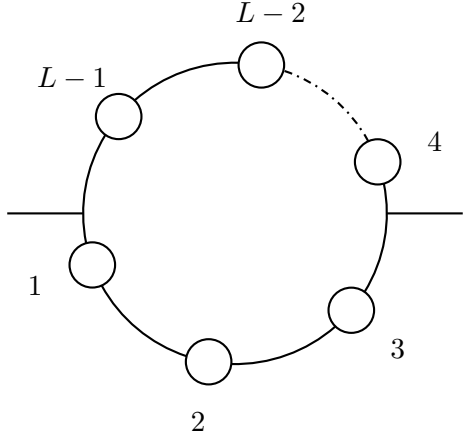
\begin{figure}
	\centering
	\begin{tikzpicture}
		\pgfmathsetmacro\R{2}
		\pgfmathsetmacro\s{0.3}
		\pgfmathsetmacro\delta{asin(\s / (1*\R))}
		
		% Draw the big circle as six arcs
		\foreach \angle in {80,140,200,260,320}{
			\draw[thick, black] ({\angle + \delta}:\R) arc ({\angle + \delta}:{\angle + 60 - \delta}:\R);
		}
		
		\foreach \angle in {20}{
			\draw[thick, dash dot, black] ({\angle + \delta}:\R) arc ({\angle + \delta}:{\angle + 60 - \delta}:\R);
		}
		
		% Draw the six small circles
		\foreach \angle in {20, 80, 140, 200, 260, 320} {
			\draw[thick, black] (\angle:\R) circle (\s cm);
		}
		
		% Add two small horizontal line segments outside the big circle
		\draw[thick, black] (-3.0,0) -- (-2.0,0); % Horizontal line on the left
		\draw[thick, black] (2.0,0) -- (3.0,0);  % Horizontal line on the right
		
		% Add nodes outside the small circles with labels
		
		\node at (140:\R + 0.8) {$L-1$};
		\node at (200:\R + 0.8) {$1$};
		\node at (260:\R + 0.8) {$2$};
		\node at (320:\R + 0.8) {$3$};
		\node at (20:\R + 0.8) {$4$};
		%\node at (25:\R + 0.8) {$.$};
		%\node at (30:\R + 0.8) {$.$};
		\node at (80:\R + 0.7) {$L-2$};
	\end{tikzpicture}
	\caption{$L$-loop necklace graph}
\label{fig:necklace-graph}
\end{figure}

\begin{lemma}
Let 
\begin{align}
\delta\Phi_{\gamma}(S, \Theta, S_{0}, \Theta_{0})\, :=\, \Phi_{\gamma}(S, \Theta)\, -\, \Phi_{\gamma}(S_{0}, \Theta_{0}) \, .
\end{align}
Then for any $L$-loop ``necklace'' graph $\Gamma_{L}$, the Zimmermann forest formula can be reexpressed in terms of manifestly UV-finite functions as follows. 
\begin{align}\label{decfeynrul}
\Phi^{R}_{\Gamma_{L}}(S, \Theta, S_{0}, \Theta_{0})\, =\, \sum_{k=0}^{L-1}\, ^{L}C_{k}\, \big(\delta\Phi_{\gamma}(S, S_{0}, \Theta, \Theta_{0})\big)^{k}\, I_{\sfrac{\Gamma_{L}}{\gamma^{k}}}(S, \Theta) \, .
\end{align}
where, $\gamma^{k}$ is the disjoint union of $k$ bubble graphs and $I_{\sfrac{\Gamma_{L}}{\gamma^{k}}}(S, \Theta)$ is the UV finite Feynman integral associated to the quotient graph $\sfrac{\Gamma_{L}}{\gamma^{k}}$, 
\begin{align}
I_{\sfrac{\Gamma_{L}}{\gamma^{k}}}(S, \Theta)\, =\, \Phi_{\Gamma_{L}}\, -\, \sum_{f = \gamma^{k}}\, (-1)^{\vert f\vert}\, \Phi_{f}\, \Phi_{\sfrac{\Gamma_{L}}{f}}.
\end{align}
\end{lemma}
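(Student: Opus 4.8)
The plan is to write out the parametric forest formula stated just above for the necklace graph $\Gamma_L$ and then reorganise the sum, exploiting that in $D=4$ $\phi^{3}$ theory this graph has essentially the simplest possible structure of divergent subgraphs. The starting observation is that the only superficially divergent 1PI subgraphs of $\Gamma_L$ are the individual ``bead'' bubbles $\gamma$ strung along it, each logarithmically divergent; being edge-disjoint, no two of them ever nest or overlap, so the set of forests of $\Gamma_L$ is simply the set of subsets of the beads (together with, possibly, the overall subtraction, which in this scheme is carried by the reference point $(S_0,\Theta_0)$). I would then use the defining symmetry of the necklace: all beads are isomorphic subgraphs carrying identical momentum routing, so every subtracted bead contributes the same reference-point value $\Phi_\gamma(S_0,\Theta_0)$, and the quotient graph $\sfrac{\Gamma_L}{f}$ depends on a forest $f$ only through $\vert f\vert$. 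Collecting the forests into classes labelled by $k=\vert f\vert$ and counting them then produces the combinatorial prefactor ${}^{L}C_{k}$.

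The substantive step is the rewriting that makes ultraviolet finiteness manifest. The forest formula as written mixes the individually divergent bare integrals $\Phi_f$ with their reference-point values; the goal is to reorganise the class-$k$ contribution so that each occurrence of the reference value $\Phi_\gamma(S_0,\Theta_0)$ is bound to a corresponding bare bubble into the ultraviolet-finite combination $\delta\Phi_\gamma=\Phi_\gamma(S,\Theta)-\Phi_\gamma(S_0,\Theta_0)$. After that reshuffling the class-$k$ piece is $(\delta\Phi_\gamma)^{k}$ multiplied by what remains, and the remainder is, by running the same forest argument for the contracted graph, precisely $I_{\sfrac{\Gamma_L}{\gamma^{k}}}(S,\Theta)=\Phi_{\Gamma_L}-\sum_{f=\gamma^{k}}(-1)^{\vert f\vert}\Phi_f\Phi_{\sfrac{\Gamma_L}{f}}$, which is itself finite because the quotient necklace has $k$ fewer beads. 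The cleanest packaging of both steps is an induction on $L$: assume \eqref{decfeynrul} for $\Gamma_{L-1}$, single out the outermost bead of $\Gamma_L$ and sum over whether it lies in the forest, and close the recursion with Pascal's identity ${}^{L-1}C_{k-1}+{}^{L-1}C_{k}={}^{L}C_{k}$.

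The step I expect to be the main obstacle is exactly this last regrouping: one must check that the reshuffling of the forest sum delivers $(\delta\Phi_\gamma)^{k}$ with coefficient precisely ${}^{L}C_{k}$, rather than some mixture $\Phi_\gamma(S,\Theta)^{a}\Phi_\gamma(S_0,\Theta_0)^{k-a}$ with stray weights -- equivalently, that the overall/scale subtraction and the subtractions on the remaining beads interlock correctly. This is where the necklace hypothesis is genuinely used (all beads equivalent, and every quotient again a necklace admitting the same kind of decomposition), and it is the part that would not carry over verbatim to graphs with nested or overlapping divergent subgraphs, where one must fall back on the full machinery of \cite{Brown:2011pj}.
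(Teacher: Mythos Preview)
Your proposal is correct and takes essentially the same approach as the paper: both exploit that $\sfrac{\Gamma_L}{\gamma}$ is again a (smaller) necklace and unwind the forest formula recursively. The paper packages this as a ``sequential substitution'' of $\Phi_{\sfrac{\Gamma}{\gamma^{k}}}=I_{\sfrac{\Gamma}{\gamma^{k}}}+\dots$ into the renormalized expression and then checks the combinatorics by hand for $k=1,2$, whereas you phrase it as a clean induction on $L$ closed by Pascal's identity; these are two dressings of the same argument, with your version arguably tidier and the paper's more explicitly computational.
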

\begin{proof}
We start with 
\begin{align}
\Phi_{\Gamma_{L}}&=\,  I_{\Gamma_{L}}\,  + (\, \Phi_{\Gamma_{L}}\, -\, I_{\Gamma_{L}}\, )\nonumber\\
&=\, I_{\Gamma_{L}}\, -\, \sum_{k=1}^{L-1}\, (-1)^{k}\, \Phi_{\gamma^{k}}(S, \Theta_{m})\, \Phi_{\sfrac{\Gamma}{\gamma^{k}}}(S, \Theta_{m}) \, .
\end{align}
Hence, the forest formula for the necklace graph in $D=4$ dimensions can be written as,
\begin{align}
\Phi_{\Gamma_{L}}^{R}\, =\, I_{\Gamma_{L}}\, -\, \sum_{k=1}^{L-1}\, (-1)^{k}\, (\Phi_{\gamma^{k}}(S, \Theta)\, -\, \Phi_{\gamma^{k}}(S_{0}, \Theta_{0})\, )\, \Phi_{\sfrac{\Gamma}{\gamma^{k}}}(S, \Theta) \, .
\end{align}
We can now sequentially substitute 
\begin{align}
\Phi_{\sfrac{\Gamma}{\gamma^{k}}}\, =\, I_{\sfrac{\Gamma}{\gamma^{k}}}\, -\, \sum_{k^{\prime}}\, (-1)^{k^{\prime}}\, ^{L-k}C_{k^{\prime}}\, (\Phi_{\gamma^{k^{\prime}}} - \Phi_{\gamma^{k^{\prime}}}^{0})\, \Phi_{\sfrac{\Gamma}{\gamma^{k^{\prime}}}},
\end{align}
and can immediately verify eqn. (\ref{decfeynrul}) as follows. For $k=1$, it is clear that only the substitution 
\begin{align}
\Phi_{\sfrac{\Gamma}{\gamma}} = I_{\sfrac{\Gamma_{L}}{\gamma}} + \dots , 
\end{align}
can contribute to the right hand side of eqn.(\ref{decfeynrul}). For $k=2$ we have two contributions.
\begin{align}
\Phi_{\Gamma_{L}}^{R}&=\, I_{\Gamma_{L}}\, +\, (L-1)\, \delta\Phi_{\gamma} I_{\sfrac{\Gamma_{L}}{\gamma}}\, +\, 
\big[\, ^{L-1}C_{1}\, ^{L-2}C_{1} \delta\Phi_{\gamma}\, \Phi_{\gamma}\, -\, ^{L-1}C_{2} \delta\Phi_{\gamma}\, (\, \phi_{\gamma} + \phi_{\gamma}^{0})\, \big]\, I_{\sfrac{\Gamma_{L}}{\gamma^{2}}}\, +\, \dots\nonumber\\
&=\,  I_{\Gamma_{L}}\, +\, ^{L-1}C_{1}\, \delta\Phi_{\gamma} I_{\sfrac{\Gamma_{L}}{\gamma}}\, +\, 
^{L-1}C_{2}\, \delta\Phi_{\gamma}^{2}\, I_{\sfrac{\Gamma_{L}}{\gamma^{2}}}\, +\, \dots \, . \nonumber
\end{align}
$k\, \geq\, 3$ terms in eqn.(\ref{decfeynrul}) can be evaluated similarly. We leave this verification to the interested reader. This completes the proof. 
\end{proof}
Some comments are in order. 
\begin{enumerate}
\item As shown in \cite{Brown:2011pj},  $\delta\Phi_{\gamma}$ can be written in terms of ``characters'' which are either scale of angle-dependent.
\begin{align}
\delta\Phi_{\gamma}\, = \Phi_{\textrm{fin}}(\Theta)\, +\, \Phi_{1-s}\bigg(\frac{S}{S_{0}}\bigg)\, - \Phi_{\textrm{fin}}(\Theta_{0}) \, ,
\end{align}
where $\gamma$ is the one-loop two-point graph with kinematic variables $S = p^{2}, \Theta = \frac{m^{2}}{S},\,  \Theta^{0} = \frac{m^{2}}{S_{0}}$, and 
\begin{align}
\Phi_{1-s} &=\, \int_{P^{1}_{+}}\, \D\mu(\vec{\tau})\, \frac{1}{{\cal U}_{\gamma}(\vec{\tau})^{\frac{D}{2}}}\, \log \bigg( \frac{S}{S_{0}} \bigg)\, ,\nonumber\\
\Phi_{\textrm{fin}}(\Theta)&=\, \int_{P^{1}_{+}}\, \D\mu(\vec{\tau})\, \frac{1}{{\cal U}_{\gamma}(\tau)^{\frac{D}{2}}}\, \log {\cal F}_{\gamma}(\vec{\tau}, \Theta) \, ,
\end{align}
where  $\tau_{1}, \tau_{2}\, \in\, [0,1]$ define the projective co-ordinates with $\sum_{i} \tau_{i}^{2} = 1$. ${\cal F}_{\gamma}(\vec{\tau}, \Theta)$ is obtained from the second Symanzik polynomial of $\gamma$ by dividing with an overall scale $S$ and mapping the Schwinger parameters $t_{1}, t_{2}$ to the projective space $P_{1}^{+}$ which is  co-ordinatized by $\tau_{1}, \tau_{2}$. The measure $d\mu(\vec{\tau})$ is defined as,
\begin{align}
d\mu(\vec{\tau})\, =\, \tau_{1} d\tau_{2} - \tau_{2} d \tau_{1} \, .
\end{align}
%\textcolor{magenta}{\it [PB: Little confused with the notation here.]}.

\item Each $I_{\sfrac{\Gamma}{\gamma^{k}}}$ is polynomial in $(S, \Theta)$ as opposed to a transcendental function and is independent of the renormalization point $S_{0}$. As a result the forest formula can be written in terms of forests $f = \gamma^{k}$ as, 
\begin{align}\label{decfeynrul1}
\Phi_{\Gamma_{L}}^{R}\, =\, \sum_{k}\, \widetilde{I}_{\sfrac{\Gamma_{L}}{\gamma^{k}}}\, \bigg[\, \Phi_{\gamma\, \textrm{fin}}(\Theta)\, + \Phi_{\gamma\, 1-s}\bigg(\frac{S}{S_{0}}\bigg)\, - \Phi_{\gamma, \textrm{fin}}(\Theta_{0})\, \bigg]^{k} \, ,
\end{align}
where $\widetilde{I} =\, ^{L}C_{k}\, I$.
For an overall convergent graph with logarithmic  sub-dievergences, eqn.(\ref{decfeynrul1}) is the simplest example of decomposed Feynman rules derived by Brown and Kreimer in \cite{Brown:2011pj}.\footnote{We are indebted to D. Kreimer for key clarifications regarding results in \cite{Brown:2011pj}.} 
\item This formula can be derived using the algebraic property of characters of the Connes-Kreimer Hopf algebra wherein the decomposed Feynman rules are derived in complete generality. However in the interest of the pedagogy, we do not touch upon the Hopf algebraic aspects of renormalization and instead refer the reader to a number of beautiful reviews such as \cite{Kreimer:2000ja, Kreimer-RG-notes, Ebrahimi-Fard:2005pga, Bergbauer:2005fb}. %\cite{kriemer1,kreimer2,fossey3,xxxxx}. 
\item For an $L$-loop $n$-point  amplitude generated by cubic graphs in $D=4$ dimensions, the renormalized on-shell amplitude is of the form\footnote{We assume that all the quadratically divergent tadpole graphs have been eliminated using off-shell counter-terms.} 
\begin{align}
{\cal A}^{R}_{L, n}(p_{1}, \dots, p_{n})\, =\, \sum_{\Gamma_{L}}\, \Phi^{R}_{\Gamma_{L}}(p_{1}, \dots, p_{n}, S, \{\Theta_{ij}, \Theta_{m}\}, S_{0}, \{\Theta_{ij}^{0},\, \Theta_{m}^{0}\}) . 
\end{align}
Our goal is to derive ${\cal A}^{R}$ by renormalizing the curve integral without its decomposition in terms of functions labeled by 1PI graphs. 
\end{enumerate}

%%%%%%%%%%%%%%%%%%%%%%%%%%%%%%%%
%%%%%%%%%%%%%%%%%%%%%%%%%%%%%%%%
\section{A review of curve integral formula}\label{sec:curve-int-formula}

In this section, we give a very brief account of the remarkable curve integral representation of the S-matrix discovered 
 in \cite{Arkani-Hamed:2023lbd, Arkani-Hamed:2023mvg,Arkani-Hamed:2024nhp}. We mostly follow the notations used in  \cite{Arkani-Hamed:2023lbd, Arkani-Hamed:2023mvg}. 
 
The curve integral formalism is the next stage in the evolution of the positive geometry program. It is built on the realization that there is a precise relationship between the  polytopal realization of the $g$-vector fan and  real-valued Plucker co-ordinates on the open string moduli space. The relationship is defined by realising that  for a $M$ dimensional moduli space, Plucker co-ordinates are Laurent polynomials in $M$ real variables which satisfty a system of non-linear equations known as Plucker relations. The low energy limit of the string amplitude leads to tropicalization of these polynomials and,  the resulting (piece-wise) linear functions are  precisely canonically dual to the set of all $g$-vectors. In other words, starting from the $g$-vector fan and finding its dual co-vectors lead us to a class of functions which are low energy limit of open string amplitudes! 

Below we provide lightning review of the formalism. Our review is not intended to be self-contained and our purpose is to only summarize the primary results of the original papers \cite{Arkani-Hamed:2023lbd, Arkani-Hamed:2023mvg} that are relevant for us directly.  We first summarize the derivation of the curve integral formula and then illustrate it with some examples. 

The ingredients that generate curve integrals corresponding to $n$ point, $L$ loop amplitudes at order $h$ in the 't hooft expansion are,  
\begin{itemize}
\item The  kinematic space of scattering data, a Riemann surface $\Sigma_{L,n,h}$ with $n$ marked points, $h$ boundaries and $L$ punctures, and a reference fat graph $\Gamma_{\textrm{ref}}(L,n,h)$.
\item $\Gamma_{\textrm{ref}}(L,n,h)$ assigns a unique momentum to each curve that can be drawn on $\Sigma_{L,n,h}$ in terms of the external momenta and a choice of loop momenta that is fixed once and for all.
\item  We will consider equivalence class $[C]$ of curves denoted as $C$ on $\Sigma_{L,n,h}$ which begin and end in the union of the set of marked points and the set of punctures, where the equivalence is defined by the homotopy. Without loss of generality, we will continue to always refer to the equivalence class as $C$ itself. It can be immediately seen that the set of all curves is then bijective with the set of all walks on $\Gamma_{\textrm{ref}}(L,n,h)$ whose end points are either on one of the external points of the graph or one or both of whose end points spiral around one of the punctures. 
\item A set of so-called $g$ vectors $\{\, g_{C}\, \}$ associated to set of all the curves that can be drawn on $\Sigma_{L,n,h}$ and the set of piece-wise linear one forms  $\alpha_{C}$ known as the headlight functions, that are dual to the set of all $g$ vectors in the sense that $\alpha_{C} (g_{C^{\prime}})\, =\, \delta_{C, C^{\prime}}$.
\item The previous two items can be used to write a parametric representation of ${\cal A}_{L,n,h}(p_{1}\, \dots,\, p_{n})$ which is in terms of an integral over ${\bf R}^{\vert E(\Gamma_{\textrm{ref}}(L, n, h))\vert}$ whose co-ordinates are known as global Schwinger parameters, \cite{Arkani-Hamed:2023lbd}.  Rather remarkably, the structural form of the integrand is same as the Symanzik representation of a Feynman integral and the two polynomials that define the curve integrand are known as first and second surface Symanzik polynomial\footnote{In \cite{Laddha:2024qtn}, it was shown that there is a third polynomial known as surface Corolla polynomial which can be used to ``spin up'' a color ordered scalar amplitude to color ordered gluon amplitude. Once again, this result parallels the parametric representation of color ordered gluon amplitude which involves the first and second Symanzik polynomials and the Corolla polynomial \cite{Kreimer:2012jw}.}.
\item We will only review the $h=0$ case as any fat graph that can be drawn on $\Sigma_{L,n,0} \, \equiv \Sigma_{L,n}$  generate planar amplitudes, which are the only amplitudes we renormalize in this paper. 
\end{itemize}
For the reference fat graph, we will choose the so-called tadpole fat graph such that it induces a unique (pseudo) triangulation $\Gamma^{\star}_{\textrm{ref}}(n,L)$ of  $\Sigma_{L,n}$ (see Fig.  \ref{fig:curve_C11_on_S_2,1} for an example).  It is however important to emphasize that the choice of reference graph is completely arbitrary and the final curve integral representation of the amplitude is independent this choice.\footnote{As we will see in the main sections of this paper, this specific choice of reference graph is perhaps the best choice to isolate UV divergences of the amplitude and hence best suited for a renormalization method which is inspired by the Forest formula.} We will denote the set of all chords that dissect $\Sigma_{(L,n)}$ as  ${\cal D}_{\Gamma_{\textrm{ref}}^{\star}(L,n)}$. This set is in bijection with the set of all the internal edges of $\Gamma_{\textrm{ref}}(L, n)$.\footnote{This simple fact is at the heart of positive geometry and emergence of curve integral formulae. We refer the interested reader to original references for details.} 

To the set of all chords in ${\cal D}_{\Gamma_{\textrm{ref}}^{\star}(L,n)}$,  in $\Gamma_{\textrm{ref}}^{\star}(L,n)$ we associate the cartesian basis of ${\bf R}^{\vert {\cal D}_{\Gamma^{\star}_{\textrm{ref}}}(L, n) \vert}$.  A specific rule for assigning values $(0, \pm 1)$ to  intersection of a curve $C$ in $\Gamma_{\textrm{ref}}(L,n)$ with chords in ${\cal D}_{\Gamma_{\textrm{ref}}^{\star}(L,n)}$ fixes the $g$-vector $g_{C}$. Span of $g_{C}$ over all the curves generate the $g$-vector fan \cite{PPPP}. 

$g$-vectors can be computed  using two algorithms. One of them is the so-called pseudo-triangulation approach which was pioneered by Ceballos and Pilaud in \cite{Ceballos_2015} and which was used to compute $g$-vectors for $\Sigma_{1,n}$ in  \cite{Jagadale:2022rbl}.  In  appendix \ref{sec:pseudo triangulation}, we propose a new pseudo-triangulation model to compute the set of all $g$-vectors for $\Sigma_{2,n}$ which generalizes the pseudo-triangulation model proposed in \cite{Ceballos_2015}.  We believe that this method can lead to a polytopal realisation whose normal fan is the fan spanned by the $g$-vectors. For $\widehat{D}_{n}$ polytope, this set of equations were reviewed in \cite{Jagadale:2022rbl} which reproduced the results of  \cite{afpst,Arkani-Hamed:2019vag}. It will be interesting to investigate that a polytopal realization obtained using $g$-vector fans in $\Sigma_{2,n}$ will produce the same convex realisation as the surfaceohedra \cite{Surfaceohedra}. We hope to return to this question elsewhere. 

The second method was given in \cite{Arkani-Hamed:2023lbd,Arkani-Hamed:2023mvg}, which we now review. This method is remarkably powerful in the sense that it produces the set of all $g$-vectors as well as set of all dual ``headlight functions''\footnote{Headlight functions are linear forms which are dual to $g$ vectors. The reason, determination of these headlight functions is a non-trivial problem is because of the fact that the linearity space of $g$-vectors is always less than their cardinality. One of the many beautiful results given in \cite{Arkani-Hamed:2023lbd} was to solve this highly non-trivial algebraic problem in full generality.} by first representing each curve $C$ as a word $W_{C}$ and then associate  (a) a set of signs to each word that are associated to components of $g_{C}$ along the ``axes'' labelled ${\cal D}_{\Gamma^{\star}_{\textrm{ref}}(L, n)}$, and (b) a $2\, \cross\, 2$ matrix representation for each letter in the word that produces the headlight function $\alpha_{C}$. We now summarize this method in a series of steps and explain each step via an example for $L=1$.  
\begin{itemize}
\item To each edge $e$ of $\Gamma_{\textrm{ref}}(L, n)$ we associate a variables $E_i$ (see figure~\ref{fig:1l2p_fatgraph}). We will refer to $t_i \equiv \log E_i , \  \forall\, i$ as \emph{global Schwinger parameters}. Note that $t_{i}\, \in\, {\bf R}$ as opposed to the usual Schwinger parameters which are in ${\bf R}^{+}$ when the external kinematics is in the Euclidean domain.\
\begin{figure}[h!]
\centering
\includegraphics[width=0.3\linewidth]{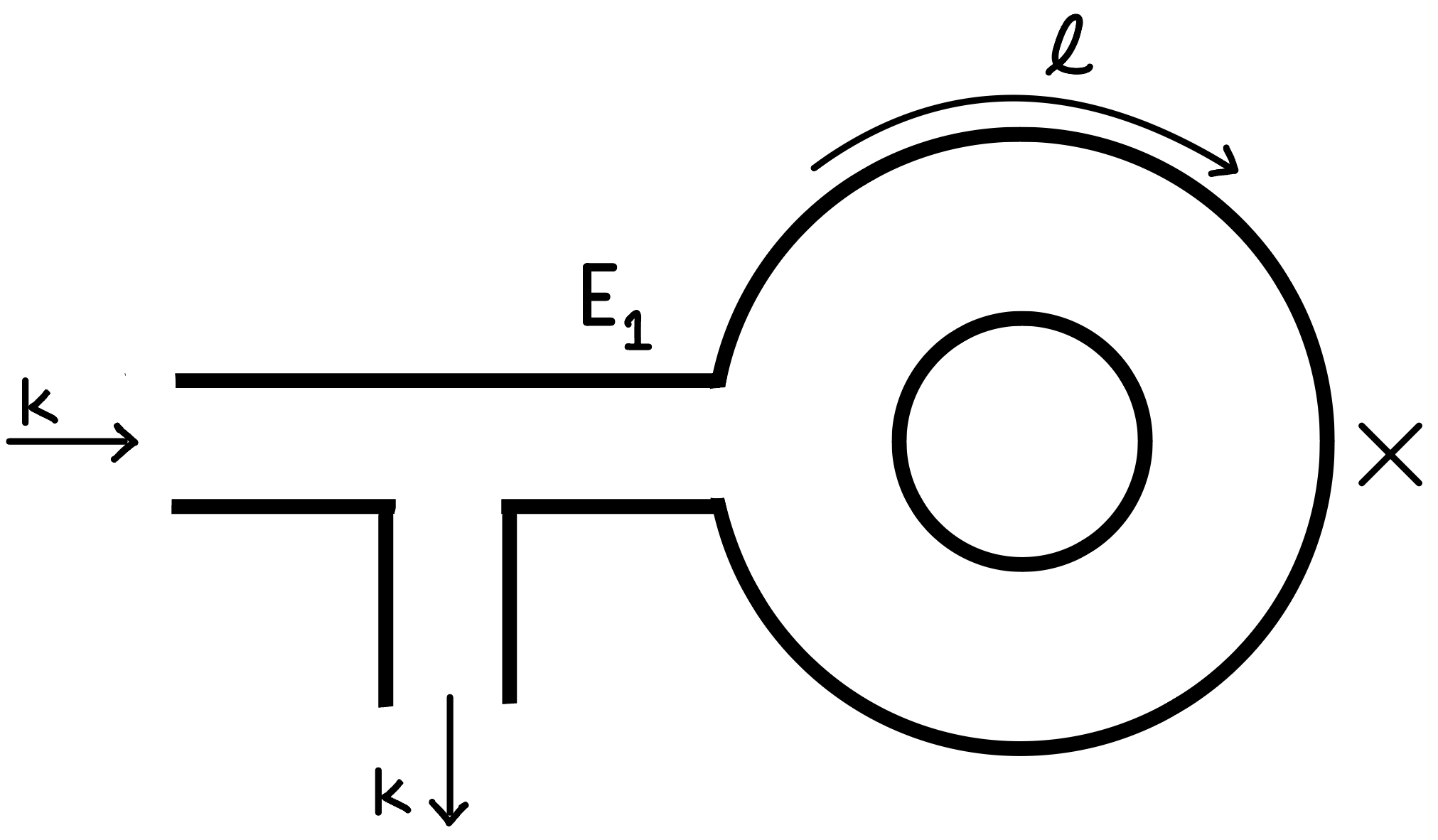}
\caption{One-loop two-point fatgraph}
\label{fig:1l2p_fatgraph}
\end{figure}
\item  Given a curve $C$ in $\Sigma_{L,n}$, there is a unique planar kinematic variable (that form a basis of the kinematic space spanned by external as well as loop momenta)  attached to it.  Furthermore, any curve $C$ on $\Sigma_{L,n}$ is represented by a unique word $W_{C}$ that is fixed by the pair $(\Sigma_{L,n},\, \Gamma_{\textrm{ref}})$. $W_{C}$ is determined by the sequence of the left and right turns $C$ takes to go from source of $C$ to the target of $C$ within $\Gamma_{\textrm{ref}}$, (see Fig. \ref{fig:curves_mountainscape}).

We can draw following curves inside $\Gamma_{\textrm{ref}}$. 
\begin{align}\label{setofcurvesn2l1}
\textrm{Set of curves in \,} \Gamma_{\textrm{ref}}(1,2)\, :=\, \{C^c_{10},C^c_{20},C^{cc}_{10},C^{cc}_{20},C_{11},C_{22}\} 
\end{align}
where $C_{i0}$ are spiral curves and  the super-script $c/cc$ denote the the orientation of the spiral, which can be clockwise or counter-clockwise respectively.  Fig. \ref{fig:curves_2p1l} shows an example of a set of curves, $\{C^c_{10},C^c_{20},C^{cc}_{10},C^{cc}_{20},C_{11},C_{22}\}$ which are drawn on the tadpole fat graph whose dual triangulates $\Sigma_{1,2}$. 
\begin{figure}[h!]
\centering
\includegraphics[width=0.6\linewidth]{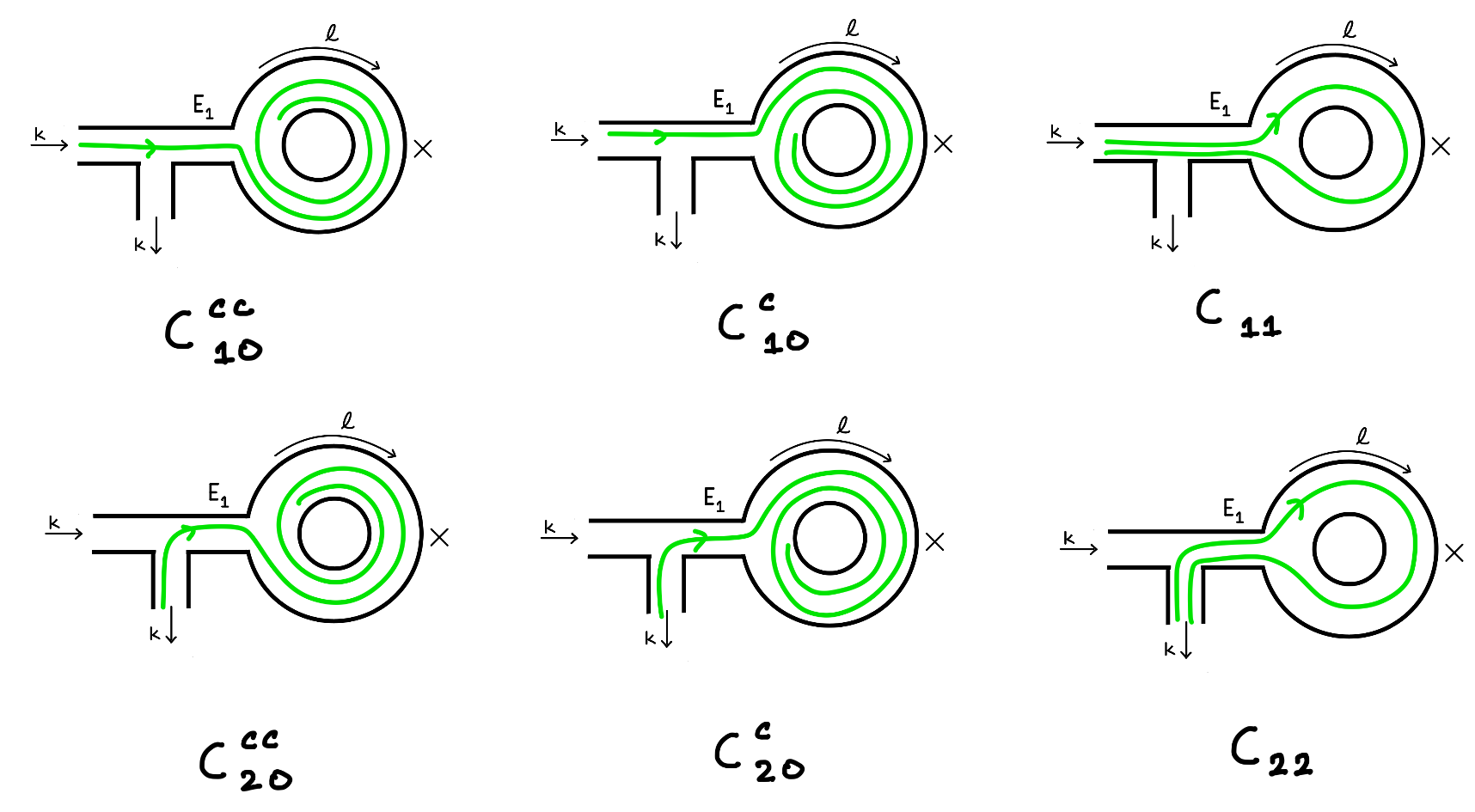}
\caption{One-loop two-point curves}
\label{fig:curves_2p1l}
\end{figure}
The corresponding words are then defined as, 
\begin{align}
&W_{C^{cc}_{10}} = 1\,L\,E_{1}\,R\,(XL)^{\infty}, \qquad W_{C^{cc}_{20}} = 2\,R\,E_{1}\,R\,(XL)^{\infty}, \qquad W_{C^{c}_{10}}  = 1\,L\,E_{1}\,L\,(XR)^{\infty}, \nonumber\\
&W_{C^{c}_{20}}  = 2\,R\,E_{1}\,L\,(XR)^{\infty},\qquad W_{C_{11}}      = 1\,L\,E_{1}\,R\,X\,R\,E_{1}\,R, \qquad W_{C_{22}}      = 2\,R\,E_{1}\,R\,X\,R\,E_{1}\,L\,2
\label{words_2p1l}
\end{align}
These words can also be represented as ``mountainscapes'' shown in Fig. \ref{fig:mountainscape_C11}.
\begin{figure}[h!]
	\centering
	\begin{subfigure}[t]{0.3\textwidth}
		\centering
		\includegraphics[width=0.7\linewidth]{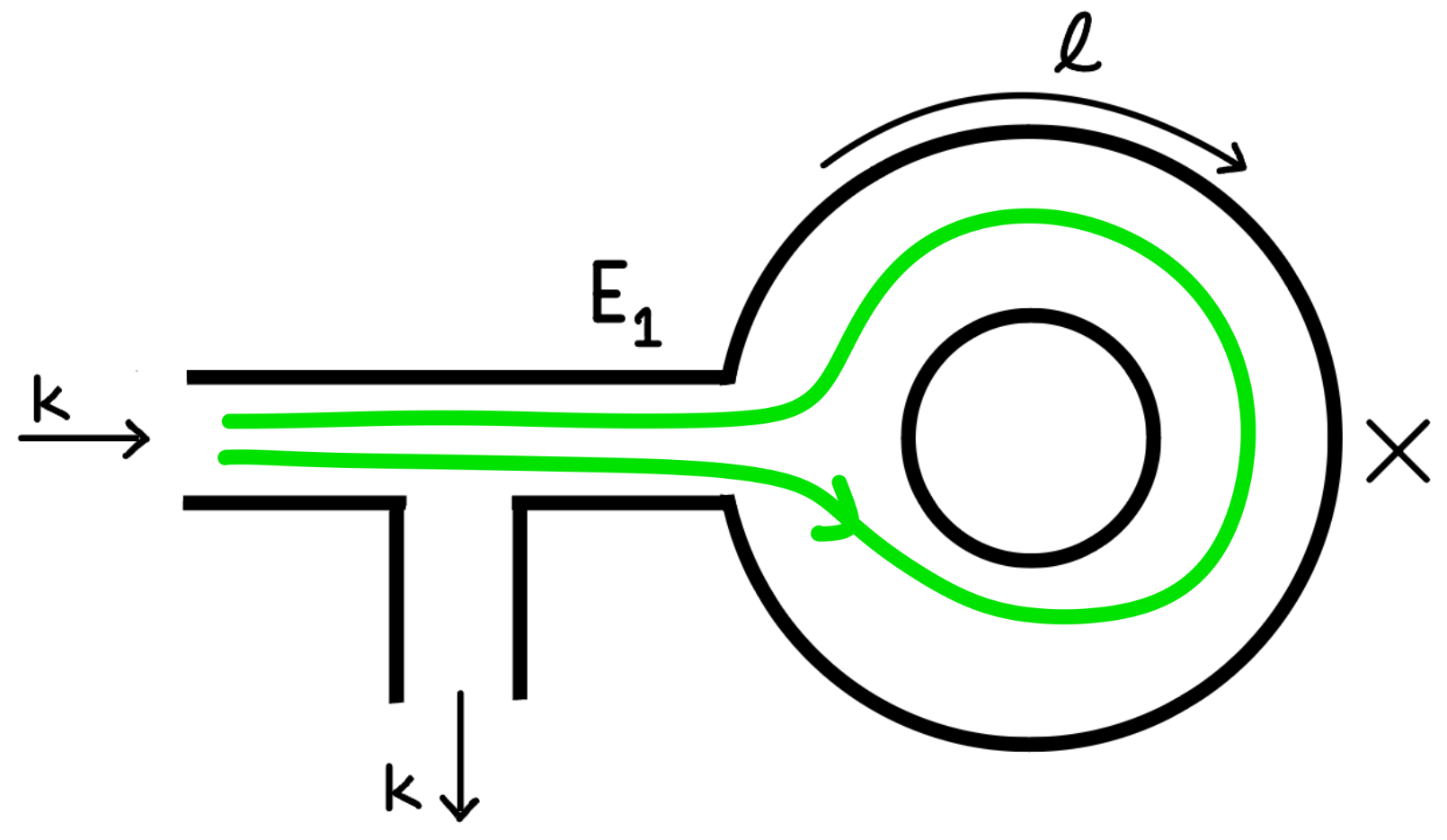}
		\caption{Curve $C_{11}$ in fatgraph}
		\label{fig:tadpole_curve_11}
	\end{subfigure}
	\hfill
	\begin{subfigure}[t]{0.3\textwidth}
		\centering
		\includegraphics[width=0.4\linewidth]{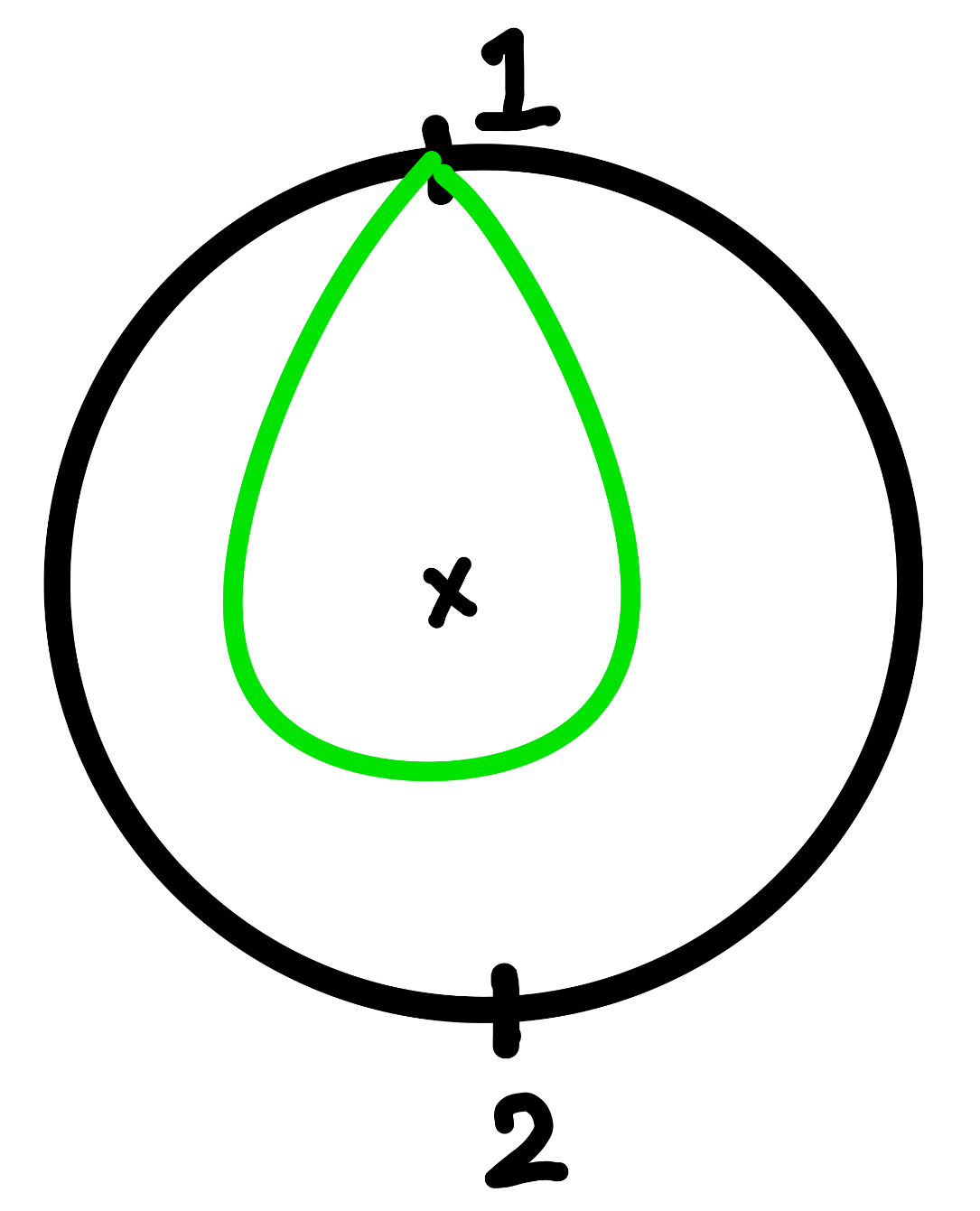}
		\caption{Curve $C_{11}$ on $\Sigma_{2,1}$}
		\label{fig:curve_C11_on_S_2,1}
	\end{subfigure}
	\hfill
	\begin{subfigure}[t]{0.3\textwidth}
		\centering
		\includegraphics[width=0.7\linewidth]{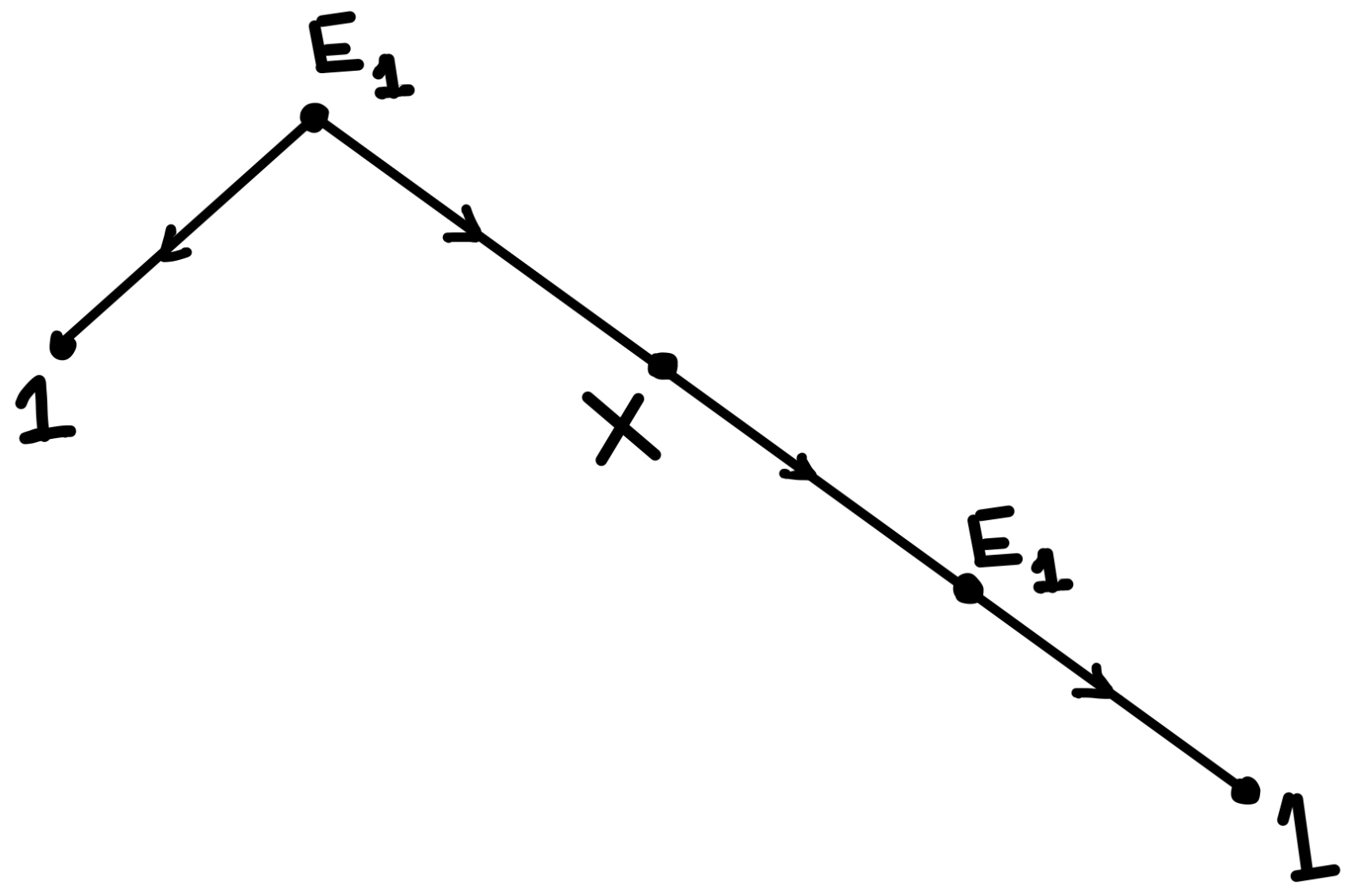}
		\caption{Mountainscape $C_{11}$ on $\Sigma_{2,1}$}
		\label{fig:mountainscape_C11}
	\end{subfigure}
	
	\caption{Curves $C_{11}$ on $\Sigma_{2,1}$ and its mountainscape.}
	\label{fig:curves_mountainscape}
\end{figure}

\item We can associate momentum $P_\mathcal{C}^\mu$
to a curve $\mathcal{C}$ via its word $W_\mathcal{C}$ by the following rule. 
\begin{align}
P_\mathcal{C}^\mu
= P_{\text{start}}^\mu
+ \sum_{\text{right turns}}
P_{\text {enterning from left}}^\mu,
\end{align}
In the above example, we have the following momentum assignments for the set of curves in eqn.(\ref{setofcurvesn2l1}). 
\begin{align}
&P^\mu_{C^{cc}_{10}} = k^\mu - l^\mu, \qquad P^\mu_{C^{cc}_{20}} = -\,l^\mu, \qquad P^\mu_{C^{c}_{10}}  = k^\mu -l^\mu, \\
&P^\mu_{C^{c}_{20}}  = -l^\mu,\qquad P^\mu_{C_{11}}      = 0,\qquad P^\mu_{C_{22}}      = 0 \, .
\end{align}

\item We can also extract the $g$-vectors associated to the curve $C$ from its word $W_{C}$ by following rule (see figure~\ref{fig:mountainscape_C11}). 
\begin{align}
\mathbf{g}_\mathcal{C}
= \sum_{\text{peaks p}} 	\mathbf{e}_p - \sum_{\text{valleys v}} 	\mathbf{e}_v \, .
\end{align}
The set of all the $g_{C}$ vectors spanning over all $C$s produce a complete fan as shown in e.g.  Fig. \ref{fig:g-vector_fan_2p1l}. Each top-dimensional cone inside the fan is spanned by $\vert {\cal D}_{\Gamma^{\star}_{\textrm{ref}}(L,n)} \vert$-tuple of curves that are precisely the propagators of corresponding Feynman diagram.

 As an example, g-vectors associated with each curve in the two-point one-loop fatgraph in basis $e_{E_1},\,e_x$ can be computed using the above formula and we find, 
\begin{align*}
&g_{C^{cc}_{10}} = \{1,-1\},\qquad g_{C^{cc}_{20}} = \{0,-1\},\qquad g_{C^{c}_{10}}  = \{0,1\}, \\
&g_{C^{c}_{20}}  = \{-1,1\},  \qquad g_{C_{11}}      = \{1,0\}, \qquad g_{C_{22}}      = \{-1,0\}
\end{align*}
In the figure below, we illustrate the corresponding $g$-vector fan whose linearity space is two dimensional and where we have labelled the axes by global Schwinger parameters.
\begin{figure}[h!]
\centering
\includegraphics[width=0.5\linewidth]{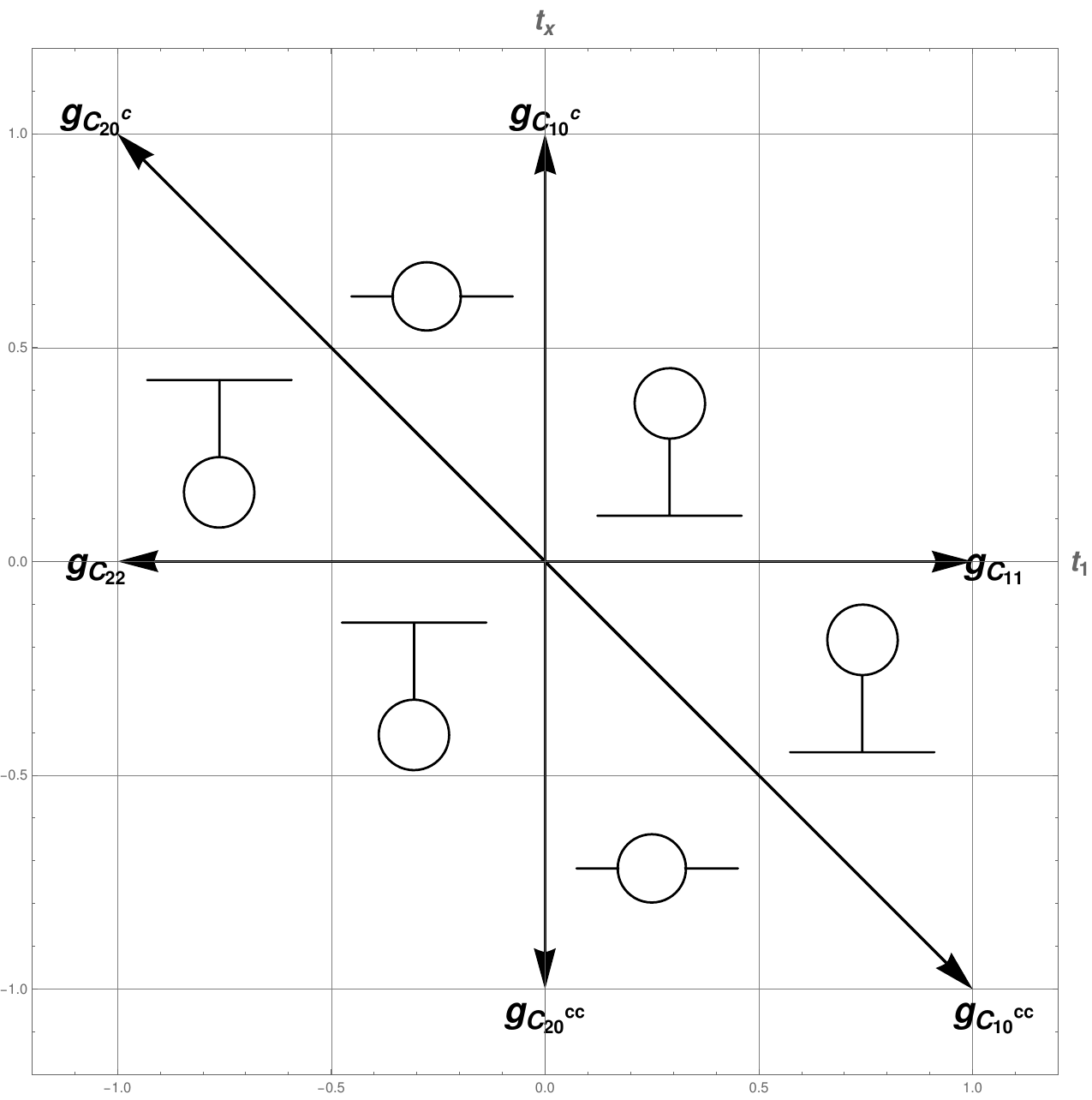}
\caption{Feynman Fan for $2$-point one-loop}
\label{fig:g-vector_fan_2p1l}
\end{figure}
In this case, there are precisely two curves for each momentum assignment and hence to avoid double counting we can restrict the domain to e.g. $t_x < 0$.  \footnote{For $\Sigma_{L > 1, n}$, redundancy in the choice of curves with same momentum assignment is removed by quotienting by mapping class group of $\Sigma_{L,n}$. As was shown in \cite{Arkani-Hamed:2023lbd}, this quotienting can be achieved by using the so-called tropical Mirzakhani kernel, which naturally selects a fundamental domain in the 
global Schwinger space.}
\item The headlight functions $\alpha_{C}$ naturally arise as tropicalization of so called \emph{surface variables} $u_{C}$. The surface variables are a solution to the system of non-linear equations.
\begin{align}\label{svpr}
u_{C} + \prod_{C^{\prime}\, \in\, \Sigma_{L,n}}\, (1)^{\sigma(C, C^{\prime})}\, u_{C^{\prime}} =  1, 
\end{align}
where $\sigma(C, C^{\prime})$ is the intersection number between $C,\, C^{\prime}$. 

In the case of $\Sigma_{0,n}$ $u_{C}$ are the well known real \emph{Plucker co-ordinates}. Clearly the solution to these system of equations $\forall\, (L,n)$ (or more generally $\forall\, (L,n, h)$) is a highly non-trivial problem. The solution to this problem came via advances in quiver representation theory and the theory of surface cluster algebras, \cite{Arkani-Hamed:2023lbd, Arkani-Hamed:2023mvg}. In a nut-shell the surface variables that satisfy eqn.(\ref{svpr})  can be obtained by following a simple algorithm, which expresses them as monomials in $E_{i}$ variables that are associated to internal edges of $\Gamma_{\textrm{ref}}(L, n)$. 

Given a curve $C$, we associate a matrix to the corresponding word $W_{C}$ by using the following assignments. As $C$ traveses a path inside the reference fat graph, we assign the following $2 \cross 2$ matrices at  each left and right turn.
\begin{align}
M_L(y_i)=\begin{bmatrix}
y_i & y_i \\
0 & 1 
\end{bmatrix}, \quad 
M_R(y_i)=\begin{bmatrix}
y_i & 0\\
1 & 1 
\end{bmatrix}
\end{align}
Note for the boundary edge we evaluate the respective
matrix at $1$ i.e.  $M_{L/R}(1)$ and for spiral curve we get an infinite product of these matrices. For a generic curve $C$, we can construct the matrix $M_C$ by multiplying the above matrices in its word $W_C$ as follows.
\begin{align}
M_{C} = M_{L/R} (0) . M_{L/R} (y_1). M_{L/R} (y_2) \ldots := \begin{bmatrix}
M_{C_{1,1}}& M_{C_{1,2}}\\
M_{C_{2,1}} & M_{C_{2,2}}
\end{bmatrix} 
\end{align}
\begin{align}
u_{C} = \frac{M_{C_{1,2}}\, M_{C_{2,1}}}{M_{C_{1,1}}\, M_{C_{2,2}}}.
\end{align}
For illustration, we compute the matrices $M_{C}$ for a set of curves in $\Sigma_{2,1}$. 
For spiral curves, the corresponding matrices involve an infinite product. 
\begin{align}
M_{C_{10}^{cc}} &= 
\begin{pmatrix}
1 + \dfrac{x}{1-x} & \epsilon \\[6pt]
1 + \dfrac{x(1+y_{1})}{1-x} & \epsilon(1+y_{1})
\end{pmatrix}, \quad
u_{C_{10}^{cc}} = \frac{1+xy_{1}}{1+y_{1}}, \\[10pt]
M_{C_{20}^{cc}} &=
\begin{pmatrix}
1 + \dfrac{x(1+y_{1})}{1-x} & \epsilon(1+y_{1}) \\[6pt]
\dfrac{x y_{1}}{1-x} & \epsilon y_{1}
\end{pmatrix}, \quad
u_{C_{20}^{cc}} = \frac{x(1+y_{1})}{1+xy_{1}}, \\[10pt]
M_{C_{11}} &=
\begin{pmatrix}
1 & 1+(1+x)y_{1} \\[6pt]
1 & (1+y_{1})(1+xy_{1})
\end{pmatrix}, \quad
u_{C_{11}} = \frac{1+(1+x)y_{1}}{(1+y_{1})(1+xy_{1})}, \\[10pt]
M_{C_{22}} &= 
\begin{pmatrix}
1+y_{1}(1+x(1+y_{1})) & y_{1}(1+x(1+y_{1})) \\[6pt]
x y_{1}^{2} & x y_{1}^{2}
\end{pmatrix}, \quad
u_{C_{22}} = \frac{y_{1}(1+x+xy_{1})}{(1+y_{1})(1+xy_{1})}.
\end{align}
\item The formula for the headlight function $\alpha_{C}$ is in terms of $u$ variables, 
\begin{align}
\alpha_C  &=  - \text{Trop}\,  u_C \nonumber\\
&= \text{Trop}\, M_{C_{1,1}}  + \text{Trop}\, M_{C_{2,2}}   - \text{Trop}\, M_{C_{1,2}}   - \text{Trop}\, M_{C_{2,1}} 
\end{align}
where for any monomial function $f(E_{1}, \dots, E_{K})$ of $K$ real variables, tropicalization of this monomial is defined implicitly via the limit, 
\begin{align}
e^{\textrm{Trop}(f)(t_{1}, \dots, t_{K})}\, :=\, \prod_{i=1}^{K}\, \lim_{t_{i}\, \rightarrow\, \infty} f(e^{t_{1}}, \dots, e^{t_{K}})\, .
\end{align}
After tropicalization of $u$-variables we obtain headlight function for the set of curves, 
$\{\, C_{10}, C_{20}, C_{11}, C_{22}\, \}$ as 
\begin{align}
&\alpha_{10}=\max(0,t_1)-\max(0,t_1+t_x)\\
&\alpha_{20}=-t_x-\max(0,t_1)+\max(0,t_1+t_x)\\
&\alpha_{11}=\max(0,t_1)+\max(0,t_1+t_x)-\max(0,t_1,t_x+t_1)\\
&\alpha_{22}=-t_1+\max(0,t_1)+\max(0,t_x+t_1)-\max(0,t_1,t_1+t_x)
\end{align}
$\epsilon$ is a parameter necessary to resum the  infinite series that arises due to words such as $(X L)^{\infty}$.\footnote{After obtaining the matrix representation of the word, 
any infinite sequence of letters, $x^{\infty}$ is first regularized as $\epsilon$ which is set fo zero after we obtain $u_{C}$.}
\item And finally, for $\Sigma_{L > 1, n, h=0}$ the measure in the global Schwinger space has to account for the action of the mapping class group (MCG). Hence we have to be able to rigorously define the ``gauge-fixed'' measure, 
\begin{align}
\mathrm{d}^E t  \to \frac{\mathrm{d}^E t }{\textrm{MCG}}\, .
\end{align}

\item The final result for $n$ point $L$-loop planar amplitude has the following striking form, 
\begin{align}
\mathcal{A}_{L, n} =  \int \frac{\mathrm{d}^E t }{\text{MCG}}\, \bigg[\frac{\pi^L}{\mathcal{U}(\alpha)}\bigg]^{D/2} \exp \bigg[\frac{\widebar{\mathcal{F}}(\alpha)}{\mathcal{U}(\alpha)}\bigg].
\end{align}
where, $\mathcal{U}(\alpha)$ and $\mathcal{F}(\alpha)$ called the first and second surface Symanzik polynomials. They are homogeneous polynomials in $\alpha$ and evidently are \emph{not} associated to any particular Feynman diagram. Where 
$\widebar{\mathcal{F}}(\alpha)=\mathcal{F}(\alpha)-\mathcal{U}(\alpha)\mathcal{Z}(\alpha)$, throughout the paper $\mathcal{F}(\alpha)=-\mathcal{F}_0(\alpha)$, where surface Symanzik polynomial ${\cal F}_0$ was first given in \cite{Arkani-Hamed:2023lbd}.
\item There is still one non-trivial step left viz. modding out by the mapping class group. Restricting to a fundamental domain in \textbf{$t$}-space is the obvious possibility, but it would amount to singling out Feynman diagrams and thus explicitly relies on the very structures which are secondary in this entire program. 

This can be done thanks to the \emph{tropical Mirzakhani kernel}, ${\cal K}$ \cite{Arkani-Hamed:2023lbd} which is morally akin to the Fadeev-Popov determinant  used in defining a ``measure'' on the space of gauge invariant orbits,
\begin{align}\label{eq:amp-parametric-form}
\mathcal{A}_{L, n} =  \int \mathrm{d}^E t\,  \mathcal{K}(\alpha) \, \bigg[\frac{\pi^L}{\mathcal{U}(\alpha)}\bigg]^{D/2} \exp \bigg[\frac{\mathcal{F}(\alpha)}{\mathcal{U}(\alpha)}\bigg].
\end{align}
$\mathcal{K}(\alpha)$  is a rational function of $\alpha$ and has support on a finite region of the fan. Therefore a few $\alpha_C$ will contribute to the integral. A rather beautiful fact about this kernel is that, as shown in \cite{Arkani-Hamed:2023mvg},  choice of one point tadpole fat graph as $\Gamma_{\textrm{ref}}(L,1)$ completely fixes it. As a result, ${\cal K}$ only depends on the global Schwinger parameters of a one point tadpole reference graph $\forall\, L$. Details can be found in \cite{Arkani-Hamed:2023mvg}.

\item  The formulae for the first and second \emph{surface Symanzik polynomials} were derived in \cite{Arkani-Hamed:2023lbd}. They have the same structural form as graph Symanzik polynomials except that graphs are replaced by surfaces that they triangulate! 
\begin{align}
\mathcal{F} 
&= - \sum_{S'\,\text{cuts } \Gamma \,\text{into two trees}} 
\left( \prod_{C \in S'} \alpha_{C} \right) 
\left( \sum_{C \in S'} K^{\mu}_{C} \right)^{2}, 
\label{def_F_by_cuts}\\[6pt]
\mathcal{U} 
&= \sum_{S\,\text{cuts } \Gamma \,\text{into a tree}} \,\,
\prod_{C \in S} \alpha_{C}.
\label{def_U_by_cuts}
\end{align}
\item The first surface Symanzik polynomial $\mathcal{U}$ takes a particularly simple form if we choose $\Gamma_{\textrm{ref}}(L, n)$ as the tadpole fat graph with $L$ loops. In the $L=1$ example, if the Schwinger parameter associated to the loop is $t_{x}$ then it can be shown that 
\begin{align}
{\cal U}(t_{1}, \dots, t_{n-1}, t_{x})\, =\, - t_{x}.
\end{align}
where $t_{1}, \dots, t_{n-1}$ are the global Schwinger parameters that correspond the $n-1$ internal edges of the reference graph. This remarkable result is due to invariance of ${\cal U}$ under attachment of any ``three point fat graph'' to the left of $\Gamma_{\textrm{ref}}(L,n)$ which in turn follows from its definition in eqn.(\ref{def_U_by_cuts}).\footnote{This result can be derived thanks to the so called telescopic property satisfies by headlight functions \cite{Arkani-Hamed:2023mvg}.}\\
We conclude this review with the formula for planar $n$-point one-loop amplitude in $\textrm{Tr}(\Phi^{3})$ theory. 
\begin{align}
{\cal A}_{n,L=1}=\int_{-\infty}^0 \D t_x \int_{\mathbb{R}^n}dt^n \left(\frac{1}{ -t_x}\right)^{\frac{D}{2}} \, e^{\frac{{\cal F}^{(n,1)}}{-t_x}-{\cal Z}^{(n,1)}}
\end{align} 
where 
\begin{align}
{\cal F}^{(n,1)} \,&=\, 
\sum_{i<j} \alpha_{i0}\, \alpha_{j0}\, z_{i} \cdot z_{j} \label{eq:F}\\[6pt]
{\cal Z}^{(n,1)} \,&=\, 
\sum_{\forall C_{ij}} \alpha_{C_{ij}}\, X_{C_{ij}}
+ \sum_{i}\alpha_{i0}\, m^{2}, \label{eq:Zij_1l}
\end{align}
where
\begin{align}
z_{i}^{\mu} \,&=\, \sum_{k=1}^{i-1} p_{k}^{\mu} \, . \label{eq:zm}
\end{align}
\end{itemize}
%%%%%%%%%%%%%%%%%%%%%%%%%%%%%%%%
%%%%%%%%%%%%%%%%%%%%%%%%%%%%%%%%
\section{A ``surface-forest formula'' : renormalization of planar amplitudes}\label{sffpa}

In this section, we argue that the defining properties of surface Symanzik polynomials lead us to an obvious generalization of forest formula which renormalizes the curve integrals. We analyze planar $n$-point amplitudes at $L$-loops for $\Tr(\Phi)^{3}$ theory in four dimensions. 

The underlying idea behind this generalization (of the forst formula) is rather simple : Any forest, namely a forest which is a disjoint union of 1PI divergent graphs such that each graph has $n \leq\, 2$ external points, can be realised as a (possibly disconnected) subgraph of several Feynman graphs which are related to each other by a sequence of mutations that keep the forest fixed. It is then rather natural to extend the forest formula to the curve integrands which simultaneously renormalize all the fat graphs which share the forest under consideration. In other words, the most naive extension of the forest formula to positive geometries  is to first \emph{``sum over all the Feynman graphs with a fixed common forest''} followed by sum over forests. We will refer to this formula as \emph{surface-forest formula}. 

However, as it will be clear from the preceding discussion that the surface-forest formula is rather naive in that it splits the curve integral into regions, all of which have common 1PI divergent subgraphs and then sums over all such contributions and hence this approach to renormalization does not utilize the full power of ``surfaceology'' \cite{Arkani-Hamed:2023lbd,Arkani-Hamed:2023mvg} due to its explicit reliance on 1PI divergent (fat) subgraphs inside a punctured surface.\footnote{We are grateful to Nima Arkani-Hamed for insisting why, from the perspective of positive geometry that we should not be satisfied with such a formula.}   Reader may wonder, why then do we discuss this formula. Our primary motivation in doing so is to provide a proof of principle. That is, the purpose of this section is to show that the curve integrals can be renormalized without breaking them explicitly into sum over Feynman graphs, by utilizing the fact that many Feynman graphs have common 1PI divergent subgraphs. We do however insist that this section can be skipped in the first reading as the subsequent sections which form the core of the paper do not explicitly rely on the surface-forest formula. 

We now introduce the concept of  a surface-forest. Let $\Sigma_{L, n}$ be a marked surface with $n$ external points ${\cal V}\, =\, \{v_{1}, \dots,\, v_{n}\, \}$ and $L$ punctures that form a set, ${\cal P} :=\, \{ v_{n+1},\, \dots,\, v_{n+L}\}$. Let ${\cal V}\, \medcup\, {\cal P}\, :=\, {\cal VP}$. 

\begin{definition}\label{def:surface_forest}
	A surface forest (s-forest) is a  partial dissection of $\Sigma_{L,n}$ which is generated by finite union of one punctured 2-gon, $\sigma_{1,2}$ and one-punctured 1-gon $\sigma_{1,1}$ that satisfy the following conditions. 
	\begin{enumerate}
		\item  Every $\sigma_{1,2}$ has two boundaries,  that connect $( v_{i},\, v_{j})\, \in\,  {\cal VP}\, \times\, {\cal VP}$ on either side of a puncture $v_{I} \in {\cal P}$.  We denote such a 2-gon as $\sigma(i,I, j)$. 
		\item  Every $\sigma_{1,1}$ has a single boundary that connects $v_{i}\, \in\, {\cal VP}$ to itself and which bounds a puncture $v_{J}\, \in\, {\cal P}$.  We denote such a 1-gon as $\sigma(i,J,i)$. 
		\item $\sigma(i,I,j)$  belongs to a surface forest if and only if there does not exist any $\sigma_{1,1}\, \subset\, \sigma(i,I,j)$. 
		\item Two such $(k\leq\, 2)$-gons  $\sigma(i_{1}, I_{1}, j_{1}),\, \sigma(i_{2}, I_{2}, j_{2})$ belong to a surface forest if and only if, 
		\begin{align}
			\sigma(i_{1}, I_{1}, j_{1})\, \medcap\, \sigma(i_{2}, I_{2}, j_{2})\, = 0\, \textrm{or}\,  \textrm{a singleton set in ${\cal PV}$.}
		\end{align}
	\end{enumerate} 
\end{definition}
See Fig.  (\ref{fig:p_l_sF1}--\ref{fig:tadpole_sf}) for some examples.  \\2

\begin{figure}[h!]
	\centering
	% Top row
	\subcaptionbox{A 2-gon with both edges $i,\, i+1 \in \mathcal{V}$ and $n+I \in \mathcal{P}$. \label{fig:p_l_sF1}}
	[0.3\textwidth]{\includegraphics[width=\linewidth]{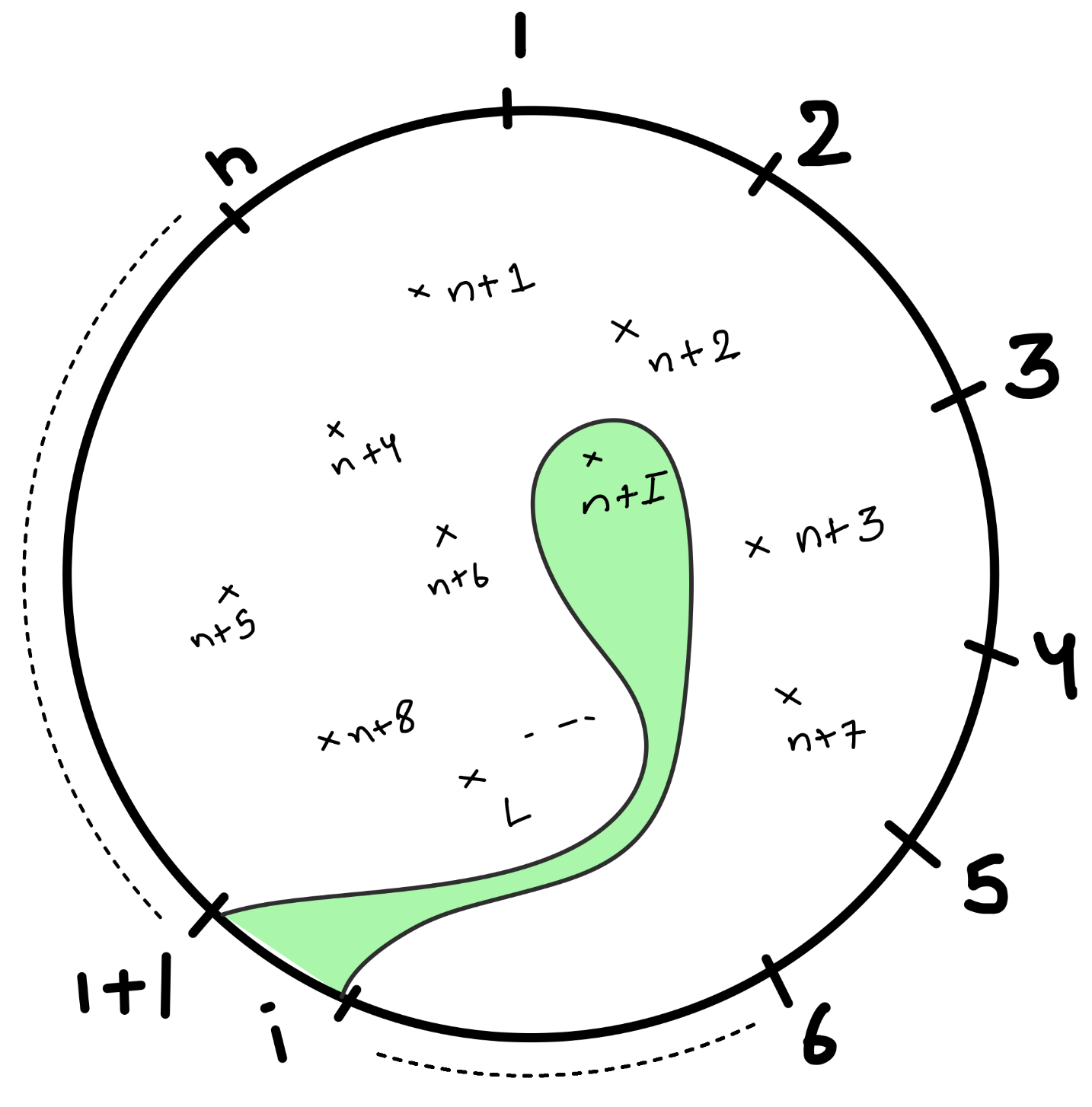}}
	\hfill
	\subcaptionbox{A 2-gon with one edge $i+1 \in \mathcal{V}$, the other $n+1 \in \mathcal{P}$, and $n+I \in \mathcal{P}$. \label{fig:p_l_sF2}}
	[0.3\textwidth]{\includegraphics[width=\linewidth]{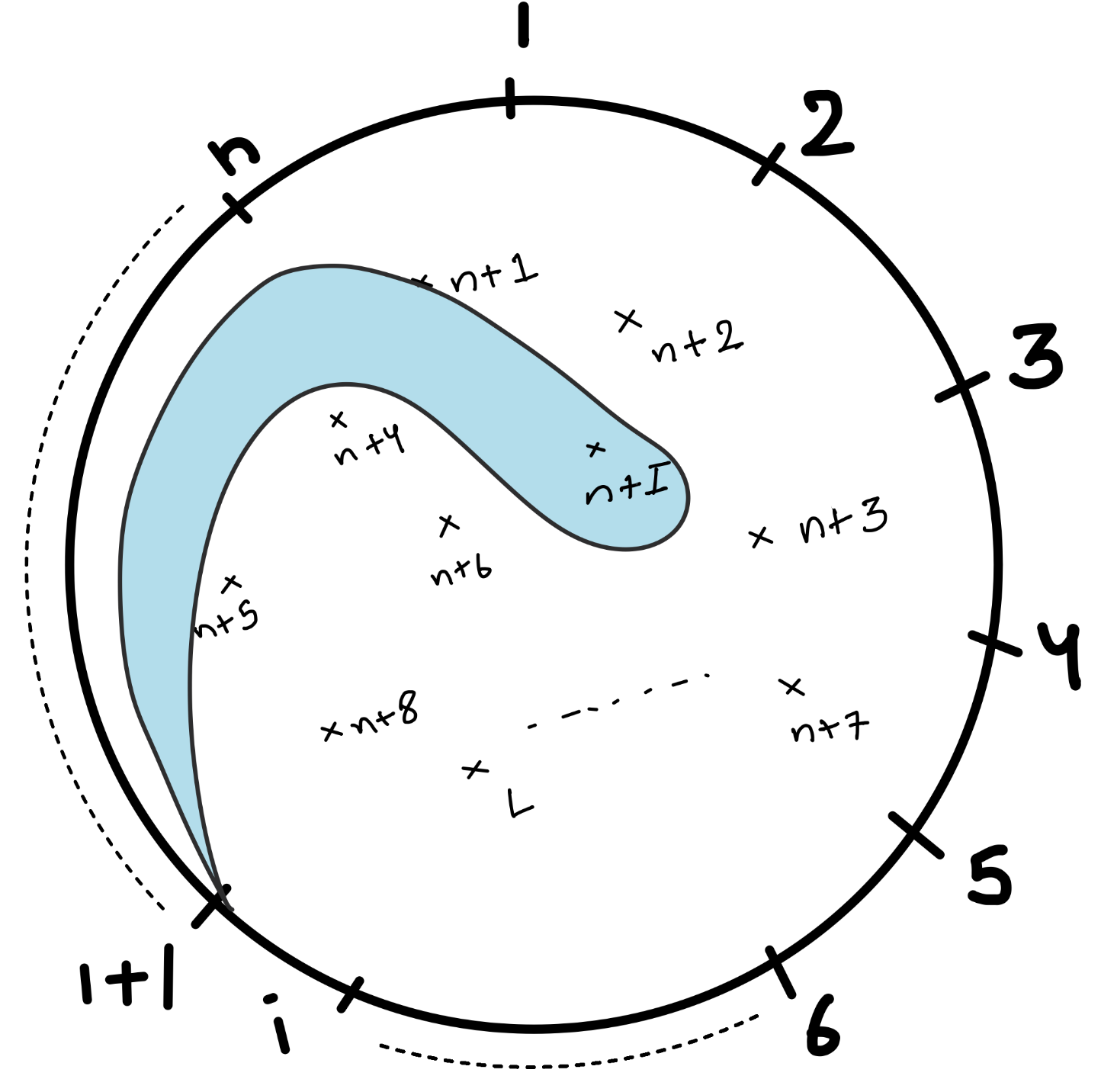}}
	\hfill
	\subcaptionbox{A 2-gon with both edges $1,3 \in \mathcal{V}$ and $n+2 \in \mathcal{P}$. \label{fig:p_l_sF3}}
	[0.3\textwidth]{\includegraphics[width=\linewidth]{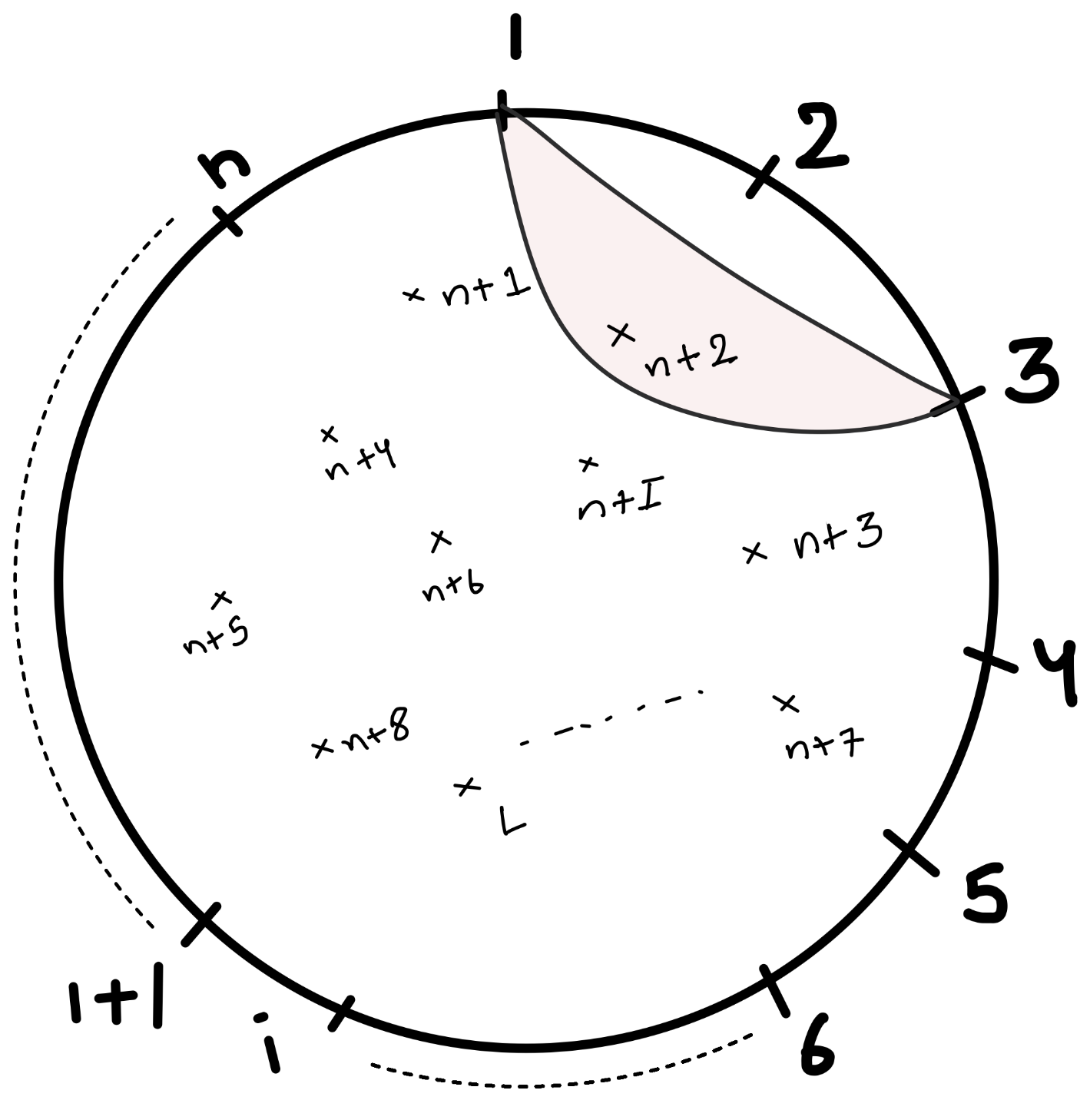}}
	
	\vspace{0.5cm}
	% Bottom row
	\subcaptionbox{A 2-gon with both edges $n+5,n+6 \in \mathcal{P}$ and $n+4 \in \mathcal{P}$. \label{fig:p_l_sF4}}
	[0.3\textwidth]{\includegraphics[width=\linewidth]{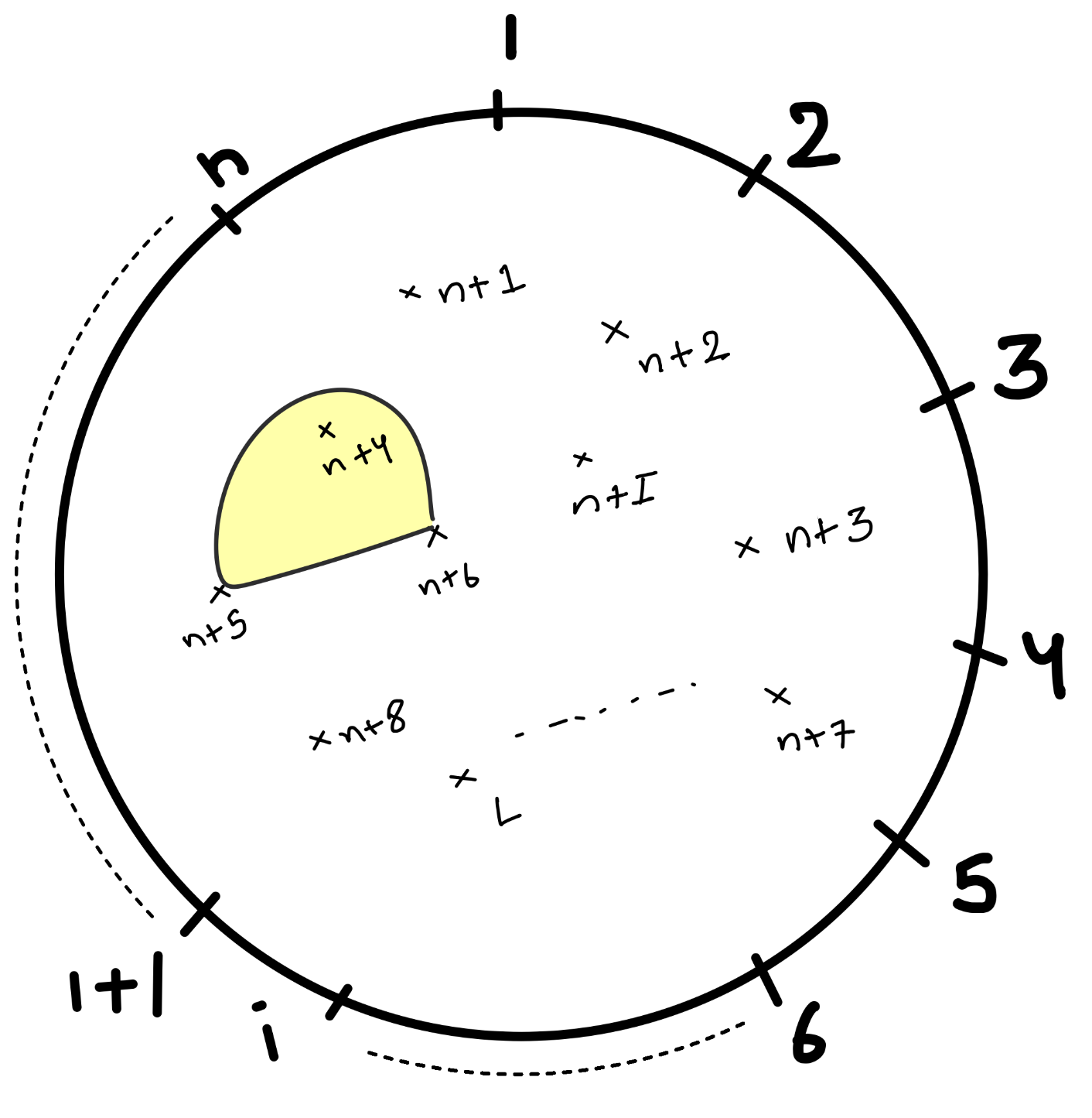}}
	\hfill
	\subcaptionbox{A 1-gon with edge $n+2 \in \mathcal{P}$ and puncture $n+4 \in \mathcal{P}$. \label{fig:tadpole_sf}}
	[0.3\textwidth]{\includegraphics[width=\linewidth]{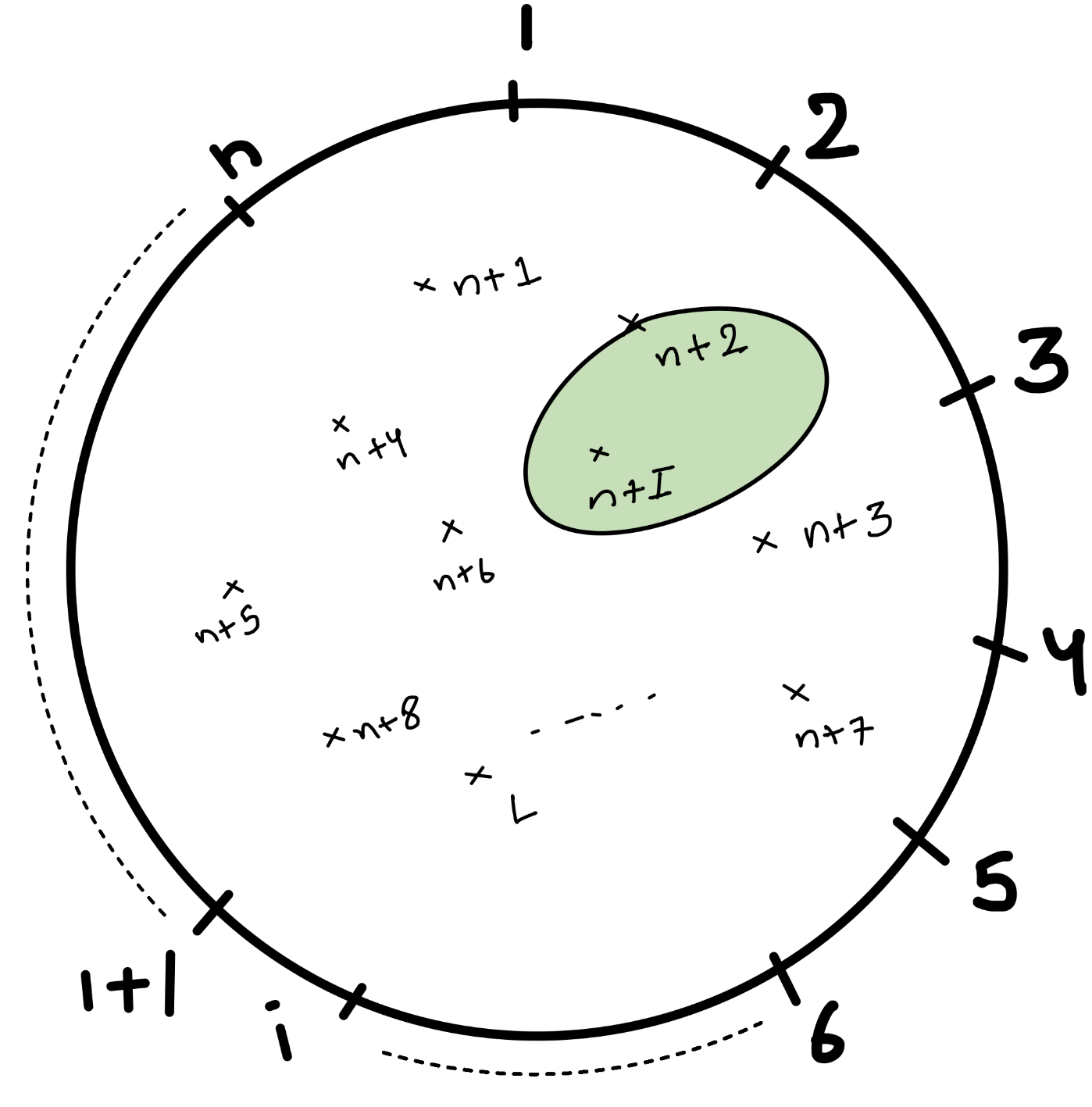}}
	\hfill
	\subcaptionbox{An example of an s-forest with edges $\{\{i+1,n+1\},\{n+5,n+6\},\{5\},\{1,3\},\{5,4\}\}$ enclosing punctures $\{n+2,n+I,n+8,n+4,n+7\}$. \label{fig:Example_of_sf}}
	[0.3\textwidth]{\includegraphics[width=\linewidth]{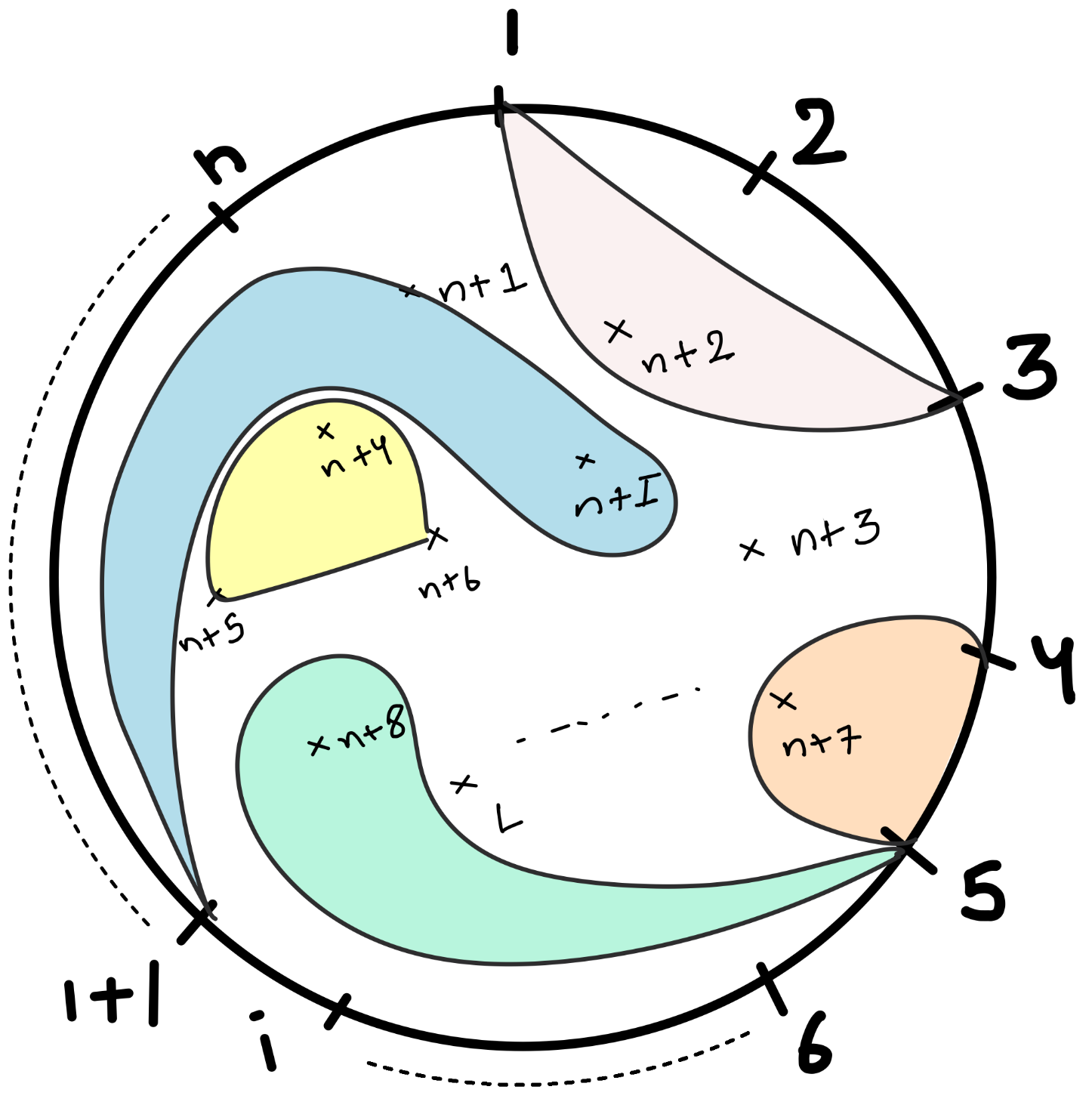}}
	
	\caption{Examples of s-forests.}
	\label{fig:all_s_forests}
\end{figure}

We will denote the boundary of $\sigma(i,I,j)$ as
\begin{align}
	\partial \sigma(i,I,j)\, =\, C_{I}^{+}(v_{i}, v_{j})\, \medcup\, C_{I}^{-}(v_{i}, v_{j}) 
\end{align}

where $C_{I}^{+}(v_{i}, v_{j}),\, C_{I}^{-}(v_{i}, v_{j})  \in\, {\cal C}$ are the two curves such that as we start at $v_{i}$ and circle around the puncture $v_{I}$ then we go clockwise by first traversing $C_{I}^{+}(v_{i}, v_{j})$ and then traversing $C_{I}^{-}(v_{i}, v_{j})$.  

For the case of 1-gon the two points are the same then surface have only one boundary $C^t_{I}(v_{i}, v_{i})$ i.e. the tadpole curve.

We will use the following notation for an s-forest. If it is defined by a collection of 2-gons that bound the punctures $(v_{1}, \dots, v_{K})$ and 1-gons that bound the punctures $(v_{K+1}, \dots , v_{M})$,  (that satisfy definition (\ref{def:surface_forest}) then we will refer to it as, 
\begin{align}\label{sfnotation}
	\textrm{sf}^{(i_{1}, j_{1}), \dots\, (i_{K}, j_{K}) \vert i_{K+1}, \dots, i_{M}}_{v_{1}, \dots, v_{K} \vert v_{K+1}, \dots, v_{M}} 
\end{align}
A few comments are in order regarding some characteristic properties of the s-forests.
\begin{enumerate}
	\item We note that the cardinality of the surface forest (as a set of 2-gons and 1-gons) is the number of punctures that label it. 
	\begin{align}
		\bigg|\, \textrm{sf}^{(i_{1}, j_{1}), \dots\, (i_{K}, j_{K}) \vert i_{K+1}, \dots, i_{M}}_{v_{1}, \dots, v_{K} \vert v_{K+1}, \dots, v_{M}}\, \bigg| \, =\, M
	\end{align}
	\item Given the s-forest in eqn.(\ref{sfnotation}), there is a unique set of curves that hit the punctures $(v_{1}, \dots, v_{K} \vert\, v_{K+1}, \dots, v_{M})$.  This set is
	\begin{align}\label{icsisf}
		{\cal C}^{\textrm{in}}\bigg(\textrm{sf}^{(i_{1}, j_{1}), \dots\, (i_{K}, j_{K}) \vert i_{K+1}, \dots, i_{M}}_{v_{1}, \dots, v_{K} \vert v_{K+1}, \dots, v_{M}}\bigg)\, :=\, \bigg(\bigcup_{\alpha=1}^{K} C(v_{i_{\alpha}}, v_{\alpha}), C(v_{j_{\alpha}}, v_{\alpha}) \bigg)\, \medcup\, \bigg( \bigcup_{\beta=K+1}^{M} C( v_{i_{\beta}}, v_{\beta}) \bigg)
	\end{align}
	\item Given any s-forest, $\textrm{sf}$ we will denote the correspondingly unique set of curves that hit the punctures that label $\textrm{sf}$ as ${\cal C}^{\textrm{in}}(\textrm{sf})$. 
	\item For latter purpose (of introducing the counter-term for curve integrals),  we will also need to introduce a notion of the set of ``boundary curves'' among all the curves that generate 2-gon elements of an $\textrm{sf}$. That is, given the surface forest such as the one defined in eqn.(\ref{sfnotation}), we define the set, 
	\begin{align}\label{cnbsf}
		{\cal C}^{\textrm{b}}(\textrm{sf})\, :=\,  \left\{\, \bigcup_{\epsilon = \pm}\, \bigcup_{a=1}^{K}\, C^{\epsilon}_{i_{a} j_{a}}\, \Big{\vert}\,  \vert j_{a} - i_{a} \vert > 1\,\right\} 
	\end{align}
	\item Similarly if ${\cal C}^{t}(\textrm{sf})$ is the set of all the curves that bound $\sigma(i,I,i)$ for some $(i,\, I)$ then we define 
	\begin{align}\label{cnbtsf} 
		{\cal C}^{\textrm{b} \medcup t}(\textrm{sf})\, :=\, {\cal C}^{\textrm{b}}(\textrm{sf})\, \medcup\, {\cal C}^{\textrm{t}}(\textrm{sf})
	\end{align}
	\item We will denote the set of all  s-forests inside $\Sigma_{L,n}$ as ${\cal SF}(L,n)$. 
\end{enumerate}
An illustrative example for ${\cal C}^{\textrm{in}}_{\textrm{sf}}$ is shown in Fig.~\ref{fig:curves_in_sF1}--\ref{fig:curves_in_tapole_sF}.

We will now show that surface forests can be used to obtain renormalized curve integrals in the present case, i.e. in a theory with no overlapping divergent graphs.  Recall that in the Zimmermann's forest formula for renormalizing individual graphs,  a forest is defined as sets of 1PI sub-divergent graphs such that the subgraphs are 
\[
\gamma_i \medcap \gamma_j = \varnothing \quad \text{or} \quad
\gamma_i \medsubset \gamma_j \quad \text{or} \quad \gamma_j \medsubset \gamma_i.
\]
For each 1PI sub-divergent graph in the forest, we fix a kinematic point for the incoming momenta~\footnote{This implies fixing magnitudes of the momenta $p_i^2=\mu^2$ and fixing angles between the incoming momenta's to a 1PI sub-divergence $\theta=\frac{p_i.p_j}{|p_i||p_j|}$ to some $\theta_0$. }. 

The construction of the s-forest formula follows the same principle with a key difference : Namely, given a  (mutually disjoint) collection of 1 PI divergent subgraphs, which is embedded inside a set of graphs, we write a single counter term for this whole set of such graphs using headlight functions and than sum over all such collections. 

\begin{figure}[h!]
	\centering
	\begin{subfigure}{0.45\textwidth}
		\centering
		\includegraphics[width=0.7\linewidth]{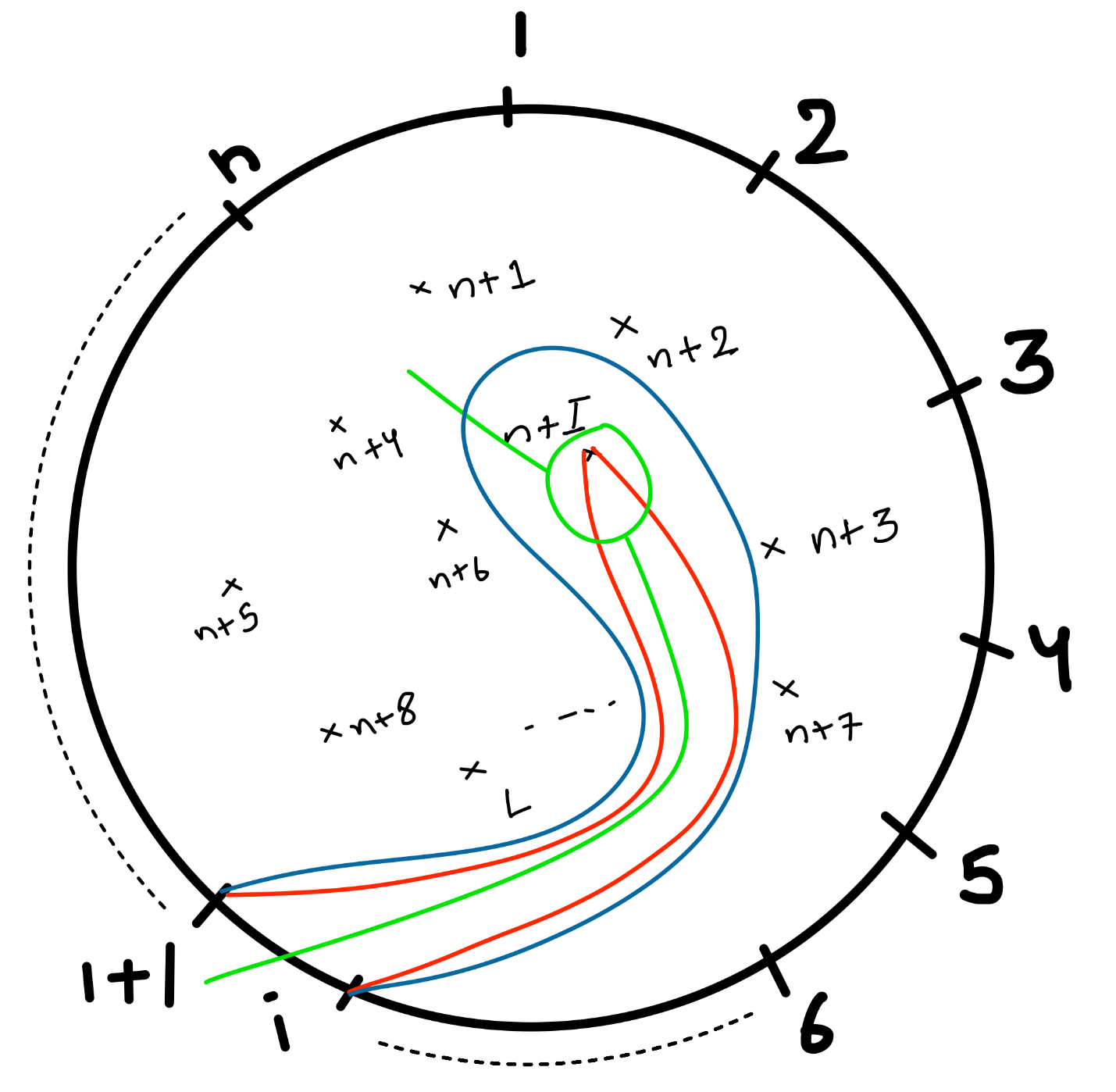}
		\caption{}
		\label{fig:curves_in_sF1}
	\end{subfigure}
	\hfill
	\begin{subfigure}{0.45\textwidth}
		\centering
		\includegraphics[width=0.7\linewidth]{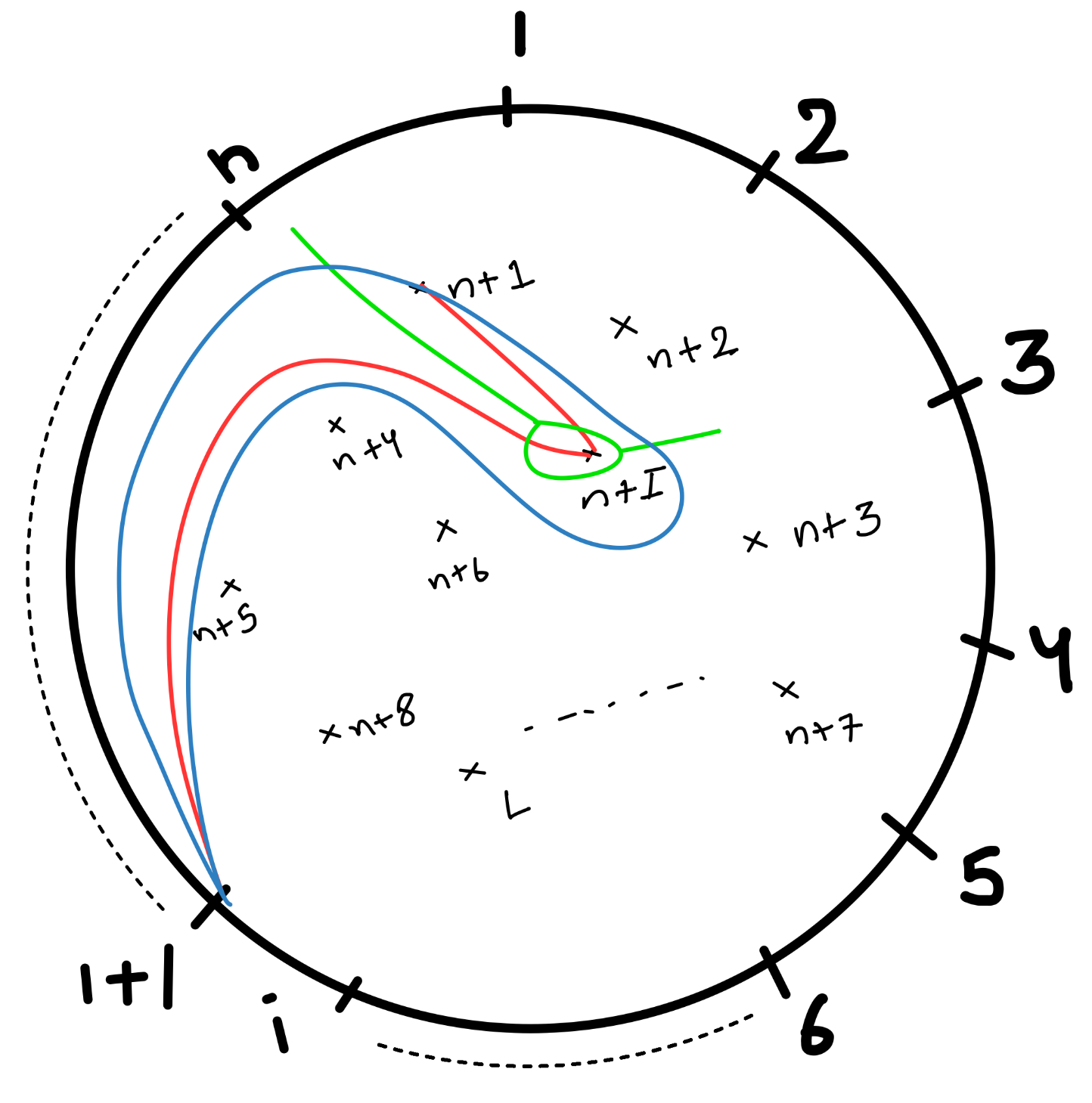}
		\caption{}
		\label{fig:curves_in_sF2}
	\end{subfigure}
	\vskip\baselineskip
	\begin{subfigure}{0.45\textwidth}
		\centering
		\includegraphics[width=0.7\linewidth]{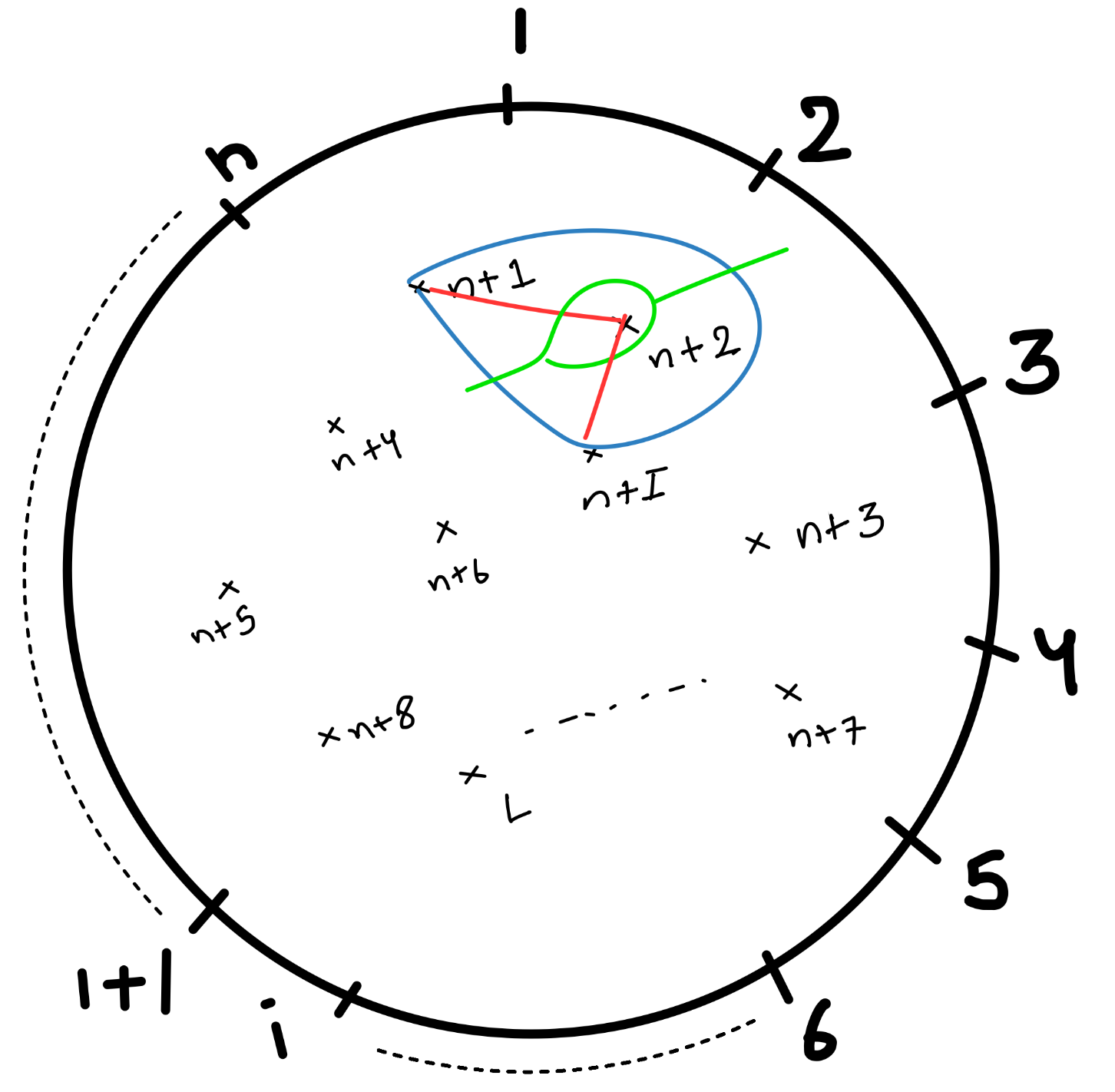}
		\caption{}
		\label{fig:curves_in_sF3}
	\end{subfigure}
	\hfill
	\begin{subfigure}{0.45\textwidth}
		\centering
		\includegraphics[width=0.7\linewidth]{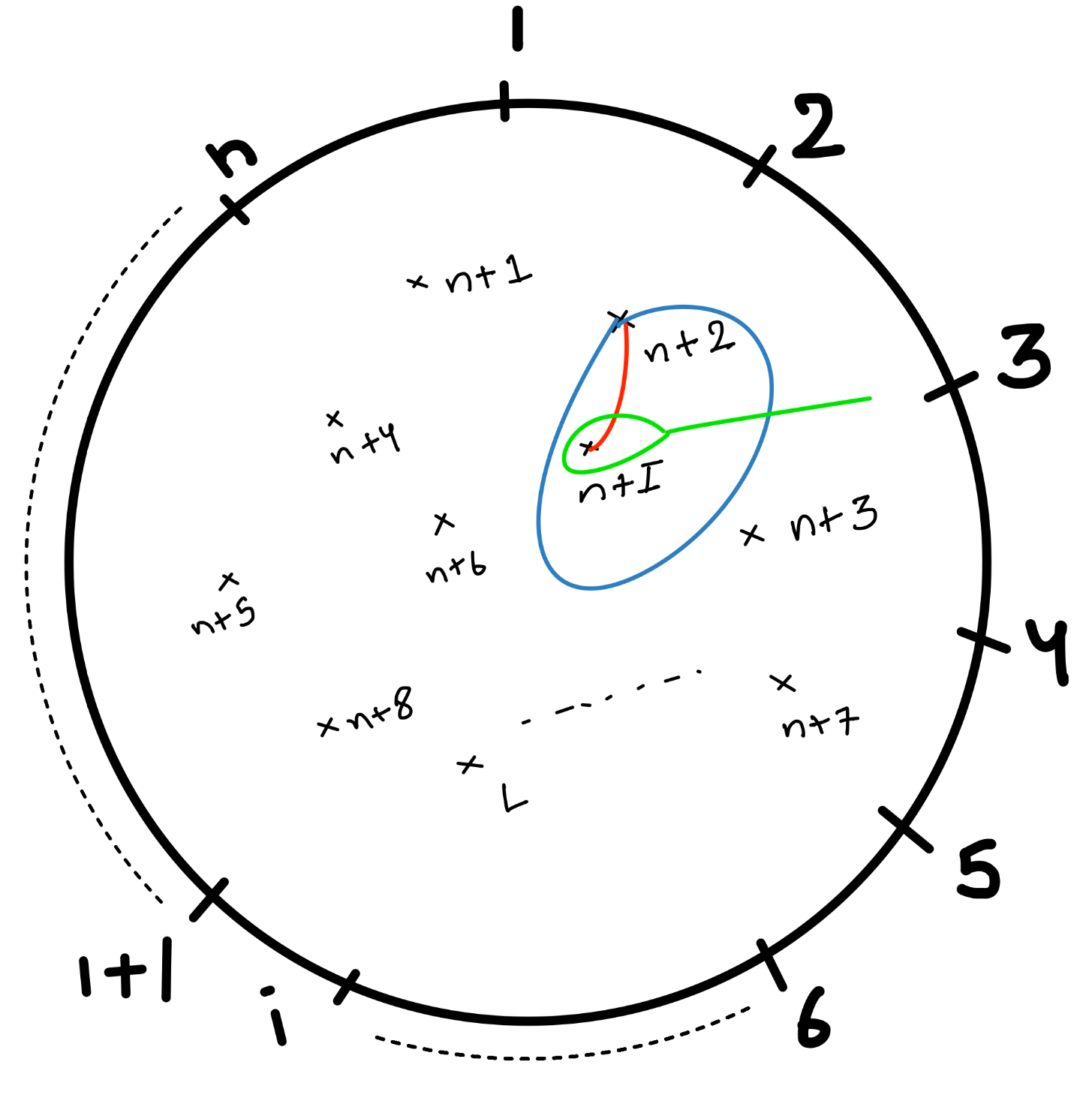}
		\caption{}
		\label{fig:curves_in_tapole_sF}
	\end{subfigure}	
	\caption{In the figure above, the red curves represent the possible non–intersecting curves enclosed within the subsurfaces. The blue curves correspond to the bounding curves $C^{+}_{J},\,C^{-}_{J}$ (for a $2$–gon) or $C^{t}_{K}$ (for a $1$–gon), while the green graph depicts the associated $1$PI subgraph.}
	\label{fig:four_grid}
\end{figure}
Such a counter-term can be written by essentially following the usual Forest formula which was reviewed in section on parametric forest:~\ref{sec:review_parametric_forest}. 

For any s-forest, there is  a unique assignment of momentum labels for each curve in the set ${\cal C}^{\textrm{in}}(\textrm{sf})$ which was defined in eqn.(\ref{icsisf}).  Consider the subset $(v_{1}, \dots, v_{K})$ of the set of punctures, $(v_{1}, \dots, v_{K} \vert\, v_{K+1}, \dots, v_{M})$ which defines an s-forest. We denote the unique momentum assignment to the pair of curves $C(v_{i_{\alpha}}, v_{\alpha}), C(v_{j_{\alpha}}, v_{\alpha})$ incident on the puncture $v_{\alpha}\, \vert \, \alpha\, \in\, \{1, \dots, K\}$ as
\begin{align}
\{\, C(v_{i_{\alpha}}, v_{\alpha}), C(v_{j_{\alpha}}, v_{\alpha})\, \}\, \rightarrow\, ( P^{\mu}_{i_{\alpha} v_{\alpha}},\, P^{\mu}_{j_{\alpha} v_{\alpha}}\, ).
\end{align}
Now let, 
\begin{align}
X_{(i_{\alpha}\, j_{\alpha})\, v_{\alpha}}\, :=\, (\, P_{i_{\alpha} v_{\alpha}} -\, P_{j_{\alpha} v_{\alpha}}\, )^{2}
\end{align}
As we will see, the counter-term in the surface forest formula will be obtained by fixing $X_{i_{\alpha}\, j_{\alpha}\, v_{\alpha}}$ at a renormalization point $S_{0}$. 

Consider a punctured disc $\Sigma_{L,n}$ with a partial dissection given by the s-forest which is defined in eqn. (\ref{sfnotation}).  for brevity of notation, we will denote the s-forest simply as $\textrm{sf}$ in rest of this section.   Let $x_{C}$ be a variable assigned to each curve $C$. A priori these variables are independent of $X_{C}$. Consider the following ``curve integral'' which is obtained by summing over all the dissections of $\Sigma_{n,L}$ in which the partial dissection induced by $\textrm{sf}$ is fixed. 
\begin{align}\label{eq:anrsfpij}
	{\cal A}_{L, n}\bigg|_{\mathrm{sf}}(p_{1}, \dots, p_{n})&:=\nonumber\\
	&\hspace*{-1.0in}\int \frac{d^{\lvert E\rvert} t}{\mathrm{MCG}}\;
	\mathcal{G}_{\mathrm{sf}}
	\left[\, \int \prod_{r=1}^{L} d\ell_{r}\; e^{-\sum_{C'}\, \alpha_{C'} x_{C'}}\, \right]\Bigg|_{\scriptsize \left. \begin{cases} x_{C} = X_{C} \forall\, C\, \notin\, {\cal C}^{t}(\textrm{sf}),\, \\  X_{( i_{\alpha}\, j_{\alpha} )\, v_{\alpha}} =\, S_{0}\, \forall\, \alpha\, \in\, \{1, \dots,\, K\},\, \\  x_{C} = \epsilon^{2}\, \forall\, C\, \in {\cal C}^{t}(\textrm{sf}) \end{cases} \right\} }
\end{align}
where 
\begin{itemize}
	\item ${\cal C}^{\textrm{t}}(\textrm{sf})$ is defined in eqn.(\ref{cnbtsf}) respectively. 
	\item $S_{0}$ is a renormalization scale associated a pre-chosen scale variable $S$. 
	Throughout this paper, our choice of $S$ will be the following. Recall that, 
	\begin{align}
		X_{ji} :=\, (p_{j} +\, \dots\, +\, p_{n}\, +\, p_{1}\, +\, \dots\, + p_{i-1})^{2}\, \ \textrm{for}, \, 1\, \leq\, i < j \leq\, n
	\end{align}
	We then choose $S, S_{0}$ as, 
	\begin{align}\label{schoice}
		S &:=\, \frac{1}{\frac{n(n-3)}{2}}\, \sum_{1 \leq\, i < j\, \leq\, n} X_{ji}\nonumber\\
		S_{0} &= m^{2}.
	\end{align}
	\item As compared to the parametric Forest formula for a fixed graph in a theory without overlapping and overall divergence beyond $n=2$, the surface forest formula looks rather complicated. But we can understand it intuitively as follows. However we will give several examples in section (\ref{sec:sf_reno_examples}) that makes it's interpretation clear.
	\item Note that for any tadpole curve $X_{C} = (\sum_{i=1}^{n} p_{i})^{2}$ is independent of the momenta and as a result  can be chosen to be the same variable when regularizing the bare curve integral. We thus set $X_{C} = \epsilon^{2}, \, \forall\, $ tadpole curves. Our counter-term is defined such that for tadpole curves $x_{C}$ is also set at $\epsilon^{2}$. As we show below, this will imply that the renormalized curve integral obtained via surface-forest formula will have no tadpole contribution. 
	\item Kernel ${\cal G}(\textrm{sf})$ in eqn.(\ref{eq:anrsfpij}) is defined as follows. 
\end{itemize}
Finally, we define the kernel ${\cal G}(\textrm{sf})$ that appears in eqn (\ref{eq:anrsfpij}). 

We first define ${\cal G}$ for the simplest example of the s-forest = $\sigma(i, I, j)$ which is one 2-gon bounding a puncture $v_{I}$. Then, 
$\mathcal{G}_{\sigma(i,I,j)}$ is defined as, 
\begin{align}
	{\cal G}(\sigma(i, I, j)):= \frac{\displaystyle \prod_{C^{\prime}\,\in\,\mathcal{C}^{\mathrm{sf}}}\alpha_{C^{\prime}}}
	{\displaystyle \sum_{C_{I}\,\notin\, \mathcal{C}^{\mathrm{sf}}}\alpha_{C_{I}}
		+ \displaystyle \prod_{C^{\prime}\,\in\,\mathcal{C}^{\mathrm{sf}}}\alpha_{C^{\prime}} } \, .
\end{align}
where \[
{\cal C}^{\mathrm{sf}}
= \left\{{\cal C}^{\mathrm{in}}(\mathrm{sf})
\;\medcup\; {\cal C}^{\textrm{b}\,\cup\,t}(\textrm{sf})\right\} \,\setminus\, C_{ij} \qquad \forall \, C_{ij} \in \partial\Sigma_{L,n},
\]
and which is precisely the set of non-trivial curves that belong to partial dissection whose headlight functions appear in the expressions of the curve integral and $C_{I}\notin \mathcal{C}^{\mathrm{sf}}$ denotes all curves not in $\mathcal{C}^{\text{sf}}$ starting from any point 
$k \in \mathcal{V}\mathcal{P}$ and ending at a puncture $v_{I} \in \mathcal{P}$ of the surface forest.  
The same definition also applies to a surface forest with a $1$-gon.  

Therefore, for a generic surface forest 
\[
\textrm{sf}^{(i_{1}, j_{1}), \dots\, (i_{K}, j_{K}) \vert i_{K+1}, \dots, i_{M}}_{v_{1}, \dots, v_{K} \vert v_{K+1}, \dots, v_{M}}
\]
the function $\mathcal{G}$ can be defined as follows:
\begin{align}
	{\cal G}\left(\textrm{sf}^{(i_{1}, j_{1}), \dots\, (i_{K}, j_{K}) \vert i_{K+1}, \dots, i_{M}}_{v_{1}, \dots, v_{K} \vert v_{K+1}, \dots, v_{M}}\right)=	\frac{\displaystyle \prod_{C^{\prime}\in\;C^{\mathrm{sf}}}\alpha_{C^{\prime}}} {\displaystyle\sum_{I=1}^M\displaystyle\sum_{C_{I}\notin C^{\textrm{sf}}}\alpha_{C_{I}}+ \displaystyle \prod_{C^{\prime}\in\;C^{\mathrm{sf}}}\alpha_{C^{\prime}}} 
\end{align}
${\cal G}$  is defined in such a way that the surface forest formula reduces to the parametric forest formula inside each cone. More in detail, 
if the subgraphs specified by the partial dissection introduced by surface forest are also present within a cone, then the first term in the denominator vanishes (since it contains all headlight functions of curves intersecting the surface-forest curves) and as a result ${\cal G}(\textrm{sf})\, \rightarrow\, 1$ inside this cone. 
On the other hand, if the set of 1PI irreducible subgraphs inside a cone is not same as the s-forest  then ${\cal G}(\textrm{sf})\vert_{\textrm{cone}}\, =\, 0$. 
We will illustrate this phenomena explicitly in the examples of Sec.~\ref{sec:sf_reno_examples}.

After performing the loop integration, the above equation takes the form,
\begin{align}\label{anrsfpijali}
	{\cal A}_{L, n}^{sf(\{x_{c}^{0}, x_{c}\})}\, \displaystyle =\,  \displaystyle \int_{\sum t_{i} \leq\, 0}\, \frac{d^{\vert E\vert}t}{\textrm{MCG}}\ \frac{1}{{\cal U}^{\frac{D}{2}}_{\mathrm{sf}}}\ e^{\frac{{\cal F}^{0}_{\textrm{sf}} }{{\cal U}}-{\cal Z}^{0}_{\textrm{sf}}}\  {\cal G}_{\textrm{sf}}
\end{align}
In the above equation, ${\cal U}_{\mathrm{sf}}$ is the first Symanzik polynomial, notice that it is also depend on the s-forest, this happens because there can be 2-gons $\sigma(i,\, I,\, j)$ in the s-forest that can have one or both of $v_i, v_j\ \in \ \mathcal{P}$ for example see Fig.~\ref{fig:p_l_sF2}, \ref{fig:p_l_sF4} for such surfaces incoming moment have loop momenta which are also fixed. Consequently, after integrating over the loop momenta, the first surface Symanzik polynomial is modified, and therefore depends on the s-forest. ${\cal F}^{0}$ is the surface Symanzik polynomial.

Finally the renormalized amplitude is given as,
\begin{align}
	\mathcal{A}^R_{L, n} =\mathcal{A}_{L, n} -\sum_{\textrm{sf}^{\, i}}(-1)^{|\textrm{sf}^{\, i}|}\mathcal{A}_{L, n} \big{\vert}_{{\textrm{sf}^{\, i}}}
	\label{sf_formula}
\end{align}
We have now defined a s-forest and explained how it can be used to construct counterterms. Let us next turn to some explicit examples.

%%%%%%%%%%%%%%%%%%%%%%%%%%%%%%%%
\subsection{Examples}\label{sec:sf_reno_examples}
This section demonstrates how the preceding statements apply in practice. Since this renormalization scheme does not rely on the properties of the headlight function, one may work with any convenient reference fatgraph.

%%%%%%%%%%%%%%%%%
\subsubsection{One-loop}
At one loop, the surface forest formula takes a particularly simple form, 
yet it is already instructive in illustrating the differences from the 
Zimmermann's forest formula. Recall that $n$ denotes the number of marked points 
on the boundary of the disc (equivalently, the number of external particles 
in the planar amplitude), so that $\mathcal{V} = \{1,2,\dots,n\}$.  

In this case, the surface forests consist of two types of $(k\leq2)$-gons:  
2–gons of the form $\sigma(i, A, j)$ and 1–gons of the form 
$\sigma(i, A, i)$. Both edges of these  $2$-gons lie on boundary points, 
with $v_i, v_j \in \mathcal{V}$ and $v_A \in \mathcal{P}$ similarly for the $1$-gon.  
The full set of surface forests is therefore easily described as  
\begin{align}
	{\cal SF}(L,n)
	= \left\{\, \mathrm{sf}^{(i,j)}_{A} \;\forall\, i<j, \quad 
	\mathrm{sf}^{\,k}_{A} \;\forall\, k \,\right\}.
\end{align}
where $v_i,v_j\ \in \ \mathcal{V}$ and the set of enclosed curves are,
\begin{align}
	&C^{in}(\text{sf}^{i,j}_{A})=\{C_{i0},C_{j0}\}\nonumber\\
	&C^{in}(\text{sf}^{k,k}_{A})\, =\{C_{k0}\}
\end{align}	 
where $C_{i0}$ (starts at some point on the boundary and end up at the loop). The surface forest formula in  eqn.(\ref{sf_formula}) with counter-term defined in  eqn.(\ref{eq:anrsfpij}) simplifies to, 
\begin{align}\label{anrwithN}
	{\cal A}_{1, n}^{R} = {\cal A}_{1, n} (X_{0i},\{ X_{mn},\, X_{nm}\}\vert_{m<n})\, &- \sum_{(i < j)} {\cal A}^{(i,j)}_{1,n}\left( x_{C}\right) \Bigg|_{\scriptsize \left. \begin{cases} x_{C} = X_{C} \forall\, C\, \notin\, {\cal C}^{t}(\textrm{sf}),\, \\  X_{( i\, j)\, v_{\small{A}}} =\, S_{0}\,  \end{cases} \right\} }\nonumber\\
	&- \sum_{k} {\cal A}_{1, n}^{(k)}(x_C) \Bigg|_{\scriptsize \left. \begin{cases} x_{C} = X_{C} \forall\, C\, \notin\, {\cal C}^{t}(\textrm{sf}),\, \\  x_{C_{kk}} = \epsilon^{2} \end{cases} \right\} }
\end{align}
%\begin{align}\label{anrwithN}
%	{\cal A}_{1, n}^{R} = {\cal A}_{1, n} (X_{0i},\{ X_{mn},\, X_{nm}\}\vert_{m<n})\, &- \sum_{(i < j)} {\cal A}^{(i,j)}_{1,n}\left( x_{C}\right)\bigg|_{ \{ x_C=S_0\, , \ \forall  {X_{ij},X_{ji}}\}}\nonumber\\
%	&- \sum_{k} {\cal A}_{1, n}^{(k)}(X^0_{ii}, X_{mn})\bigg|_{\{x_C=\epsilon^2\, , \ \forall {X_{ii}}\}}
%\end{align}
The first counter-term in this equation is given by
\begin{align}
	{\cal A}^{(i,j)}_{1,n}\left( x_{C}\right) \Bigg|_{\scriptsize \left. \begin{cases} x_{C} = X_{C} \forall\, C\, \notin\, {\cal C}^{t}(\textrm{sf}),\, \\  X_{( i\, j)\, v_{\small{A}}} =\, S_{0}\,  \end{cases} \right\} }\, =\, \int_{\sum t_{i} \leq\, 0}\, d^{\vert E \vert}\vec{t}\, \, \frac{1}{{\cal U}^{\frac{D}{2}}}\, e^{-\frac{{\cal F}^0_{i,j}}{{\cal U}}\, -\, {\cal Z}_{i,j}}\, {\cal G},
\end{align}
%\begin{align}
%{\cal A}^{(i,j)}_{1,n}\left( x_{C}\right)\bigg|_{ \{ x_C=S_0 \, \forall {X_{ij},X_{ji}}\}}\, =\, \int_{\sum t_{i} \leq\, 0}\, d^{\vert E \vert}\vec{t}\, \, \frac{1}{{\cal U}^{\frac{D}{2}}}\, e^{-\frac{{\cal F}^0_{i,j}}{{\cal U}}\, -\, {\cal Z}_{i,j}}\, {\cal G},
%\end{align}
with 
\begin{align}
	{\cal F}_{(i,j)} \,&=\, 
	\sum_{(m,n)\, \neq\, (i,j)} \alpha_{m0}\, \alpha_{n0}\, z_{m} \cdot z_{n} 
	+ S_{0}\, \alpha_{i0}\alpha_{j0} \label{eq:Fij}\\[6pt]
	{\cal Z}_{i,j} \,&=\, 
	\sum_{\forall p \,>\, q} \alpha_{pq}\, X_{pq} 
	+ \sum_{i}\alpha_{i0}\, m^{2}, \label{eq:Zij_1l}\\[6pt]
	\mathcal{G}_{i,j} \,&=\, 
	\frac{\alpha_{i0}\;\alpha_{j0}\;\prod_{\small{C_{ij}\;\in \; C^{\text{sf}}_{ij}}}\alpha_{C}}
	{\displaystyle \sum_{m\neq \{i,j\}}\alpha_{m0} \;+\alpha_{i0}\;\alpha_{j0}\;\prod_{\small{C_{ij}\;\in \; C^{\text{sf}}_{ij}}}\alpha_{C}}. 
	\label{eq:G_1l}
\end{align}
where
\begin{align}
	z_{m}^{\mu} \,&=\, \sum_{i=1}^{m-1} p_{i}^{\mu}, \label{eq:zm}
\end{align}
For example, for a surface forest $\mathrm{sf}_p(m^p_1,m^p_2)$ 
\begin{align}
	{\cal F}_{1,2} \,&=\, 
	\sum_{(m,n)\, \neq\, (1,2)} \alpha_{m0}\, \alpha_{n0}\, z_{m} \cdot z_{n} 
	+ S_{0}\, \alpha_{10}\alpha_{20} \label{eq:Fij}\\[6pt]
	{\cal Z}_{1,2} \,&=\, 
	\sum_{\forall p \,>\, q} \alpha_{pq}\, X_{pq} 
	+ \sum_{i}\alpha_{i0}\, m^{2}, \label{eq:Zij_1l}\\[6pt]
	\mathcal{G}_{1,2} \,&=\, 
	\frac{\alpha_{10}\;\alpha_{20}\;\alpha_{21}}
	{\alpha_{30}+\alpha_{10}\;\alpha_{20}\;\alpha_{21}}. 
\end{align}

Now consider the counter term for tadpole surface forest,
\begin{align}
	{\cal A}_{1, n}^{(k)}(X^0_{kk}, X_{mn})\Bigg|_{\scriptsize \left. \begin{cases} x_{C} = X_{C} \forall\, C\, \notin\, {\cal C}^{t}(\textrm{sf}),\, \\  x_{C_{kk}} = \epsilon^{2} \end{cases} \right\} }=
	\, \int_{\sum t_{i} \leq\, 0}\, d^{\vert E \vert}\vec{t}\, \, \frac{1}{{\cal U}^{\frac{D}{2}}}\, e^{-\frac{{\cal F}^0_k}{{\cal U}}\, -\, {\cal Z}}\, {\cal G}
\end{align}
%\begin{align}
%{\cal A}_{1, n}^{(k)}(X^0_{ii}, X_{mn})\bigg\vert_{x_C=\epsilon^2\forall {X_{ii}}}=
%	\, \int_{\sum t_{i} \leq\, 0}\, d^{\vert E \vert}\vec{t}\, \, \frac{1}{{\cal U}^{\frac{D}{2}}}\, e^{-\frac{{\cal F}^0_k}{{\cal U}}\, -\, {\cal Z}}\, {\cal G}
%\end{align}
with 
\begin{align}
	{\cal F}_k \,&=\, 
	\sum_{\substack{m,n \\ n\neq k}} \alpha_{m0}\, \alpha_{n0}\, z_{m} \cdot z_{n} 
	+ \sum_{m}\alpha_{m0}\alpha_{k0}\, z_m \cdot \left(z_k\Big{\vert}_{\,p^2_{C_{kk}}=\,\epsilon^2}\right) , \label{eq:Fi}\\[6pt]
	{\cal Z} \,&=\,  
	\sum_{j} \alpha_{j0}\, m^{2} \;+\; \sum_{C} \alpha_{C}\, X_{C}, \label{eq:Z}\\[6pt]
	\mathcal{G} \,&=\, 
	\frac{\alpha_{kk}\alpha_{k0}}
	{\displaystyle \sum_{m\neq k}\alpha_{m0} \;+\;\alpha_{kk}\;\alpha_{k0}}\, , \label{eq:G}
\end{align}
where
\begin{align}
	z_{m}^{\nu} \,=\, \sum_{k=1}^{m-1} p_{k}^{\nu}, \label{eq:zdef}
\end{align}
Since the momentum carried by a self-curve vanishes,
\begin{align}
	p^{\nu}_{C_{kk}} = 0,
\end{align}
the second term  \eqref{eq:Fi} becomes,
\begin{align}
	{\cal F}_{i}\, =\, \sum_{\substack{m,n \\ n\neq k}} \alpha_{m0}\, \alpha_{n0}\, z_{m} \cdot z_{n}, 
\end{align}
which coincides with the full $\mathcal{F}$ for tree-(fat)graphs with fixed tadpole.

Consequently, the counter-terms $\mathcal{A}^{(i)}_{1,n}$ becomes non zero precisely for the cones that contain a single loop curve $C_{i0}$, i.e. the \emph{tadpole cones}. Equivalently, the $s$-forest $\mathrm{sf}_{A}^k$
removes all tadpoles from the curve integral. This illustrates how the forest formula for curve integrals reproduces the standard renormalization mechanism familiar from Feynman graphs.
Consider the simple example for two-loop two-point. In this case 
\begin{align}
	\mathrm{SF}=\{\mathrm{sf}_{A}^{1,2},\mathrm{sf}_{A}^1,\mathrm{sf}_{A}^2\} \, .
\end{align} 
Now let's write the counter-term for this fatgraph
\begin{align}
	{\cal A}_{1, 2}^{R} = {\cal A}_{1, 2} (\{ X_{10}, X_{20}, X_{11}, X_{22}\})\, &-\sum_{i < j} {\cal A}_{1,2}^{(1,2)}(X_{C_{10}},X_{C_{20}}, X_{11}, X_{22})\Big{\vert}_{p^2=m^2}\nonumber\\
	&- \sum_{i} {\cal A}_{1,2}^{(1)}(X_{C_{10}}\Big{\vert}_{\epsilon^2}, X_{C_{20}},X_{11},X_{22})\nonumber\\
	&- \sum_{i} {\cal A}_{1,2}^{(2)}(X_{C_{10}}, X_{C_{20}}\Big{\vert}_{\epsilon^2},X_{11},X_{22})\, .
\end{align}
The last two counterterms are activated in the case of tadpole cones and serve to cancel them.  
Interested readers may verify this explicitly by working out the contribution from the two tadpole cones.

%%%%%%%%%%%%%%%%%
\subsubsection{Two-loop two-point}
\begin{figure}[h!]
	\centering
	\includegraphics[width=0.8\linewidth]{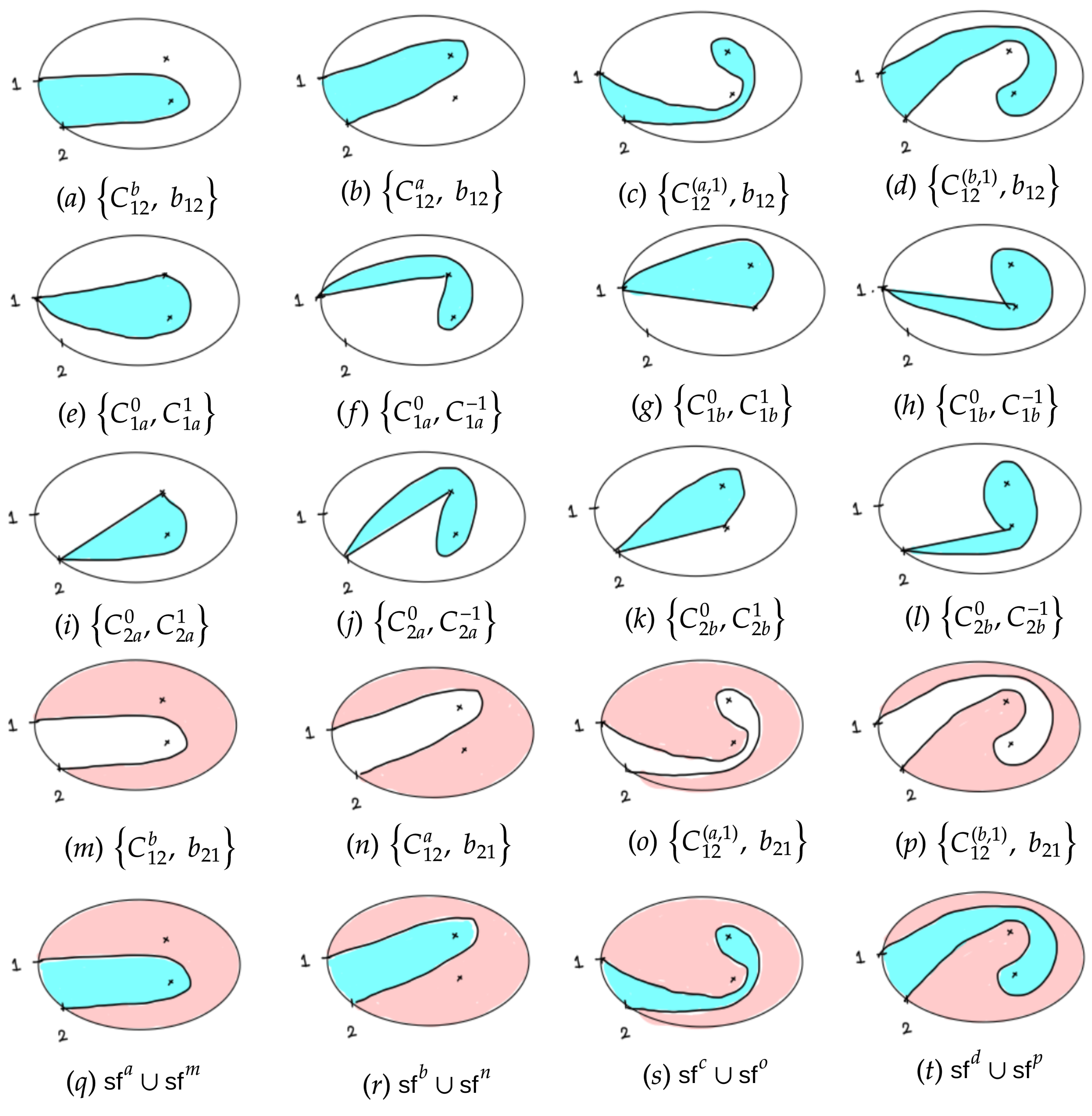}
	\caption{2-gon surface forest of two-loop two-point fatgraph}
	\label{fig:2l_2p_sf}
\end{figure}
Now let us turn to the two-point, two-loop case. The corresponding curve integral can be written as
\begin{align}
	\mathcal{A}_{2,2}
	= \int_{\mathbb{R}^5} 
	dt_x\,dt_w\,dt_z\,dt_y\,dt_1 \;
	\mathcal{K}\;\frac{1}{\mathcal{U}^2}\;
	\exp\!\left[-\frac{\mathcal{F}}{\mathcal{U}} - \mathcal{Z}\right].
\end{align}
Let us restrict attention to surface forests built from $2$–gons.  
In this case, there are $20$ such surface forests (see Fig.~\ref{fig:2l_2p_sf}), 
which renders the computation of the full counterterm for the curve integral 
rather cumbersome. Instead of summing over all surface forest contributions, 
we focus on the contribution from a specific triangulation of the surface and 
demonstrate that it reproduces the result of the forest formula for the 
corresponding Feynman graph (see Fig.~\ref{fig:2l_log^2_triang}). 

\begin{figure}[h!]
	\centering
	\includegraphics[width=0.5\linewidth]{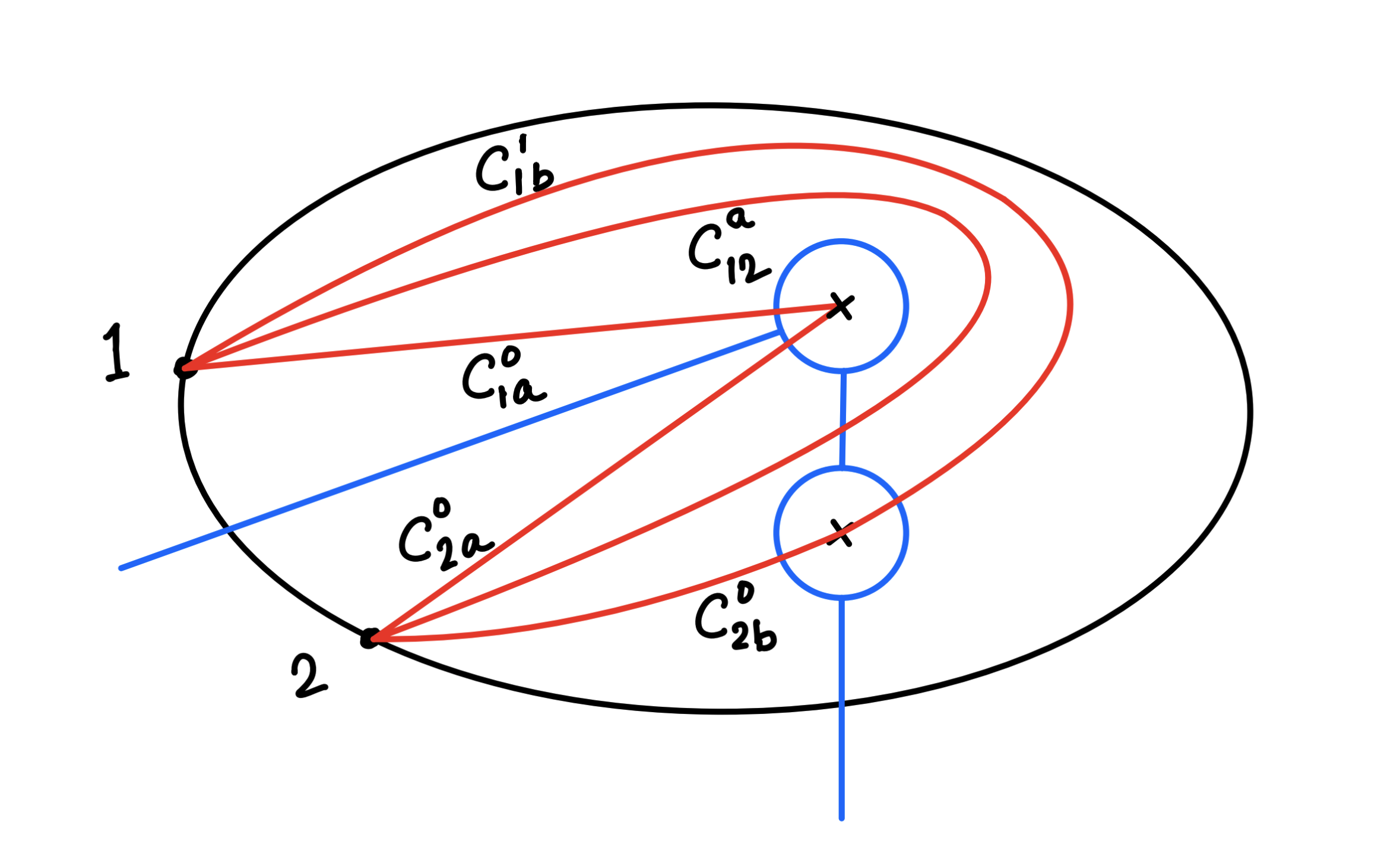}
	\caption{Triangulation $\log^2$ for the two-loop two-point fatgraph.}
	\label{fig:2l_log^2_triang}
\end{figure}

The set of curves making up this triangulation is
\[
\{\,C^0_{1a},\; C^1_{1b},\; C^0_{2a},\; C^0_{2b},\; C^a_{12}\,\}.
\]
Inside this cone, co-ordinates parallel to the corresponding $g$-vectors are more suitable for computations. These co-ordinates can be found using the following transformation
\begin{align}
	\vec{X}=a_1\, g^0_{1a}+a_2\, g^1_{1b}+a_3\, g^0_{2a}+a_4\, g^0_{2b}+a_5\, g^a_{12}, \quad \forall \ a_i>0
\end{align}
where $\vec{X}=\{t_x,t_w,t_z,t_y,t_1\}$ and $g$'s are the associated g-vectors with the curve. In this cone the headlight functions $\alpha_{C}$ and integral in Schwinger variables\footnote{Note that the cone we consider is not exactly the two-loop, two-point Schwinger parametrized amplitude. There is a extra variable $\frac{a_{123}}{a_{1234}}$ that comes due to  Mirzakhani kernel $\mathcal{K}$. We show in appendix Appendix \ref{ricones} that all $log^2$ cones add up to \emph{two} times the $\log^2$ graph.} gets simplified as,
\[
\alpha_{C'} = 
\left\{
\begin{aligned}
	& a_1 && \text{for } C' = C^0_{1a}, \quad
	a_2 && \text{for } C' = C^1_{1b}, \quad
	a_3 && \text{for } C' = C^0_{2a}, \\
	& a_4 && \text{for } C' = C^0_{2b}, \quad
	a_5 && \text{for } C' = C^a_{12}, \quad
	0   && \text{otherwise.}
\end{aligned}
\right.
\]
\begin{align}
	\mathcal{A}_{2,2}(\mathcal{C})=\int\prod_{i=1}^5da_i\ \frac{a_{123}}{a_{1234}} \frac{1}{a_{12}a_{34}} \exp\Big[-p_1^2\frac{a_1a_2}{a_{12}}-p_1^2\frac{a_3a_4}{a_{34}}-m^2a_{1234}+(p_1^2+m^2)a_5\Big],
\end{align}
where $a_{i\dots j} : =\, a_i+\dots+a_j$.
Notice that the curves related to the loop propagators are $\{C^{0}_{1a},C^0_{2a}\}$ and $\{C^0_{2b},C^{1}_{1b}\}$ are enclose in the s-forests $\{\mathrm{sf}^b,\mathrm{sf}^n,\mathrm{sf}^r\}$ (see Fig. \ref{fig:2l_2p_sf} for a check). Thus,
\begin{align}
	\mathcal{G}_{\mathrm{sf}^i}=
	\begin{cases}
		1, & i\in\{b,n,r\},\\
		0, & \text{otherwise}.
	\end{cases}
\end{align}
Hence the counter-terms using s-forest are, 
\begin{align}
	\mathcal{A}^R_{2,2} (\mathcal{C})=	\mathcal{A}_{2,2} (\mathcal{C}) -\mathcal{A}_{2,2}(\mathcal{C})\bigg{\vert}_{\mathrm{sf}^b}-\mathcal{A}_{2,2}(\mathcal{C})\bigg{\vert}_{\mathrm{sf}^n}+\mathcal{A}_{2,2}(\mathcal{C})\bigg{\vert}_{\mathrm{sf}^b\,\cup\,\mathrm{sf}^n}
\end{align}
which in the particular cone of interest evaluates to,
\begin{align}
	\mathcal{A}^R_{2,2}(\mathcal{C})=	\mathcal{A}_{2,2} (\mathcal{C}) &-\int \prod_{i=1}^5da_i \ \frac{a_{123}}{a_{1234}} \frac{1}{a_{12}a_{34}}\ \exp[-\mu^2\frac{a_1a_2}{a_{12}}-p_1^2\frac{a_3a_4}{a_{34}}-m^2a_{1234}+(p_1^2+m^2)a_5]\nonumber\\
	&-\int \prod_{i=1}^5da_i\ \frac{a_{123}}{a_{1234}} \frac{1}{a_{12}a_{34}}\ \exp[-p_1^2\frac{a_1a_2}{a_{12}}-\mu^2\frac{a_3a_4}{a_{34}}-m^2a_{1234}+(p_1^2+m^2)a_5]\nonumber\\
	&+\int \prod_{i=1}^5da_i\ \frac{a_{123}}{a_{1234}} \frac{1}{a_{12}a_{34}}\ \exp[-\mu^2\frac{a_1a_2}{a_{12}}-\mu^2\frac{a_3a_4}{a_{34}}-m^2a_{1234}+(p_1^2+m^2)a_5]\, .
\end{align}

This is precisely the forest formula required for a two-loop $\log^2$ divergent graph in four dimensions. Thus, the naive forest formula indeed captures the correct counter-term structure cone by cone. As in the one-loop case, finite counter-terms also appear for finite graphs (e.g. overlapping divergences in 4D two-point two-loop case). 

We end this section with a few remarks.
\begin{itemize}
	\item Although our focus in this paper is on $\textrm{Tr}(\Phi^{3})$ theory in $D = 4$ dimension, we believe that the surface forest formula can be defined for theories with overlapping divergences. This is because set of all surface forests are in injection with set of all maximal complete forests defined over all 1PI graphs with fixed $n, L$. 
	\item The fact that forest formula cancels UV divergences for any 1PI Graph can be proved using UV factorization property of  the graph Symanzik polynomials, \cite{Brown:2022oix}.  It has been shown by A. Suthar  that the factorization properties of graph Symanzik polyomials can be generalized to surface Symanzik polynomials, \cite{Amit-up}. It will be interesting to see if one can prove the finiteness of surface forest formula using these factorization formulae. 
	\item The fact that tadpole graphs cancel in kinematic renormalization scheme, motivate us to restrict the curve integral to the region of  global Schwinger space which do not include any tadpoles that lead to quadratic divergence. As we show below, these tadpole free regions can be renormalized without recourse to surface forests.
\end{itemize}
We would once again like to emphasize however that this approach to renormalization is not promising from the perspective of the positive geometry program. However it does produce a concrete subtraction scheme for the curve integral and it is thus reassuring that by using essential idea of Zimmermann forest formula, we can renormalize curve integral even without it's explicit decomposition into Feynman graphs.

%%%%%%%%%%%%%%%%%%%%%%%%%%%%%%%%
%%%%%%%%%%%%%%%%%%%%%%%%%%%%%%%%
\section{Decapitating tadpoles from the curve integral.}\label{dtci}
We now come to the main body of the paper and show how curve integrals admit a separation of UV singularities which can then be renormalized by suitable counter terms. 
We first start wih a simple observation. In $\textrm{Tr}(\Phi^{3})$ theory, the amplitude admits  the following perutrbative expansion,   
\begin{align}
{\cal A}_{L, n}\, =\, \sum_{p,q\geq 0 \, \vert \,  p + q  \leq\, L}\, (\Lambda^{2})^{p}\, (\ln\Lambda)^{q}\,  {\cal A}_{L, n}(p,q,\Lambda).
\end{align}
$\Lambda$ is the UV cut-off that regularizes the amplitude. Each term in the summand is associated to a (sum over) Feynman graphs and is classified according to the number of tadpoles or bubbles that it posses as subgraphs. The term corresponding to $p+q = 0$ is the sum over $n$ point $L$-loop Feynman graph which is superficially convergent with no divergent subgraphs (the  divergent subgraphs are shown in  Fig.  \ref{fig:bubble-tadpole}). We illustrate this decomposition in the case of $L=2,\, n=3$ amplitude below (see Fig.  \ref{fig:div-three-point-1loop}).

\begin{figure}
	\centering
	\begin{tikzpicture}
		\begin{scope}[shift={(3,0)}, rotate = 90]
			% Add tadpole in the gap
			\draw[thick, black] (1,-0.2) -- ++(0,0.4); % Perpendicular line
			\draw[black] (-0.2 +1,-0.2) --(0.2 +1,-0.2); % Perpendicular line
			
			\draw[thick, black] (1, 1) --(1,1.4); % Perpendicular line
			\draw[black] (-0.2 +1,1.4) --(0.2 +1, 1.4); % Perpendicular line
			
			\draw[thick, black] (1, 0.6) circle (0.4cm); % Circle on top
		\end{scope}
		
		\begin{scope}[shift={(6,0.6)}]
			% Add tadpole in the gap
			\draw[thick, black] (1,-0.2) -- ++(0,0.4); % Perpendicular line
			\draw[black] (-0.2 +1,-0.2) --(0.2 +1,-0.2); % Perpendicular line
			\draw[thick, black] (1, 0.6) circle (0.4cm); % Circle on top
		\end{scope}
	\end{tikzpicture}
\caption{Only two divergent subgraphs for $\phi^3$-theory in $D=4$ : `bubble' and `tadpole' }
\label{fig:bubble-tadpole}
\end{figure}
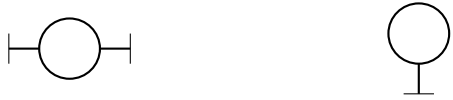

\begin{figure}
	\centering
	\begin{tikzpicture}
		% Diagram 3: Circle on C-A arm (240° rotation) - Now Picture 1
		\begin{scope}[shift={(0,1.75)}, rotate=240]
			% Define the vertices of the triangle
			\coordinate (A) at (0,0);
			\coordinate (B) at (2,0);
			\coordinate (C) at (1,1.732); % Forms an equilateral triangle
			
			% Draw the triangle arms, with a gap on A-B
			\draw[thick, black] (A) -- (0.7,0); % First part of A-B
			\draw[thick, black] (1.3,0) -- (B); % Second part of A-B
			\draw[thick, black] (B) -- (C) -- (A);
			
			% Add small circle in the gap
			\draw[thick, black] (1,0) circle (0.3cm);
			
			% Add small lines extending outward from each vertex
			\draw[thick, black] (A) -- ++(-0.5,-0.35);
			\draw[thick, black] (B) -- ++(0.5,-0.35);
			\draw[thick, black] (C) -- ++(0,0.5);
		\end{scope}
		
		% Diagram 2: Circle on B-C arm (120° rotation) - Remains Picture 2
		\begin{scope}[shift={(6,0)}, rotate=120]
			% Define the vertices of the triangle
			\coordinate (A) at (0,0);
			\coordinate (B) at (2,0);
			\coordinate (C) at (1,1.732); % Forms an equilateral triangle
			
			% Draw the triangle arms, with a gap on B-C (rotated to align with A-B gap)
			\draw[thick, black] (A) -- (0.7,0); % First part of A-B
			\draw[thick, black] (1.3,0) -- (B); % Second part of A-B
			\draw[thick, black] (B) -- (C) -- (A);
			
			% Add small circle in the gap
			\draw[thick, black] (1,0) circle (0.3cm);
			
			% Add small lines extending outward from each vertex
			\draw[thick, black] (A) -- ++(-0.5,-0.35);
			\draw[thick, black] (B) -- ++(0.5,-0.35);
			\draw[thick, black] (C) -- ++(0,0.5);
		\end{scope}
		
		% Diagram 1: Circle on A-B arm (0° rotation) - Now Picture 3
		\begin{scope}[shift={(9,0)}]
			% Define the vertices of the triangle
			\coordinate (A) at (0,0);
			\coordinate (B) at (2,0);
			\coordinate (C) at (1,1.732); % Forms an equilateral triangle
			
			% Draw the triangle arms, with a gap on C-A (rotated to align with A-B gap)
			\draw[thick, black] (A) -- (0.7,0); % First part of A-B
			\draw[thick, black] (1.3,0) -- (B); % Second part of A-B
			\draw[thick, black] (B) -- (C) -- (A);
			
			% Add small circle in the gap
			\draw[thick, black] (1,0) circle (0.3cm);
			
			% Add small lines extending outward from each vertex
			\draw[thick, black] (A) -- ++(-0.5,-0.35);
			\draw[thick, black] (B) -- ++(0.5,-0.35);
			\draw[thick, black] (C) -- ++(0,0.5);
		\end{scope}
	\end{tikzpicture}
	
	\vspace{0.7cm}

	\begin{tikzpicture}
		
		% Diagram 3: Tadpole on C-A arm (240° rotation) - Now Picture 1
		\begin{scope}[shift={(0,1.75)}, rotate=240]
			% Define the vertices of the triangle
			\coordinate (A) at (0,0);
			\coordinate (B) at (2,0);
			\coordinate (C) at (1,1.732); % Forms an equilateral triangle
			
			% Draw the triangle arms, 
			\draw[thick, black] (A) -- (B) -- (C) -- (A);
			
			% Add tadpole in the gap
			\draw[thick, black] (1,0) -- ++(0,- 0.3); % Perpendicular line
			\draw[thick, black] (1,- 0.6) circle (0.3cm); % Circle on top
			
			% Add small lines extending outward from each vertex
			\draw[thick, black] (A) -- ++(-0.5,-0.35);
			\draw[thick, black] (B) -- ++(0.5,-0.35);
			\draw[thick, black] (C) -- ++(0,0.5);
		\end{scope}
		
		% Diagram 2: Tadpole on B-C arm (120° rotation) - Remains Picture 2
		\begin{scope}[shift={(6,0)}, rotate=120]
			% Define the vertices of the triangle
			\coordinate (A) at (0,0);
			\coordinate (B) at (2,0);
			\coordinate (C) at (1,1.732); % Forms an equilateral triangle

			% Draw the triangle arms, 
			\draw[thick, black] (A) -- (B) -- (C) -- (A);
			
			% Add tadpole in the gap
			\draw[thick, black] (1,0) -- ++(0,- 0.3); % Perpendicular line
			\draw[thick, black] (1,- 0.6) circle (0.3cm); % Circle on top

			% Add small lines extending outward from each vertex
			\draw[thick, black] (A) -- ++(-0.5,-0.35);
			\draw[thick, black] (B) -- ++(0.5,-0.35);
			\draw[thick, black] (C) -- ++(0,0.5);
		\end{scope}
		
		% Diagram 1: Tadpole on A-B arm (0° rotation) - Now Picture 3
		\begin{scope}[shift={(9,0)}]
			% Define the vertices of the triangle
			\coordinate (A) at (0,0);
			\coordinate (B) at (2,0);
			\coordinate (C) at (1,1.732); % Forms an equilateral triangle
			
			% Draw the triangle arms, 
			\draw[thick, black] (A) -- (B) -- (C) -- (A);
			
			% Add tadpole in the gap
			\draw[thick, black] (1,0) -- ++(0,- 0.3); % Perpendicular line
			\draw[thick, black] (1,- 0.6) circle (0.3cm); % Circle on top
			
			% Add small lines extending outward from each vertex
			\draw[thick, black] (A) -- ++(-0.5,-0.35);
			\draw[thick, black] (B) -- ++(0.5,-0.35);
			\draw[thick, black] (C) -- ++(0,0.5);
		\end{scope}
	\end{tikzpicture}
	
	\caption{Divergent contributions for $n=3$ at $L=2$ for $\phi^3$-theory  in $D=4$ }
	\label{fig:div-three-point-1loop}
	
\end{figure}
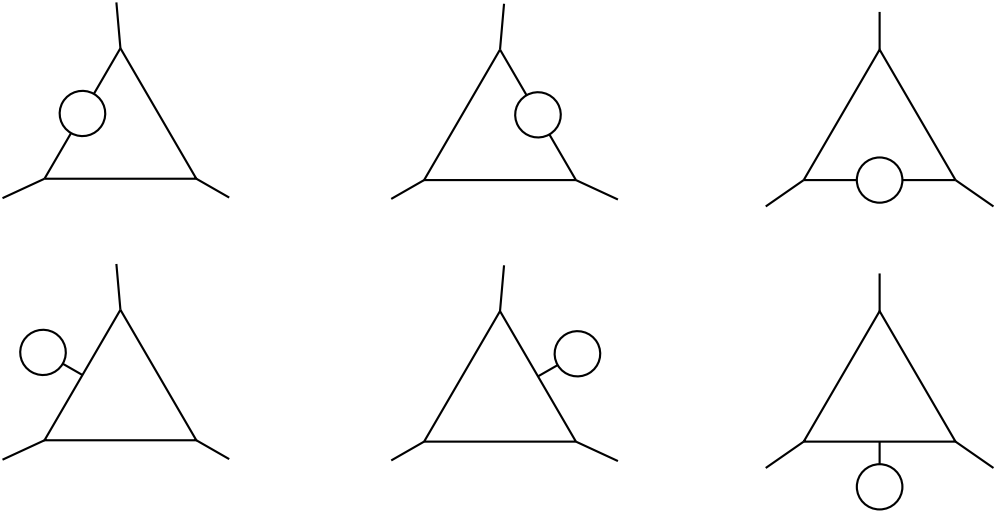

As we are in $D = 4$ dimensions, all the quadratic divergences are associated to $\Phi^{3}$ tadpoles in constituent Feynman diagrams. The physical amplitude should be obtained after decapitating the tadpoles. However as positive geometries obtained from dissection quivers of polygons with $L$ punctures in the interior treat all the planar graphs democratically, it includes all the tadpole contributions as seen in section \ref{sffpa}. Hence, we first define a new curve integral by removing the tadpole contributions. This then leads to a bare amplitude with only log divergences in the momentum space.  In other words, our objects of interest in this paper are renormalization of the tadpole free amplitudes,
\begin{align}
{\cal A}_{L, n}^{\textrm{tf}}(\Lambda)\, =\, \sum_{0 \leq\, q\, \leq\, L}\, {\cal A}^{p = 0,q}_{L, n}\, (\ln\Lambda)^{q} 
\end{align}

The elimination of tadpoles in the curve integral formula can be done by isolating those regions in the global Schwinger space which do not contain cones dual to tadpole graphs. A detailed analysis of this region is at the heart of our approach to renormalize curve integral formula. 

We refer to this region as ${\cal M}_{L, n}^{\textrm{tf}}$  labelled by the number of punctures $L$,  number of external points $n$ and where `$\textrm{tf}$' stands for \emph{``tadpole-free''}. Our renormalized curve integrals are integrable functions defined over ${\cal M}_{L, n}^{\textrm{tf}}$ whose ``projection'' on each Feynman graph leads to parametric forest formula which has been studied in \cite{Appelquist:1969iv, Brown:2011pj}.  \\ %\cite{appelquist, Kreimer 2011}  

\emph{From now on, we will denote ${\cal A}_{1, n}^{\textrm{tf}}$ as ${\cal A}_{1,n}$ throughout this section.}

%%%%%%%%%%%%%%%%%%%%%%%%%%%%%%%%
%%%%%%%%%%%%%%%%%%%%%%%%%%%%%%%%
\section{Parametric renormalization of planar one-loop amplitudes}\label{sec:para_reno_1l_planar}
The surface forest formula derived in the previous section implicitly relies on the Zimmermann's forest formula. To renormalize any $n$-point amplitude with $L$ loops,  it sums over counter-terms indexed by the set of all possible forests which partially triangulate $\Sigma_{L,n}$. Clearly this approach takes us back to the world of Feynman graphs, and such takes life out of the world of positive geometries which generates bare amplitudes that exist regardless of their decomposition into Feynman integrals.

We thus look for a ``generalization'' of the forest formula that directly works on the moduli space of $L$ punctured disc with $n$ marked points without recourse to the additional structures like s-forests. As we prove below, the answer lies in a careful analysis of the UV singularities of the curve integral all of which ``factor through'' the log divergence of a 2-point bubble in this case.  Let us summarize the key results of this section before going into the details that can be found in the subsequent sections.  To help navigate the reader through this section, we outline the flavor of the underlying ideas by highlighting how we renormalize planar one-loop amplitude. 

In a nutshell, this is what we observe. Starting with the reference pseudo-triangulation dual to the tadpole fat graph, we will show that the divergence in the curve integral is a sum of two terms.

%%%%%%%%%%%%%%%%%%%%%%%%%%%%%%%%
\subsection{Analyzing tadpole free region in global Schwinger space for a disc with $L=1$ punctures}

In this section we derive the tadpole free region for $n$-point planar one-loop amplitudes.  As explained in sec. \ref{sec:curve-int-formula}, the headlight functions are computed with respect to the reference fat graph which is obtained by attaching an $n$-point tree-graph to a one-point tadpole.  We now restrict the intergrand of ${\cal A}_{1, n}^{\textrm{planar}}$  to ${\cal M}_{1, n}^{\textrm{tf}}$. As discussed previously, this is the region inside tropical schwinger parameter space where $\alpha_{ii}(t_{1}, \dots, t_{n-1}, t_{x})\, =\, 0, \, \forall\, i$ and contributes a logarithmically divergent term to the curve integral, 
\begin{align}
{\cal A}_{1, n}^{\textrm{tf}}(\Lambda)\bigg|_{{\cal M}_{1, n}^{\textrm{tf}}}\, =\, \sum_{q=0}^{1}\, (\ln\Lambda)^q \,  {\cal A}_{1, n}\, .
\end{align}

Before we start explaining the detailed renormalization procedure, below is a road-map for the readers on how we  approach it. 
\begin{itemize}
\item Inside the global schwinger parameter space ${\bf R}^{n-1}$ , once we can locate the subspace  ${\cal M}_{1, n-1}^{\textrm{tf}}$ such that 	$\alpha_{ii}\, =\, 0,\,  \forall\, i.$, we can recursively construct ${\cal M}_{1, n-1}^{\textrm{tf}} \in {\bf R}^{n}$ which is union of $n-1$ regions, ${\cal M}_{1, n}^{\textrm{tf}} =\, \displaystyle \bigcup_{I=1}^{n-1}\, \mathcal{S}^{\mathit I}$. Each $\mathcal{S}^{\mathit I}$ is union over cones corresponding  to a tadpole-free one-loop planar fat graph.  Evidently only the cones dual to graphs with a bubble sub-graphs need renormalization. 

\item Then we make the following observation. Inside each $S_{I}$ only a certain  headlight functions associated to curves that end at the puncture are non-trivial, 
\begin{align}
\alpha_{_{K0}}\, \neq\, 0 , \ \forall\, 1\, \leq\, K\, \leq\, n-(I-1), \nonumber
\end{align}
and all other such headlight functions  \emph{viz.}	$	\alpha_{_{K0}} $ vanish point-wise inside $\mathcal{S}^{\mathit I}$. Thus each $\mathcal{S}^{\mathit I}$ contains a union of cones where the number of such cones decrease monotonically as $I$ increases from $1$ to $n-1$. We use this fact to \emph{directly} write the renormalized planar amplitude without renormalizing \emph{individual} cones. 

\item Next we propose the following renormalized amplitude with the help of ``tropical counter-term'' ${\cal T}_{1, n}$, 
\begin{align}
{\cal A}_{1, n;\textrm{R}}\, :=\, \lim_{\epsilon \rightarrow\, 0} \bigg[\, {\cal A}_{1,n}(S, \{\Theta\}, \epsilon) - {\cal T}_{1, n}(S, \Theta, S_{0}, \epsilon)\, \bigg]
\end{align}
This can be realized  to be coming from  a ``modified curve integral formula'' (${\cal A}_{1, n}^{R}$) in \emph{kinematic scheme} where the subtracted term is ``curve integral version'' of counter-term in parametric forest formula. 
\item Next we show that the ``tropical counter-term'' ${\cal T}_{1, n}$ depends on two polynomials 
$\Psi_{0}$ and $\Psi_{1}$ of planar kinematic variables and most crucially the exponent of the curve integrand is a linear function of $t_{x}$,
\begin{align}\label{eq:linear-tx}
\bigg(\frac{{\cal F}}{{\cal U}} - {\cal Z}\bigg)= t_{x} \Psi_{1\, I} + \Psi_{0\, I}.
\end{align}

\item Finally using \eqref{eq:linear-tx} and being particularly careful with the  integration domain in Schwinger parameter space we show that  ${\cal A}_{1, n}^{R}$ is  indeed finite. We will demonstrate this procedure in several  lower point examples. 
\end{itemize}

Below we provide the details of the steps summarized above. \\

To compute ${\cal A}_{1, n}$, we need to first determine ${\cal M}_{1, n}^{\textrm{tf}}$.  This is the content of the following lemma. 

\begin{lemma}\label{lemma:tf_region_1l}
Consider a global Schwinger space ${\bf R}^{n-1}$ co-ordinatized by $(t_{n-2}, \dots, t_{1}, t_{x}\, \leq\, 0)$. Let ${\cal M}_{1, n-1}^{\textrm{tf}}$ be a subspace inside this space which is defined as a vanishing loci of the following set of headlight functions.\footnote{The headlight function, $\alpha_{ii}$ can be computed using ancillary file \texttt{`One-loop tadpole headlight functions.nb'}.}\begin{align}
\alpha_{ii}(t_{n-1}, \dots, t_{1}, t_{x})\, =\, 0,\, \quad \forall\, i.
\end{align}
Then, 
${\cal M}_{n}^{\textrm{tf}}$ is a region inside ${\bf R}^{n}$ which has the following stratification. 
\begin{align}
{\cal M}_{1, n}^{\textrm{tf}}\, =\, \bigcup_{I=1}^{n-1}\, {\cal S}^{I}
\label{tf_region}
\end{align} 
That is, ${\cal M}_{1, n}^{\textrm{tf}}$ is union of $n-1$ regions, $\mathcal{S}^{\mathit I}\, \big{|} 1\, \leq\, I\, \leq n-1$. 	%\textcolor{magenta}{\it [Better notation : $S^{I}_{L, n}$]}
\begin{align}
\mathcal{S}^{\mathit 1}\, &=\, (\, 0\, \leq\, t_{n-1}\, \leq\, - t_{x}\, )\nonumber\\
\mathcal{S}^{\mathit 2}\,&=\, \bigg(t_{n-1} + t_{x} > 0\bigg)\, \bigcap\, \bigg(0\, \leq\, t_{n-2} + t_{n-1}\, \leq\, - t_{x} \bigg) \nonumber\\
\mathcal{S}^{\mathit 3}\,&=\, \bigg(t_{n-1} + t_{x} > 0 \bigg) \, \bigcap\, \bigg(t_{n-2} + t_{n-1} + t_{x}\, >\, 0 \bigg) \bigcap\, \bigg( 0 \leq t_{n-3} + t_{n-2} + t_{n-1} \leq - t_{x} \bigg) \nonumber\\
&\vdots\nonumber\\
\mathcal{S}^{\mathit {n-1}}\,&=\, \bigg(\sum_{j=I}^{n-1} t_{j} \geq\, -t_{x}\, \vert 2 \leq\, I\, \leq\, n-1 \bigg) \, \bigcap \bigg(0\, \leq\, \sum_{I=1}^{n-1} t_{I} \leq\, -\, t_{x} \bigg)
\end{align}
\end{lemma}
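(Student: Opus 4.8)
The statement is recursive in $n$, so the plan is to prove it by induction on $n$. For the base case $n=2$ I would substitute the explicit expressions for the two tadpole headlight functions obtained from the matrix algorithm of section~\ref{sec:curve-int-formula} (equivalently from the ancillary notebook): one finds $\alpha_{11}=\min\!\big(\max(0,t_1),\,\max(0,t_1+t_x)\big)$ and $\alpha_{22}=\min\!\big(\max(0,-t_1),\,\max(0,-t_1-t_x)\big)$, so that on $\{t_x\le 0\}$ their common zero locus is exactly $\{t_1+t_x\le 0\}\cap\{t_1\ge 0\}=\mathcal{S}^{1}$, which is the closure of the unique top-dimensional cone of the one-loop two-point Feynman fan whose spanning curves include no tadpole curve.

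First I would record two structural facts that drive the induction. \textbf{(a)} Every headlight function is a non-negative piecewise-linear function which is \emph{linear} on each maximal cone of the Feynman fan, and $\alpha_C(g_{C'})=\delta_{C,C'}$; hence $\alpha_C$ vanishes identically on any maximal cone not having $g_C$ among its rays. Therefore $\bigcap_i\{\alpha_{ii}=0\}$ is exactly the union of the closed maximal cones whose spanning curves contain no tadpole curve --- equivalently, the cones dual to planar one-loop $\phi^3$ graphs whose loop is an $m$-gon with $m\ge 2$ decorated by trees (since in a one-loop graph the only loop subgraph is the loop itself, ``no tadpole subgraph'' is the same as $m\ge 2$). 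So the Lemma is equivalent to the purely combinatorial assertion that the union of these tadpole-free cones, intersected with $\{t_x\le 0\}$, equals $\bigcup_{I=1}^{n-1}\mathcal{S}^{I}$. \textbf{(b)} The tadpole curve $C_{ii}$ based at the $i$-th marked point reaches the puncture by running along a fixed chain of tree edges of $\Gamma_{\textrm{ref}}(1,n)$ and then winding once around the loop; by the telescopic property of headlight functions (the same property that forces ${\cal U}=-t_x$) the function $\alpha_{ii}$ depends on the global Schwinger parameters only through the single partial sum along that chain together with $t_x$, and retains the ``minimum of two maxima'' shape of the $n=2$ case, so that $\{\alpha_{ii}=0\}$ is cut out by one linear inequality on that partial sum. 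Since the chains for increasing $i$ are nested in one another, the common zero locus becomes a staircase of regions --- which is the origin of the stratification.

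Granting (a)--(b), the inductive step runs as follows. Passing from $n-1$ to $n$ grafts one trivalent vertex onto the tree adjacent to the edge $t_{n-2}$, carrying a new marked point and the outermost tree edge with parameter $t_{n-1}$. Split $\{t_x\le 0\}$ along the hyperplane $\sigma_{n-1}+t_x=0$, where $\sigma_{n-1}=t_{n-1}$. On the ``short'' side $\sigma_{n-1}+t_x\le 0$ the only constraint not automatically fulfilled is the vanishing of the tadpole function on the new marked point, which by (b) cuts out $\{0\le t_{n-1}\le -t_x\}=\mathcal{S}^{1}$. On the ``long'' side $\sigma_{n-1}+t_x>0$ the telescopic identities reduce the headlight functions of all curves avoiding the new marked point to those of the $(n-1)$-point problem under the substitution that merges $t_{n-1}$ with its neighbour, which identifies $\sum_{j=K}^{n-2}t_j$ of that problem with $\sigma_K=\sum_{j=K}^{n-1}t_j$ for $K\le n-2$; hence there the common zero locus is a copy of $\mathcal{M}^{\textrm{tf}}_{1,n-1}$ with the extra inequality $\sigma_{n-1}+t_x>0$ appended, and by the induction hypothesis $\mathcal{M}^{\textrm{tf}}_{1,n-1}=\bigcup_{I'=1}^{n-2}\mathcal{S}^{I'}$, each stratum now carrying that extra inequality. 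A one-line computation with this partial-sum dictionary turns the decorated $(n-1)$-point stratum $\mathcal{S}^{I-1}$ into exactly the region $\mathcal{S}^{I}$ of the present statement, for $I=2,\dots,n-1$. Assembling the two sides gives $\mathcal{M}^{\textrm{tf}}_{1,n}=\bigcup_{I=1}^{n-1}\mathcal{S}^{I}$; the strata are disjoint because $\mathcal{S}^{1}$ lies in $\{\sigma_{n-1}+t_x\le 0\}$ whereas each $\mathcal{S}^{I\ge 2}$ lies in $\{\sigma_{n-1}+t_x>0\}$, and the remaining ones are separated by their $0\le\sigma_{n-I}\le -t_x$ conditions.

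The hard part is (b): establishing precisely how the tadpole headlight functions transform under grafting a three-point vertex, and checking that the reduction to the $(n-1)$-point functions on the ``long'' half-space is an \emph{exact} identity of piecewise-linear functions rather than merely an asymptotic/tropical one, together with the claim that exactly one new constraint appears on the complement. Concretely this amounts to running the matrix algorithm of section~\ref{sec:curve-int-formula} on the words of the curves $C_{ii}$ in $\Sigma_{1,n}$ and simplifying the resulting tropicalizations. A more pedestrian route, which bypasses the telescoping input, is to enumerate the tadpole-free cones directly --- each such cone being the locus where the headlight functions of the $m\ge 2$ loop propagators of a fixed planar graph are non-negative --- and to check by hand that these inequalities reorganize into the nested families $\mathcal{S}^{I}$; but that bookkeeping is exactly what the induction is designed to streamline.
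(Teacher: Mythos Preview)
Your proposal is correct in spirit and, like the paper, proceeds by induction on $n$; however you and the paper graft the new trivalent vertex at \emph{opposite} ends of the reference caterpillar. The paper attaches the new vertex at the far-tree end, splitting the vertex labelled $1$ into $1^{\pm}$ and introducing a new parameter $y_0$. Because none of the defining inequalities of $\mathcal{S}^{I}$ involve $y_0$, and because the old tadpole curves $C_{ii}$ for $i\ge 2$ never traverse the new edge, the induction hypothesis immediately gives $\alpha_{ii}=0$ for $i\ge 2$ on each $\mathcal{S}^{I}$; the remaining work is an explicit $2\times 2$ matrix computation expressing $M_{1^{-}1^{-}}$ in terms of $M_{22}$ and using the tropical identity $t(a)+t(d)=t(b)+t(c)$ (equivalent to $\alpha_{22}=0$) to conclude $\alpha_{1^{-}1^{-}}=0$. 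By contrast, you attach the new vertex at the loop-adjacent end, introduce $t_{n-1}$, and split on the sign of $t_{n-1}+t_x$: the short side produces $\mathcal{S}^{1}$ directly, while on the long side you invoke a telescopic reduction of \emph{all} the old tadpole headlights to their $(n-1)$-point versions under $t_{n-2}'\mapsto t_{n-2}+t_{n-1}$.

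What each approach buys: the paper's choice keeps the already-checked headlight functions literally unchanged, so only the two genuinely new $\alpha_{1^{\pm}1^{\pm}}$ must be analysed, and this is done by a concrete word/matrix calculation rather than by appealing to a general telescoping principle. Your choice makes the emergence of the staircase $\mathcal{S}^{1}\cup\cdots\cup\mathcal{S}^{n-1}$ transparent --- each new stratum peels off as the short side of a successive split --- but the price is that every old tadpole curve now passes through the new edge, so the ``telescopic reduction'' you need is a statement about \emph{all} $\alpha_{ii}$ simultaneously, not just one. You are right to flag (b) as the hard part; it is precisely the step that the paper sidesteps by grafting at the other end and paying instead with one explicit matrix computation. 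Your structural fact (a), incidentally, is cleaner than anything the paper states and would be worth recording on its own.
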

\begin{proof}
Our proof is via induction and goes in two steps.\\
We start with the reference  $n$ point  tadpole one-loop fat graph $\Gamma_{n}$ with vertices $\{1, \dots, n\}$. The ccorresponding tropical co-ordinates are,  $\{t_{1}, \dots, t_{n-1}, t_{x} < 0\}$. The underlying space is ${\bf R}^{n}\, \times\, {\bf R}_{-}$ that we will denote as ${\bf R}^{n,-}$. 

A direct computation first reveals that, inside the region parametrized by the following set of inequalities, 
\begin{align}
\sum_{j}^{n-1} t_{i} \geq\, 0\, \forall\, i-1\, \leq\, j\, \leq\, n-1
\end{align}
a subset of headlight functions, $\{\alpha_{jj}\}_{j=i}^{n}$ vanish. Clearly this region is then independent of $\{t_{1}, \dots, t_{i-2}\}$ and as a result it is independent of $n$. This is because one could attach a tree-graph of arbitrary valence to the left of $\Gamma_{n}$ and the result will not be affected. 

To now prove that inside any ${\cal S}^{I}$, all the headlight functions vanish. We proceed by induction. So let ${\cal S}^{j}$ for a fixed $1\, \leq\, j\, \leq\, n-1$ be a region in ${\bf R}^{n-1, -}$ such that  $\alpha_{ii}(t_{1}, \dots, t_{n-1}, t_{x}) \forall\, 1 \leq\, i, \leq\, n$ vanish in ${\cal S}^{j}$. Let us now consider $\Gamma_{n+1}$ obtained by adding a three point vertex to the left of $\Gamma_{n}$ such that the resulting fat graph has two new vertices $\{1^{+}, 1^{-}\}$ that replace the vertex labelled as $1$ in $\Gamma_{n}$. Our induction hypothesis then implies that,
\begin{align}
\alpha_{ii}(t_{0}, t_{1}, \dots, t_{n-1}, t_{x})\, =\, 0\, \forall\, 2 \leq\, i\, \leq\, n.
\end{align}
That is, headlight functions associated to all but two of the tadpole curves in $\Gamma_{n+1}$ vanish. It thus remains to show that inside ${\cal S}^{j}$
\begin{align}
\alpha_{mm}\, =\, 0\, \forall\, m\, \in\, \{1^{-}, 1^{+}\}\, 
\end{align}

The space spanned by tropical parameters is now ${\bf  R}^{n+1,-}$ which is co-ordinatized by\\
$\{ t_{0}, t_{1}, \dots, t_{n-1}, t_{x} \leq\, 0\}$. Clearly ${\cal S}^{I}\, \subset\, {\bf R}^{n+1,-}$. We would now like to prove that $\alpha_{1^{+}1^{+}}, \alpha_{1^{-}1^{-}}$ continue to vanish inside ${\cal S}^{I}$.\\ 

We will first analyze $\alpha_{1^{-}1^{-}}$. The matrix associated to the this tadpole curve can be written as,
\begin{align}
M_{1^{-}1^{-}}\, =\, R\, \cdot\, Y_{0}\, \cdot\, L\, \cdot\, R^{-1}\, \cdot\,M_{22}\,\cdot\, L^{-1}\, \cdot\, R\, \cdot\, Y_{0}\, \cdot L
\end{align}
It can be immediately checked that
\begin{align}
R\, \cdot\, Y_{0}\, \cdot\, L\, \cdot\, R^{-1}&=\, \begin{bmatrix}  1 + y_{0} && -1 \\ y_{0} && y_{0} \end{bmatrix}\nonumber\\
L^{-1}\, \cdot\,  R\, \cdot\, Y_{0}\, \cdot\, L&=\, \begin{bmatrix}  1 + y_{0} && y_{0} \\ -1 && 0 \end{bmatrix} 
\end{align}
Hence, if we denote $M_{22}$ as
\begin{align}
M_{22}\, :=\, \begin{bmatrix} a && b \\ c && d \end{bmatrix}
\end{align}
Then, 
\begin{align}
M_{1^{-}1^{-}}\,& = \begin{bmatrix}
a(1 + y_{0})^{2} -  (b + c) (1 + y_{0})\, + d && a(1 + y_{0}) - c \\ (a + c) (1 + y_{0}) - (b + d) && (a + c)
\end{bmatrix}
\end{align}
$\alpha_{1^{-}1^{-}}$ can now be computed by tropicalizing each matrix entry. Let $t(a), t(b), t(c), t(d)$ be the tropical polynomials associated to $a, b, c$ and $d$ respectively. These polynomial satisfy a linear constraint,
\begin{align}\label{tropalpha22}
\alpha_{22} = 0\, \implies\, t(a) + t(d) = t(b) + t(c)
\end{align}
We can now show that 
\begin{align}
\textrm{for}\, \ t_{0} < 0,\ \alpha_{1^{-}1^{-}} = 0\, \quad  \textrm{irrespective of eqn.(\ref{tropalpha22}).}\nonumber\\
\textrm{for}\, \  t_{0} > 0,\ \alpha_{1^{-}1^{-}} = 0\, \quad  \textrm{iff eqn.(\ref{tropalpha22}) is satisfied.}
\end{align}
\end{proof}

It is important to emphasize that the number of the decapitated regions ${\cal S}^{I}$ grow linearly with $n$ as opposed to factorially and hence the number of graphs contained in \emph{at least} a subset of ${\cal S}^{\mathit I}$ must grow factorially with $n$.\footnote{In fact, we expect, based on analysis for $n=3, 4$,  that the number of Feynman graphs decrease monotonically with $\{ \mathcal{S}^{I}\}_{I = 1, \dots, n-1}$.}  That is, separate ${\cal S}^{I}$s do not contain $O(1)$ number of Feynman graphs as we increase $n$. 

It will be convenient for us to use the following co-ordinates on ${\cal M}_{1, n}^{\textrm{tf}}\,$,
\begin{align}\label{tigoingtotitilde}
(\, t_{1}, \dots, t_{n-1}, t_{x})\, \rightarrow\, (\, \widetilde{t}_{1}, \dots, \widetilde{t}_{n-1}, t_{x}).
\end{align}
where 
\begin{align}
\widetilde{t}_{i} = \sum_{j=i}^{n-1}\, t_{j} + t_{x} .
\end{align} 
${\cal M}_{1, n}^{\textrm{tf}}$ is thus defined by the union of the $n-1$ semi-algebraic sets, 
\begin{align}
{\cal M}_{1, n}^{\textrm{tf}}&=\, \bigcup_{I=1}^{n-1}\, \mathcal{S}^{I}\nonumber\\
\mathcal{S}^{I}&=\, \bigg\{\, (\widetilde{t}_{i}\, \geq\, 0\, \vert\,  n-1\, \geq\, i\, \geq\,n-I+1) \, \medcap\, ( t_{x}\, \leq\, \widetilde{t}_{n-I}\, \leq\, 0)\, \bigg\} .
\end{align}
Each $\mathcal{S}^{\mathit I}$ is union over cones such that each cone corresponds  to a tadpole-free one-loop planar fat graph.  Clearly only those cones which are dual to graphs with a bubble subgraph require renormalization. 
Following lemma makes the inclusion of various cones into $\mathcal{S}^{\mathit I}$  rather explicit. 

\begin{figure}[h!]
	\centering
	% Fig 1
	\begin{subfigure}[t]{0.3\textwidth}
		\centering
		\includegraphics[width=1.1\linewidth]{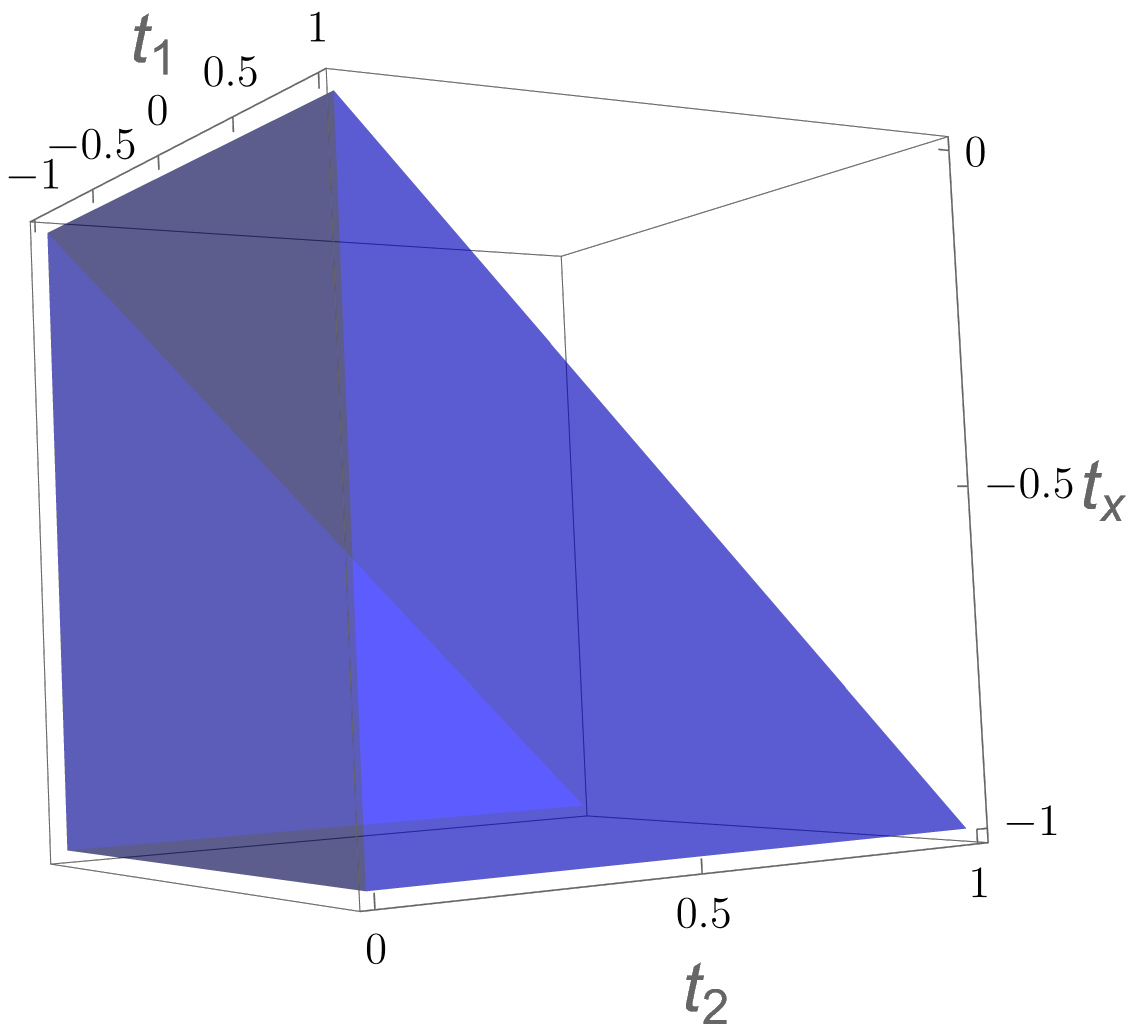}
		\caption{${\cal S}^1$ region for three-point one-loop}
		\label{fig:S1_regoon_3p1l}
	\end{subfigure}
	\hfill
	% Fig 2
	\begin{subfigure}[t]{0.3\textwidth}
		\centering
		\includegraphics[width=1\linewidth]{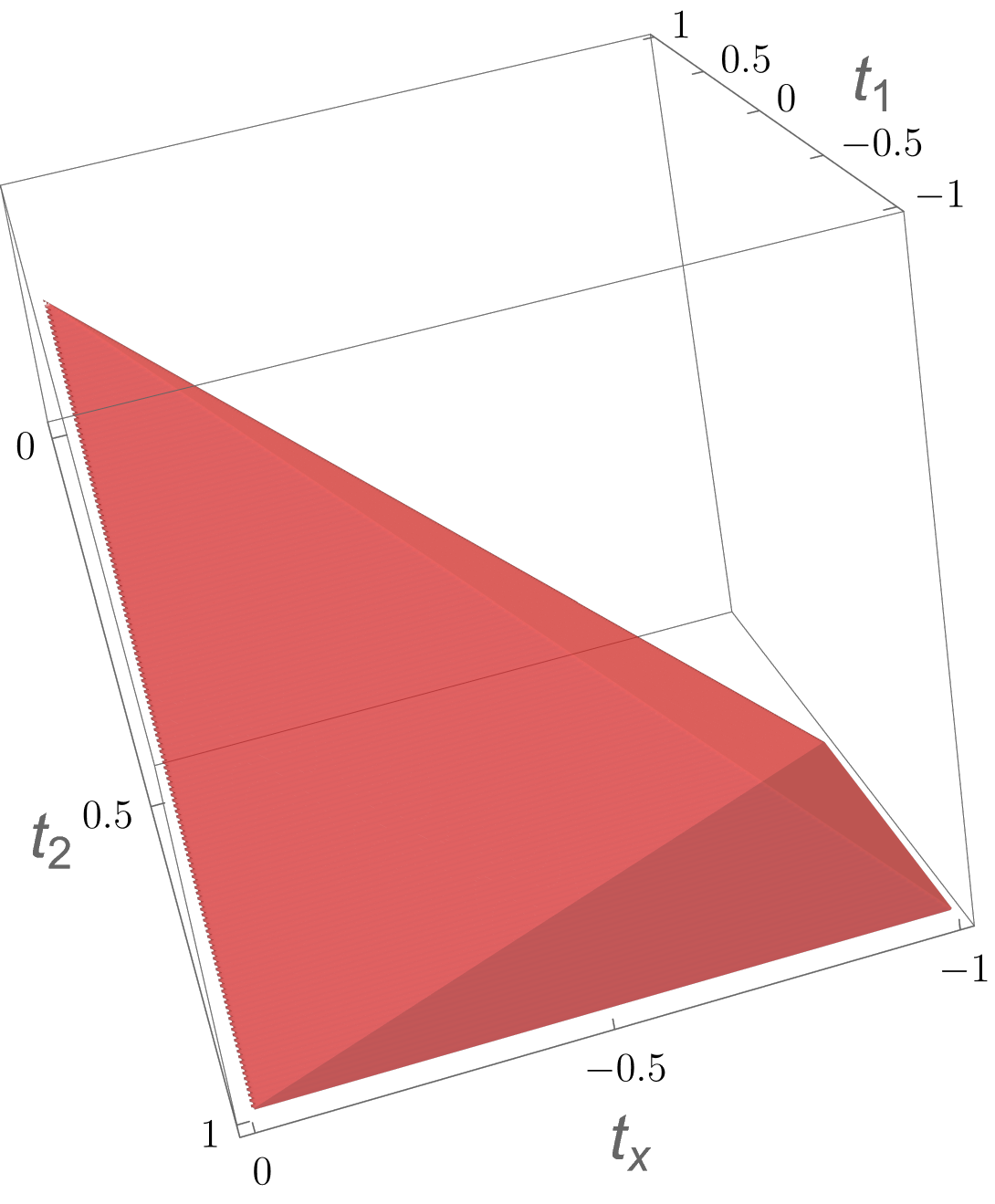}
		\caption{${\cal S}^2$ region for three-point one-loop}
		\label{fig:S2_regoon_3p1l}
	\end{subfigure}
	\hfill
	% Fig 3
	\begin{subfigure}[t]{0.3\textwidth}
		\centering
		\includegraphics[width=1.1\linewidth]{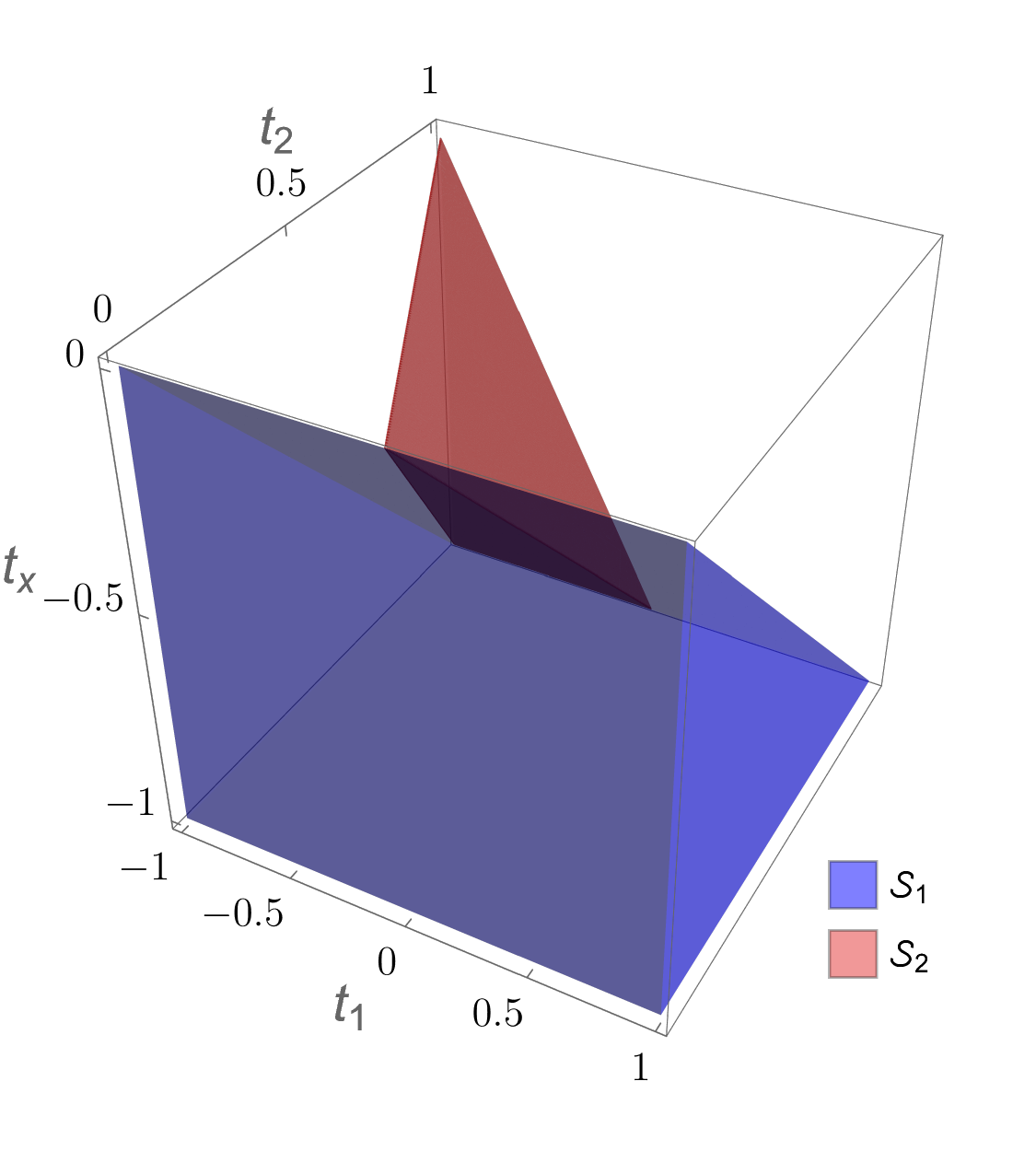}
		\caption{${\cal M}_{1,3}^{\textrm{tf}}$ region for three-point one-loop}
		\label{fig:S1_S2_regoon_3p1l}
	\end{subfigure}
	
	\caption{Tadpole free region for three-point one-loop ${\cal M}_{1,3}^{\textrm{tf}}=\mathcal{S}^1\,\medcup\,\mathcal{S}^2$}
	\label{fig:three_regions}
\end{figure}

\begin{lemma} \label{lemma: vanishing-headlight}
Inside each $\mathcal{S}^{\mathit I}$ the headlight functions associated to curves that end at the puncture are such that, 
\begin{align}
\alpha_{n0} &= \alpha_{n-1,0}\, =\, \dots\, =\, \alpha_{n-(I-2),0}\, =\, 0\nonumber\\
\alpha_{K0}\, &\neq\, 0 , \ \forall\, 1\, \leq\, K\, \leq\, n-(I-1)\nonumber
\end{align}
where `$=0$' implies that the function identically vanishes pointwise in the interior of $\mathcal{S}^{\mathit I}$ and `$\neq\, 0$' implies that the headlight function is non-vanishing on set of positive measure. 
\end{lemma}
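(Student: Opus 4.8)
\emph{Proof sketch.} The plan is to establish a single local criterion for the piecewise–linear functions $\alpha_{K0}$ and then to deduce the lemma by comparing it with the defining inequalities of the strata ${\cal S}^{I}$. First I would record the headlight functions of the spiralling curves $C_{K0}$ of the $n$-point one-loop tadpole fat graph, obtained — exactly as in the proof of Lemma~\ref{lemma:tf_region_1l} — by forming the $2\times 2$ matrix word of $C_{K0}$ and tropicalising. After restricting to ${\cal M}_{1,n}^{\textrm{tf}}$, where every $\alpha_{ii}$ vanishes, and passing to the coordinates $\widetilde t_i$ of \eqref{tigoingtotitilde}, each $\alpha_{K0}$ becomes a piecewise–linear function of the partial sums $\widetilde t_{n-1},\widetilde t_{n-2},\dots$ — a signed combination of maxima of affine forms; the base cases $n=2,3$ are immediate from the expressions already displayed in Section~\ref{sec:curve-int-formula}.

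The local criterion I would prove is: for $2\le K\le n$, if $\widetilde t_{K-1},\widetilde t_{K},\dots,\widetilde t_{n-1}$ are all $\ge 0$ then $\alpha_{K0}$ vanishes identically on that region, whereas if $t_x<\widetilde t_{K-1}<0$ with the remaining $\widetilde t$'s in the relevant range then $\alpha_{K0}>0$. This amounts to the statement that the maxima entering $\alpha_{K0}$ cancel identically in the former case and not in the latter, and I would prove it by induction on $n$: adding a three-point vertex to the left of $\Gamma_{n}$ and tracking the single new tropical variable $t_0$ through the matrix $M_{1^{\pm}0}$, which is the same kind of manipulation carried out for the tadpole curves $M_{1^{-}1^{-}},M_{1^{+}1^{+}}$ in the proof of Lemma~\ref{lemma:tf_region_1l}. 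Granting the criterion, the vanishing half of the lemma is immediate: in the interior of ${\cal S}^{I}$ one has $\widetilde t_{n-1},\dots,\widetilde t_{n-I+1}\ge 0$, so for every $K\ge n-I+2$ all of $\widetilde t_{K-1},\dots,\widetilde t_{n-1}$ lie among these non-negative coordinates and hence $\alpha_{K0}\equiv 0$ on ${\cal S}^{I}$.

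For the non-vanishing half, observe that in the interior of ${\cal S}^{I}$ the coordinate $\widetilde t_{n-I}$ ranges over the open interval $(t_x,0)$ while $\widetilde t_{1},\dots,\widetilde t_{n-I-1}$ are unconstrained. Hence for $K=n-I+1$ the criterion already furnishes $\alpha_{K0}>0$ on a set of positive measure, and for $2\le K\le n-I$ one picks a point of ${\cal S}^{I}$ with $\widetilde t_{K-1}<0$ strict and invokes the criterion again; for $K=1$ the argument $\widetilde t_{K-1}$ is absent and one checks directly (the $n=2$ base case, propagated through the induction) that $\alpha_{10}$ is non-vanishing on a set of positive measure in every stratum. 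Alternatively, one can use the geometric meaning of the headlight functions: on a top-dimensional cone $\sigma$ of the Feynman fan, $\alpha_{C}|_\sigma$ is the coordinate dual to $g_{C}$ in the basis of cone generators when $C$ is a propagator of the Feynman graph dual to $\sigma$, and is identically zero otherwise; thus $\alpha_{K0}$ is positive on the cone of any tadpole-free one-loop planar graph lying in ${\cal S}^{I}$ that carries $C_{K0}$ among its loop propagators — such a graph exists for each $1\le K\le n-(I-1)$, e.g. the one whose loop is an $(n-I+1)$-gon, and its cone lies in ${\cal S}^{I}$ by Lemma~\ref{lemma:tf_region_1l}.

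The step I expect to be the crux is the local criterion — specifically, verifying uniformly in $n$ and $K$ which affine argument each nested $\max$ in $\alpha_{K0}$ selects under the stratum inequalities. Routing this through the induction on $n$ underlying Lemma~\ref{lemma:tf_region_1l} keeps the combinatorics tractable, since each inductive step involves only one new edge and one new tropical variable; the remaining bookkeeping — matching up which Feynman graphs sit in which ${\cal S}^{I}$ and which $C_{K0}$ are their loop propagators — is then routine.
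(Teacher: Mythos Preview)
Your proposal is correct and follows the same overall logic as the paper --- compute the headlight functions $\alpha_{K0}$ for the tadpole reference graph, then check which vanish under the inequalities defining ${\cal S}^{I}$ --- but it is considerably more elaborate than what the paper actually does. The paper's proof simply writes down explicit closed-form expressions for all the $\alpha_{i0}$ at once (as nested $\max$'s of partial sums $\sum_{k=i}^{n-1}t_k$ and their shifts by $t_x$), then observes that in ${\cal S}^{I}$ the inequalities $\sum_{k=j}^{n-1}t_k > -t_x \ge 0$ for $j\ge n-I+1$ force the relevant $\max$'s to select the same argument and cancel, giving $\alpha_{i0}=0$ for $i\ge n-I+2$. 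No induction on $n$ is needed once the formulas are in hand; the ``local criterion'' you plan to prove inductively is simply read off from them. For the non-vanishing half, the paper dispatches it in one sentence (``the reader can easily convince themselves''), whereas your cone/Feynman-graph argument is a genuine improvement in rigor over the paper's treatment. In short: your route works, but the induction machinery from Lemma~\ref{lemma:tf_region_1l} is overkill here --- the explicit $\alpha_{i0}$ formulas do all the work directly.
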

\begin{proof}
The proof is simply a matter of chasing the definitions. The headlight functions $\alpha_{i0}, \, \forall\, 1\, \leq\, i\, \leq\, n-1$ can be computed directly and we find that,

\begin{align}\label{1lphfnp}
\alpha_{n0} &= -t_x - \max\left(0, t_{n-1}\right) + \max\left(0, t_x + t_{n-1} \right), \\ ~\nonumber\\ 
\alpha_{i0} &= 
\max\left(
0,\, t_{i},\, t_{i} + t_{i+1},\, \dots,\sum_{k=i}^{n-2} t_k \,,\,\, \sum_{k=i}^{n-1} t_k
\right)\notag\\
&\quad -
\max\left(
0,\, t_{i-1},\, t_{i-1} + t_{i},\, \dots,\sum_{k=i-1}^{n-2} t_k \,,\,\, \sum_{k=i-1}^{n-1} t_k
\right) \notag\\
&\quad -
\max\left(
0,\, t_{i},\, t_{i} + t_{i+1},\, \dots,\sum_{k=i}^{n-2} t_k \,,\, \sum_{k=i}^{n-1} t_k + t_x
\right)\notag\\
&\quad+ \max\left(
0,\, t_{i-1},\, t_{i-1} + t_{i},\, \dots,\sum_{k=i-1}^{n-2} t_k \,,\,\right.  \left. \sum_{k=i-1}^{n-1} t_k + t_x
\right),\quad 2 \leq i \leq n-1\\   ~\nonumber\\
\alpha_{10} &= 
\max\left(
0,\, t_1,\, t_1 + t_2,\, \dots,\sum_{k=1}^{n-2} t_k,\, \sum_{k=1}^{n-1} t_k
\right)
-
\max\left(
0,\, t_1,\, t_1 + t_2,\, \dots, \sum_{k=1}^{n-2} t_k \,,\, \sum_{k=1}^{n-1} t_k + t_x
\right).
\end{align}
Let $(\vec{t}, t_{x})\, \in\, \mathcal{S}^{\mathit I}$. Then $\forall n \geq\, i\, \geq\, n- I + 2$, 
\begin{align}
\sum_{k=j}^{n-1} t_{k} > - t_{x} \geq\, 0,\, \quad \forall\, n-1 \geq\, j\, \geq\, n-I+1
\end{align}
and hence, 
\begin{align}
\alpha_{i0}\ =\, 0, \quad \, \forall\, 1\, \leq\, i\, \leq\,  n-I+2
\end{align}

As $t_{I-1}\, , t_{x}\, > 0$, the reader can easily convince themselves that remaining headlights do not vanish. 
\end{proof}
This hierarchy implies that all the cones which are dual to Feynman graphs containing a curve $C_{n0}$ are in $\mathcal{S}^{\mathit 1}$ , all the cones which are dual to Feynman graphs not containing $C_{n0}$ but necessarily containing $C_{n-1,0}$ are in $\mathcal{S}^{\mathit 2}$  and so on. Finally, all the cones dual to graphs containing the curves $C_{10}, C_{20}$ are in $\mathcal{S}^{\mathit {n-1}}$ .  Thus each $\mathcal{S}^{\mathit I}$  contains a union of cones where the number of such cones decrease monotonically as $\mathit{I}$ increases from 1 to $n-1$. 

The above observation will be used in the next section to write the renormalized planar amplitude without taking recourse to renormalizing individual cones.  

%%%%%%%%%%%%%%%%%%%%%%%%%%%%%%%%
\subsection{Renormalized amplitude}
We finally have all the tools at our disposal to define a renormalized amplitude whose evaluation inside each cone is a period \cite{Kontsevich:2001}.  Taking inspiration from the parametric renormalization, we will now prove that \begin{align}
{\cal A}_{1, n}^{\textrm{R}} ( \{X_{ij}\}, \{X_{ji}\} \vert 1 \leq\, i < j \leq\, n; S_{0} )\, :=\, \lim_{\epsilon \rightarrow\, 0} \bigg[\, {\cal A}_{1, n}(S, \{\Theta\}, \epsilon) - {\cal T}_{1, n}(S, \Theta, S_{0}, \epsilon)\, \bigg]
\end{align}
is UV finite whose restriction to each cone is the parametrized forest formula for corresponding graph. We will refer to ${\cal T}_{1, n}$ as a \emph{tropical counter-term} which depends on the external momenta as well as renormlisation scale $S = S_{0}$ defined in eqn. (\ref{schoice}).  In the above equation, the dependence of the renormalized amplitude on the ``generalized'' Kinematic variables is shown explicitly. In the subsequent equations, we will hide this dependence in the interest of brevity. 

We now define a ``modified curve integral formula'' ${\cal A}_{1, n}^{R}$ and prove that it is UV finite and consistent with the so-called \emph{kinematic scheme} in which the counter-term for UV finite graphs is zero. As we will see below, the  modified curve integral formula involves a subtraction term which depends on a renormalization scale $S_{0}$ and is ``curve-integral'' version of the counter-term in the parametric forest formula at one-loop. We first note that, as the second surface Symanzik polynomial is a homogeneous polynomial in $X_{ij}$ we assert that 
\begin{align}
\frac{{\cal F}}{{\cal U}}\, -\, {\cal Z}\, =\, S\, (t_{x}\, \Psi_{1}(\vec{\Theta})\,  +\, \Psi_{0}(\vec{\Theta})\, )
\end{align}
where, 
\begin{align}
\vec{\Theta}\, &:=\, (\, \{\Theta\}_{ij},\, \Theta_{m}) \\
\Theta_{ij} &:=\, \frac{X_{ij}}{S},\, \ \textrm{and,} \  \Theta_{m}\, =\, \frac{m^{2}}{S} 
\end{align}
We assert  that $\frac{{\cal F}}{{\cal U}} - {\cal Z}$ can be written in the above form, and $\Psi_{1}, \Psi_{0}$ are the degree $(L-1)$ and degree $L$ polynomials in $(\widetilde{t}_{1}, \dots, \widetilde{t}_{n-1})$ respectively. We back this assertion with several examples below (see sec. \ref{subsec:examples-one-loop}). %\textcolor{blue}{See lemma {\bf cite xxxx}) for details.} 
\begin{align}\label{cfraat1l}
\tcboxmath{
{\cal A}_{1, n}^{R}\, :=\, {\cal A}_{1, n} - \sum_{I=1}^{n-1}\, \int_{-\infty}^{0} \D t_{x}\, \prod_{K= I+1}^{n-1}\, \int_{0}^{\infty}\, \D \widetilde{t}_{K}\, \int_{t_{x}}^{0} \D\widetilde{t}_{I}\, \prod_{J=1}^{I-1} \cancel{\int} \D\widetilde{t}_{j}\, \frac{1}{{\cal U}^{2}}\, \exp^{- t_{x}S_{0}\, \Psi_{1}(\vec{\Theta_{0}})} e^{- S\, \Psi_{0}(\vec{\Theta})}}
\end{align}
where $\Theta^{0}_{ij}\, :=\, \frac{X_{ij}}{S_{0}}$ and $\Theta_{m}^{0}\, =\, \frac{m^{2}}{S_{0}}$.  We refer to the subtracted term as a \emph{tropical counter-term}.  In the above equation,  $\cancel{\int}$ is integral over ${\bf R}\, -\, [0, t_{x} ]$. Before going into the details of this formula and its derivation,  let us make some anticipatory remarks which will help clarify its  key features.
\begin{enumerate}
\item The renormalization scheme chosen via tropical counter-term only cancels logarithmic divergences arising from bubble (sub)graphs. Counter-term in the tropical region which corresponds to the contribution due to a UV finite graph  is zero in the chosen scheme. 
\item The dependence on the scheme is through the restriction of the domain of integration in $\mathcal{S}^{\mathit I}$ . It is clear that given the fact that $\widetilde{t}_{I} = t_{x}\tau_{I}$ with $\tau_{I}\, \in\, [0,1]$, if any of the remaining $\widetilde{t}_{j}$ is integrated over a domain $[0, \pm t_{x}]$ then such an integral will be convergent and polynomial in $S$. As a result we naively expect that $\cancel{\int}$ should apply to all the $\widetilde{t}_{j}\, \vert\, j\, >\,  I$ variables. That, there is no such restriction on $\widetilde{t}_{j} \vert j < I$ is due to lemma \ref{lfrfc}.
\item  The renormalized amplitude is an on-shell object and is a function of generalized kinematic variables that include $X_{ji} \vert 1 \leq\, i < j \leq\, n$. The physical amplitude is obtained by taking the limit 
\begin{align}
\prod_{j > i}\, \lim_{X_{ji}\, \rightarrow\, m^{2}}{\cal A}_{1, n}^{R} =: {\cal A}_{1,n}^{\textrm{physical}}
\end{align}
Our renormalization scheme is chosen in such a way that this limit is well defined as all the cones that are dual to fat graphs with radiative corrections on any of the external legs vanish.
\item In section \ref{subsec:examples-one-loop}, we will explicitly work out several lower point amplitudes to verify eqn. (\ref{cfraat1l}).
\end{enumerate}
We now analyze the formula in greater detail.

Right hand side of eqn.(\ref{cfraat1l}) can be re-written as follows.
\begin{align}
{\cal A}_{1, n}^{R}\, :=\, {\cal A}_{1, n} -\, \sum_{I=1}^{n-1}\,  \int_{-\infty}^{0} \frac{1}{- t_{x}}\, \D t_{x} \int\, \D\mu_{I}(\widetilde{t}_{1}, \dots\, \widetilde{\tau}_{I}\, \dots,\, \widetilde{t}_{n-1})\, \exp^{t_{x} S_{0} \Psi_{1}(\vec{\Theta_{0}})} e^{S \Psi_{0} (\vec{\Theta})}
\end{align}
where the measure $d\mu_{I}$ in the tropical Schwinger parameter space involves  precisely one integral from over $\tau_{I} \in [0,1]$, with all the integrals over $\widetilde{t}_{J} \in\, {\bf R}^{+}\,\forall\, n-1\, \geq\, J\, >\, I$ and all the remaining integrals being over $[t_{x}, 0]^{c}$. 

Taking inspiration from the decomposed Feynman rules for convergent graphs with log sub-divergences, we can write the renormalized amplitude as
\begin{align}\label{dfrfcif}
\tcboxmath{{\cal A}_{1, n}^{R}\, =\, I_{1, n}\, +\, \sum_{I=1} \int \D\mu_{I}(\widetilde{t}_{1}, \dots, \tau_{I}, \dots, \widetilde{t}_{n-1})\,  e^{S\, \Psi_{0}(\Theta)}\, \bigg[\, \ln(\frac{S}{S_{0}})\,  + \ln(\frac{\Psi_{1}(\vec{\Theta})}{\Psi_{1}(\vec{\Theta}_0)})\, \bigg]}
\end{align}
where, 
\begin{align}
I_{1, n}\, =\, {\cal A}_{1, n}\, -\, \sum_{I=1}^{n-1}\, \displaystyle \int_{-\infty}^{0} \frac{1}{- t_{x}}\, \D t_{x} \int\, \D\mu_{I}(\widetilde{t}_{1}, \dots\, \widetilde{\tau}_{I}\, \dots,\, \widetilde{t}_{n-1})\, e^{t_{x} S\Psi_{1}(\vec{\Theta})}\, e^{S \Psi_{0} (\vec{\Theta})},
\end{align}
is a UV-finite  polynomial function of the Mandelstam variables with no branch-cuts. 

We refer to Eqn.(\ref{dfrfcif}) as \emph{the decomposed curve-integral rule at one-loop}  in it's analogy with \emph{decomposed Feynman rule} derived in section \ref{sec:forest-formula-review}.

The tropical counter-term depends on two  polynomials $\Psi_{1}, \Psi_{0}$ that are defined based on the following assertion.
\begin{lemma}\label{hftxzero}
Consider any region $\mathcal{S}^{\mathit I}\bigg|_{I\, \in\, \{1, \dots, n-1\}}$ in ${\cal M}_{1, n}^{\textrm{tf}}$.  Let
\begin{align}
\overline{\mathcal{S}}^{I}\, =\, \mathcal{S}^{\mathit I}\, \bigcap_{j < i(I)} (\widetilde{t}_{j}\, <\, t_{x}\, \medcup\, \widetilde{t}_{j} > 0\, ).
\end{align}
Then inside $\overline{\mathcal{S}}^{I}$, any non-trivial quadratic monomial in the headlight functions $\alpha_{i0}$ vanish in the limit $t_{x}\, \rightarrow\, 0$.\footnote{$\alpha_{i0}\, \alpha_{j0}$ is non-trivial inside a region in ${\cal M}_{n}^{\textrm{tf}}$ if it does not vanish pointwise inside that region.} 
\end{lemma}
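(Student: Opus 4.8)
The plan is to pass to the coordinates $\widetilde t_i=\sum_{j=i}^{n-1}t_j+t_x$ of eqn.~\eqref{tigoingtotitilde} and to prove the single uniform bound
\begin{align}
0\ \le\ \alpha_{i0}(\widetilde t,t_x)\ \le\ -t_x \qquad \text{on }\ \overline{\mathcal S}^{I},\quad 1\le i\le n.
\end{align}
This is already more than the lemma asks for: it gives $0\le\alpha_{i0}\alpha_{j0}\le t_x^{2}$ pointwise on $\overline{\mathcal S}^{I}$, and since remaining inside $\mathcal S^{I}$ while sending $t_x\to 0^{-}$ forces the sandwiched coordinate $\widetilde t_{n-I}\in[t_x,0]$ down to $0$ as well, every quadratic monomial $\alpha_{i0}\alpha_{j0}$ — in particular every non-trivial one — tends to $0$ in the limit. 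Read cone by cone (each of the finitely many polyhedral pieces of $\overline{\mathcal S}^{I}$ carries genuine linear $\alpha_{i0}$'s), the same bound says each $\alpha_{i0}$, hence each product, vanishes on the hyperplane $\{t_x=0\}$ and is therefore divisible by $t_x$ there; this is exactly the divisibility that later lets one write $\mathcal F/\mathcal U-\mathcal Z$ in the linear-in-$t_x$ form \eqref{eq:linear-tx}.

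To obtain the bound I would first telescope the $\max$–expressions of eqn.~\eqref{1lphfnp}. Writing $a_i:=\min(\widetilde t_i,\widetilde t_{i+1},\dots,\widetilde t_{n-1})$ one finds, for $2\le i\le n-1$,
\begin{align}
\alpha_{i0}=\max\!\big(0,\ \min(a_i,0)-\widetilde t_{i-1}\big)-\max\!\big(0,\ \min(a_i,t_x)-\widetilde t_{i-1}\big),
\end{align}
together with $\alpha_{10}=\min(a_1,0)-\min(a_1,t_x)$ and a similar short expression for $\alpha_{n0}$ in $\widetilde t_{n-1}$ and $t_x$; non-negativity is then immediate from $t_x\le 0$. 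For the upper bound one splits on the value of $i$, using the inequalities defining $\mathcal S^{I}$ (Lemma~\ref{lemma:tf_region_1l}: $\widetilde t_j\ge 0$ for $j>n-I$ and $t_x\le\widetilde t_{n-I}\le 0$) and the extra inequalities of $\overline{\mathcal S}^{I}$ ($\widetilde t_j\notin[t_x,0]$ for $j<n-I$). For $i\ge n-I+2$ (which includes $\alpha_{n0}$ when $I\ge 2$) the headlights vanish identically by Lemma~\ref{lemma: vanishing-headlight}. For $i=n-I+1$ one has $a_i\ge 0$ and $\widetilde t_{i-1}=\widetilde t_{n-I}$, so the formula collapses to $\alpha_{i0}=-\widetilde t_{n-I}\in[0,-t_x]$ (when $I=1$ this is $\alpha_{n0}$, and the same value comes directly from its formula in eqn.~\eqref{1lphfnp}). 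The remaining range $1\le i\le n-I$, which contains $\alpha_{10}$, is where the real work lies: on a fixed cone of $\overline{\mathcal S}^{I}$ each low-index $\widetilde t_j$ ($j<n-I$) is either $>0$, in which case its value never achieves the relevant minimum, or $<t_x$, in which case it pushes $a_i$ below $t_x$ so that $\min(a_i,t_x)=a_i$ and the two $\max$–terms cancel. Running through the sign patterns one checks that on each such cone $\alpha_{i0}$ equals either $0$ or $\widetilde t_{n-I}-t_x$, both of which lie in $[0,-t_x]$ on $\mathcal S^{I}$. This is precisely where the passage from $\mathcal S^{I}$ to $\overline{\mathcal S}^{I}$ is needed: it removes the spurious contributions of the low-index $\widetilde t_j$ and leaves only the combination $\widetilde t_{n-I}-t_x$.

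I expect the main obstacle to be bookkeeping rather than anything conceptual — verifying, over all the sign patterns of $(\widetilde t_{i-1},\dots,\widetilde t_{n-I-1})$, that the various $\min(a_i,\cdot)$ branches of the telescoped headlight functions always collapse to the two-term answer $\{0,\ \widetilde t_{n-I}-t_x\}$. A tidy way to organize this is by downward induction on $i$ inside $\{1,\dots,n-I\}$, peeling off one $\widetilde t_j$ at a time and using that $\alpha_{i0}$ and $\alpha_{i+1,0}$ differ only through the minima indexed by $i-1$ and $i$ (via $a_i=\min(\widetilde t_i,a_{i+1})$); this is in the same spirit as, and can borrow from, the inductive argument in the proof of Lemma~\ref{lemma:tf_region_1l}. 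Along the way one actually gets the slightly stronger statement that each individual $\alpha_{i0}$ — not merely each product — vanishes as $t_x\to 0$ on $\overline{\mathcal S}^{I}$, which is consistent with and implies the claim as stated.
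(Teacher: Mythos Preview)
Your proposal is correct and goes substantially beyond what the paper offers. The paper does not actually prove this lemma: it explicitly says ``Although we do not have a proof of this assertion, we have been able to verify it in a number of examples,'' then gives a heuristic argument about the expected form of log-divergent cones and an explicit check for $n=4$ in $\mathcal{S}^{2}$. Your approach, by contrast, yields a clean general argument.

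A few remarks on the comparison. Your bound $0\le\alpha_{i0}\le -t_x$ in fact holds \emph{globally} on $\{t_x\le 0\}$, not just on $\overline{\mathcal S}^{I}$: it is an immediate consequence of the telescopic identity $\sum_{i=1}^{n}\alpha_{i0}=-t_x$ together with $\alpha_{i0}\ge 0$. This already gives $\alpha_{i0}\alpha_{j0}\le t_x^{2}\to 0$, i.e.\ the lemma as stated, with no case analysis at all. Your telescoped formula is also correct: using $\max(0,x-y)=x-\min(x,y)$ one finds $M_j^{0}-M_j^{x}=\min(a_j,0)-\min(a_j,t_x)$, and then $\alpha_{i0}=(M_i^{0}-M_i^{x})-(M_{i-1}^{0}-M_{i-1}^{x})$ reduces to your expression. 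The restriction to $\overline{\mathcal S}^{I}$ only becomes necessary for the \emph{sharper} statement you extract --- that each $\alpha_{i0}$ takes one of the three values $\{0,\,-\widetilde t_{n-I},\,\widetilde t_{n-I}-t_x\}$ on each cone --- which is what actually delivers the exact linear-in-$t_x$ form \eqref{eq:linear-tx} of the exponent rather than merely the vanishing of the quadratic terms. So your argument clarifies precisely where the passage from $\mathcal S^{I}$ to $\overline{\mathcal S}^{I}$ matters: not for the lemma's literal conclusion, but for its intended application.

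The bookkeeping you flag is genuinely routine. One efficient organization: show that the quantities $N_j:=\min(a_j,0)-\min(a_j,t_x)$ satisfy $N_j\in\{0,\,\widetilde t_{n-I}-t_x\}$ for $j<n-I$ on $\overline{\mathcal S}^{I}$ (split on whether some $\widetilde t_k<t_x$ with $j{+}1\le k\le n{-}I{-}1$), and observe that $N_{j-1}=\widetilde t_{n-I}-t_x$ forces $N_j=\widetilde t_{n-I}-t_x$, so $\alpha_{i0}=N_i-N_{i-1}$ never becomes negative. This makes the inductive step you sketch entirely mechanical.
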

Although we do not have a proof of this assertion, we have been able to verify it in a number of examples as illustrated below.  However before checking it's validity via examples, we provide a heuristic argument in its support.
\begin{proof}
Every logarithmically divergent cone contains at most one variable $\widetilde{t}_i$ that is bounded between $(t_x,0)$. Upon scaling this variable, the structure must be such that
\[
\left(\frac{{\cal F}}{{\cal U}} - {\cal Z}\right)
(t_x,\widetilde{t}_1,\dots,\widetilde{t}_{i-1},\tau_i,\widetilde{t}_{i+1},\dots,\widetilde{t}_{n-1})
= t_x \,\Psi_{1}
+ \Psi_{0},
\]
in order to reproduce the correct logarithmic divergence, where $\Psi_{1}$ is dependent on $(t_x,\widetilde{t}_1,\break\dots,\widetilde{t}_{n-1},\tau_i,\widetilde{t}_{n+1},\dots,\widetilde{t}_{n-1})$  and   $\Psi_{0}$ is dependent on $(t_x,\widetilde{t}_1,\dots,,\widetilde{t}_{i-1},\widetilde{t}_{i+1},\dots,\widetilde{t}_{n-1})$.

When we remove the convergent regions using the $\cancel{\int}$ notation from the domains $\mathcal{S}^{I}_{1,n}$ , the remaining region 
\[
\mathcal{S}^{I}_{1,n} \setminus \text{\{convergent regions\}}
\]
is a union of logarithmically divergent cones, all of which share the property that the same coordinate $\widetilde{t}_{n-I}$ is bounded between $(t_x,0)$. Consequently, within such a region one must have
\[
\left(\frac{{\cal F}}{{\cal U}} - {\cal Z}\right)
(t_{1}, \dots, t_{I-1}, \tau_{I}, \widetilde{t}_{I+1}, \dots, \widetilde{t}_{n-1})
= t_x \,\Psi_{1}
+ \Psi_{0},
\]
under the scaling of $\widetilde{t}_i$,where $\Psi_{1}$ is dependent on $(t_{1}, \dots, t_{I-1}, \tau_{I}, \widetilde{t}_{I+1}, \dots, \widetilde{t}_{n-1}) $ and $\Psi_{0}$ is dependent on $(t_{1}, \dots, t_{I-1}, \widetilde{t}_{I+1}, \dots, \widetilde{t}_{n-1}) $.

For readers not satisfied with the heuristic argument, we now support it by computing ${\cal F}$ for $n = 4$ and show why it is consistent with our assertion. 

Let us consider the $n=4$ case. As an illustration, we evaluate the limit in the region $\mathcal{S}^{\mathit 2}$ where
\begin{align}
\widetilde{t}_{3}\, \geq\, 0,\, t_{x}\, \leq\, \widetilde{t}_{2}\, \leq\, 0, -\, \infty\, < \widetilde{t}_{1} < \infty
\end{align}
In $\mathcal{S}^{\mathit 2}$,  $\alpha_{40}\,=\, 0$ and it can be checked that
\begin{align}
\alpha_{30}&=\, -\, \widetilde{t}_{2}\nonumber\\
\alpha_{20}&=\, \textrm{max}(0,\widetilde{t}_{1} - \widetilde{t}_{2}) - \textrm{max}(0,\widetilde{t}_{1} - t_{x})\, +\, (\widetilde{t}_{2} - t_{x})\nonumber\\
\alpha_{10}&=\, \textrm{max}(0,\widetilde{t}_{1} - t_{x})\, -\, \textrm{max}(0,\widetilde{t}_{1} - \widetilde{t}_{2})\, .
\end{align} 
Hence
\begin{align}
\lim_{t_{x}\, \rightarrow\, 0}\, t_{x}\, \frac{{\cal F}}{{\cal U}}\, =\, -\, \lim_{t_{x}\, \rightarrow\, 0}\, \sum_{1 \leq\, i < j \leq\, 3}\, z_{ij}\, \alpha_{i0} \alpha_{j0}\, .
\end{align}
As $\widetilde{t}_{2} = t_{x} \tau_{2}$, both the terms involving $\alpha_{30}$ atleast scale as $t_{x}$ and hence vanish as $t_{x}\, \rightarrow\, 0$.  Hence in order to show the desired limit, we have to evaluate, 
\begin{align}
\lim_{t_{x}\, \rightarrow\, 0}\, \alpha_{10}\, \alpha_{20}&=\, -\, \lim_{t_{x}\, \rightarrow\, 0}\, (\textrm{max}(0, \widetilde{t}_{1} - \widetilde{t}_{2})\, -\, \textrm{max}(0, \widetilde{t}_{1} - t_{x})\, )^{2}\nonumber\\
&=\, 0\, \qquad  \textrm{if} -\infty < \widetilde{t}_{1} < t_{x}\nonumber\\
&=\, (t_{x} - \widetilde{t}_{2})^{2}\, ( =\, t_{x}^{2} (1 - \tau_{2})^{2}) \  \qquad \textrm{for}\ \widetilde{t}_{1} > 0\nonumber\\
&=\, [\, \textrm{max}(0, \widetilde{t}_{1} - \widetilde{t}_{2}) - (\widetilde{t}_{1} - t_{x})\, ]^{2}\, =\, O(t_{x}^{2})\, \qquad  \textrm{for}\ t_{x} \leq\, \widetilde{t}_{1} \leq\, 0.
\end{align}
The final equality in the last line is due to the fact that when integrating over the interval, $t_{x}\, \leq\, \widetilde{t}_{1}\, \leq\, 0$ we can change $\widetilde{t}_{1}\, \rightarrow\, t_{x}\tau_{1}$. 

This implies that in $\mathcal{S}^{\mathit 2}$, $ \displaystyle \lim_{t_{x}\, \rightarrow\, 0}\, t_{x}\, \frac{{\cal F}}{{\cal U}}$ vanishes for the entire range of $\widetilde{t}_{1}$. The analysis in the region $\mathcal{S}^{\mathit 1}$ is similar. 
\end{proof}
A few comments are in order.
\begin{enumerate}
\item Lemma \ref{hftxzero} implies that the exponent of the curve integrand is a linear function of $t_{x}$.
\begin{align}\label{fbyuminuszinsi}
\bigg(\frac{{\cal F}}{{\cal U}} - {\cal Z}\bigg)(t_{1}, \dots, t_{I-1}, \tau_{I}, \widetilde{t}_{I+1}, \dots, \widetilde{t}_{n-1})\, =: t_{x} \Psi_{1\, I} + \Psi_{0\, I},
\end{align}
where $\Psi_{1}, \Psi_{0}$ are functions of the planar kinematic variables.
\item The tropical counter term is not an integral over the entire region but excludes certain domain. In particular inside each $\mathcal{S}^{\mathit I}$ the measure is
\begin{align}\nonumber
\displaystyle \int_{-\infty}^{0} \D t_{x} \int_{t_{x}}^{0} \D \widetilde{t}_{I}\, \prod_{J=I+1}^{n-1} \cancel{\int} \D \widetilde{t}_{j} \prod_{K=1}^{I-1}\int_{0}^{\infty}\, \D \widetilde{t}_{K}.
\end{align}
The reason for this exclusion is as follows.
Consider an integral of the form,\footnote{Where $f$ is such that the integral is convergent at large $\vert t_{x} \vert$.}
\begin{align}
I = \int_{-\infty}^{0} \D t_{x} \int_{t_{x}}^{0} \D s_{_1}\, \int_{t_{x}}^{-t_{x}} \D s_{_2}\, \frac{1}{t_{x}^{2}}\, e^{ f(t_{x}, t_{_1}, t_{_2})}. 
\end{align}
Such an integral is  convergent near $t_{x} = 0$, so long as at small $t_{x}$, $f$ admits a Taylor expansion. This is because,
\begin{align}
I\, =\, \int_{-t_{x}}^{0}\, \D t_{x}\, \int_{1}^{0}\, t_{x} \D \tau_{_1}\, \int_{1}^{-1}\, t_{x} \, \D \tau_{_2}\, \frac{1}{t_{x}^{2}}\, e^{f(t_{x}, \tau_{_1}, \tau_{_2})}.
\end{align}
Clearly this integral is convergent near $t_{x} = 0$ since the singularity is cancelled by the fact that $s_{1}, s_{2}$ integration range scale with $\vert t_{x} \vert$. This simple example reveals that once $t_{n-I+1}$ is bounded between $[t_{x}, 0]$ inside $\mathcal{S}^{\mathit I}$, the divergence can only arise if all $\widetilde{t}_{K},\, 1\leq\, K < n-I+1$. As a result we expect inside each $\mathcal{S}^{\mathit I}$,  the integrand inside tropical counter-term is integrable so long as the region of integration $\forall\, \widetilde{t}_{m}\,  \vert m\, \neq\, n - I$ do not scale with $t_{x}$. It turns out that we can make this expectation more precise. 
\end{enumerate}
\begin{lemma}\label{lfrfc}
Let $C$ be a cone which lies inside $\mathcal{S}^{\mathit I}$. $C$ is dual to a UV finite graph if and only if,  the range of at least one of the $\widetilde{t}_{j}$ for $j < I$ lies between $t_{x}$ and $0$.
\end{lemma}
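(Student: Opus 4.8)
The plan is to pull everything back to a single top-dimensional cone $C\subset\mathcal S^{I}$ of the $g$-vector fan and to translate the condition on the $\widetilde t_j$'s into a condition on the loop of the Feynman graph dual to $C$. First I would use that inside $C$ the headlight functions $\alpha_{C'}$ become the Schwinger parameters of the dual one-loop fat graph $\Gamma_C$ (the $\alpha_{C'}$ of curves that are not propagators of $\Gamma_C$ vanishing identically on $C$), so that the restriction of the curve integrand to $C$ is exactly the parametric integrand of $\Gamma_C$, with $\mathcal U=-t_x$ everywhere on $\mathcal M_{1,n}^{\mathrm{tf}}$. Since $C$ lies in the tadpole-free region, $\Gamma_C$ has no tadpole, so in $D=4$ the $\mathrm{Tr}(\Phi^3)$ power counting gives a clean dichotomy: $\Gamma_C$ is UV divergent iff its loop is a $2$-gon (equivalently $\Gamma_C$ contains a bubble), and UV finite iff its loop is an $m$-gon with $m\ge 3$ (then $\Gamma_C$ has no divergent subgraph at all). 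Moreover, by Lemma~\ref{hftxzero} and eqn.~(\ref{fbyuminuszinsi}) the exponent $\mathcal F/\mathcal U-\mathcal Z$ is linear in $t_x$, and the mass is nonzero, so (apart from the standard convergence of the parametric integrand in Euclidean kinematics) the only corner where the integral over $C$ can diverge is $t_x\to 0^-$.

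The heart of the argument is the combinatorial identity $\#\{\,j:\ 1\le j\le n-1,\ \widetilde t_j\ \text{is confined to}\ [t_x,0]\ \text{on}\ C\,\}=m-1$, where $m$ is the number of edges of the loop of $\Gamma_C$. I would prove this by identifying the loop edges of $\Gamma_C$ with the $m$ curves $C_{K_10},\dots,C_{K_m0}$ (with $K_1<\dots<K_m$) ending on the puncture that occur in $C$; their headlight functions are the loop parameters $a_{K_i0}$ and $-t_x=\mathcal U=\sum_i a_{K_i0}$. Using the explicit description of the $\widetilde t_j$ inside a cone coming from the $g$-vector/word data, one checks that the $\widetilde t_j$ confined to $[t_x,0]$ are exactly the negative partial sums $-(a_{K_10}+\dots+a_{K_\ell 0})$ for $\ell=1,\dots,m-1$. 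Two structural facts then pin down their indices: (i) $\widetilde t_{I}$ is always confined to $[t_x,0]$, by the very definition of $\mathcal S^{I}$; (ii) no $\widetilde t_j$ with $j>I$ can be confined to $[t_x,0]$, since $\widetilde t_j\ge 0$ throughout $\mathcal S^{I}$ while being non-constant on the top-dimensional cone $C$. Hence every confined coordinate carries index $\le I$, with $\widetilde t_{I}$ always among them.

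Assembling: $C$ is dual to a UV-divergent graph $\iff m=2\iff$ exactly one $\widetilde t_j$ is confined to $[t_x,0]$ $\iff$ that coordinate is $\widetilde t_{I}$ alone $\iff$ no $\widetilde t_j$ with $j<I$ is confined; and $C$ is dual to a UV-finite graph $\iff m\ge 3\iff$ at least two $\widetilde t_j$ are confined, i.e.\ besides $\widetilde t_{I}$ at least one with $j<I$. This is precisely the assertion of the lemma. As an independent check of the first paragraph one can redo the convergence count directly: if $r$ of the $\widetilde t_j$ are confined to $[t_x,0]$ on $C$, rescaling each confined coordinate by $t_x$ produces a Jacobian $\propto|t_x|^{r}$, and together with $\mathcal U^{-D/2}=|t_x|^{-2}$ and the bounded exponent $t_x\Psi_1+\Psi_0$ the $t_x$-integral behaves as $\int_{0^-}dt_x\,|t_x|^{r-2}$, which is finite for $r\ge 2$ (the $m\ge 3$ case) and logarithmically divergent for $r=1$ (the bubble). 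The delicate step is the combinatorial identity of the second paragraph, i.e.\ showing that ``confined to $[t_x,0]$'' is exactly the signature of a loop partial sum, and counting these: this requires combining the explicit $g$-vector description of the cones of $\mathcal S^{I}$ with Lemmas~\ref{lemma:tf_region_1l} and~\ref{lemma: vanishing-headlight}, and the bookkeeping is nontrivial because a generic $\widetilde t_j$ mixes loop and tree edge parameters and one must show the tree contributions are exactly what destroys confinement. The reduction to a cone integral and the final assembly are then essentially formal.
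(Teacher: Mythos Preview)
Your approach is sound and in fact more systematic than the paper's. The paper does not give a general proof: it explicitly says the full argument ``requires us to analyze all possible regions of this type'' and then only verifies one direction in one example --- the triangle subgraph inside $\mathcal{S}^{2}$ --- by exhibiting the region $R(n{-}3,n{-}2,n{-}1)$ where the extra coordinate $\widetilde t_{n-3}$ is confined to $[t_x,0]$. The converse implication and the general case are deferred to the spot checks in \S\ref{argforarbitn}.

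Your main addition is the counting identity $\#\{\text{confined }\widetilde t_j\}=m-1$, with $m$ the number of loop edges of $\Gamma_C$, together with the Jacobian scaling argument $\int_{0^-}|t_x|^{\,r-2}\,dt_x$ that ties UV convergence of the cone integral directly to $r\ge 2$. This delivers both directions of the biconditional at once and makes the role of the $\cancel{\int}$ prescription in the counter-term transparent. The paper's case analysis buys concreteness (one sees the explicit inequalities cutting out a specific cone) but not generality; your approach buys a uniform statement at the cost of the bookkeeping you correctly flag as delicate.

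What remains genuinely open in both treatments is precisely that combinatorial step: proving that on \emph{every} cone the confined $\widetilde t_j$ are exactly the $m-1$ negative partial loop sums, and that tree-edge contributions to $\widetilde t_j$ never produce a spurious confinement. The paper's explicit computations in \S\ref{argforarbitn} cover only cones with consecutive loop curves $C_{n,0},\dots,C_{n-m+1,0}$ inside $\mathcal{S}^{1}$; cones with non-adjacent loop indices (for instance a bubble on $C_{10},C_{30}$ sitting in some $\mathcal{S}^{I}$ with $I>1$), where tree and loop Schwinger parameters mix nontrivially inside $\widetilde t_j$, are not worked out by either of you.
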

\begin{proof}
Although the detailed proof requires us  to analyze all possible regions of this type, we will show explicitly how this statement is true for a specific choice of $j$. The analysis of any such region should be similar. 

We analyze the cones that sit inside $\mathcal{S}^{\mathit 2}\, \medsubset\, {\cal M}_{1, n}^{\textrm{tf}}$.  We recall that $\mathcal{S}^{\mathit 2}$ is the region in global Schwinger space bounded by the following inequalities.
\begin{align}
\widetilde{t}_{n-1}\, >\,  0\, \medcap\, t_{x}\, < \widetilde{t}_{n-2}\, <\, 0\, .
\end{align}
We know from Lemma \ref{lemma: vanishing-headlight} that
\begin{align}
\alpha_{n, 0} = 0,\, \alpha_{n-1,0}\, =\, -\, \widetilde{t}_{n-2} \, .
\end{align}
Using eqn. (\ref{1lphfnp}), the reader can verify that the region where $\alpha_{n-2,0}\, \neq\, 0,\, \alpha_{n-3,0} \neq\, 0$ lies inside the following sub-region $\mathcal{S}^{\mathit 2}$, 
\begin{align}
R(n-3,n-2,n-1)\, :=\, \mathcal{S}^{\mathit 2}\, \medcap\, (\, t_{x}\, \leq\,\widetilde{t}_{n-3}\, \leq\, 0\, )\,  \bigcap\, \bigg[ \bigcap_{k\ \neq\, n-1,n-2,n-3}\, \widetilde{t}_{k}\, \in\, {\bf R} - [t_{x}, 0] \bigg].
\end{align}
Hence the cone associated to the triangle graph spanned by the three curves $C_{n-1,0}, C_{n-2,0}, C_{n-3,0}$ lies inside $R(n-3,n-2,n-1)$. 
\end{proof}

In section \ref{argforarbitn}, we prove that (a) ${\cal A}_{1, n}^{R}$ is  finite and (b) it leads to renormalized Feynman graph inside each cone. But before giving the general argument, in section (\ref{subsec:examples-one-loop}) we work out the tropical counter term in several examples and show explicitly that  how their subtraction indeed leads to renormalized amplitude at one-loop in the kinematic renormalization scheme,\cite{Brown:2011pj} . 

%%%%%%%%%%%%%%%%%%%%%%%%%%%%%%%%
\subsection{Some examples for ${\cal A}_{1, n}^{R}$} \label{subsec:examples-one-loop}
In this section we explicitly evaluate the renormalized amplitude for $n=2,3$ and $n=4$ case and show how renormalized amplitudes are period integrals just as individual renormalized Feynman graphs are. 

%%%%%%%%%%%%%%%%%
\subsubsection*{${\cal A}_{1, 2}^{R}$}
For the two--point case, the tadpole--free region is given by
\begin{align}
\mathcal{S}_{1,2}^1 \;=\; \{\, t_x < \widetilde{t}_1 < 0 \,\}.
\end{align}
This region corresponds to a single graph: the one-loop correction to the propagator.  
We consider the renormalized integral,
\begin{align}\label{countertermforne3le1}
{\cal A}_{1,2}^{R} 
= {\cal A}_{1,2}  
- \int_{-\infty}^{0} \D t_{x} 
\int_{t_x}^{0} \D \widetilde{t}_{1} \, 
\frac{1}{t_x^{2}} \,
e^{t_{x} S_{0} \Psi_{1,I}(\vec{\Theta_{0}})} \,
e^{S \Psi_{0,I}(\vec{\Theta})},
\end{align}
where 
\[
\Psi_{1,1} = -\Theta_e + \tau_1 - \tau_1^2, 
\qquad 
\Psi_{0,1} = 0,
\qquad 
S = p_1^2,
\qquad 
\Theta_e = \frac{m^2}{S}.
\]

After rescaling $\widetilde{t}_1=t_x \tau_1$ the integral, this can be rewritten as
\begin{align}\label{countertermforne2le1}
{\cal A}_{1,2}^{R} 
&= \int_{-\infty}^{0} \D t_{x} 
\int_{1}^{0} \D \tau_{1}\, 
\frac{1}{t_x} \, e^{t_{x} S \Psi_{1,I}(\vec{\Theta_{0}})}
- \int_{-\infty}^{0} \D t_{x} 
\int_{1}^{0} \D \tau_{1} \, 
\frac{1}{t_x}\, e^{t_{x} S_{0} \Psi_{1,I}(\vec{\Theta_{0}})} \\[6pt]
&= \int_{1}^{0} \D \tau_{1}\, 
\bigg[\, \ln(\frac{S}{S_{0}})\,  + \ln(\frac{\Psi_{1,1}(\vec{\Theta})}{\Psi_{1,1}(\vec{\Theta}_0)})\, \bigg],
\end{align}
where $\Psi_{1,1}^{(0)}$ denotes that the angle is fixed, with $\Theta_e^{(0)} \;=\; \frac{m^2}{S_0},$ in the expression for $\Psi_{1,1}$.

%%%%%%%%%%%%%%%%%
\subsubsection*{${\cal A}_{1,3}^{R}$}
We now consider  renormalized  three-point amplitude.  
\begin{align}\label{countertermforne3le1}
{\cal A}_{1, 3}^{R} = {\cal A}_{1, 3}  - \sum_{I=1}^{2} \int_{-\infty}^{0} \D t_{x} \int_{0}^{\infty} \prod_{J=I+1}^{3} \D \widetilde{t}_{J} \int_{t_{x}}^{0} \D\widetilde{t}_{I} \cancel{\int} \prod_{K=I-1}^{1} \D \widetilde{t}_{K}\, \frac{1}{t_x^2} \, e^{t_{x} S_{0} \Psi_{1,I}(\vec{\Theta_{0}})} e^{S \Psi_{0,I} (\vec{\Theta})}\, .
\end{align}
For three-point one-loop we can find the $\frac{\mathcal{F}}{\mathcal{U}}-\mathcal{Z}$ using the properties of surface Symanzik polynomials (see appendix of \cite{Arkani-Hamed:2023mvg}) we can write,
\begin{align}
&\mathcal{F}=p_2^2\;\alpha_{10}\,\alpha_{20}+p_3^2\;\alpha_{20}\,\alpha_{30}+p_1^2\;\alpha_{30}\,\alpha_{10}\\
&\mathcal{Z}=\sum_{i=1}^{3}\alpha_{i0}\; m^2+\alpha_{12}(p_2^2+m^2)+\alpha_{23}(p_3^2+m^2)+\alpha_{31}(p_1^2+m^2)\, .
\end{align}
Where in this case kinematic variables are $S=(p_1+p_2)^2,\Theta_{i,j}=\frac{p_i.p_j}{S}, \Theta_e=\frac{m^2}{S}$ where S is scale and $\Theta$ are angles and $S_0, \Theta^0_{i,j},\Theta^0_e$, denote some fix kinematic points or fix angle and scales. The tadpole free region is made by union of two regions  $ \mathcal{S}^1_{1,3} \medcup \mathcal{S}^2_{1,3}$, hence counter-term also have two parts where 
\begin{align}
&\mathcal{S}^1_{1,3}=t_x\leq\widetilde{t}_2\leq0\\
&\mathcal{S}^2_{1,3}=(\widetilde{t}_2\geq0)\ \medcap\ (t_x \leq \widetilde{t}_1\leq0) \, .
\end{align}
Let's look at region  $\mathcal{S}^2_{1,3}$ for the form of counterterm. In this region, 
\begin{align}
\left(	\frac{\mathcal{F}}{\mathcal{U}}-\mathcal{Z}\right)\left(t_x,\tau_1,\widetilde{t}_2\right)=
t_{x}\,S_0\Psi^0_{1,2}(\tau_{2},\Theta^0)+S\Psi_{0,2}\, ,
\end{align}
where $\Psi_{1,2}=( \Theta_e -
\Theta_{2,2}\,\tau_{1} + \Theta_{2,2}\,\tau_{1}^{2} )$ and $\Psi_{2,2}=-\Theta_e\,\widetilde{t}_{2}
-\Theta_{2,2}\,\widetilde{t}_{2}$. 

For region $\mathcal{S}^1_{1,3}$ we remove the region $(0>\widetilde{t}_1>t_x)$ in $\mathcal{S}^1_{1,3}$ to remove finite counter-term contributions as explained in lemma~\ref{lfrfc}. This is denoted by $\cancel\int$ sign. As a result, in the region $\mathcal{S}_1 \setminus (0>\widetilde{t}_1>t_x)$,
\begin{align}
\left(\frac{\mathcal{F}}{\mathcal{U}}-\mathcal{Z}\right)\left(t_x,\widetilde{t}_1,\tau_2\right)
=& S \,\widetilde{t}_1 
+ S \, \widetilde{t}_1 \Theta_e  - S \max\left(0, \widetilde{t}_1 - t_x \tau_2\right) 
\left(\tau_2 + \Theta_e - \left( \tau_2-1\right)\Theta_{1,1} + \left( \tau_2-1\right)\Theta_{2,2}\right) \nonumber \\
& - S \, \max\left(0, \widetilde{t}_1 - t_x\right) 
\left(\Theta_e + \tau_2 \Theta_{1,1} - \left( \tau_2-1\right)\left(1 + \Theta_{2,2}\right)\right) \nonumber \\
& + S \, t_x \left( 
-1
- \left( \tau_2-1\right)\tau_2 
- \left(  \tau_2-1\right)^2 \Theta_{2,2}\,\mathrm{H}\!\left(\widetilde{t}_1\right) \right).
\end{align}
where $\mathrm{H}$ is the Heaviside function $H(x)$ (i.e. $1$ for $x>0$, else $0$), and $\Psi_{1,1}$ and $\Psi_{0,1}$ can be obtained by reducing $\left(\frac{\mathcal{F}}{\mathcal{U}}-\mathcal{Z}\right)$ in piecewise form. We remind the reader that,   $\Psi^0_{1,I}$ is the evaluation of $\Psi_{1,I}$ at $\Theta^{0}$. The reader can check that this result can also be written in the form given in eqn. \ref{dfrfcif}. 

%%%%%%%%%%%%%%%%%
\subsubsection*{${\cal A}^{R}_{1,4}$}
In the case of $n=4$ we can explicitly verify this formula as follows. ${\cal M}_{1, 4}^{\textrm{tf}}$ is a union of three sub-regions 
\begin{align}
\mathcal{S}^{\mathit 1}_{1, 4}\, &=\, t_{x} \leq\, \widetilde{t}_{3} \leq\, 0\nonumber\\
\mathcal{S}^{\mathit 2}_{1, 4}\,  &=\, \big(\widetilde{t}_{3} \geq\, 0\big)\, \medcap\, \big(t_{x} \leq\, \widetilde{t}_{2} \leq\, 0 \big)\nonumber\\
\mathcal{S}^{\mathit 3}_{1, 4}\,  &=\, \big(\widetilde{t}_{3} \geq\, 0\big)\, \medcap \big(\widetilde{t}_{2} \geq\, 0\big)\, \medcap \big(t_{x} \leq\, \widetilde{t}_{1} \leq\, 0 \big)\, .
\end{align}
We can now explicitly write down the proposed formula for renormalized amplitude as %(\textcolor{red}{Write the integrand completely})
\begin{align}\label{countertermforne4le1}
{\cal A}_{1, 4}^{R} = {\cal A}_{1, 4}  - \sum_{I=1}^{3} \int_{-\infty}^{0} \D t_{x} \int_{0}^{\infty} \prod_{J=I+1}^{4} \D \widetilde{t}_{J} \int_{t_{x}}^{0} \D \widetilde{t}_{I} \cancel{\int} \prod_{K=I-1}^{1} \D \widetilde{t}_{K} \,\frac{1}{t_x^2}\, e^{t_{x} S_{0} \Psi_{1,I}(\vec{\Theta_{0}})} e^{S \Psi_{0,I} (\vec{\Theta})}\, .
\end{align}
The fact that this formula satisfies the kinematic scheme can now be explicitly verified. In $\mathcal{S}^{\mathit 1}$, one has $\alpha_{4,0} = - \widetilde{t}_{3} \neq\, 0$. As a result, this region contains all the UV finite cones dual to graphs that satisfy $\alpha_{i 0} \neq\, 0\, \forall\, i$ and $\alpha_{40} , \alpha_{i0} , \alpha_{j0} \neq\, 0$ for $ (i,j)\ \in\, \{1,2,3\}$.  We can now explicitly check that the former corresponds to the region where $t_{x} \leq\, \widetilde{t}_{i} \leq\, 0\, \ \forall\, i$ and the latter corresponds to $(t_{x} \leq\, \widetilde{t}_{3} \leq\, 0\,) \medcap\, (t_{x} \leq\, \widetilde{t}_{i} \leq\, 0) \, \vert\, i\, \in\, \{1, 2\}$. As a result the domain over which we integrate in eqn. (\ref{countertermforne4le1}) excludes precisely these cones and the resulting formula for ${\cal A}_{1, 4}^{R}$ is UV finite as per the kinematic scheme employed in traditional forest formula \cite{Brown:2011pj}.

%%%%%%%%%%%%%%%%%
\subsubsection{An argument for arbitrary $n$}\label{argforarbitn}
In this section we prove how the tropical counter-term renormalizes planar one-loop amplitudes $\forall\, n$. Essence of this argument is based on the following simple observation.

Inside each $\mathcal{S}^{\mathit I}$, thanks to eqn.(\ref{fbyuminuszinsi}), the only integral of the form $\int_{-\infty}^{0}\, \frac{1}{t_{x}}\, e^{A t_{x}}$ diverges at $t_{x} = 0$ with the divergence given by
\begin{align}
\int_{-\infty}^{\delta}\, \frac{\D t_x}{t_{x}}\, e^{A \, t_{x}} = \ln(\delta A)\, +\, O(\delta^{0}) \, .
\end{align}
The cancellation of the logarithmic divergence inside each ${\cal S}_{I}$ than simply follows from the identity,
\begin{align}
\int_{-\infty}^{0}\, \frac{\D x}{x} \bigg(\, e^{- a x}\, -\, e^{- a_{0} x}\, \bigg)\, =\, \ln(\frac{a}{a_{0}})\, .
\end{align}
The proof for why ${\cal A}_{1, n}^{R}$ is finite follows simply from the fact that all the divergence in ${\cal A}_{1, n}$ arise when precisely one of the tropical co-ordinates (upto transformation with unit Jacobian) is bounded between 0 and $\vert t_{x}\vert$. All such singularities are precisely cancelled by the counter-term. 

We will now sketch the proof for our claims (a) and (b). 

All the cones in the tropical space which contribute logarithmically divergent terms to the bare amplitude are domains where $\alpha_{i0} = 0, \, \ \forall i\, \in\, \{1, \dots, \cancel{i_{1}}, \dots, \cancel{i_{2}}, \dots, n\}$. Without loss of generality let us consider one such cone where $\alpha_{i0} = 0, \, \ \forall i\, \in\, \{1, \dots, n-2\}$. This cone resides in $\mathcal{S}^{\mathit 1}$ and is the parametrized by the following inequalities.  
\begin{align}
(t_{x} \leq\, \widetilde{t}_{n-1} \leq\, 0)\, \medcap\, (\widetilde{t}_{i} < t_{x}), \, \quad \forall\, i\, \in\, (1, \dots, n-2).
\end{align}
%\textcolor{blue}{Check this}.  
Clearly the counter-term cancels the log divergence  in the bare amplitude that arises from this cone. This essentially proves point (b). To prove point (a), consider a cone in $\mathcal{S}^{\mathit 1}$ where $\alpha_{i0} = 0\, \forall i \in\, \{1, \dots, n-3\}$. It can now be shown that this region is parametrized by the inequalities
\begin{align}
(t_{x} \leq\, \widetilde{t}_{n-1} \leq\, 0)\, \medcap\, (t_{x} \leq\, \widetilde{t}_{n-2} \leq\, 0) \, \  \textrm{and}\, \ \widetilde{t}_{i} < t_{x}, \quad \forall\, i\, \in\, \{1,\, \dots, n-3\}.
\end{align}
This cone is excluded from the domain over which counter-term is integrated and hence the corresponding triangle loop has no counter-term in our renormalization scheme. All the cones can be analysed in analogous ways and a reader can verify this at their perusal. This completes the proof.

%%%%%%%%%%%%%%%%%%%%%%%%%%%%%%%%
%%%%%%%%%%%%%%%%%%%%%%%%%%%%%%%%
\section{Cones and Feynman diagrams for planar two-loop curve integral formula}\label{sec:reno_beyond_1loop}
In this section we study renormalization of planar amplitudes at two loops in $D=4$ dimensions. If we exclude the contribution of the quadratically divergent tadpoles, then the regularized amplitude, ${\cal A}_{2, n}(\Lambda)$ must then take the following form in angle and scale variables, 
\begin{align}
{\cal A}_{2, n}(\{X_{13}, X_{24}, \dots\, \},\, \Lambda)\, =\, \sum_{m=0}^{2} \bigg[\ln(\frac{S}{\Lambda})\bigg]^{m}\, a_{m}(\{\, \Theta_{13}, \Theta_{24},\, \dots\, \}),
\end{align}
where 
\begin{align}
S = \sum_{(i,j)} \lambda_{ij} X_{ij}\, \vert\, \lambda_{ij}\, \in\, {\bf R}^{+}\, , \nonumber
\end{align}
is fixed once and for all and 
\begin{align}
\Theta_{ij} :=\, \frac{X_{ij}}{S} \, .\nonumber
\end{align}
From the decomposed Feynman rules for graphs, we know that the renormalized amplitude should be of the form
\begin{align}
{\cal A}_{2, n}^{R}(S, S_{0}, \Theta, \Theta_{0})\, =\,  \ln(\frac{S}{S_{0}})^{2}\, a_{2}(S,  \vec{\Theta})\, +\,  \ln(\frac{S}{S_{0}})\, a_{1}(S,\, \vec{\Theta},\, \vec{\Theta^{0}})\, +\, a_{0}(S,\, \vec{\Theta},\, \vec{\Theta^{0}})\, ,
\end{align}
where $a_{0}, a_{1}, a_{2}$ are polynomials in $S$.  In this section, our goal is to prove this expansion by renormalizing the two-loop curve integral. 

Even though the essential idea behind parametric renormalization for $L > 2$ is the same as that in the planar one-loop case, ( which is to isolate and parametrically renormalize the divergences in global Schwinger space by computing residue of the integrand at the zeros of ${\cal U}$),  there are qualitative differences between one-loop and higher loop integrands. These differences are in the structure of the tadpole free region and the fact that the first surface Symanzik polynomial ${\cal U}$ is only piecewise smooth in tadpole free region. As a result, the difference between renormalizing two and higher loop planar amplitudes are simply quantitative\footnote{As the headlight functions as well as tropical Mirzakhani kernel are harder to compute at increasing loop order.}, and we will only discuss two-loop renormalization in this section. \\

\noindent We begin with some obervations which will assist us later in our analysis. 
\begin{enumerate}
\item We begin by isolating the subregion ${\cal M}_{2, n}^{\textrm{qtf}}$  inside ${\cal K}\, \neq\, 0$ region (that is where the Mirzakhani kernel is nowhere vanishing).  ${\cal M}_{2, n}^{\textrm{qtf}}$ is defined such that all the cones dual to graphs which diverge faster than $(\ln\Lambda)^{2}$ are decapitated. However unlike in the case of planar one-loop amplitudes, such a decapitated region does not exhaust all the graphs that contain tadpoles. In particular, for $L=2$ there are tadpole graphs which are logarithmically divergent as well as  UV finite. Although we can restrict the analysis presented below to tadpole free region ${\cal M}_{2,n}^{\textrm{tf}}$, we will be slightly more general in what follows. That is, we consider a region denoted as  ${\cal M}_{2, n}^{\textrm{qtf}}$ which exclude any cones dual to tadpole graphs which are at least quadratically divergent. Thus to summarize, our region of integration is
\begin{align}
{\cal M}_{2, n}^{qtf}\, \subset\, \textrm{Ker}({\cal K})^{c}\, ,
\end{align}
where $\textrm{Ker}({\cal K})^{c}$ is the compliment of the region in which the Tropical Mirzakhani kernel vanishes. 
\item Our strategy of finding ${\cal M}_{2,n}^{\textrm{qtf}}$ slightly differs from simply finding the intersection region $\bigcap_{i}\, \alpha_{ii} = 0$ that we adapted for planar one-loop amplitudes. This is because of the complexity involved in computing the headlight functions. While we do believe that the latter strategy can be generalised for any $\Sigma_{L,n}$, it requires the level of automation which is beyond the scope of the present authors. As a result our approach to find these regions is as follows. 
\noindent (1) We first find ${\cal M}_{2,2}^{\textrm{qft}}$ by simply analyzing all the regions that are orthogonal to the $g$ vectors associated to the tadpole curves. This amounts to simply identifying all the Feynman graphs for planar two-loop amplitudes for $n=2$, excluding all the cones dual to (quadratically divergent) tadpole fat graphs and then gluing all the cones together in the tropical co-ordinates. We explain this strategy in complete detail in appendix~\ref{sec:Cones_for_2-point_2-loop}.
\item Rather remarkably the structure of ${\cal M}_{2,2}^{\textrm{qtf}}$ then aids us in identifying decapitated tadpole regions for $n > 2$. We come back to this point in section~\ref{sec:Renormalization_of_planar_A_n>2_l=2}.
\item One of the remarkable simplifications that arose at one-loop was thanks to the ``tree-loop'' factorization property that is inherent in the choice of the reference graph. This choice results in the basis of $g$-vectors such that 
\begin{align}
{\cal U}_{\textrm{1-loop}}\, =\, -\, t_{x}\, .
\end{align}
As a result, all the UV divergences were localized at $t_{x} = 0$, with precisely one additional $\widetilde{t}_{i}\, \sim\, O(t_{x})$.\\ 
In the two-loop case, to the best of our understanding, there is no reference graph for which ${\cal U}$ is a globally defined smooth quadratic function. However thanks to tree-loop factorization, it suffices to compute ${\cal U}$ for $n=2$ as when the reference fat graph is a two-loop tadpole, ${\cal U}$ is independent of $n$. In appendix \ref{sec:Decomposition_of_U},  we compute ${\cal U}$ and show that ${\cal M}_{2, 2}^{qtf}$ can be decomposed into \emph{five} subregions so that, 
\begin{align}
{\cal U}\bigg|_{{\cal M}_{2,2}^{qtf}}\, =\, \sum_{{\cal R}_{I}}\, \kappa_{I}\, {\cal U}_{I}(t_{1}, \dots, t_{n-1}, t_{w}, t_{x}, t_{z}, t_{y})\, ,
\end{align}
where ${\cal R}_{I}\vert I\, \in\, \{1, \dots, 5\}$ are \emph{five} open domains in ${\bf R}^{n+3}$ with $\kappa_{I}$ being the bump function inside each ${\cal R}_{I}$. 
\item As shown in appendix \ref{sec:Cones_for_2-point_2-loop}, ${\cal M}_{2, 2}^{\textrm{qtf}}$  is a union over 24 cones. Each cone is dual to a $n=2$ two-loop planar fat graph such that for every fat graph there are four cones and as a result contribution of every graph is replicated four times. The tropical Mirzakhani kernel in fact ensures that the final result after summing over all the cones is precisely twice the 2 point amplitude. It is important to note that unlike in the case of $L=1$, the contribution of each cone does not simply equal the contribution of a specific graph as ${\cal K}$ is not constant inside each cone. However given a planar fat graph $\Gamma$ that is dual to the cones $C_{1}, \dots, C_{4}$, the net contribution of these cones to the curve integral is twice the contribution of $\Gamma$. We refer the reader to appendix \ref{ricones} for details. 
\end{enumerate}

%%%%%%%%%%%%%%%%%%%%%%%%%%%%%%%%
\subsection{Dissecting ${\cal M}_{2, 2}^{\textrm{qtf}}$ along discontinuities of ${\cal U}$.}
We will now describe the regions   ${\cal R}_{a}$.  These regions are defined by a set of inequalities which can be conveniently phrased in terms of the following co-ordinates in global Schwinger space,
\begin{align}\label{newtvariablesforR1}
\widetilde{t}_{w} &= t_{w} + t_{x}, &\quad
\widetilde{t}_{z} &= t_{z} + t_{y}, \nonumber\\[6pt]
t_{1}^{x}     &= \widetilde{t}_{w}, 
&\quad t_{1}^{y} &= t_{1} + \widetilde{t}_{z}, \nonumber\\
t_{2}^{x}     &= t_{1} + t_{w} + t_{z} + \widetilde{t}_{z} + t_{x}, 
&\quad t_{2}^{y} &= \widetilde{t}_{z}, \nonumber\\
t_{3}^{x}     &= t_{1} + t_{w} + t_{x}, 
&\quad t_{3}^{y} &= \widetilde{t}_{z}, \nonumber\\
t_{4}^{x}     &= \widetilde{t}_{w}, 
&\quad t_{4}^{y} &= t_{1} + \widetilde{t}_{w} + t_{w} + t_{z} + t_{y}, \nonumber\\[6pt]
s_{1,1}       &= t_{z}, &\quad
s_{1,2}       &= t_{w} + t_{z}, \nonumber\\
s_{1,3}       &= -\,\widetilde{t}_{w}, &\quad
s_{1,4}       &= -\,(\widetilde{t}_{w} + \widetilde{t}_{z}), \nonumber\\
s_{4,1}       &= \widetilde{t}_{w} + t_{z}\, . &\quad
%s_{5,1}       &= t_{w} + \widetilde{t}_{z}.
\end{align}

As shown in appendix   \ref{sec:Decomposition_of_U}, ${\cal R}_{a}$ are then defined by the following set of inequalities in the region $t_{x} < 0,\, t_{y} < 0$ 
\begin{align}\label{r1tr5innc}
{\cal R}_{1}\, &= \bigcup_{\alpha = 1}^{4} {\cal R}_{1,\alpha}\nonumber\\
{\cal R}_{1,\alpha} &:= \{\, (t_{x} < 0) \,\medcap\, (t_{y} < 0) \,\medcap\, (t_{\alpha}^{x} \in (t_{x},0)) \,\medcap\, (t_{\alpha}^{y} \in (t_{y},0)) \,\medcap\, (s_{1,\alpha} < 0)\,\}, 
\quad \forall\, \alpha \in \{1,\dots,4\}\nonumber\\
{\cal R}_{2}\, &=\, 
\{\, (s_{4,1} = t_{w}+t_{x}+t_{z} < 0),\ (t_{y} < 0),\ (t_{y} < \widetilde{t}_{z} < 0),\ (t_{w} > 0),\ (t_{1} \in \mathbb{R})\,\}\nonumber\\
{\cal R}_{3}\, &=\, 
\{\, t_{y} < t_{x} < 0, ( t_{x} < \widetilde{t}_{w} < 0 ), ( 0\, < s_{4,1} < \widetilde{t}_{w} - t_{y} ),\,  (t_{1} \in \mathbb{R})\,\}\nonumber\\
{\cal R}_{4}\, &=\, 
\{\, (0 < s_{4,1} < -t_{y}),\ (t_{y} < 0),\ (t_{w} > 0),\ (\widetilde{t}_{z} > 0),\ (t_{1} \in \mathbb{R})\,\}\nonumber\\
{\cal R}_{5}\, &=\, 
\{\, (t_{y} < t_{x} < 0),\, ( s_{4,1} - t_{y} < \widetilde{t}_{w} < t_{x}) , (  t_{x} < s_{4,1} < 0), \,  t_{1} \in {\bf R} \}\, \, .
\end{align}
%Note that this implies that $t_{y} < t_{x}$ in R_{3} which makes sense.
As discussed previously, this decomposition of ${\cal M}_{2,2}^{\textrm{qtf}}$ in ${\cal R}_{1}, \dots,\, {\cal R}_{5}$ is premised on smoothness of ${\cal U}$ inside each such region.  ${\cal U}$ can be computed directly in ${\cal R}_{a}\, \vert\, a\, \in\, \{1, \dots, 5\}$ and we find that, 
\begin{align}
\mathcal{U}\big|_{\mathcal{R}_1} &= t_{x}\, t_{y}, \\[6pt]
\mathcal{U}\big|_{\mathcal{R}_2} &= t_{x}\, t_{y} - t_{z}^{2}\nonumber\\
&= t_{y}\,\left(\,-\frac{\widetilde{t}_{z}^{\;2}}{t_{y}} + s_{4,1} + \widetilde{t}_{z} - t_{w}\,\right) , \\[6pt]
\mathcal{U}\big|_{\mathcal{R}_3} &= -\,t_{w}^{2} - 2 t_{w} t_{x} + t_{x}(t_{y} - t_{x}) \nonumber\\
&= t_{x} t_{y} - (t_{w} + t_{x})^{2}\nonumber\\
\mathcal{U}\big|_{\mathcal{R}_4} &= (t_{w} + t_{x} + t_{z})(t_{w} + t_{y} + t_{z})  - (t_{w} + t_{x} + t_{y} + t_{z})(2 t_{w} + t_{x} + t_{y} + 2 t_{z}) \nonumber\\
&= - (t_{w} + t_{z} + t_{x})^{2} - t_{y}\,(2 t_{w} + 2 t_{z} + t_{x} + t_{y}) \nonumber\\
&= - t_{y}\,\left( \frac{s_{4,1}^{2}}{t_{y}} + s_{4,1} + \widetilde{t}_{z} + t_{w} \right)
, \\[6pt]
\mathcal{U}\big|_{\mathcal{R}_5} &= -\,t_{w}^{2} - 2 t_{w} t_{z} + t_{x} t_{y} - t_{z}^{2} \nonumber\\
&= t_{x} t_{y} - (t_{w} + t_{z})^{2} \, . \nonumber\\
%&= - t_{x}\,\left( \frac{s_{4,1}^{2}}{t_{x}} - s_{4,1} - t_{w} - \widetilde{t}_{z} + t_{x} \right)
\end{align}
Several comments are in order at this stage.
\begin{enumerate}
\item As shown in appendix \ref{sec:Decomposition_of_U}, ${\cal R}_{1}$ has all the $\ln^{2}$ diverrgent cones. However this fact is also contained in the structure of ${\cal R}_{1}$ in that  there exists a sub-region inside ${\cal R}_{1}$ where in the vicinity of $t_{x}, t_{y} = 0$, the  integrand scales as $\frac{1}{t_{x}t_{y}}\, \exp^{ t_{x} \Psi_{1, x} + t_{y}\Psi_{1,y}}$  where $\Psi_{1,x/y}$ are independent of $t_{x}$ and $t_{y}$. By isolating this contribution, we can directly write down the tropical counter-term. 
\item Interestingly, we note that, 
\begin{align}
{\cal R}_{2}\, \medcup\, {\cal R}_{4}\, =\,  \big(\, s_{4,1} < - t_{y}\, \medcap\, t_{y} < \widetilde{t}_{z} \medcap\, t_{w} > 0  \big) \, .
\end{align}
This motivates us to define a sub-region in ${\cal R}_{24}\, \subset\, {\cal R}_{2} \medcup\, {\cal R}_{4}$ 
\begin{align}
{\cal R}_{24}&:=\nonumber\\
&\hspace*{-0.5in}\big(\, (\, t_{y} < 0\, \medcap\, t_{1}\, \in\, {\bf R}\, \medcap\, t_{w} > 0 \medcap\,  \big[\, ( 0 < s_{4,1} < - t_{y})\, \medcap\, \widetilde{t}_{z} > t_{y}\, )\, \medcup\, ( s_{4,1} <  t_{y}\, \medcap\, t_{y} < \widetilde{t}_{z} < 0\, )\, \big]\, \big) \, ,
\end{align}
where $t_{x} = s_{4,1} - t_{w} - (\widetilde{t}_{z} - t_{y})$.\\
As we will see, ${\cal R}_{24}$ contributes to the tropical counter term which cancels the log divergence arising from $t_{y}\, \rightarrow\, 0^{-}$.
\item Similarly, we see that
\begin{align}
{\cal R}_{3}\, \medcup\, {\cal R}_{5}\, =\, 	 (\, t_{y} < t_{x} < 0,\, ( s_{4,1} - t_{y} < \widetilde{t}_{w} < 0) , (  t_{x} < s_{4,1} <  \widetilde{t}_{w} - t_{y} )\,  t_{1} \in {\bf R})\, ),
\end{align}
with a sub-region ${\cal R}_{35} \subset {\cal R}_{3}\, \medcup\, {\cal R}_{5}$ defined as,
\begin{align}
{\cal R}_{35}\, &=\nonumber\\
&\hspace*{-0.7in}\{\, t_{y} < t_{x} < 0,\,  t_{1}\, \in\, {\bf R},\,  \big[\, \big(\, (t_{x} < \widetilde{t}_{w} < 0)\, \medcap\, s_{4,1} > 0\, \big)\, \medcup\, \big(\, ( s_{4,1} - t_{y} < \widetilde{t}_{w} < t_{x} )\, \medcap\, t_{x} < \widetilde{s}_{4,1} < 0\, \big)\, \big]\, \}  \, .
\end{align}
${\cal R}_{35}$ contributes a tropical counter-term which will cancel log divergence arising from $t_{x}\, \rightarrow\, 0^{-}$ as shown in the next section.
\item \emph{It can be checked that in the neighbourhood of UV singularity, , the first Surface Symanzik  ${\cal U}\, \approx\, t_{x} t_{y}\, \in\, \forall\, {\cal R}_{a}$. We will use this universal property of ${\cal U}$ in deriving the tropical counter terms at two-loop.}  As an example, let us analyze,  ${\cal U}\vert_{{\cal R}_{2}}$ where, 
\begin{align}
{\cal U} = t_{y}\, \bigg(\, t_{x} - \frac{t_{z}}{t_{y}}\, t_{z}\, \bigg) \, .
\end{align}
The term inside the bracket is non vanishing  as $t_{x} < - t_{w} - t_{z}$ and $t_{w} > - t_{y}$ in ${\cal R}_{2}$. Hence the only singularity is at $t_{y} = 0$ which leads to \emph{log divergence} as $t_{y}\, \rightarrow\, 0$, with  $t_{x}$ being bounded from above. In this limit, 
\begin{align}
{\cal U}\, \approx\, t_{x}\, t_{y} .
\end{align}
\item Similarly in the region ${\cal R}_{3}$ after some meditation over the inequalities we find that they imply,
\begin{align}
\widetilde{t}_{w} < - t_{z} < t_{y}.\nonumber\\
\end{align}
Hence ${\cal U}\vert_{{\cal R}_{3}}\, = t_{x}\, (\frac{\widetilde{t}_{w}^{2}}{t_{x}} - t_{y})$ only admits a zero from the parenthesis when $t_{w}, t_{z}, t_{y}\, \rightarrow\, 0$. Such a zero can not lead to any divergence as in that case, 
\begin{align}
\frac{1}{t_{y}^{2}}\, \mathrm{d} t_{w} \mathrm{d} t_{z} \mathrm{d} t_{y}\, \rightarrow\, t_{y}^{2} \, \mathrm{d}\bigg( \frac{t_{z}}{t_{y}} \bigg)\, \mathrm{d} \bigg( \frac{t_{w}}{t_{y}} \bigg) \, .
\end{align}
One can easily verify that this property continues to hold in ${\cal R}_{3}, {\cal R}_{4}$ and ${\cal R}_{5}$ regions as well.
\item The main upshot of this discussion is then the following : In the case of  $n=2$ and $L=1$,  ${\cal M}_{1, 2}^{\textrm{tf}}$ was simply one ``smooth'' region $S_{1}$. At $L = 2$, $S_{1}$ is generalized to a union over \emph{five} regions and we can now proceed with renormalizing the curve integrals in each of these regions  by adding appropriate tropical counter-terms. 
\end{enumerate}
We would like to emphasize that although this strategy looks needlessly complicated for $n=2$, the fact that $\displaystyle \bigcup_{a}\, \overline{{\cal R}}_{a}$ will continue to be a subregion inside ${\cal M}_{2, n}^{\textrm{qtf}}$ implies that the $n=2$ analysis will directly renormalize a large number of Feynman graphs at $n$ points.\footnote{As in the one-loop case, the complete ${\cal M}_{n}^{\textrm{qtf}}$ contains  $\displaystyle \bigcup_{a}\, \overline{{\cal R}}_{a}$ via a strict inclusions. Hence renormalization of the new regions has to be performed to obtain complete renormalized amplitude at two-loop. We will come back to this issue in sec. \ref{sec:2-loop-2-point-renormalization}.}

\emph{This will be the principal strategy we will adopt in the following sections. We believe that this strategy will continue to apply at higher loops as well. However, the complexity involved in finding the region ${\cal M}_{L, n}^{\textrm{qtf}}$ and decomposing it into subregions over each of which ${\cal U}$ is smooth will become harder and perhaps requires more sophisticated automation. In this paper, we are simply scratching a surface of the structure of renormalized curve integrals for $L > 1$.}

%%%%%%%%%%%%%%%%%%%%%%%%%%%%%%%%
%%%%%%%%%%%%%%%%%%%%%%%%%%%%%%%%
\section{Renormalization of ${\cal A}_{2, 2}$} \label{sec:2-loop-2-point-renormalization}
We now turn to the renormalization of planar two-loop amplitude for $n=2$. Thanks to the ``tree-loop factorization'' property of planar amplitudes, the generalization of the results at $L=2$ for arbitrary $n$ will only depend on the enhancement of ${\cal R}_{a}$ regions as we increase $n$. We come back to this point in the next section \ref{sec:Renormalization_of_planar_A_n>2_l=2}.

As the first surface Symanzik, ${\cal U}$ is discontinuous across the boundaries between different ${\cal R}$ regions, we are compelled to compute the tropical counter-terms separately in each region. Taking a cue from the tropical counter-term at one-loop and the forest formula, we can directly write down the counter-terms at two-loops.  Before going into the detailed form of the counter-terms, we first summarize the main result of this section. \\ 

\noindent Let ${\cal R}_{I(a), a}\big|_{a=1}^{5}$ be the subregions inside the global Schwinger space divided using following two properties.
\begin{enumerate}
\item In each ${\cal R}_{a}$ for a fixed $a,\,$   ${\cal U}$ is a smooth polynomial. But ${\cal U}$ is necessarily non-differentiable on $\displaystyle \bigcup_{a}\, ( \overline{{\cal R}}_{a}  - {\cal R}_{Ia} )$, where $\overline{{\cal R}}_{a}$ denotes the closure of ${\cal R}_{a}$.
\item ${\cal R}_{a}$ are specified by different inequalities on the tropical variables.\footnote{We note that although the different subregions ${\cal R}_{I(a), a}$ inside ${\cal R}_{a}$ simply correspond to cones in the case of $n=2$, these regions will continue to be tadpole free thanks to tree-loop factorisation and each of them will be a union of large number of cones. As a result our formula does not rely on existence of forests based on 1PI divergent subgraphs.} 
\item The defining inequalities for each such region ${\cal R}_{a}$ naturally lead us to the derivation of tropical counter terms. This is because in each such region, at least one parameter always scales with either $t_{x}$ or $t_{y}$ and hence one can isolate the divergent chambers such that the curve integrals over their compliment (in ${\cal R}_{a}$) are convergent. 
\item Within each $\overline{{\cal R}}_{a}$ we will label these ``divergent chambers'' as ${\cal R}_{a}^{\prime}$. Just as in the $L=1$ case, our definition of tropical counter-term as integral over $\overline{{\cal R}}_{a}^{\prime}$ as opposed to $\overline{{\cal R}}_{a}$ encodes the scheme dependence of the renormalized amplitude.
\end{enumerate}
Hence we claim that the renormalized 2 point planar amplitude at two loops has the following form, 
\begin{align}\label{a22rclaim}
{\cal A}_{2, 2}^{R}&=\, {\cal A}_{2, 2}  + \sum_{a=1}^{5}\, {\cal C}_{a},
\end{align}
where every tropical counter-terms ${\cal C}_{a}$ can be read off by simply analyzing the regions ${\cal R}^{\prime}_{a}$ in co-ordinates where at most one co-ordinate lies between $[0, t_{x}]$ and/or $[0, t_{y}]$. 

As in the one-loop case, the integrand associated to each counter term can be written simply by inspecting the zero's of ${\cal U}$ in ${\cal R}_{a}\, \forall\, a$ and using the Taylor expansion of $\frac{{\cal F}}{{\cal U}} - {\cal Z}$ around ${\cal U} = 0$.  We now explicitly write  the five tropical counter-terms. 

%%%%%%%%%%%%%%%%% 
\subsubsection*{Tropical counter term in ${\cal R}_{1}$}
The first counter-term,  ${\cal C}_{1}$ is an integral over the following regions inside ${\cal R}_{1}$.
\begin{align}
{\cal R}_{1\, \alpha}^{\prime}&=\, \{  t_{x}^{\alpha}\, \in\, (0,  t_{x})\,  \medcap\, t_{y}^{\alpha} \in\, (0, t_{y})\, \medcap\, s_{\alpha} <  \textrm{min}(t_{x},\, t_{y})\, \}\, \nonumber\\
{\cal R}_{1\, \alpha}^{\prime x}&=\, \{\, t_{x}^{\alpha}\, \in\, (0, t_{x})\,  \medcap\, t_{y}^{\alpha} \in\, (0, t_{y})\, \medcap\, \big(\, ( t_{x} < s_{\alpha}\, <\, 0\, )\, \medcap\, (s_{\alpha} < t_{y} )\, \big)\}\, \nonumber\\
{\cal R}_{1\, \alpha}^{\prime y}&=\, \{ ( t_{x}^{\alpha}\, \in\, (0,  t_{x})\,  \medcap\, t_{y}^{\alpha} \in\, (0, t_{y})\, \medcap\, \big(\, ( s_{\alpha}\, <\, t_{x}\, )\, \medcap\, ( t_{y} < s_{\alpha} < 0)\, \big)\},
\end{align}
for all $\alpha$ and where $t_{x}, t_{y} < 0$. 
\begin{align}\label{c1forn2l2}
{\cal C}_{1}\, &=\, \sum_{\alpha = 1}^{4}\, \left[\, \int_{\overline{{\cal R}}^{\prime}_{1,\alpha}}\, \frac{1}{t_{x} t_{y}}\, e^{t_{x} \Psi_{\alpha, x}(\tau_{1,\alpha}, \tau_{2,\alpha}, s_{1 \alpha}; S_{0}, \vec{\Theta}_{0})}\, e^{t_{y}\Psi_{\alpha, y}(\tau_{1 \alpha}, \tau_{2 \alpha}, s_{1 \alpha},\, S_{0}, \vec{\Theta}_{0})}\, e^{\Psi_{0}(\tau_{1 \alpha}, \tau_{2 \alpha}, s_{1 \alpha},\, S, \vec{\Theta})}\right.\nonumber\\
&\hspace*{0.5in}\left. -\, \int_{\overline{{\cal R}}^{\prime x}_{1,\alpha}}\, \frac{1}{t_{x}}\, e^{t_{x} \Psi^{x}_{\alpha}(\tau_{1,\alpha}, \tau_{2,\alpha}, \frac{s_{1 \alpha}}{t_{y}}; t_{y}, S_{0}, \vec{\Theta}_{0})}\,  e^{\Psi_{0}(\tau_{1 \alpha}, \tau_{2 \alpha}, \frac{s_{1 \alpha}}{t_{y}}\, S, \vec{\Theta})}\right.\nonumber\\
&\hspace*{0.5in} \left. -\, \int_{\overline{{\cal R}}^{\prime y}_{1,\alpha}}\, \frac{1}{t_{y}}\, e^{t_{y} \Psi^{y}_{\alpha}(\tau_{1,\alpha}, \tau_{2,\alpha}, \frac{s_{1 \alpha}}{t_{x}}; t_{x}, S_{0}, \vec{\Theta}_{0})}\,  e^{\Psi_{0}(\tau_{1 \alpha}, \tau_{2 \alpha}, \frac{s_{1 \alpha}}{t_{x}}\, S, \vec{\Theta})}\, \right]\, ,
\end{align}
where $\tau_{1\, \alpha}\, =\, \frac{t_{x}^{\alpha}}{t_{x}},\, \tau_{2, \alpha}\, =\, \frac{t_{y}^{\alpha}}{t_{y}}$, and 
$\{\Psi_{\alpha, x}, \Psi_{\alpha, y},\, \Psi_{\alpha}^{x}, \Psi_{\alpha}^{y}\}$ are obtained by Taylor expanding $\frac{{\cal F}}{{\cal U}} - {\cal Z}$ in the corresponding regions and expanding the co-efficients.  We note that ${\cal C}_{1}$ has a term that diverges as $\ln^{2}$ and two terms that diverge logarithmically. These terms occur with opposite signs to ensure that sub-divergences in ${\cal R}_{1}$  cancel consistently.  We thus see that the tropical counter-terms structurally resemble the Forest formula in the sense that at $L > 1$ loops, the formula involves subtraction terms with alternating signs.\footnote{Co-efficients of Taylor expansion in $t_{x}, t_{y}$  can be computed using the mathematica file attached with this draft. Using the codes given there, we have verified that in the re-scaled variables $\tau$,  $\frac{{\cal F}}{{\cal U}} - {\cal Z}$ admit a taylor expansion in $t_{x}, t_{y}$.}

%%%%%%%%%%%%%%%%%
\subsubsection*{Tropical counter-term in ${\cal R}_{24}$}
As in the case of ${\cal R}_{1}$, we claim that the tropical counter-term is an integral over the following region inside ${\cal R}_{2}$.
\begin{align}
{\cal R}_{24}^{\prime}&:=\nonumber\\
&\hspace*{-0.3in}\big(\, (\, t_{y} < 0\, \medcap\, t_{1}\, \in\, {\bf R} - [ t_{y}, -t_{y} ]\, \medcap\, t_{w} > 0\, \medcap\,  ( 0 < s_{4,1} < - t_{y})\, \medcap\, \widetilde{t}_{z} > t_{y}\, )\, \medcup\, ( s_{4,1} <  t_{y}\, \medcap\, t_{y} < \widetilde{t}_{z} < 0\, )\, \big)
\end{align}
\begin{align}
{\cal C}_{24}\, =\,  \int_{\overline{{\cal R}}_{24}^{\prime}}\,  \frac{1}{t_{y}}\, \frac{1}{t_{x}^{2}}\, {\cal K}(s_{4,1},  t_{y}, \tau_{z}, t_{w}, t_{1})\, e^{\frac{1}{t_{x}}\, t_{y}\, \Psi_{1, {\cal R}_{24}}(S_{0},\, \vec{\Theta}_{0})}\, e^{\frac{1}{t_{x}}\, \Psi_{0, {\cal R}_{24}}(S, \vec{\Theta})}\, ,
\label{C24_2l2p}
\end{align}
with $t_{x} = s_{4,1} - t_{w} - (\widetilde{t}_{z} - t_{y})$.

%%%%%%%%%%%%%%%%%
\subsubsection*{Tropical Counter-term in ${\cal R}_{35}$.}
Let, 
\begin{align}
{\cal R}_{35}^{\prime}(t_{x}, t_{y}, \widetilde{t}_{w}, s_{4,1}, t_{1})&:=\nonumber\\
&\hspace*{-1.3in}\{\, t_{y} < t_{x} < 0,\,  t_{1}\, \in\, {\bf R} - [\, t_{x}, - t_{x}\, ],\,  \big[\, \big(\, (t_{x} < \widetilde{t}_{w} < 0)\, \medcap\, s_{4,1} > 0\, \big)\, \medcup\, \big(\, ( s_{4,1} - t_{y} < \widetilde{t}_{w} < t_{x} )\, \medcap\, t_{x} < \widetilde{s}_{4,1} < 0\, \big)\, \big]\, \}  
\end{align}
\begin{align}
{\cal C}_{35}\, =\,  \int_{\overline{{\cal R}}_{35}^{\prime}}\,  \frac{1}{t_{x}}\, \frac{1}{t_{y}^{2}}\, {\cal K}(t_{x}, t_{y}, \tau_{w},  s_{4,1}, t_{1})\, e^{\frac{1}{t_{y}}\, t_{x}\, \Psi_{1, {\cal R}_{3}}(S_{0}, \vec{\Theta}_{0})}\, e^{\frac{1}{t_{y}}\, \Psi_{0, {\cal R}_{3}}(S, \vec{\Theta})}\, .
\label{C35_2l2p}
\end{align}
 $\Psi$ polynomials can be computed using the ancillary file {\bf \texttt{`Two-loop two-point F, U, K, and Z for counter terms.nb'}}. 

%%%%%%%%%%%%%%%%%%%%%%%%%%%%%%%%
\subsection{Renormalization of planar  ${\cal A}_{2, n > 2}$}\label{sec:Renormalization_of_planar_A_n>2_l=2}
As noticed in the previous section, the renormalization of ${\cal A}_{2,2}$ appears needlessly complicated for an amplitude which is simply a sum over 6 graphs. However the complexity came from the analysis of decapitated region ${\cal M}^{\textrm{qtf}}_{2,2}$ and the fact that inequivalent pseudo-triangulations of $\Sigma_{2,2}$ mandates a Four fold replication of each Fat graph with the redundancy eventuall removed thanks to the presence of ${\cal K}$ inside the curve integrand. We will now show that the fruits of our labour in deriving ${\cal M}_{2,2}^{\textrm{qtf}}$ immediately lead us to the derivation of ${\cal M}_{2,n}^{\textrm{qtf}}$.  

We already know that for any $n$ the tadpole free region can always be decomposed into union of 5 closed regions $\overline{{\cal R}}_{a}\vert_{a=1}^{5}$ such that inside each ${\cal R}_{a}$ ${\cal U}$ is a smooth polynomial of degree $L$ in $(t_{1}, \dots, t_{n-1}, t_{w}, t_{z}, t_{x}, t_{y})$. 
\begin{align}
{\cal M}_{n}^{\textrm{qtf}}\, =\, \bigcup_{a=1}^{5}\, \overline{{\cal R}}_{a}.
\end{align}
The regions ${\cal R}_{2}, \dots, {\cal R}_{5}$ whose contribution to the curve integral for $n=2$ is logarithmically divergent do not depend on $n$. This immediately follows from the fact that in the case of $n=2$ they were completely determined by inequalities which only depends on $(t_{w}, t_{z}, t_{x}, t_{y})$. Hence
\begin{align}
{\cal R}_{a}(n) = {\cal R}_{a}\, \forall\, a\, \in\, \{2, \dots, 5\} \, .
\end{align}
This immediately implies that the tropical counter-terms over ${\cal R}_{2}, \dots,\, {\cal R}_{5}$ are direct generalisation of the two counter-terms written in $n=2$ case in eqns (\ref{C24_2l2p}, \ref{C35_2l2p}), with the only difference being that the counter-terms are only over sub-regions inside each of the ${\cal R}_{a}$ in which 
\begin{align}
t_{I}\, \notin\,  [ -t_{x}, t_{x}]\, \forall\, 1 \leq\, I\, \leq\, n-1\, \textrm{or}\, t_{I}\, \notin\, [ - t_{y}, t_{y} ]\, \forall\, 1\, \leq\, I\, \leq\, n-1 \, .
\end{align}
It then only remains to find the region ${\cal R}_{1}(n)$ that has no cones dual to fat graphs containing tadpoles with 
\begin{align}
{\cal U}\vert_{{\cal R}_{1}(n)}\, =\, t_{x}\, t_{y}.
\end{align}
Based on the analysis for $n=2$ we expect ${\cal R}_{1,\, \alpha}(n)\, \subset\, {\cal R}_{1,\, \alpha}(n)$ so that, 
\begin{align}
{\cal R}_{1}(n)\, =\, \sum_{\alpha = 1}^{4}\, {\cal R}_{1,\, \alpha}(n)\nonumber\\
{\cal R}_{1,\, \alpha}(n)\, :=\, \medcup_{I=1}^{n-1}\, {\cal R}_{1,\, \alpha,\, I}(n)\, ,
\end{align}
where
\begin{align}
{\cal R}_{1,\, \alpha,\, 1}(n)\, =\, {\cal R}_{1,\, \alpha}\, ,
\end{align}
which is defined in eqn.(\ref{r1tr5innc}). 

We now conjecture that for any $n$, ${\cal R}_{1}^{\alpha}(n)$ is parametrized in terms of following set of inequalities, 
\begin{align}\label{qtf_region_2lnp}
\mathcal{R}_{1\, I, \alpha=1}(n)\, &= (t_y < 0) \ \medcap\ (t_z < 0) \ \medcap\ (t_w > 0) \ \medcap\ (t_x < -t_w)\nonumber\\
&\hspace*{0.5in} \bigcap\, \Big(\, (\, \medcap_{J=I+1}^{n-1}\, (t_{K}\, +\, \dots\, +\, t_{n-1} ) > - t_{y} - t_{z}\, )\,  \medcap\, ( - t_{z} < \sum_{M=I}^{n-1} t_{M} < - t_{y} - t_{z}\, ) \Big)\nonumber\\
{\cal R}_{1,\, I,\, \alpha=2}(n)\, &=\,  (t_y < 0) \ \medcap\ (0 < t_z < -t_y) \ \medcap\ (t_w < -t_z) \ \medcap\ (t_x < 0)\nonumber\\
&\hspace*{1.0in} \bigcap\, \Big(\, (\, \medcap_{J=I+1}^{n-1}\, (t_{K}\, +\, \dots\, +\, t_{n-1} ) > - t_{w} - t_{x} - t_{y} - 2 t_{z}\, )\nonumber\\
&\hspace*{1.2in}\medcap\, ( -t_{w} - t_{y} - 2 t_{z} < \sum_{M=I}^{n-1} t_{M} < - t_{y} - t_{z}\, <\, -t_w - t_x - t_y - 2t_z )\, \Big)\nonumber\\
{\cal R}_{1\, I,\, \alpha=3}(n)&=\, (t_{y}\, <\, 0)\, \medcap\, (0 < t_{z} < - t_{y})\, \medcap\, (t_{w} > 0)\, \medcap\,( - t_{w} <\, t_{x}\, <\, 0)\nonumber\\
& \bigcap\, \Big(\, (\, \medcap_{J=I+1}^{n-1}\, (t_{K}\, +\, \dots\, +\, t_{n-1} ) > - t_{w} - t_{x} )\, \medcap\, (\, - t_{w}\, <\, \sum_{M=I}^{n-1} t_{M} < -t_w - t_x\, )\, \Big)\nonumber\\
{\cal R}_{1\, I,\, \alpha=4}(n)&=\, ( t_y < 0 ) \ \medcap\ (t_z > -t_y) \ \medcap\ (t_w > 0) \ \medcap\ 
(-t_w - t_y - t_z < t_x < -t_w)\nonumber\\
&\bigcap\, \Big(\, (\, \medcap_{J=I+1}^{n-1}\, (t_{K}\, +\, \dots\, +\, t_{n-1} ) > -2t_w - t_x - t_y - t_z )\, \nonumber\\
&\quad\quad\medcap\, (\, -2t_w - t_x - t_z\, <\, \sum_{M=I}^{n-1} t_{M} < -2t_w - t_x - t_y - t_z\, )\, \Big) \, .
\end{align}
Even though we do not have a complete proof of this conjecture, we have verified it for $ n = 3$ and based on our intuition from tree-loop factorization that is manifest in reference tadpole fat graph, we expect the conjecture to be satisfied for any $n$.\footnote{Having said that, a clear analytic proof of this statement is highly desirable.} 

Once we have found ${\cal R}_{1,\, \alpha}(n)$, we can isolate the divergent chambers ${\cal R}_{1,\alpha}(n)^{\prime}$ simply by inspection.  Finally, the structural form of the counter term is essentially the same as that in eqn. (\ref{c1forn2l2}). The real complexity comes in expansion of $\frac{{\cal F}}{{\cal U}} - {\cal Z}$ around $(t_{x}, t_{y}) = (0,0)$ for which better codes will need to be developed for pratical computations.

%%%%%%%%%%%%%%%%%%%%%%%%%%%%%%%%
%%%%%%%%%%%%%%%%%%%%%%%%%%%%%%%%
\section{Conclusion and Outlook} 

Curve integrals unify perturbative QFT and string amplitudes into the same paradigm.  String amplitudes are integrals of certain top forms over moduli space of hyperbolic geometries on a punctured Riemann surface, and the QFT amplitudes for colored theories can be understood as the limit where the so called surface variables $u_{C}$ (that generalize Plucker co-ordinates to moduli space with punctures) tropicalize.  The tropical curve integrals associated to massive $\textrm{Tr}(\Phi^{3})$ theory are UV divergent. Our goal in this paper has been to show that one can develop subtraction scheme for the bare curve integrals which do not rely on decomposition of the curve integrals into Feynman cones.  We believe that this reinforces our belief that positive geometries are indeed fundamental structures that can generate physical amplitudes in QFT. 

For massive $\textrm{Tr}(\Phi^{3})$  amplitudes in $D = 4$ dimensions where  tadpole free amplitudes only have logarithmic divergences, we proved that there are two complementary approaches to renormalizing the curve integral. In the first approach, we used the fact that 1PI divergent subgraphs as well as forests generated by them are common to a family of larger graphs (all of which are related by mutation ``away from the sites of the forest'').  The surface forest formula consisted of subtracting a single counter-term for a given forest regardless of its embedding. Although this formula can be generalized to  more complicated theories such as $\phi^{3}$ theory in $D = 6$ dimensions, it violates the fundamental premise of positive geometry program by relying on partial dissections of $\Sigma_{L, n, h}$ which are dual to forests. 

Starting with the tadpole fat graph as the reference graph which fixes a basis for the $g$-vectors, we proved that the one-loop planar amplitude ${\cal A}_{1, n}(p_{1}, \dots, p_{n})$ could be renormalized without appealing to surface forests. This was done in two steps. We first isolated the region in global Schwinger space which consisted only of tadpole graphs and by decapitating such a region we obtained a curve integral with purely logarithmic divergence. The decapitated region in the global Schwinger space has a natural decomposition into $n-1$  regions such that the contribution of each of these regions to the curve integral has a log divergence that arises when precisely two parameters, $t_{x}$ and $\displaystyle \sum_{j=i}^{n-1} t_{j} + t_{x} \,  \vert \ 1 \leq\, i\, \leq\, n-1$ approach \emph{zero}. Thanks to kinematic renormalization of Brown and Kreimer, we could then immediately define a tropical counter-term for each of these regions.

We then showed how one could derive a tropical counter term at two loops at leading order in 't Hooft expansion.  Although proving the existence of, and deriving tropical counter term for a generic $n$-point two-loop amplitude is considerably more complicated and nuanced than in the case of one-loop, we could prove the existence of tropical counter terms as follows. Starting with the two-loop tadpole fat graph as a reference, we began by isolating the region which contained no quadratically divergent tadpoles. We then divided this region ${\cal M}_{n}^{\textrm{qtf}}$  into sub-regions, such that in each of  these sub-regions  ${\cal U}$ is smooth. Once we found all of such sub-regions, then by meditating over the curve integral into each of these sub-regions we could immediately write down the corresponding tropical counter term. In the end, a signed sum over all the tropical counter terms subtracts all the UV divergences from the bare curve integral and produces renormalized amplitude in kinematic renormalization scheme. We showed how our strategy worked explicitly for the case of ${\cal A}_{L=2, n =2}$. 

We expect that our strategy can be generalized at higher loop when $h = 0$. This is because if we choose the reference graph as tadpole $L$-loop graph, then due to the so called telescopic property, the first surface Symanzik ${\cal U}$ only depends on the global Schwinger parameters of $\Gamma_{\textrm{ref}}(L,1)$  and any graph obtained by adjoining a $n-1$ point tree graph to $\Gamma_{\textrm{ref}}(L, 1)$ keeps ${\cal U}$ invariant.  Then, by decapitating quadratically divergent tadpoles, and decomposing the resulting regions into sub-region in each of which ${\cal U}$ is smooth, we expect to write the tropical counter term for each of the sub-region in which ${\cal U}$ is smooth by simply isolating the divergence. 
 
 Our analysis also indicates that this framework extends to  non-planar amplitudes. Although a detailed analysis will be published elsewhere, here we very briefly sketch renormalization of one-loop amplitudes for $ n = 2, 3$ for $h = 1$. 
\begin{itemize}
\item \underline{Renormalization of ${\cal A}_{L=1, n=2, h = 1}$} :

We start with the tadpole graph as a reference (see the one loop non-planar example in \cite{Arkani-Hamed:2023mvg} for the depiction of such a fat graph). 
It can be immediately checked that the tadpole free regions defined by, $\alpha_{11} = \alpha_{22} = 0$ imply that,  $t_{2}\, \geq\, 0$. It was shown in \cite{Arkani-Hamed:2023mvg} that the non-triviality of Mirzakhani kernel implies that $ t_{1}\, \geq\, 0$. Hence the two point amplitude is 
\begin{align}
&\int_{0}^{\infty} \frac{t_{1}}{(t_{1} + t_{2})^{2}}\, e^{\frac{{\cal F}}{{\cal U}} - {\cal Z}}\nonumber\\
=\, &\int_{P^{1}_{+}}\, \tau_{1} \D\tau_{2} - \tau_{2} \D\tau_{1} \int_{0}^{\infty}\, \D\lambda\, \frac{1}{\lambda}\, e^{- \lambda\, \big[\frac{{\cal \widetilde{F}}}{{\cal \widetilde{U}}}\big]}\, ,
\end{align}
where $t_{i} = \lambda \tau_{i}$ with $\sum_{i} \tau_{i}^{2} = 1$. 
Thus,  the renormalized non-planar one-loop amplitude is
\begin{align}
{\cal A}_{2, R}^{\textrm{one-loop, np}}\, =\, {\cal A}_{2}^{ \textrm{one-loop, np}}\, -\, \int_{P^{1}_{+}}\, \tau_{1} \D\tau_{2} - \tau_{2} \D\tau_{1} \int_{0}^{\infty}\, \D\lambda\, \frac{1}{\lambda}\, e^{- \lambda\, \big[\frac{{\cal \widetilde{F}}(S_{0})}{{\cal \widetilde{U}}}\big]}\, .
\end{align}

\item \underline{Renormalization of ${\cal A}_{L=1,n=3, h=1}$ }:

 We consider the reference graph where $p_{2}, p_{3}$ legs are incident on the loop from ``inside'' and the state with momentum $p_{1}$ is incident on the loop ``from outside''. (See figure in \cite{Arkani-Hamed:2023mvg})
Mirzakhani kernel is non-trivial if and only if $t_{x}\, \geq\, 0$. Inside this region,  we now want to identify ${\cal M}_{n=3}^{\textrm{tf}}$.\\
It can then be immediately verified that the corresponding headlight functions are given by the following formulae. 
\begin{align}
\alpha_{11} &= t_{x} + t_{y} - \textrm{max}(0, t_{x} + t_{y})\nonumber\\
\alpha_{22} &=\, t_{x} + t_{y} - 2\textrm{max}(t_{x} + t_{y}, t_{x} + t_{y} + t_{1})\, + \textrm{max}(t_{x} + t_{y}, t_{x} + t_{y} + t_{1}, t_{x} + t_{y} + 2 t_{1})\nonumber\\
\alpha_{33} &=\, - t_{x} - \textrm{max}(0, t_{1})\, - \textrm{max}(t_{y}, t_{1} + t_{y})\, +\, \textrm{max}(0, t_{x} + t_{y}, t_{x} + t_{y} + t_{1}, t_{x} + t_{y} + 2 t_{1})\, .
\end{align}
We then see that for $t_{x}, t_{y}\, \geq\, 0$, all the three headlight functions vanish. As a result the bare curve integral formula in the tadpole free region takes the form,  
\begin{align}
{\cal A}_{3}^{\textrm{one-loop, np}}\, =\, \int_{-\infty}^{\infty} \D t_{1}\, \int_{0}^{\infty} \D t_{x}\, \D t_{y}\, \frac{t_{x}}{(t_{x} + t_{y})^{3}}\, e^{\frac{{\cal F}}{{\cal U}} - {\cal Z}}\, .
\end{align}
The UV divergence arises simply from the ray $(t_{x}, t_{y})\, =\, \lambda(\tau_{x}, \tau_{y})$ as $\lambda\, \rightarrow\, 0^{+}$ and hence the renormalized amplitude is given by, 
\begin{align}
{\cal A}_{3, R}^{\textrm{one-loop, np}}\, =\, {\cal A}_{3}^{\textrm{one-loop, np}}\, -\, \int_{-\infty}^{\infty}\, \D t_{1}\, \int_{P^{2}_{+}}\, \D\mu(\tau_{x}, \tau_{y})\, \int_{0}^{\infty}\, \frac{1}{\lambda}\, \frac{\tau_{x}}{(\tau_{x} + \tau_{y})^{3}}\, e^{\Psi_{0}}\, e^{-\lambda\, \Psi_{1}}\, .
\end{align}
where $\Psi_{0}, \Psi_{1}$ defined precisely as in the case of ${\cal A}_{L=1, n, h=0}$. 
\end{itemize}
It wil be interesting to see if the idea of deriving tropical counter-terms by examining the singularities of the curve integrals in a specific $g$-vector basis can be generalized to $\textrm{Tr}(\Phi^{3})$ theory in $D = 6$ dimensions. This will be an interesting case study overlapping as well as quadratic divergences are present even in the tadpole free regions. Thus finding appropriate tropical counter terms for this theory, without taking recourse to the forests will be a non-trivial test for the ideas presented in this paper.   
%%%%%%%%%%%%%%%%%%%%%%%%%%%%%%%%
%%%%%%%%%%%%%%%%%%%%%%%%%%%%%%%%
\section*{Acknowledgements}
We would like to thank Amit Suthar for collaboration in the initial stages of this project. We thank Michael Douglas, Rajesh Gopakumar, Thomas Grimm, Dirk Kreimer, Shiraz Minwalla, Prashanth Raman, Kirubakaran S., Giulio Salvatori, Tanmoy Sengupta, Ashoke Sen, O. Shetye, Aninda Sinha and Amit Suthar for discussions and clarifications at various stages of this project. We are especially indebted to Nima-Arkani Hamed for several insightful discussions at crucial stages of this project. The work of PB is supported by the Ramanujan fellowship grant RJF/2023/000007 from the Anusandhan National Research Foundation (ANRF), India and he would also like to acknowledge the hospitality at Chennai Mathematical Institute and ICTS-TIFR, Bangalore  where part of the work was done. H would like to thank A. Virmani for constant encouragement and support.  AL thanks ICTS-TIFR, Bangalore for hospitality during the course of this work. 

%%%%%%%%%%%%%%%%%%%%%%%%%%%%%%%%
%%%%%%%%%%%%%%%%%%%%%%%%%%%%%%%%
%%%%%%%%%%%%%%%%%%%%%%%%%%%%%%%%
\appendix

\newpage

%%%%%%%%%%%%%%%%%%%%%%%%%%%%%%%%
%%%%%%%%%%%%%%%%%%%%%%%%%%%%%%%%
\section{Gentle blossomed dissection quivers for a disc with $1 \leq\, L\, \leq\, 2$ punctures : pseudo-triangulation approach.} \label{sec:pseudo triangulation}
In this appendix we review the computation of $g$-vectors associated with the dissection curves of a disc $\Sigma_{L,n}$ with $n$  marked points on the boundary and a set  $\{p_{1}, \dots, p_{L}\}$ of $L$ punctures. Any reference fat graph $\Gamma_{L,n}$ which can be embedded in $\Sigma_{L,n}$,\footnote{Although $g$-vectors can be computed for any fat graph, in this appendix we will restrict our attention to planar graphs as in the rest of the paper.}  generates a fan in ${\bf R}^{d}$ where the rays of the fan are determined by the $g$-vectors.\footnote{A $d$ dimensional fan is a union of $d$ dimensional cones in $R^{n}$ such that two cones can intersect only in subspace of co-dimension $1$ and the union over all the cones span $R^{d}$. For a precise characterization, we refer the reader to \cite{PPPP}.}  More in detail, to any curve inside the fat graph $C$, we can associate a vector $g_{C}\, \in\, {\bf R}^{d}$ where $d$ is the number of curves required to completely triangulate (strictly speaking pseudo-triangulate) $\Sigma_{n,L}$. A beautiful and direct way to compute $g_{C}\, \  \forall\, C\, \in\, \Sigma_{L,n}$ with respect to any fat graph $\Gamma_{L,n}$ was given in \cite{Arkani-Hamed:2023lbd}. These $g$-vectors satisfy the beautiful condition
\begin{align}
\alpha_{C^{\prime}}(g_{C})\, =\, \delta_{C,C^{\prime}}\, ,
\end{align}
where $\alpha_{C}$ are the headlight functions.\footnote{Headlight functions are also called $c$-vectors in mathematics literature, e.g. \cite{PPPP} .} As an explicit example, we list down the set of all $g$-vectors for walks in $\Sigma_{1,3}$ with respect to the tadpole fat graph (see Fig. \ref{fig:3p1l_tadpole_ref}). 
\begin{figure}[h!]
\centering
\includegraphics[width=0.5\linewidth]{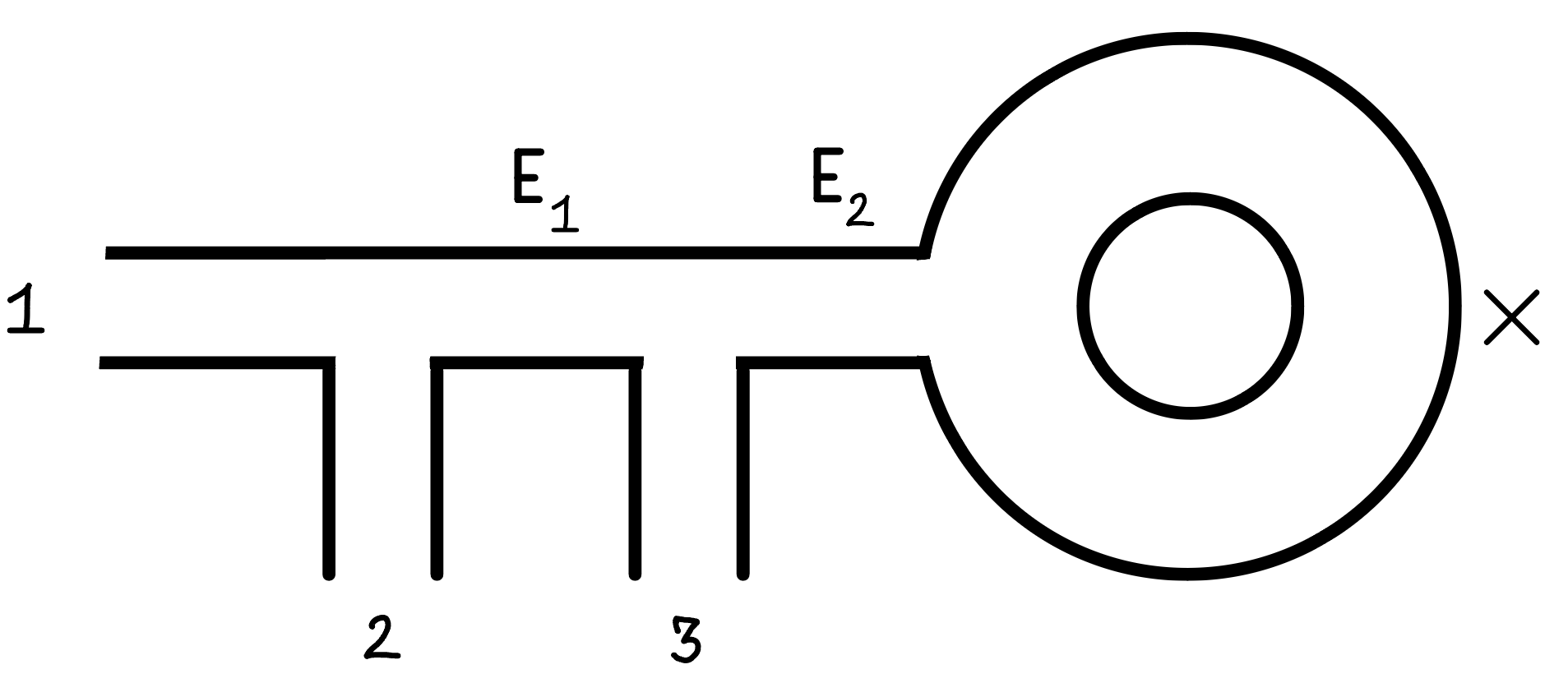}
\caption{Three-point one-loop tadpole reference fatgraph}
\label{fig:3p1l_tadpole_ref}
\end{figure}

In the table\, \ref{tab:g-vectors_matching_PT}, the unit vectors along which the $g$-vectors are evaluated correspond to the walks $\{ (1,3),\, (1, 1),\, (10)_{\textrm{cc}}\, \}$ where `$\textrm{c} (\textrm{cc})$' stands for clockwise (counter-clockwise) winding.
\begin{table}[h!]
\renewcommand{\arraystretch}{1.4}
\centering
\begin{tabular}{|>{\raggedright\arraybackslash}m{3cm}|
>{\raggedright\arraybackslash}m{5cm}|
>{\raggedright\arraybackslash}m{4cm}|}
\hline
\textbf{g-vectors} & \textbf{Basis : $( e_{E_{1}},\, e_{E_{2}},\, e_{X} )$} & \textbf{$g_{\textrm{proper walk}}$ with basis $\{\, e_{13},\, e_{1\widetilde{1}},\, e_{10}\,\}$} \\
\hline
$g_{10_{cc}}$ & $\{0, 1, -1\}$ & $g^{-}_{1^{\circ}0}$ \\
$g_{20_{cc}}$ & $\{-1, 1, -1\}$ & $g^{-}_{2^{\circ}0}$ \\
$g_{30_{cc}}$ & $\{0, 0, -1\}$ & $g^{-}_{3^{\circ}0}$ \\
$g_{21}$ & $\{-1, 1, 0\}$ & $g_{2\widetilde{1}}$ \\
$g_{32}$ & $\{-1, 0, 0\}$ & $g_{3\widetilde{2}}$ \\
$g_{13}$ & $\{1, 0, 0\}$ & $g_{13}$ \\
$g_{11}$ & $\{0, 1, 0\}$ & $g^{+}_{1^{\circ}\widetilde{1}^{\circ}}$ \\
$g_{22}$ & $\{-2, 1, 0\}$ & $g_{(2^{\circ}\widetilde{2}^{\circ})^{+}}$ \\
$g_{33}$ & $\{0, -1, 0\}$ & $g_{(3^{\circ}\widetilde{3}^{\circ})^{+}}$ \\
\hline
\end{tabular}
\caption{ comparison between g-vector with basis $( e_{E_{1}}, e_{E_{2}}, e_{X} )$ and $g_{\text{proper walk}}$ in pseudo-triangulation with basis $\{\, e_{13},\, e_{1\widetilde{1}},\, e_{10}\,\}$.}
\label{tab:g-vectors_matching_PT}
\end{table}

%%%%%%%%%%%%%%%%%%%%%%%%%%%%%%%%
\subsection*{Pseudo-triangulation model for $L=1$ and computation of $g$-vectors}
For $\Sigma_{1, n}$ the $g$-vectors can also be computed by realising that any $\Gamma_{L,n}$ defines a so-called \emph{gentle blossomed dissection quiver} which can be used to compute the set of all $g$-vectors. On a disc without punctures, this method was explicitly developed in a series of beautiful papers  \cite{PPPP, palu2019non}. 
In the case of $L=1$,   $g$-vector fan associated to curves on a planar tadpole fat graph can also be computed using the so called pseudo-triangulation model which was introduced by Ceballlos and Pilaud to define cluster $D$-polytopes.\footnote{Strictly speaking, the positive geometry associated to planar one-loop amplitude is known has $\widehat{D}$-polytope and was discovered by Arkani-Hamed, Frost, Salvatori and Plamondon in \cite{afpst}. As we are not using positive geometries directly, we refer the reader to \cite{Jagadale:2020qfa} for a review of the definition of $\widehat{D}$ polytope.} In this appendix, we propose an extension of the pseudo-triangulation model for $L = 2$ which can be used to (a) define the combinatorial positive geometry whose vertices are pseudo-triangulations of $\Sigma_{2,n}$ and (b) to compute $g$ vectors for all the curves in such a disc, given a reference triangulation.

We now review the $g$-vector computation given in \cite{Jagadale:2022rbl} for  $\Sigma_{1,3}$. We start with a reference dissection of the $6$-gon which has a puncture in the middle. The vertices of the 6-gon are labelled as $\{ 1, \dots, 3,\, \widetilde{1}, \dots, \widetilde{3}\, \}$ and we choose the specific reference pseudo-triangulation, $\textrm{PT}_{R}$,  that is dual to the tadpole planar fat graph. The chords of these dissections are
\begin{align}
\textrm{PT}_{R}\, =\, \{\, 1 3,\, (1\widetilde{1})_{L},\, 10, \widetilde{1}0,\,  (1 \widetilde{1})_{R,}, \widetilde{1} \widetilde{3}\, \}.
\end{align}
as shown in the Fig. \ref{fig:pseudo_triangulation_1l3p}.
\begin{figure}[h!]
\centering
\includegraphics[width=0.3\linewidth]{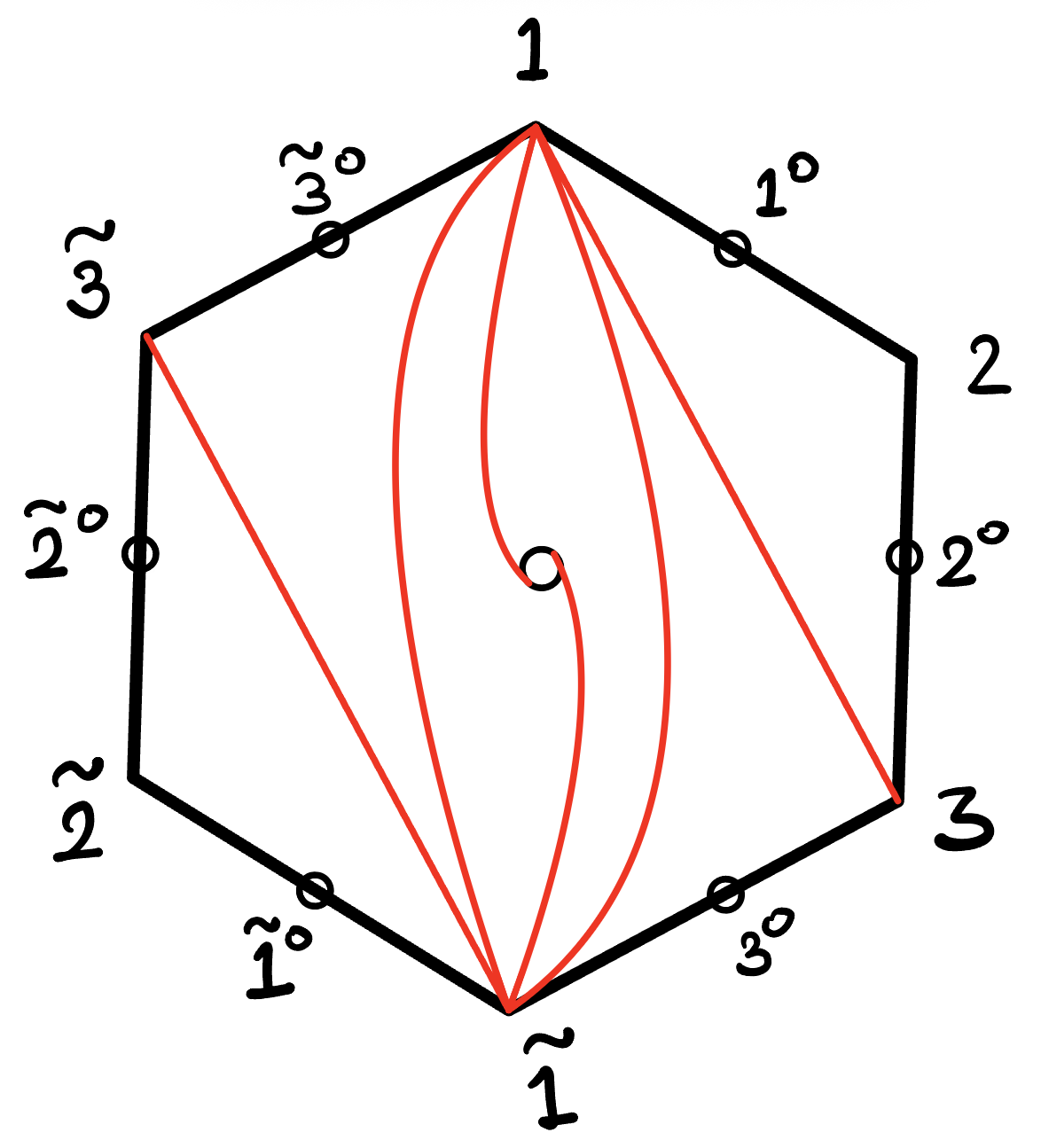}
\caption{Pseudo-triangulation dual to three-point one-loop fatgraph.}
\label{fig:pseudo_triangulation_1l3p}
\end{figure}

\begin{figure}[h!]
\centering
\begin{subfigure}[t]{0.32\textwidth}
\centering
\includegraphics[width=0.9\linewidth]{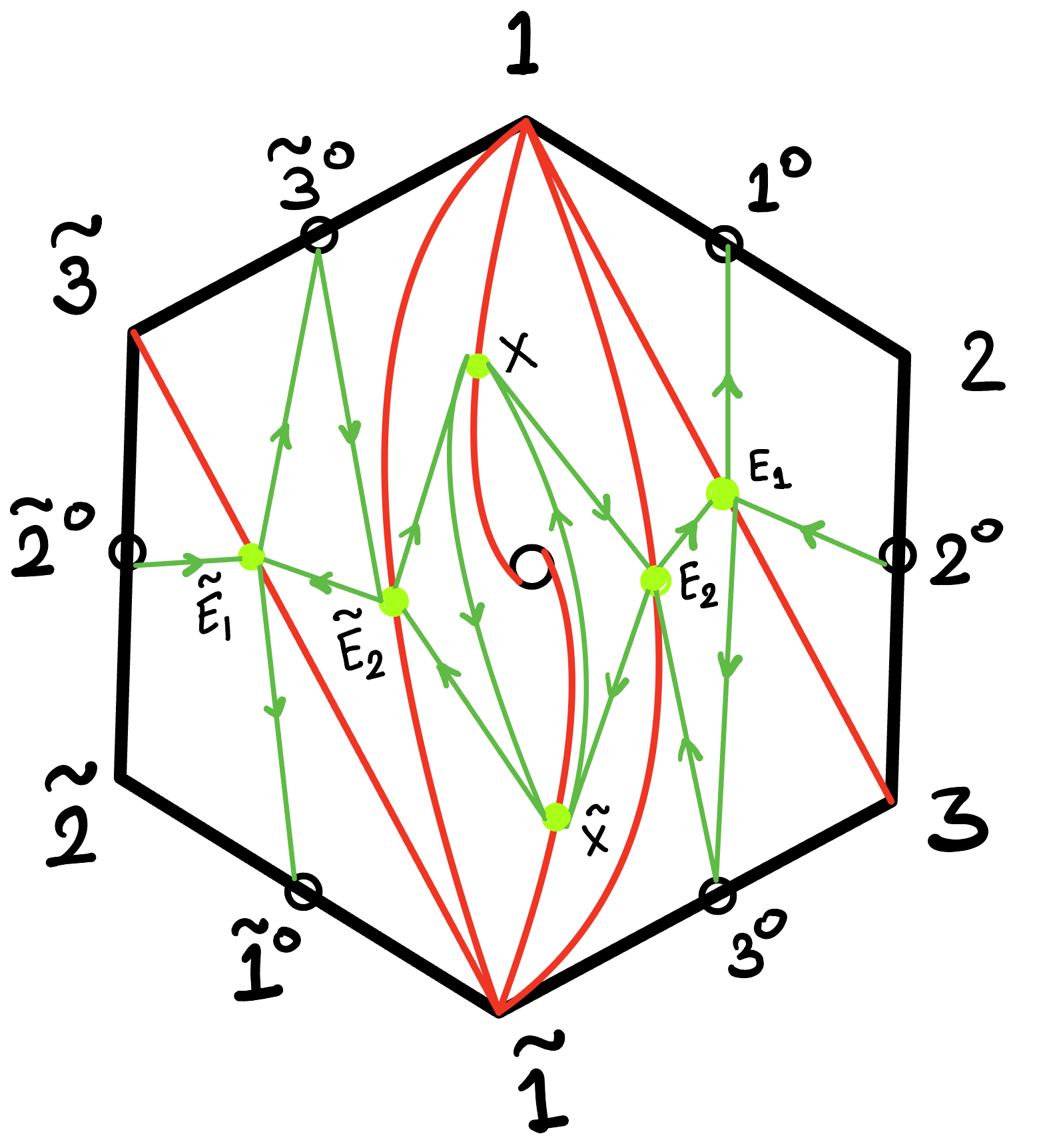}
\caption{}
\label{fig:quiver_for_tadpole_ref_3p1l}
\end{subfigure}
\hfill
\begin{subfigure}[t]{0.32\textwidth}
\centering
\includegraphics[width=0.9\linewidth]{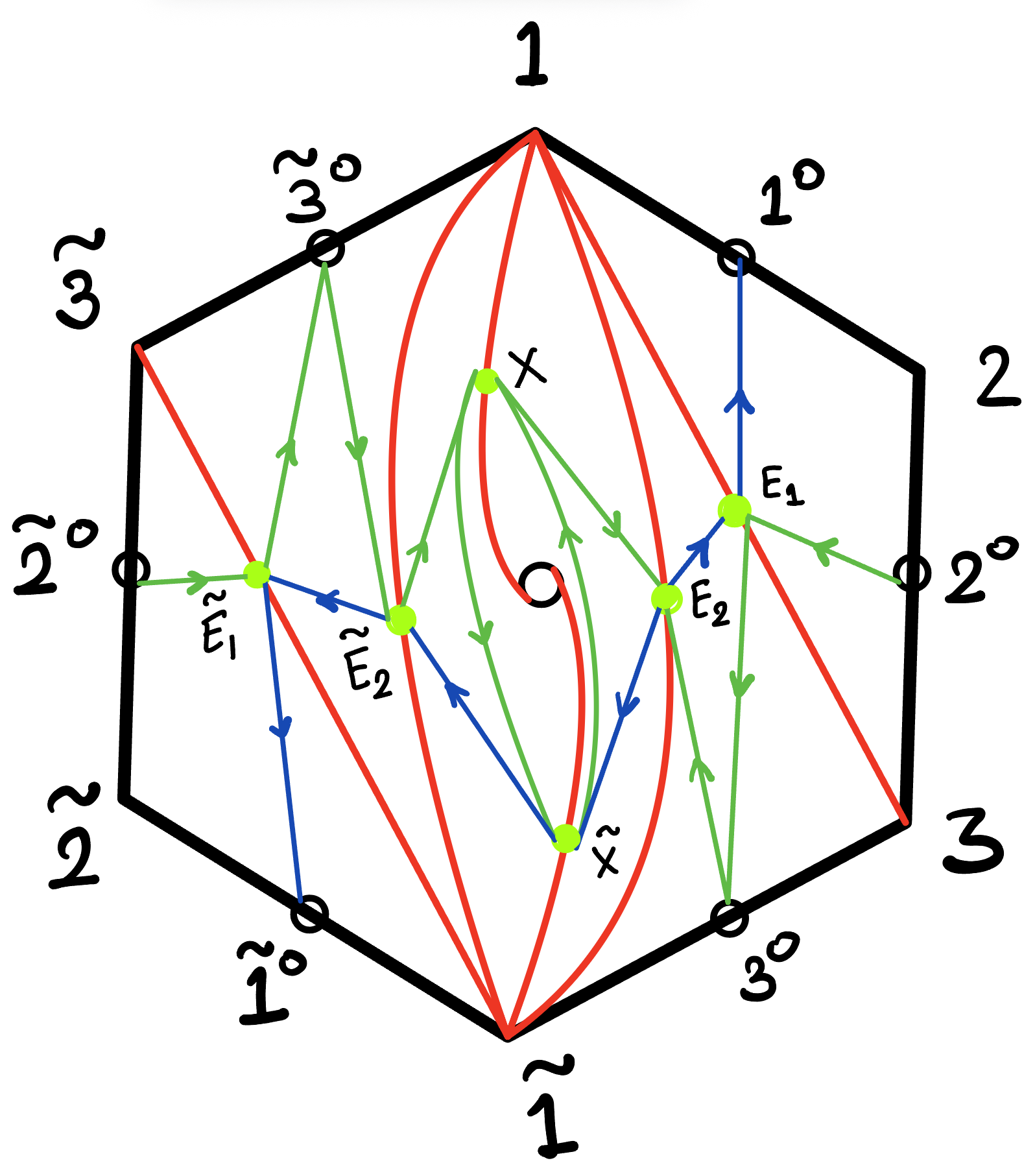}
\caption{}
\label{fig:proper_walk_example}

\end{subfigure}
\hfill
\begin{subfigure}[t]{0.32\textwidth}
\centering
\includegraphics[width=0.9\linewidth]{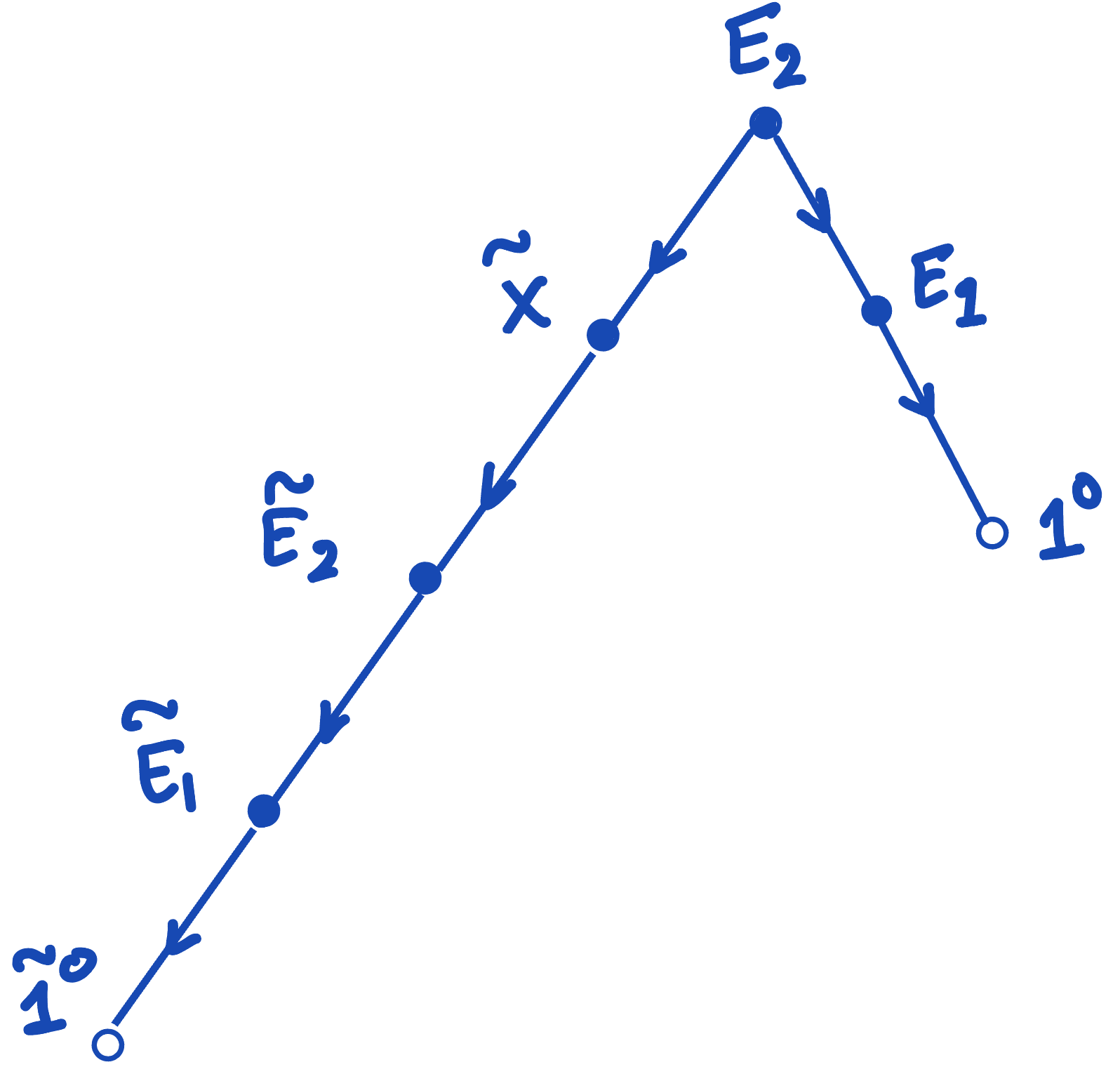}
\caption{}
\label{fig:proper_walk_peak_and_dip}

\end{subfigure}

\end{figure}

The  gentle blossomed quiver associated to $\textrm{PT}_{R}$ is drawn in green in Fig. \ref{fig:quiver_for_tadpole_ref_3p1l}. We note that it has 6 blossoms which are denoted as 
\begin{align}
V^{\circ}\, :=\, \{\, 1^{\circ}, \dots, \widetilde{3}^{\circ}\, \}
\end{align}
Starting with a vertex in $V^{\circ}$, we can now compute a $g$-vector $g_{w}$ associated to any  proper walk $w$. A proper walk is a map, 
\begin{align}
V^{\circ}\,  \rightarrow\, V^{\circ}\, \medcup\, 0, 
\end{align}
which never traverses two arrows in the same cell. In Fig. (\ref{fig:proper_walk_example},\ref{fig:proper_walk_peak_and_dip}), we indicate examples of proper walks.  For any such $w$, $g_{w}$ is computed by the following algorithm \cite{Jagadale:2022rbl}.
\begin{enumerate}
\item  We first assign a unit vector to each of the chords in $\textrm{PT}_{R}$ and thus obtain the cartesian space ${\bf R}^{\vert \textrm{PT}_{R}\vert}$.
\item If a proper walk $w$ has a peak (dip) at the location of a chord $c$ then we multiply the corresponding unit vector $e_{c}$ with $+1 (-1)$. If a walk either does not intersect a $c\, \in\, \textrm{PT}_{R}$ or does not have either a peak or a dip than $g_{w}$ is orthogonal to $e_{c}$.
\item We finally then identify $e_{(1\widetilde{1})_{R}}\, =\, e_{(1\widetilde{1})_{L}},\, e_{13}\, =\, e_{\widetilde{1}\widetilde{3}}$.
\end{enumerate}

\begin{itemize}
\item In the table above, we have dropped $g^{-}_{i\widetilde{i}}$ as they equal $g^{+}_{i^{\circ}\widetilde{i}^{\circ}}$ under the identification of the reference chords on the two sides of the puncture.
\item $g^{+}_{\widetilde{i} 0}\, \neq\, g^{-}_{\widetilde{i} 0}$ but they all projections on $e_{10}$ is positive and hence we do not include them here they do not have support over $t_{x} \leq\, 0$ hyper-plane.
\item This algorithm can be readily applied to the pseudo-triangulation model over $2n$-gon. 
\end{itemize}
In table \ref{tab:g-vectors_matching_PT}, we see that the two methods are equivalent under suitable identifications.

%%%%%%%%%%%%%%%%%%%%%%%%%%%%%%%%
\subsection{Pseudo-triangulation model for two-loop case}
In this section, we show that there exists a pseudo-triangulation model along with corresponding gentle-blossomed quiver,  which can then be used to compute $g$-vectors associated to the pseudo-triangulations of $\Sigma_{2,2}$. This model is a direct generalization of the pseudo-triangulation model proposed by Ceballos and Pilaud in their seminal work \cite{Ceballos_2015} where the flips in pseudo-triangulations were  shown to be equivalent to quiver mutations of cluster $D_{n}$ polytope. To the best of our knoweldge, a pseudo-triangulation model for so called surfacehedra \cite{Surfaceohedra} does not exist in the literature. We propose such a model and use it to compute $g$-vector fan for dissections of $\Sigma_{2,2}$. The justification of our proposed model lies in the fact that the set of $g$ vectors for all but two of the walks ( $C_{aa}, C_{bb}$ which are self-intersecting) precisely match known results derived in \cite{Arkani-Hamed:2023lbd} after appropriately identifying chords of pseudo-triangulation with curves on a planar fat graph with two loops. For concreteness, we will stick to $n=2$, although as before, the algorithm can be used to obtain $g$-vector fan for all $\Sigma_{2,n}$ and any reference fat graph $\Gamma_{L,n}$.   

We thus suspect that it may be plausible to have a definition of combinatorial positive geometries such as surfacehedra  through flips of pseudo-triangulations just as in the case of assocaihedron, cluster $D$ polytope as well as $\widehat{D}_{n}$  polytope.\footnote{Further explorations of surfacehedra, especially the role of two closed walks which do not appear to have a natural place in the pseudo-triangulation model, is beyond the scope of the present paper and we hope to come back to a more extensive analysis of our model in the future.} 

In table \ref{tfgcuci}, entries of first column describes the $g$-vectors for various chords in $\Sigma_{2,2}$ using the formulae derived in \cite{Arkani-Hamed:2023lbd}.  Our convention for  labels of various dissection chords in $\Sigma_{2,2}$ is given in table \ref{tab:2l2p-g-vectors_and_curves}. The third column in table (\ref{tfgcuci}) is based on the $g$-vectors computed using pseudo-triangulation model described in detail below. 

\begin{table}[h!]
\renewcommand{\arraystretch}{1.4}
\centering
\begin{tabular}{|>{\raggedright\arraybackslash}m{2.0cm}|
>{\raggedright\arraybackslash}m{8cm}|
>{\raggedright\arraybackslash}m{1.2cm}|
>{\raggedright\arraybackslash}m{6.5cm}|}
\hline
 \textbf{Curves} 
& \textbf{Words associated with curves} \\
\hline
 $C^{0}_{1a}$ & $1LY_1L w R (x L)^\infty$ \\
 $C^{0}_{2a}$ & $1RY_1L w R (x L)^\infty$ \\
 $C^{0}_{1b}$ & $1LY_1 R z R (y L)^\infty$ \\
 $C^{0}_{2b}$ & $2RY_1 R z R (y L)^\infty$ \\
 $C^a_{11}$ & $1 L Y_1 L w L x L w R Y_1 R 1$ \\
 $C^a_{12}$ & $1 L Y_1 L w L x L w R Y_1R2$ \\
 $C^a_{22}$ & $2RY_1L w L x L w R Y_1L2$ \\
$C^b_{11}$ & $1 L Y_1 R z L y L z LY_1 R 1$ \\
 $C^b_{12}$ & $1 L Y_1 R z L y L z L Y_1L2$ \\
 $C^b_{22}$ & $2RY_1R z L y L z L Y_1L2$ \\
$C_{ab}$ & $(R x)^\infty L w L z R (y L)^\infty$ \\
$C^{1}_{1a}$ & $1LY_1 R z R y R z R w R (x L)^\infty$ \\
$C^{1}_{2a}$ & $2RY_1 R z R y R z R w R (x L)^\infty$ \\
$C^{1}_{1b}$ & $1LY_1 L w L x L w L z R (y L)^\infty$ \\
$C^{1}_{2b}$ & $2RY_1 L w L x L w L z R (y L)^\infty$ \\
$C_{aa}$ & $(R x)^\infty L w L z L y L z R w R (x L)^\infty$ \\
$C^{bb}$ & $(R y)^\infty L z R w L x L w L z R (y L)^\infty$ \\
$C^{-1}_{1a}$ & $1LY_1L w L x L w L z L y L z R w R (x L)^\infty$ \\
$C^{-1}_{2a}$ & $2RY_1L w L x L w L z L y L z R w R (x L)^\infty$ \\
$C^{-1}_{1b}$ & $1LY_1 R z R y R z R w R x R w L z R (y L)^\infty$ \\
$C^{-1}_{2b}$ & $2RY_1R z R y R z R w R x R w L z R (y L)^\infty$ \\
$C^{a,1}_{11}$ & $1LY_1R z R y R z R w L x L w L z L y L z L Y_1R1$ \\
$C^{a,1}_{12}$ & $1LY_1R z R y R z R w L x L w L z L y L z L Y_1L2$ \\
$C^{a,1}_{22}$ & $2RY_1R z R y R z R w L x L w L z L y L z L Y_1L2$ \\
$C^{b,1}_{11}$ & $1LY_1L w L x L w L z L y L z R w R x R w R Y_1R1$ \\
$C^{b,1}_{12}$ & $1LY_1L w L x L w L z L y L z R w R x R w RY_1L2$ \\
$C^{b,1}_{22}$ & $2RY_1L w L x L w L z L y L z R w R x R w R Y_1L2$ \\
\hline
\end{tabular}
\caption{Curve in $\Sigma_{2,2}$. A generic curve is denoted by 
$C^{\alpha}_{I\,F}$, where $I$ is the starting point, $F$ is the final point, 
and $\alpha$ encodes the winding information.
For example, $C^{(a,1)}_{12}$ denotes a curve from $1$ to $2$ that encloses 
puncture $a$ between itself and the boundary, while also making a turn 
around another puncture $b$ without enclosing it between itself and the boundary 
(see Fig.~\ref{fig:2loop_2point_curves}).  $C^{-1}_{1a}$ denotes a curve 
starting at $1$ and ending by spiraling into puncture $a$. The superscript $-1$ 
specifies that before spiraling into $a$, the curve winds once around $a$ and then 
around $b$ (see Fig.~\ref{fig:2loop_2point_curves}). By contrast, 
$C^{1}_{1a}$ denotes a curve starting at $1$ and ending at puncture $a$, but this 
time winding once around $b$ before spiraling into $a$.}

\label{tab:2l2p-g-vectors_and_curves}
\end{table}

\begin{table}[h!]
\renewcommand{\arraystretch}{1.4}
\centering
\begin{tabular} {|>{\raggedright\arraybackslash}m{4cm}|
>{\raggedright\arraybackslash}m{6cm}|
>{\raggedright\arraybackslash}m{4cm}|}
\hline
\textbf{g-vector name(Anti-Clockwise)} & \textbf{g-vector as dual to headlight functions in basis : $(\, e_{X}, e_{W}, e_{Z}, e_{Y},\, e_{E_{1}}\, )$} & \textbf{g-vectors from pseudo-triangulation model in the same basis after certain identification of dissection chords. .} \\
\hline
$g_{1a}^{0}$ & $\{-1, 1, 0, 0, 0\}$ & $g_{\widetilde{1}^{\circ} 0_{1}}^{\textrm{cc}}$\\
$g_{2a}^{0}$ & $\{-1, 1, 0, 0, -1\}$ & $g_{\widetilde{2}^{\circ} 0_{1}}^{\textrm{cc}}$\\
$g_{1b}^{0}$ & $\{0, 0, 0, -1, 1\}$ & $g_{\widetilde{1}^{\circ} 0_{2}}^{\textrm{c}}$\\
$g_{2b}^{0}$ & $\{0, 0, 0, -1, 0\}$ & $g_{\widetilde{2}^{\circ} 0_{2}}^{\textrm{c}}$\\
$g_{11}^{a}$ & $\{0, 1, 0, 0, 0\}$ &  $g_{\widetilde{1}^{\circ} 1^{\circ}}^{\textrm{cc}}$\\
$g_{12}^{a}$ & $\{0, 1, 0, 0, -1\}$ &  $g_{\widetilde{1}^{\circ} 2^{\circ}}^{\textrm{c}}$\\
$g_{22}^{a}$ & $\{0, 1, 0, 0, -2\}$ &  $g_{\widetilde{2}^{\circ} 2^{\circ}}^{\textrm{c}}$\\
$g_{11}^{b}$ & $\{0, 0, -1, 0, 2\}$ &  $g_{\widetilde{1}^{\circ} \widetilde{\widetilde{1}}^{\circ}}^{\textrm{c}}$\\
$g_{12}^{b}$ & $\{0, 0, -1, 0, 1\}$ &  $g_{\widetilde{1}^{\circ}\widetilde{\widetilde{2}}^{\circ}}^{\textrm{cc}}$\\
$g_{22}^{b}$ & $\{0, 0, -1, 0, 0\}$ &  $g_{\widetilde{2}^{\circ}\widetilde{\widetilde{2}}^{\circ}}^{\textrm{cc}}$\\
$g_{ab}$ & $\{-1, 0, 1, -1, 0\}$ &    $g_{0_{1} 0_{2}}^{\textrm{cc}\vert \textrm{c}}$\\
$g_{1a}^{1}$ & $\{-1, 0, 0, 0, 1\}$ &  $g_{1^{\circ} 0_{1}}^{\textrm{c}}$\\
$g_{2a}^{1}$ & $\{-1, 0, 0, 0, 0\}$ &  $g_{2^{\circ} 0_{1}}^{\textrm{c}}$\\
$g_{1b}^{1}$ & $\{0, 0, 1, -1, 0\}$ &   $g_{1^{\circ} 0_{2}}^{\textrm{cc}}$\\
$g_{2b}^{1}$ & $\{0, 0, 1, -1, -1\}$ &  $g_{2^{\circ} 0_{2}}^{\textrm{cc}}$\\
$g_{aa}$ & $\{-2, 0, 0, 0, 0\}$ &   $\cross$\\
$g_{bb}$ & $\{0, -1, 2, -2, 0\}$ &  $\cross$\\
$g_{1a}^{-1}$ & $\{-1, 0, 1, 0, 0\}$ & $g_{\widetilde{\widetilde{1}}^{\circ} 0_{2} 0_{1}}^{\textrm{c} \vert \textrm{cc}}$ \\
$g_{2a}^{-1}$ & $\{-1, 0, 1, 0, -1\}$ &  $g_{\widetilde{\widetilde{2}}^{\circ} 0_{2} 0_{1}}^{\textrm{c}\vert \textrm{cc}}$\\
$g_{1b}^{-1}$ & $\{0, -1, 1, -1, 1\}$ &  $g_{\widetilde{\widetilde{1}}^{\circ} 0_{2} 0_{1}}^{\textrm{cc} \vert \textrm{c}}$\\
$g_{2b}^{-1}$ & $\{0, -1, 1, -1, 0\}$ &   $g_{\widetilde{\widetilde{2}}^{\circ} 0_{2} 0_{1}}^{\textrm{cc} \vert \textrm{c}}$\\
$g_{11}^{(a,1)}$ & $\{0, -1, 0, 0, 2\}$ &  $g_{\widetilde{\widetilde{1}}^{\circ} 0_{2} 0_{1} 1^{\circ}}^{\textrm{c} \vert \textrm{c}}$\\
$g_{12}^{(a,1)}$ & $\{0, -1, 0, 0, 1\}$ &  $g_{\widetilde{\widetilde{1}}^{\circ} 0_{2} 0_{1} 2^{\circ}}^{\textrm{c} \vert \textrm{cc}}$\\
$g_{22}^{(a,1)}$ & $\{0, -1, 0, 0, 0\}$ &  $g_{\widetilde{\widetilde{2}}^{\circ} 0_{2} 0_{1} 2^{\circ}}^{\textrm{c} \vert \textrm{c}}$\\
$g_{11}^{(b,1)}$ & $\{0, 0, 1, 0, 0\}$ &  $g_{\widetilde{\widetilde{1}}^{\circ} 0_{2} 0_{1} 1^{\circ}}^{\textrm{cc} \vert \textrm{c}}$\\
$g_{12}^{(b,1)}$ & $\{0, 0, 1, 0, -1\}$ & $g_{\widetilde{1}^{\circ} 2^{\circ}}^{\textrm{cc}}$\\
$g_{22}^{(b,1)}$ & $\{0, 0, 1, 0, -2\}$ &  $g_{\widetilde{\widetilde{2}}^{\circ} 0_{2} 0_{1} 2^{\circ}}^{\textrm{cc} \vert \textrm{c}}$\\
\hline
\end{tabular}
\caption{Comparison of $g$-vectors using words in fatgraph with Pseudo Triangulation.}
\label{tfgcuci}    
\end{table}

%%%%%%%%%%%%%%%%%%%%%%%%%%%%%%%%
\subsection*{Details on the new pseudo-triangulation model for dissections quivers in $\Sigma_{2,2}$.}
The pseudo-triangulation model we propose  starts with a three fold replica of a disc with $n$ points on the boundary and  2 punctures. That is, we consider a $3n$ gon where the vertices $v_{1},\, \dots,\, v_{n}$ are repeated twice more in clockwise fashion.  This leads to a pseudo-triangulation model for path algebras defined on surfaces with two punctures. In the $n=2, L=2$ case as shown in the Fig.  \ref{fig:pseudo-triangulation_2point_2loop}.
\begin{figure}[h]
\centering
\begin{subfigure}{.5\textwidth}
\centering
\includegraphics[width=.6\linewidth]{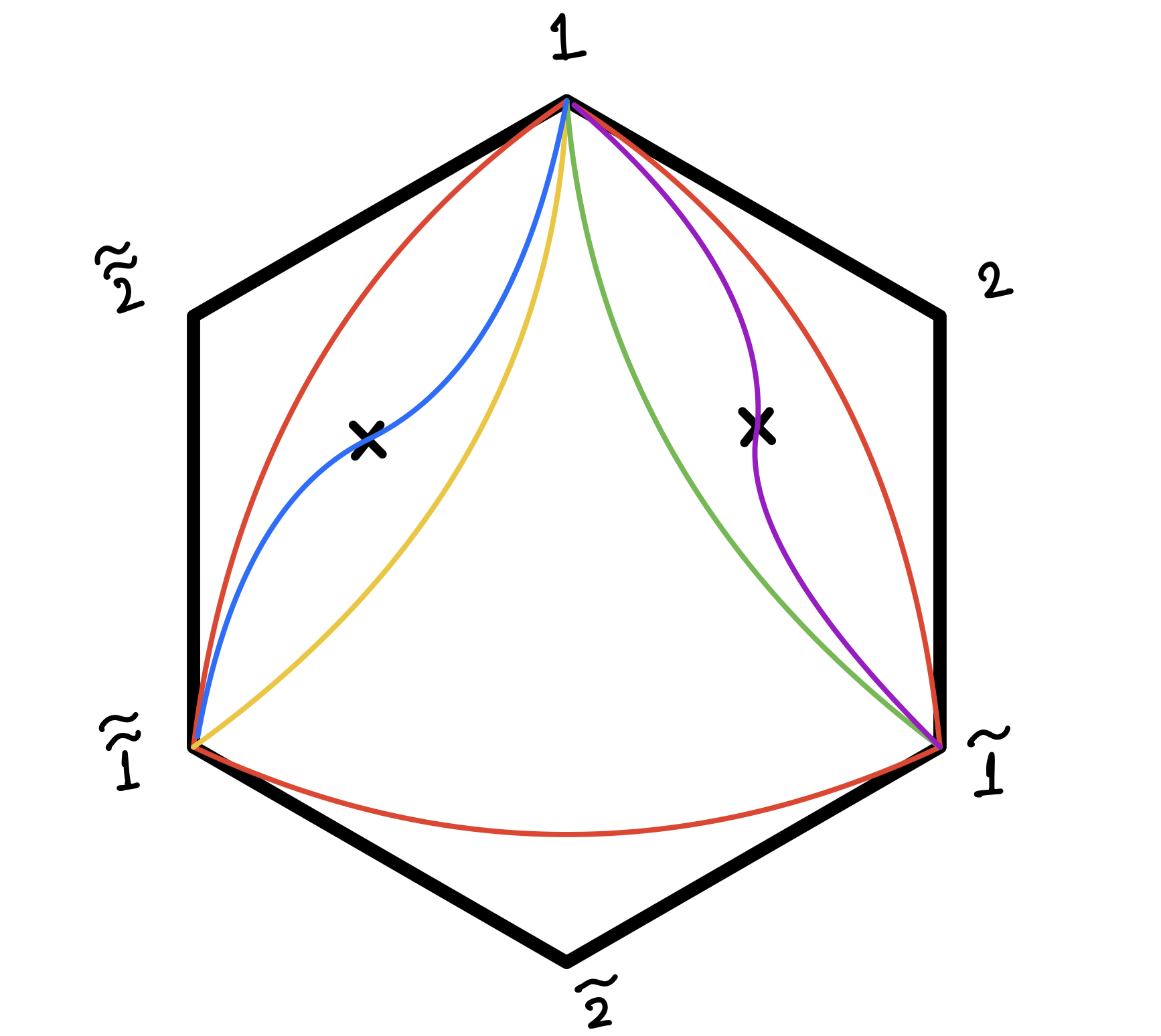}
%		\caption{}
%		\label{}
\end{subfigure}%
\begin{subfigure}{.5\textwidth}
\centering
\includegraphics[width=\linewidth]{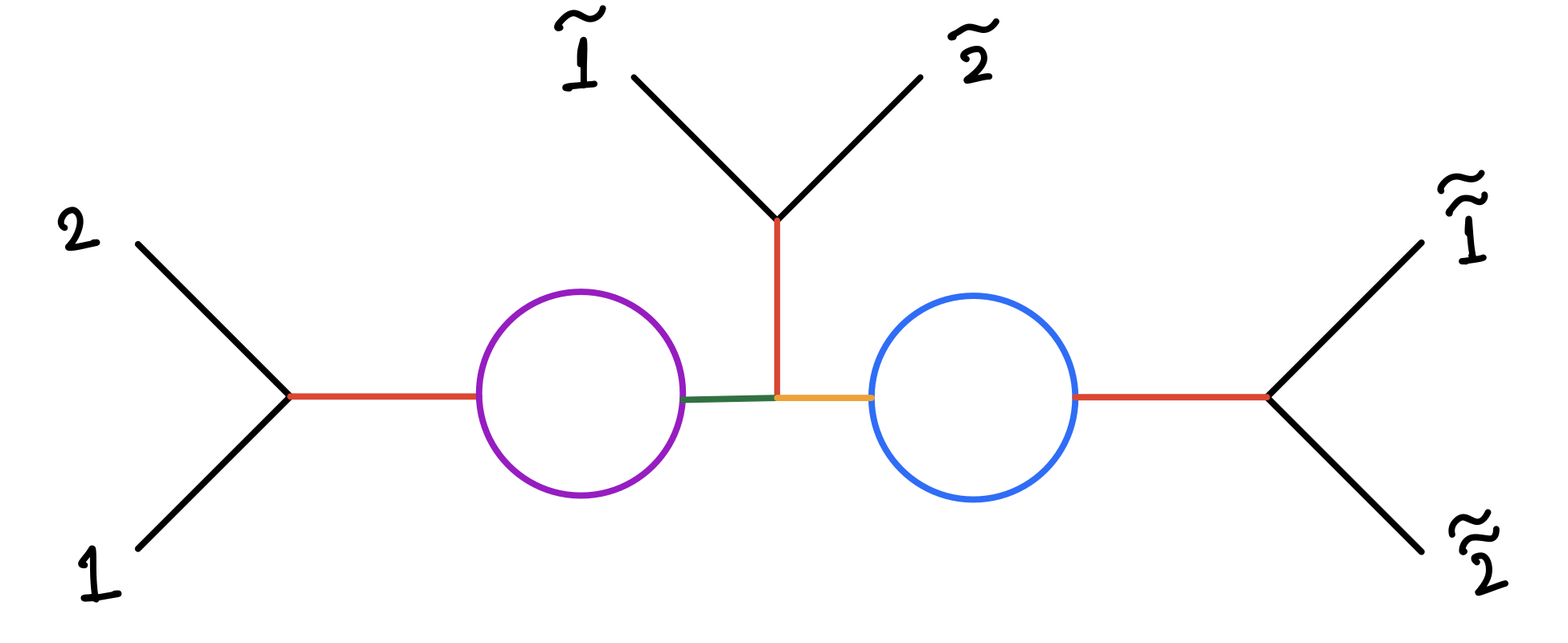}
%		\caption{}
%		\label{}
\end{subfigure}
\caption{pseudo-triangulation model for g-vector fans associated to two-loop surfacehedra}
\label{fig:pseudo-triangulation_2point_2loop}
\end{figure}

$\textrm{PT}_{r}$ consists of 9 dissections chords,
\begin{align}\label{ptrn2l2}
\textrm{PT}_{r}\, =\, \{\, (\widetilde{1}\widetilde{\widetilde{1}})_{R},\, (\widetilde{1}\widetilde{\widetilde{1}})_{L},\, (\widetilde{\widetilde{1}} 1),\, (1\widetilde{1})_{R}, (1\widetilde{1})_{L},\,  (\widetilde{1} 0_{1}), (1 0_{2}),\, \}
\end{align}
We propose that the gently blossomed dissection quiver which corresponds to triangulation of a 2 punctured disc with 2 marked points is obtained under the following identification of the chords in eqn.(\ref{ptrn2l2}).
\begin{align}
(1\widetilde{1})_{R}\, \sim\, (\widetilde{1}\widetilde{\widetilde{1}})\, \sim\, (\widetilde{\widetilde{1}} 1)_{L}\, =: E1
\end{align}
We will also denote the remaining chords that begin and end at the marked points as, \footnote{For readers who are aaquainted with the formalism given in \cite{Arkani-Hamed:2023lbd,Arkani-Hamed:2023mvg}, The three chords which are identified are associated with Tropical parameter $t_{1}$, the chord $W$ is identified with $t_{w}$ and chord $Z$ is identified with $t_{z}$.} 
\begin{align}
(\widetilde{1}\widetilde{\widetilde{1}})_{L} &=: W,\, (1\widetilde{1})_{R}\, \sim\, Z\nonumber\\
(\widetilde{1} 0_{1}) &=: X,\, ( \widetilde{1} 0_{2} ) =: Y
\end{align}
After such an identification, we can directly compute $g$-vector for any proper walk by assigning indices $\pm 1,\, 0$ with respect to the basis vectors, \footnote{We remind the reader that for walk $w$, $\langle\, g_{W}, g_{E_{1}}\, \rangle\, :=\, \sum_{I=1}^{3} g_{W}^{I}$, where the sum is over the three dissection chords which are identified as $E_{1}$.}
\begin{align}
\textrm{Set of basis-vectors after identification}\, =\, \{\, e_{E1},\, e_{W}, e_{Z}, e_{X}, e_{Y}\, \}.
\end{align}
The final results can be summarized as follows : In the terminology of \cite{PPPP}, we can associate to any proper walk (non self-intersecting), a $g$-vector with respect to the basis defined by a reference triangulation. All such walks generate a proper subset of the set $S$ which is the set of curves on $\Sigma_{2,2}$ which are not in the kernel of tropical Mirzakhani kernel. Thus the pseudo-triangiulation model presents an efficient way to generate a large class of curves which are in the compliment of $\textrm{Ker}({\cal K})$.  Precisely two element of $S$ which are closed loops are left out of the set of proper walks that we can define via this model. A more detailed analysis of how to parametrize Teichmuller space of punctured disc with marked points on the boundary in terms of such triangulations will be pursued elsewhere. 

As the table \ref{tfgcuci}  shows,  shows that the set of $g$ vectors for all the curves in the set ${\cal S}$ can be computed either using pseudo-triangulation model we have proposed here, or the formulae derived in \cite{Arkani-Hamed:2023lbd}.  We first set up some notations which makes the comparison with $g$-vectors computed in  \ref{tfgcuci} manifest.
\begin{enumerate}
\item Consider a walk $w$ that winds around puncture $0_{1}$ without winding around $0_{2}$. As there are two homotopically inequivalent curves, we denote the corresponding $g$-vectors as $g_{w}^{\epsilon}$ where $\epsilon = \textrm{cc} (\textrm{c})$  if the walk winds around $0_{1}$ counter-clockwise (clockwise). 
\item We similarly have two sets of $g$-vectors $g_{\widetilde{\widetilde{i}}^{\circ} i^{\circ}}^{\epsilon}$ depending on how the walk winds around $0_{2}$. 
\item For the set of walks that wind around both the punctures we have $g_{w}^{\epsilon_{1}\epsilon_{2}}$. 
\end{enumerate}
The set of all walks whose $g$ vectors then match with those in the set $S$ are listed in the third column of the table above. This motivates us to speculate that mutation relations of the walks defined via the pseudo-triangulation model will generate for us the surfacehedra for two-loop amplitudes, \cite{Surfaceohedra}. 
%%%%%%
\begin{figure}[h]
\centering
\includegraphics[width=1.1\linewidth]{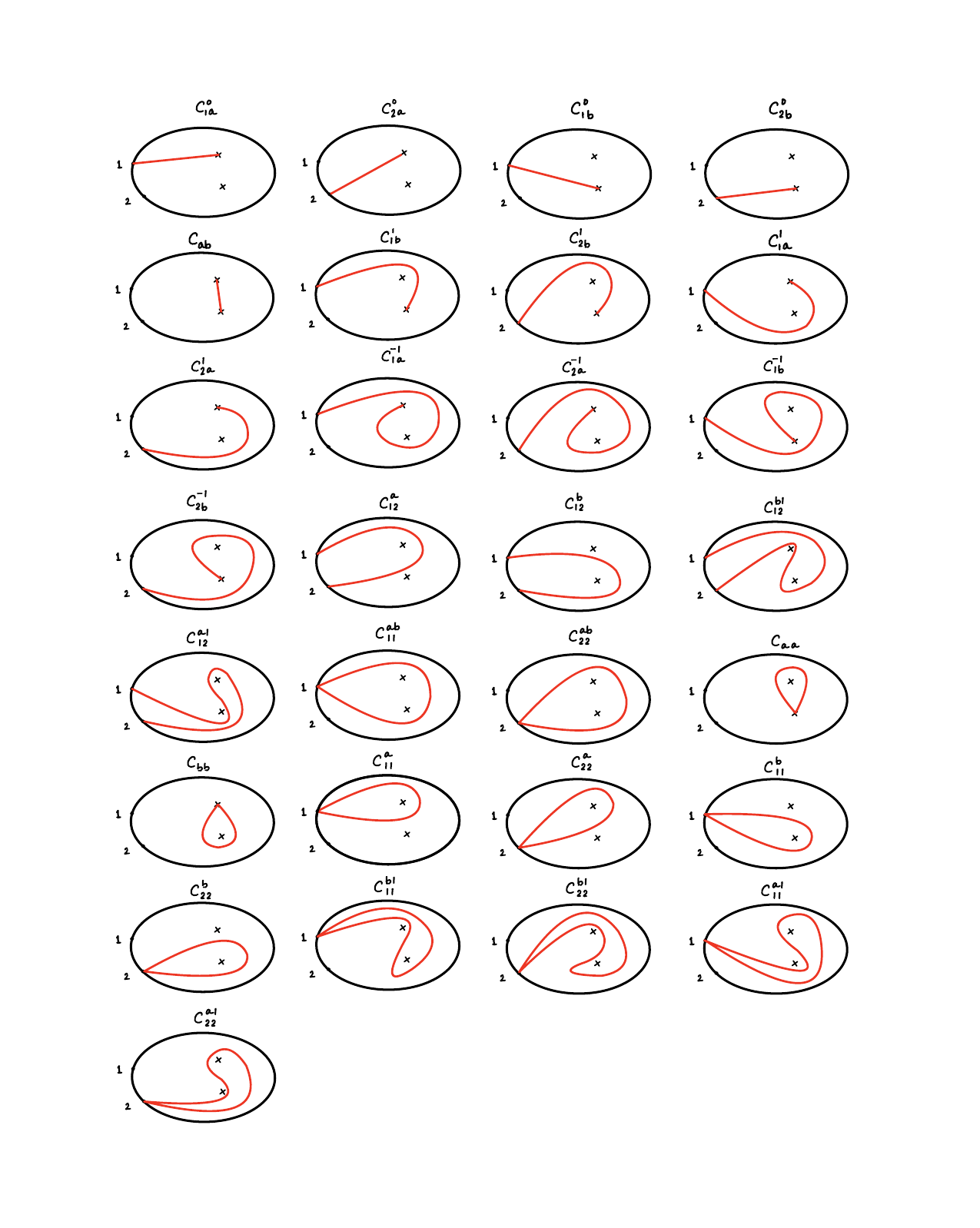}
\caption{Two-loop two-point curves}
\label{fig:2loop_2point_curves}
\end{figure}

%%%%%%%%%%%%%%%%%%%%%%%%%%%%%%%%
%%%%%%%%%%%%%%%%%%%%%%%%%%%%%%%%
\section{Cones for two-point two-loop}\label{sec:Cones_for_2-point_2-loop}

The construction of the quadratically divergent tadpole–free region 
${\cal M}_{2,2}^{\textrm{qtf}}$ is computationally challenging.  
We therefore proceed by first determining the contributing regions 
associated with all top–dimensional cones. Once these cones are 
constructed, we find that the full region ${\cal M}_{2,2}^{\textrm{qtf}}$ 
decomposes into five distinct subregions, within which the first 
Symanzik polynomial $\mathcal{U}$ takes different forms.  

To begin, consider the set $S_C$ consisting of the $g$–vectors associated 
with all non–tadpole curves, together with the tadpole curves 
$C^{(a,b)}_{11}$ and $C^{(a,b)}_{22}$. These latter curves contribute to 
the logarithmically divergent cones of the two–point, two–loop fatgraph.  
From $S_C$, we construct all possible sets of five linearly independent 
$g$–vectors corresponding to non–intersecting curves. Explicitly, one 
selects
\[
M = \{g_1,\, g_2,\, g_3,\, g_4,\, g_5\},
\]
which is sufficient since a two–loop diagram involves precisely five 
independent Schwinger parameters.  

Each cone is defined as the convex region spanned by the chosen $g$–vectors,
\begin{align}\label{eq:cones_eqn}
	\vec{X} = a_i \,\vec{g}_i, 
	\qquad a_i > 0,
\end{align}
where $\vec{X}$ is expressed in the basis 
$(e_{X},\, e_{W},\, e_{Z},\, e_{Y},\, e_{E_1})$, with 
$\{X, W, Z, Y, E_1\}$ denoting the global Schwinger variables. By inverting Eq.~\eqref{eq:cones_eqn} and imposing the conditions $a_i > 0$, one obtains the set of inequalities that bound 
each cone:
\begin{align}
	a_i = f_i(t_x, t_w, t_z, t_y, t_1) > 0.
\end{align}
Moreover, Eq.~\eqref{eq:cones_eqn} can be used to express the global 
coordinates $\vec{X}$ in terms of the cone parameters $(a_1,\dots,a_5)$, 
thus yielding the explicit Schwinger representation of the integral 
associated with the Feynman diagram corresponding to that cone.

From this analysis we find a total of $24$ top–dimensional cones. The form of the surface Symanzik polynomial $\mathcal{U}$ in each cone 
is obtained by simplifying $\mathcal{U}$ subject to the inequalities 
that characterize the cone. Furthermore, Eq.~\eqref{eq:cones_eqn} 
provides a direct map to the Schwinger parametrization of the integral 
for the corresponding Feynman diagram. We denote the first Symanzik 
polynomial of the associated Feynman diagram by $\mathcal{U}_{\text{FD}}$.

The correspondence between cones and Feynman diagrams can then be identified 
by counting the number of terms that appear in $\mathcal{U}_{\text{FD}}$ for 
each cone. Concretely:  
\begin{itemize}
\item Cones with $4$ terms in $\mathcal{U}_{\text{FD}}$ correspond to 
$\log^2$–divergent graphs.  
\item Cones with $7$ terms in $\mathcal{U}_{\text{FD}}$ correspond to ring graphs, 
i.e.\ $\log$–divergent graphs.  
\item Cones with $8$ terms in $\mathcal{U}_{\text{FD}}$ correspond to overlapping 
graphs, which are finite (non–divergent).  
\item Cones with $5$ terms in $\mathcal{U}_{\text{FD}}$ correspond to 
$\{ab\}$–tadpole graphs, which are $\log$–divergent.  
\end{itemize}
\begin{figure}[h!]
\centering
% Row 1
\begin{subfigure}[t]{0.5\textwidth}
\centering
\includegraphics[width=\linewidth]{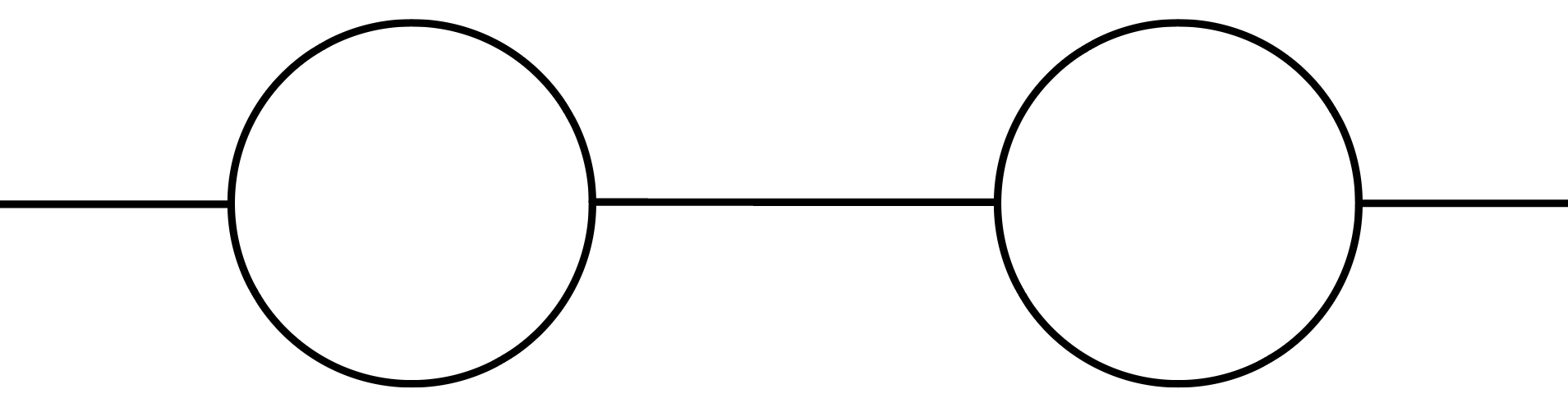}
\caption{$\log^2$–divergent graphs}
\label{fig:2l2p_log^2_graph}
\end{subfigure}
\hfill
\begin{subfigure}[t]{0.3\textwidth}
\centering
\includegraphics[width=\linewidth]{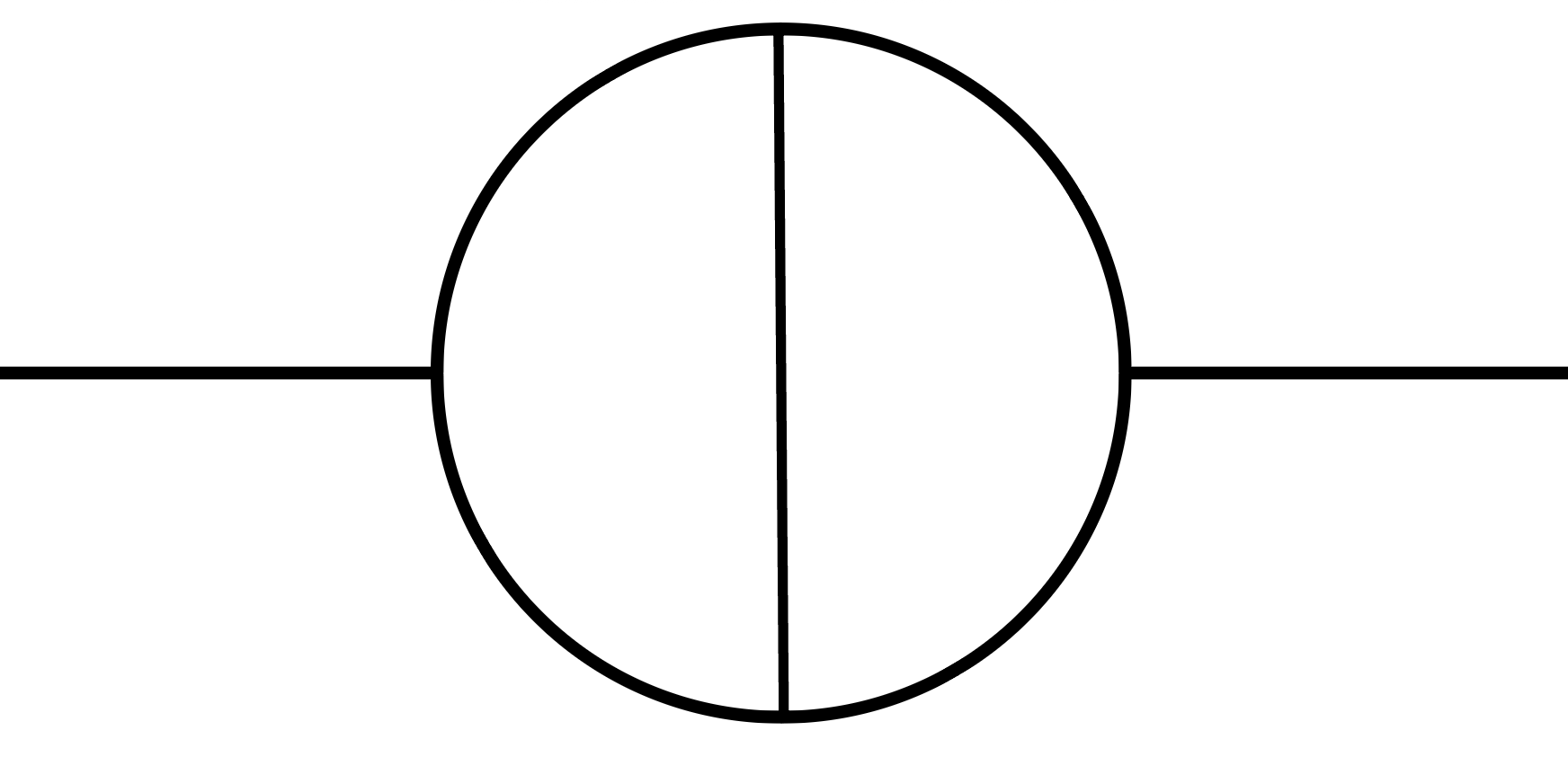}
\caption{Overlapping loops non-divergent graph}
\label{fig:2l2p_overlaping_graph}
\end{subfigure}

% Row 2
\vskip\baselineskip
\qquad\quad\quad
\begin{subfigure}[t]{0.3\textwidth}
\centering
\includegraphics[width=\linewidth]{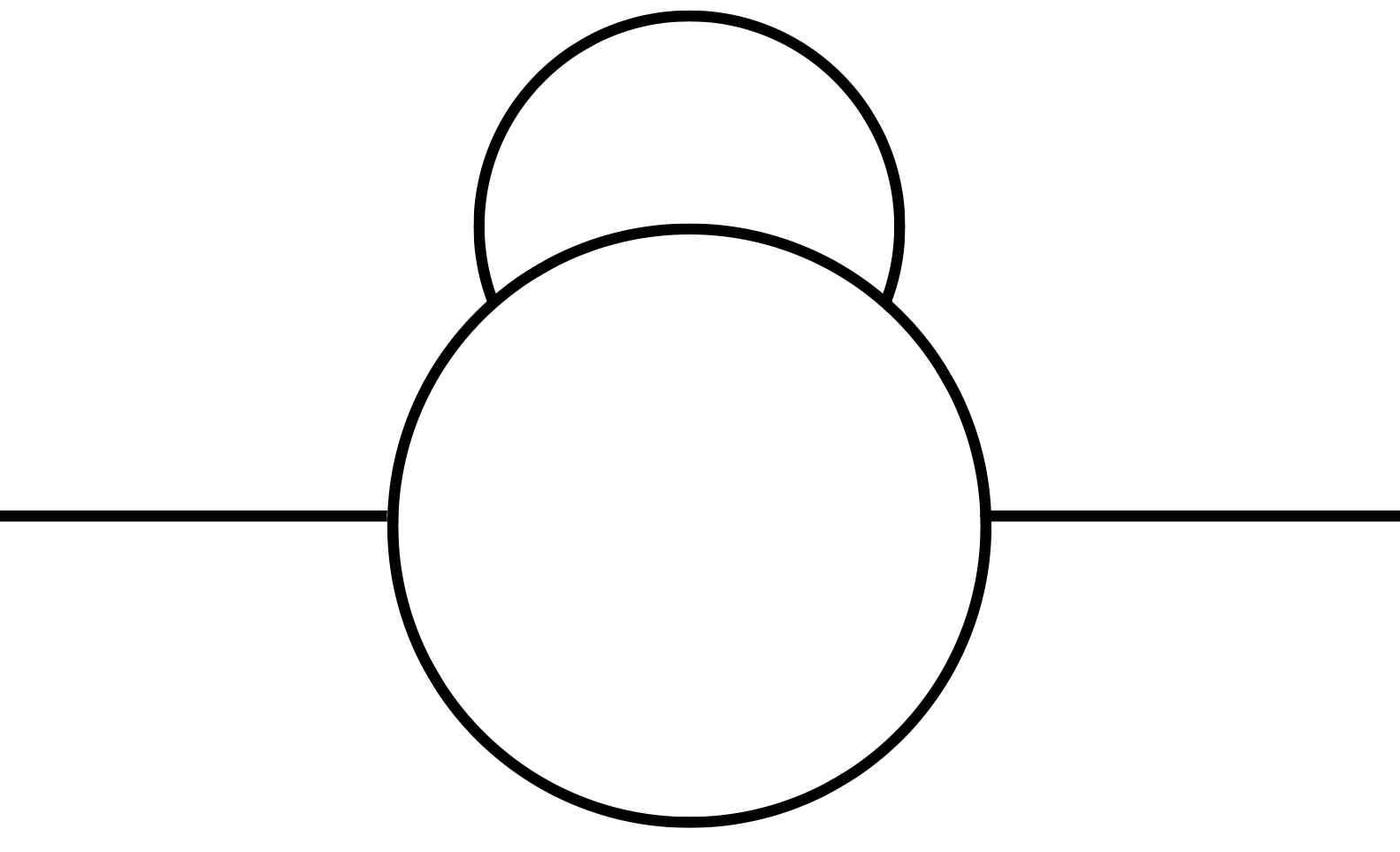}
\caption{Ring graphs $\log$-divergent}
\label{fig:2l2p_ring_graph}
\end{subfigure}
\hfill
\begin{subfigure}[t]{0.3\textwidth}
\centering
\includegraphics[width=\linewidth]{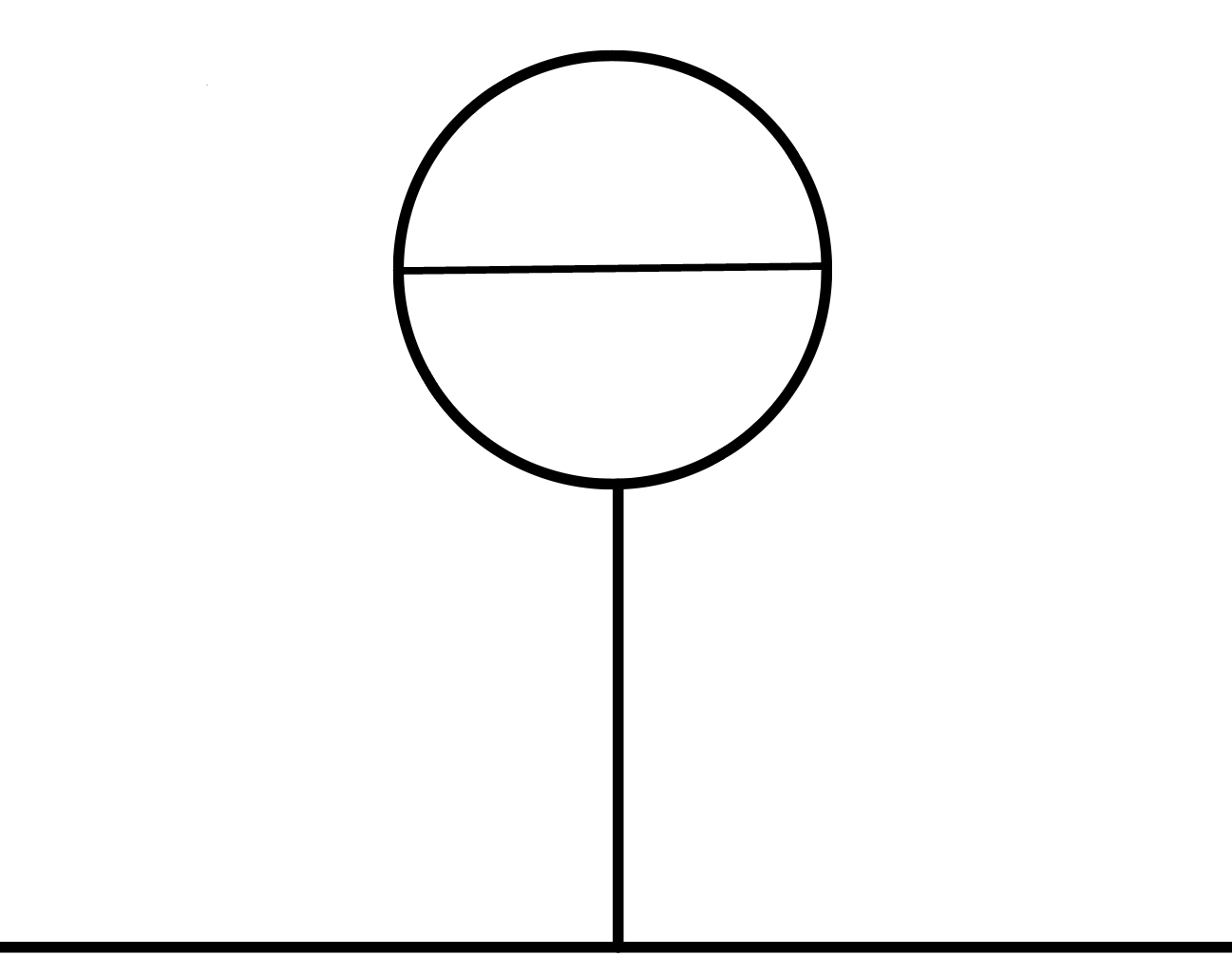}
\caption{Two-loop tadpole graph}
\label{fig:2l2p_tadpole_log_div_graph}
\end{subfigure}

\caption{All four Feynman diagrams.}
\label{fig:four_in_one}
\end{figure}

Thus, the classification of cones in global Schwinger space directly encodes 
the divergence structure of the corresponding Feynman diagrams. In particular, 
we find:  
\begin{itemize}
\item $\{\mathcal{C}_1,\mathcal{C}_2,\mathcal{C}_3,\mathcal{C}_4\}$ are dual to
$\log^2$–divergent graph. fig~\ref{fig:2l2p_log^2_graph}.
\item $\{\mathcal{C}_7,\mathcal{C}_9,\mathcal{C}_{13},\mathcal{C}_{14},\mathcal{C}_{17},\mathcal{C}_{19},\mathcal{C}_{22},\mathcal{C}_{24}\}$ 
are dual to $\log$ divergence graphs, fig~\ref{fig:2l2p_ring_graph}.
\item $\{\mathcal{C}_8,\mathcal{C}_{12},\mathcal{C}_{18},\mathcal{C}_{23}\}$ 
are dual to finite graphs,  fig~\ref{fig:2l2p_overlaping_graph}.
\item $\{\mathcal{C}_5,\mathcal{C}_6,\mathcal{C}_{10},\mathcal{C}_{11},\mathcal{C}_{15},\mathcal{C}_{16},\mathcal{C}_{20},\mathcal{C}_{21}\}$ 
are dual to graphs that include$\{ab\}$–tadpole curves,  fig~\ref{fig:2l2p_tadpole_log_div_graph}.
\end{itemize}

All the cones $\mathcal{C}_i$ are listed below and are grouped in classes having the same ${\cal U}$ polynomial.  For completeness, we also give the first  Symanzik polynomial 
corresponding to the graph ${\cal U}_{\textrm{FD}}(C_{i})$  which is dual to the cone.  These polynomials are written in terms of Schwinger parameters $a_{i}$ which are all positive.\footnote{Although we write ${\cal U}_{\textrm{FD}}(C_{i})$ in terms of $a_{1},\, \dots,\, a_{5}$, these variables simply label the edges of the corresponding graphs and hence ${\cal U}_{\textrm{FD}}(C_{i})$ and ${\cal U}_{\textrm{FD}}(C_{j})$ are independent polynomials $\forall\, i\, \neq\, j$.}\\

\underline{Cones over which ${\cal U} =\, t_{x}\, t_{y}$.}
\begin{enumerate}		
\item ${\cal C}_{1}$
\begin{align*}
&\text{g-vectors: } \{g_{1a}^{0}, g_{2a}^{0}, g_{2b}^{0}, g_{1b}^{1}, g_{12}^{a}\} \\
&\text{Inequalities: } 
\left( t_1 + t_w \geq 0 \right) \medcap 
\left( -t_1 - t_w - t_x \geq 0 \right) \medcap 
\left( -t_y - t_z \geq 0 \right) \medcap 
\left( t_z \geq 0 \right) \nonumber\\
&\hspace{23 mm}\medcap \left( t_w + t_x \geq 0 \right) \\
&\mathcal{U}_{\text{FD}}(C_{1})\, : a_1\,a_3 + a_2\,a_3 + a_1\,a_4 + a_2\,a_4
\end{align*}

\item  ${\cal C}_{2}$.
\begin{align*}
&\text{g-vectors: } \{g_{1a}^{0}, g_{1b}^{0}, g_{2b}^{0}, g_{2a}^{1}, g_{12}^{b}\} \\
&\text{Inequalities: } 
\left( t_w \geq 0 \right) \medcap 
\left( t_1 + t_z \geq 0 \right) \medcap 
\left( -t_1 - t_y - t_z \geq 0 \right) \medcap 
\left( -t_w - t_x \geq 0 \right) \nonumber\\
&\hspace{23 mm} \medcap 
\left( -t_z \geq 0 \right) \\
&\mathcal{U}_{\text{FD}}: a_1\,a_2 + a_1\,a_3 + a_2\,a_4 + a_3\,a_4
\end{align*}

\item$ {\cal C}_{3}$.
\begin{align*}
&\text{g-vectors: } \{g_{2a}^{0}, g_{1b}^{1}, g_{12}^{(b,1)}, g_{2b}^{1}, g_{1a}^{-1}\} \\
&\text{Inequalities: } 
\left( t_w \geq 0 \right) \medcap 
\left( t_1 + 2t_w + t_x + t_z \geq 0 \right) \medcap 
\left( t_w + t_x + t_y + t_z \geq 0 \right)  \nonumber\\
&\hspace{23 mm} \medcap 
\left( -t_1 - 2t_w - t_x - t_y - t_z \geq 0\right) \medcap 
\left( -t_w - t_x \geq 0 \right) \\
&\mathcal{U}: t_x\,t_y \\
&\mathcal{U}_{\text{FD}}: a_1\,a_2 + a_1\,a_4 + a_2\,a_5 + a_4\,a_5
\end{align*}

\item ${\cal C}_{4}$.
\begin{align*}
&\text{g-vectors: } \{g_{1b}^{0}, g_{1}^{1a}, g_{12}^{(a,1)}, g_{2a}^{1}, g_{2b}^{-1}\} \\
&\text{Inequalities: } 
\left( -t_y - t_z \geq 0 \right) \medcap 
\left( t_1 + t_w + t_y + 2t_z \geq 0 \right) \medcap 
\left( -t_w - t_z \geq 0 \right)\nonumber\\
&\hspace{23 mm} \medcap 
\left( -t_1 - t_w - t_x - t_y - 2t_z \geq 0 \right) \medcap 
\left( t_z \geq 0 \right) \\
&\mathcal{U}: t_x\,t_y \\
&\mathcal{U}_{\text{FD}}: a_1\,a_2 + a_1\,a_4 + a_2\,a_5 + a_4\,a_5
\end{align*}
\end{enumerate}	
\underline{Cones over which ${\cal U}\, =\, t_x\,t_y - t_z^2$.}	
\begin{enumerate}
\item ${\cal C}_{5}$.
\begin{align*}
&\text{g-vectors: } \{g_{1a}^{0}, g_{1b}^{0}, g_{1}^{1a}, g_{ab}, g_{11}^{ab}\} \\
&\text{Inequalities: } 
\left( t_w \geq 0 \right) \medcap 
\left( -t_y - t_z \geq 0 \right) \medcap 
\left( -t_w - t_x - t_z \geq 0 \right) \medcap 
\left( t_z \geq 0 \right)\nonumber\\
&\hspace{23 mm} \medcap 
\left( t_1 + t_w + t_x + t_y + 2t_z \geq 0 \right) \\
&\mathcal{U}_{\text{FD}}: a_1\,a_2 + a_2\,a_3 + a_1\,a_4 + a_2\,a_4 + a_3\,a_4
\end{align*}

\item ${\cal C}_{6}$.
\begin{align*}
&\text{g-vectors: } \{g_{2a}^{0}, g_{2b}^{0}, g_{2a}^{1}, g_{ab}, g_{22}^{ab}\} \\
&\text{Inequalities: } 
\left( t_w \geq 0 \right) \medcap 
\left( -t_y - t_z \geq 0 \right) \medcap 
\left( -t_w - t_x - t_z \geq 0 \right) \medcap 
\left( t_z \geq 0 \right) \nonumber\\
&\hspace{23 mm} \medcap 
\left( -t_1 - t_w \geq 0 \right) \\
&\mathcal{U}_{\text{FD}}: a_1\,a_2 + a_2\,a_3 + a_1\,a_4 + a_2\,a_4 + a_3\,a_4
\end{align*}

\item ${\cal C}_{7}$.
\begin{align*}
&\text{g-vectors: } \{g_{1a}^{0}, g_{2a}^{0}, g_{2b}^{0}, g_{2a}^{1}, g_{ab}\} \\
&\text{Inequalities: } 
\left( t_1 + t_w \geq 0 \right) \medcap 
\left( -t_1 \geq 0 \right) \medcap 
\left( -t_y - t_z \geq 0 \right) \medcap 
\left( -t_w - t_x - t_z \geq 0 \right) \nonumber\\
&\hspace{23 mm} \medcap 
\left( t_z \geq 0 \right) \\
&\mathcal{U}: t_x\,t_y - t_z^2 \\
&\mathcal{U}_{\text{FD}}: a_1\,a_3 + a_2\,a_3 + a_3\,a_4 + a_1\,a_5 + a_2\,a_5 + a_3\,a_5 + a_4\,a_5
\end{align*}

\item ${\cal C}_{8}$.
\begin{align*}
&\text{g-vectors: } \{g_{1a}^{0}, g_{1b}^{0}, g_{2b}^{0}, g_{2a}^{1}, g_{ab}\} \\
&\text{Inequalities: } 
\left( t_w \geq 0 \right) \medcap 
\left( t_1 \geq 0 \right) \medcap 
\left( -t_1 - t_y - t_z \geq 0 \right) \medcap 
\left( -t_w - t_x - t_z \geq 0 \right)\nonumber\\
&\hspace{23 mm} \medcap 
\left( t_z \geq 0 \right) \\
&\mathcal{U}_{\text{FD}}: a_1\,a_2 + a_1\,a_3 + a_2\,a_4 + a_3\,a_4 + a_1\,a_5 + a_2\,a_5 + a_3\,a_5 + a_4\,a_5
\end{align*}

\item ${\cal C}_{9}$.
\begin{align*}
&\text{g-vectors: } \{g_{1a}^{0}, g_{1b}^{0}, g_{1a}^{1}, g_{2a}^{1}, g_{ab}\} \\
&\text{Inequalities: } 
\left( t_w \geq 0 \right) \medcap 
\left( -t_y - t_z \geq 0 \right) \medcap 
\left( t_1 + t_y + t_z \geq 0 \right) \nonumber\\
&\hspace{23 mm} \medcap 
\left( -t_1 - t_w - t_x - t_y - 2t_z \geq 0 \right) \medcap 
\left( t_z \geq 0 \right) \\
&\mathcal{U}_{\text{FD}}: a_1\,a_2 + a_2\,a_3 + a_2\,a_4 + a_1\,a_5 + a_2\,a_5 + a_3\,a_5 + a_4\,a_5
\end{align*}
\end{enumerate}
\underline{Cones over which ${\cal U}\, =\, t_{x} t_{y} - ( t_{w} + t_{x} )^{2}$.}
\begin{enumerate}
\item ${\cal C}_{10}$.
\begin{align*}
&\text{g-vectors: } \{g_{1a}^{0}, g_{1b}^{0}, g_{1b}^{1}, g_{ab}, g_{11}^{ab}\} \\
&\text{Inequalities: } 
\left( t_w \geq 0 \right) \medcap 
\left( -t_y - t_z \geq 0 \right) \medcap 
\left( t_w + t_x + t_z \geq 0 \right) \medcap 
\left( -t_w - t_x \geq 0 \right)\nonumber\\
&\hspace{23 mm} \medcap 
\left( t_1 + t_y + t_z \geq 0 \right) \\
&\mathcal{U}_{\text{FD}}: a_1 a_2 + a_1 a_3 + a_1 a_4 + a_2 a_4 + a_3 a_4
\end{align*}

\item ${\cal C}_{11}$.
\begin{align*}
&\text{g-vectors: } \{g_{2a}^{0}, g_{2b}^{0}, g_{2b}^{1}, g_{ab}, g_{22}^{ab}\} \\
&\text{Inequalities: } 
\left( t_w \geq 0 \right) \medcap 
\left( -t_y - t_z \geq 0 \right) \medcap 
\left( t_w + t_x + t_z \geq 0 \right) \medcap 
\left( -t_w - t_x \geq 0 \right)\\
&\hspace{23mm} \medcap 
\left( -t_1 - 2t_w - t_x - t_z \geq 0 \right) \\
&\mathcal{U}_{\text{FD}}: a_1 a_2 + a_1 a_3 + a_1 a_4 + a_2 a_4 + a_3 a_4
\end{align*}

\item ${\cal C}_{12}$.
\begin{align*}
&\text{g-vectors: } \{g_{1a}^{0}, g_{2a}^{0}, g_{2b}^{0}, g_{1b}^{1}, g_{ab}\} \\
&\text{Inequalities: } 
\left( t_1 + t_w \geq 0 \right) \medcap 
\left( -t_1 \geq 0 \right) \medcap 
\left( -t_y - t_z \geq 0 \right) \medcap 
\left( t_w + t_x + t_z \geq 0 \right) \nonumber\\
&\hspace{23 mm} \medcap 
\left( -t_w - t_x \geq 0 \right) \\
&\mathcal{U}_{\text{FD}}: a_1 a_3 + a_2 a_3 + a_1 a_4 + a_2 a_4 + a_1 a_5 + a_2 a_5 + a_3 a_5 + a_4 a_5
\end{align*}

\item ${\cal C}_{13}$. 
\begin{align*}
&\text{g-vectors: } \{g_{1a}^{0}, g_{1b}^{0}, g_{2b}^{0}, g_{1b}^{1}, g_{ab}\} \\
&\text{Inequalities: } 
\left( t_w \geq 0 \right) \medcap 
\left( t_1 \geq 0 \right) \medcap 
\left( -t_1 - t_y - t_z \geq 0 \right) \medcap 
\left( t_w + t_x + t_z \geq 0 \right) \nonumber\\
&\hspace{23 mm} \medcap 
\left( -t_w - t_x \geq 0 \right) \\
&\mathcal{U}_{\text{FD}}: a_1 a_2 + a_1 a_3 + a_1 a_4 + a_1 a_5 + a_2 a_5 + a_3 a_5 + a_4 a_5
\end{align*}

\item ${\cal C}_{14}$. 
\begin{align*}
&\text{g-vectors: } \{g_{2a}^{0}, g_{2b}^{0}, g_{1b}^{1}, g_{2b}^{1}, g_{ab}\} \\
&\text{Inequalities: } 
\left( t_w \geq 0 \right) \medcap 
\left( -t_y - t_z \geq 0 \right) \medcap 
\left( t_1 + 2t_w + t_x + t_z \geq 0 \right) \medcap 
\left( -t_1 - t_w \geq 0 \right)\nonumber\\
&\hspace{23 mm} \medcap 
\left( -t_w - t_x \geq 0 \right) \\
&\mathcal{U}_{\text{FD}}: a_1 a_2 + a_1 a_3 + a_1 a_4 + a_1 a_5 + a_2 a_5 + a_3 a_5 + a_4 a_5
\end{align*}
\end{enumerate}

\underline{ Cones over which ${\cal U}\, =\, (t_w + t_x + t_z)(t_w + t_y + t_z) - (t_w + t_x + t_y + t_z)(2t_w + t_x + t_y + 2t_z)$.}
\begin{enumerate}
\item ${\cal C}_{15}$.
\begin{align*}
&\text{g-vectors: } \{g_{1a}^{0}, g_{1b}^{1}, g_{ab}, g_{11}^{ab}, g_{1a}^{-1}\} \\
&\text{Inequalities: } 
\left( t_w \geq 0 \right) \medcap 
\left( t_w + t_x + t_z \geq 0 \right) \medcap 
\left( -t_w - t_x - t_y - t_z \geq 0 \right) \medcap 
\left( t_1 \geq 0 \right)\nonumber\\
&\hspace{23 mm} \medcap 
\left( t_y + t_z \geq 0 \right) \\
&\mathcal{U}_{\text{FD}}: a_1 a_2 + a_1 a_3 + a_2 a_3 + a_2 a_5 + a_3 a_5
\end{align*}

\item ${\cal C}_{16}$.
\begin{align*}
&\text{g-vectors: } \{g_{2a}^{0}, g_{2b}^{1}, g_{ab}, g_{22}^{ab}, g_{2a}^{-1}\} \\
&\text{Inequalities: } 
\left( t_w \geq 0 \right) \medcap 
\left( t_w + t_x + t_z \geq 0 \right) \medcap 
\left( -t_w - t_x - t_y - t_z \geq 0 \right)\nonumber\\
&\hspace{23 mm} \medcap 
\left( -t_1 - 2t_w - t_x - t_y - 2t_z \geq 0 \right) \medcap 
\left( t_y + t_z \geq 0 \right) \\
&\mathcal{U}_{\text{FD}}: a_1 a_2 + a_1 a_3 + a_2 a_3 + a_2 a_5 + a_3 a_5
\end{align*}

\item ${\cal C}_{17}$. 
\begin{align*}
&\text{g-vectors: } \{g_{1a}^{0}, g_{2a}^{0}, g_{1b}^{1}, g_{ab}, g_{1a}^{-1}\} \\
&\text{Inequalities: } 
\left( t_1 + t_w \geq 0 \right) \medcap 
\left( -t_1 \geq 0 \right) \medcap 
\left( t_w + t_x + t_z \geq 0 \right) \medcap 
\left( -t_w - t_x - t_y - t_z \geq 0 \right) \nonumber\\
&\hspace{23 mm} \medcap 
\left( t_y + t_z \geq 0 \right) \\
&\mathcal{U}_{\text{FD}}: a_1 a_3 + a_2 a_3 + a_1 a_4 + a_2 a_4 + a_3 a_4 + a_3 a_5 + a_4 a_5
\end{align*}

\item ${\cal C}_{18}$.
\begin{align*}
&\text{g-vectors: } \{g_{2a}^{0}, g_{1b}^{1}, g_{2b}^{1}, g_{ab}, g_{1a}^{-1}\} \\
&\text{Inequalities: } 
\left( t_w \geq 0 \right) \medcap 
\left( t_1 + 2t_w + t_x + t_z \geq 0 \right) \medcap 
\left( -t_1 - t_w \geq 0 \right)\nonumber\\
&\hspace{23 mm} \medcap 
\left( -t_w - t_x - t_y - t_z \geq 0 \right) \medcap 
\left( t_y + t_z \geq 0 \right) \\
&\mathcal{U}_{\text{FD}}: a_1 a_2 + a_1 a_3 + a_1 a_4 + a_2 a_4 + a_3 a_4 + a_2 a_5 + a_3 a_5 + a_4 a_5
\end{align*}

\item ${\cal C}_{19}$.
\begin{align*}
&\text{g-vectors: } \{g_{2a}^{0}, g_{2b}^{1}, g_{ab}, g_{1a}^{-1}, g_{2a}^{-1}\} \\
&\text{Inequalities: } 
\left( t_w \geq 0 \right)\medcap 
\left( t_w + t_x + t_z \geq 0 \right) \medcap 
\left( -t_w - t_x - t_y - t_z \geq 0 \right)\nonumber\\
&\hspace{23 mm} \medcap 
\left( t_1 + 2t_w + t_x + t_y + 2t_z \geq 0 \right) \medcap 
\left( -t_1 - 2t_w - t_x - t_z \geq 0 \right) \\
&\mathcal{U}_{\text{FD}}: a_1 a_2 + a_1 a_3 + a_2 a_3 + a_2 a_4 + a_3 a_4 + a_2 a_5 + a_3 a_5
\end{align*}
\end{enumerate}

\underline{Cones over which ${\cal U}\, =\, t_x t_y - (t_w + t_z)^2$.}
\begin{enumerate}
\item ${\cal C}_{20}$.
\begin{align*}
&\text{g-vectors: } \{g_{1b}^{0}, g_{1a}^{1}, g_{ab}, g_{11}^{ab}, g_{1b}^{-1}\} \\
&\text{Inequalities: } 
\left( -t_y - t_z \geq 0 \right) \medcap 
\left( -t_w - t_x - t_z \geq 0 \right) \medcap 
\left( t_w + t_z \geq 0 \right) \nonumber\\
&\hspace{23 mm} \medcap 
\left( t_1 + 2t_w + t_x + t_y + 2t_z \geq 0 \right) \medcap 
\left( -t_w \geq 0 \right) \\
&\mathcal{U}_{\text{FD}}: a_1 a_2 + a_1 a_3 + a_2 a_3 + a_2 a_5 + a_3 a_5
\end{align*}

\item ${\cal C}_{21}$. 
\begin{align*}
&\text{g-vectors: } \{g_{2b}^{0}, g_{2a}^{1}, g_{ab}, g_{22}^{ab}, g_{2b}^{-1}\} \\
&\text{Inequalities: } 
\left( -t_y - t_z \geq 0 \right) \medcap 
\left( -t_w - t_x - t_z \geq 0 \right) \medcap 
\left( t_w + t_z \geq 0 \right) \medcap 
\left( -t_1 \geq 0 \right)\nonumber\\
&\hspace{23 mm} \medcap 
\left( -t_w \geq 0 \right) \\
&\mathcal{U}_{\text{FD}}: a_1 a_2 + a_1 a_3 + a_2 a_3 + a_2 a_5 + a_3 a_5
\end{align*}

\item ${\cal C}_{22}$.
\begin{align*}
&\text{g-vectors: } \{g_{1b}^{0}, g_{2b}^{0}, g_{2a}^{1}, g_{ab}, g_{2b}^{-1}\} \\
&\text{Inequalities: } 
\left( t_1 \geq 0 \right) \medcap 
\left( -t_1 - t_y - t_z \geq 0 \right) \medcap 
\left( -t_w - t_x - t_z \geq 0 \right) \medcap 
\left( t_w + t_z \geq 0 \right)\nonumber\\
&\hspace{23 mm} \medcap 
\left( -t_w \geq 0 \right) \\
&\mathcal{U}_{\text{FD}}: a_1 a_3 + a_2 a_3 + a_1 a_4 + a_2 a_4 + a_3 a_4 + a_3 a_5 + a_4 a_5
\end{align*}

\item ${\cal C}_{23}$.
\begin{align*}
&\text{g-vectors: } \{g_{1b}^{0}, g_{1a}^{1}, g_{2a}^{1}, g_{ab}, g_{2b}^{-1}\} \\
&\text{Inequalities: } 
\left( -t_y - t_z \geq 0 \right) \medcap  
\left( t_1 + t_y + t_z \geq 0 \right) \medcap 
\left( -t_1 - t_w - t_x - t_y - 2t_z \geq 0 \right) \nonumber\\
&\hspace{23 mm} \medcap 
\left( t_w + t_z \geq 0 \right) \medcap 
\left( -t_w \geq 0 \right) \\
&\mathcal{U}_{\text{FD}}: a_1 a_2 + a_1 a_3 + a_1 a_4 + a_2 a_4 + a_3 a_4 + a_2 a_5 + a_3 a_5 + a_4 a_5
\end{align*}

\item ${\cal C}_{24}$. 
\begin{align*}
&\text{g-vectors: } \{g_{1b}^{0}, g_{1a}^{1}, g_{ab}, g_{1b}^{-1}, g_{2b}^{-1}\} \\
&\text{Inequalities: } 
\left( -t_y - t_z \geq 0 \right) \medcap 
\left( -t_w - t_x - t_z \geq 0 \right) \medcap 
\left( t_w + t_z \geq 0 \right) \nonumber\\
&\hspace{23 mm} \medcap 
\left( t_1 + t_w + t_x + t_y + 2t_z \geq 0 \right) \medcap 
\left( -t_1 - 2t_w - t_x - t_y - 2t_z \geq 0 \right) \\
&\mathcal{U}_{\text{FD}}: a_1 a_2 + a_1 a_3 + a_2 a_3 + a_2 a_4 + a_3 a_4 + a_2 a_5 + a_3 a_5
\end{align*}
\end{enumerate}

%%%%%%%%%%%%%%%%%%%%%%%%%%%%%%%%
\subsection{ Derivation of ${\cal R}_{1}, \dots, {\cal R}_{5}$ in the case of $\Sigma_{2,2}$ }\label{sec:Decomposition_of_U}
Now that we have computed all the cones, we group together those cones that give rise to the same form of the first Symanzik polynomial $\mathcal{U}$, and take the union of their corresponding regions.  

As a first case, consider all cones in which $\mathcal{U}$ reduces to the form  
\begin{align}
\mathcal{U}\big|_{\mathcal{R}_1}=	t_x t_y. 
\end{align}
The region $\mathcal{R}_1$ is the union of four regions, corresponding to the four $\log^2$-divergent cones $\mathcal{C}_1-\mathcal{C}_4$:  
%\begin{align}
%\mathcal{R}_1 &= (t_y < 0) \, \medcap  \, (t_z < 0) \, \medcap \, (t_w > 0) \, \medcap \, (t_x < -t_w) 
%\, \medcap \, (-t_z < t_1 < -t_y - t_z)\nonumber \\
%&\, \medcup (t_y < 0) \, \medcap \, (0 < t_z < -t_y) \, \medcap \, (t_w < -t_z) \ \medcap \ (t_x < 0) \nonumber \\ 
%&\  \medcap \, (-t_w - t_y - 2t_z < t_1 < -t_w - t_x - t_y - 2t_z)
%\ \medcup (t_y < 0) \ \medcap \ (0 < t_z < -t_y)  \nonumber\\
%&\  \medcap \, 
%\bigl((t_w > 0) \ \medcap \ (-t_w < t_x < 0) \, \medcap \, (-t_w < t_1 < -t_w - t_x)\bigr) \nonumber\\
%&\ \medcup (t_y < 0) \, \medcap \, (t_z > -t_y)  
%\, \medcap \, (t_w > 0) \, \medcap \, 
%(-t_w - t_y - t_z < t_x < -t_w)\nonumber\\
%&\hspace*{1.6in}\medcap \, (-2t_w - t_x - t_z < t_1 < -2t_w - t_x - t_y - t_z).
%\end{align}
\begin{align}
	\mathcal{R}_1 &= \left\{(t_y < 0) \ \cap\ (t_z < 0) \ \cap\ (t_w > 0) \ \cap\ (t_x < -t_w)
	\ \cap\ (-t_z < t_1 < -t_y - t_z)\right\}\nonumber \\
	&\ \cup\left\{ (t_y < 0) \ \cap\ (0 < t_z < -t_y) \ \cap\ (t_w < -t_z) \ \cap\ (t_x < 0)\right.\nonumber\\
	&\left.\hspace*{1.6in}
	\ \cap\ (-t_w - t_y - 2t_z < t_1 < -t_w - t_x - t_y - 2t_z)\right\} \nonumber\\
	&\ \cup \left\{(t_y < 0) \ \cap\ (0 < t_z < -t_y) \ \cap\
	\bigl((t_w > 0) \ \cap\ (-t_w < t_x < 0) \ \cap\ (-t_w < t_1 < -t_w - t_x)\bigr)\right\} \nonumber\\
	&\ \cup \left\{(t_y < 0) \ \cap\ (t_z > -t_y) \ \cap\ (t_w > 0) \ \cap\
	(-t_w - t_y - t_z < t_x < -t_w)\right.\nonumber\\
	&\left.\hspace*{1.6in}\cap\ (-2t_w - t_x - t_z < t_1 < -2t_w - t_x - t_y - t_z)\right\}.
\end{align}

Next, consider the case where the first Symanzik polynomial takes the form  
\begin{align}
\mathcal{U}\big|_{\mathcal{R}_2}=	t_x t_y - t_z^2. 
\end{align} 
The region $\mathcal{R}_2$ consists of four $\log$-divergent cones $\mathcal{C}_5,\,\mathcal{C}_6,\,\mathcal{C}_7,\,\mathcal{C}_9$ and one convergent cone $\mathcal{C}_8$. The resulting region simplifies to  
\begin{align}
\mathcal{R}_2 
&= \Bigl\{\, (0 < t_{z} < - t_{y}) \;\medcap\; (t_{w} > 0) \;\medcap\; (t_{x} < - t_{w} - t_{z}) \,\Bigr\}.
\end{align}

Now consider the case where $\mathcal{U}$ takes the form  
\begin{align}
\mathcal{U}\big|_{\mathcal{R}_3}=- t_w^2 - 2 t_w t_x + t_x (t_y - t_x). 
\end{align}
The region $\mathcal{R}_3$ consists of four $\log$-divergent cones $\mathcal{C}_{10},\,\mathcal{C}_{11},\,\mathcal{C}_{13},\,\mathcal{C}_{14}$ and one convergent cone $\mathcal{C}_{12}$. It simplifies to  
\begin{align}
\mathcal{R}_3 
&= \Bigl\{\, (0 < t_{z} < -t_{y}) \;\medcap\; (-t_{w} - t_{z} < t_{x} < -t_{w}) \;\medcap\; (t_{w} > 0)\,\Bigr\}.
\end{align}

Next, consider the form  
\begin{align}
\mathcal{U}\big|_{\mathcal{R}_4}=(t_w + t_x + t_z)(t_w + t_y + t_z) 
- (t_w + t_x + t_y + t_z)(2 t_w + t_x + t_y + 2 t_z).
\end{align}
The region $\mathcal{R}_4$ consists of four $\log$-divergent cones $\mathcal{C}_{15},\,\mathcal{C}_{16},\,\mathcal{C}_{17},\,\mathcal{C}_{19}$ and one convergent cone $\mathcal{C}_{18}$. The simplified description is  
\begin{align}
\mathcal{R}_4 
&= \Bigl\{\, (t_z > 0) \;\medcap\; (t_w > 0) \;\medcap\; (-t_w - t_z < t_x < -t_w) \;\medcap\; (-t_z < t_y < -t_w - t_x - t_z) \,\Bigr\}.
\end{align}

Finally, consider the form  
\begin{align}
\mathcal{U}\big|_{\mathcal{R}_5}=- t_w^2 - 2 t_w t_z + t_x t_y - t_z^2.
\end{align}
The region $\mathcal{R}_5$ consists of four $\log$-divergent cones $\mathcal{C}_{20},\,\mathcal{C}_{21},\,\mathcal{C}_{22},\,\mathcal{C}_{24}$ and one convergent cone $\mathcal{C}_{23}$. The simplified form is  
\begin{align}
\mathcal{R}_5 
&= \Bigl\{\, (t_z > 0) \;\medcap\; (-t_z < t_w < 0) \;\medcap\; (t_x < -t_w - t_z) \;\medcap\; (t_y < -t_z) \,\Bigr\}.
\end{align}

In order to isolate the regions which lead to divergent contribution to the curve integrals, it will be convenient to use the following change of tropical parameters in each region. 
\begin{align}\label{r2tr4inc}
{\cal R}_{2}\,& =\, ( t_{x}, t_{y}, t_{w}, t_{z}, t_{1})\, \rightarrow\, ( s_{4,1}  =  \widetilde{t}_{w} + t_{z} < 0,\,  0 , t_{y} < 0,\, t_{w} > 0,\,  t_{y}  <  \widetilde{t}_{z} < 0 ,\,  t_{1}\, \in\, {\bf R}\, )\nonumber\\
{\cal R}_{3}\, &=\, (t_{x}, t_{y}, t_{w}, t_{z}, t_{1} )\, \rightarrow\, ( t_{y} < t_{x} < 0,\,  t_{y} < 0,\,  t_{x} < \widetilde{t}_{w} < 0,\,  (\, s_{4,1}\, >\, 0)\, \medcap\, ( \widetilde{t}_{w}\, <\, s_{4,1}\, <\, \widetilde{t}_{w} - t_{y} ),\, t_{1}\, \in\, {\bf R})\nonumber\\
&=\, (t_{x} < 0,\,  t_{y} < 0,\,  t_{x} < \widetilde{t}_{w} < 0,\, ( 0  <\, s_{4,1}\, <\, \widetilde{t}_{w} - t_{y} ),\, t_{1}\, \in\, {\bf R})\nonumber\\
{\cal R}_{4} &=  (\, t_{x}, t_{y}, t_{w}, t_{z}, t_{1}\, )\, \rightarrow\, (\, 0 <  s_{4,1} < - t_{y},  t_{y} <\, 0,\, 0 < t_{w} < - t_{x},\, \widetilde{t}_{z} > 0,\, t_{1} \in {\bf R}\, )\nonumber\\
{\cal R}_{5}\, &=\,  (\, t_{x}, t_{y}, t_{w}, t_{z}, t_{1}\, )\, \rightarrow\, (\, t_{x} < 0,\, t_{y} < 0,\, s_{4,1} - t_{y} < \widetilde{t}_{w} < t_{x},  t_{x} < s_{4,1} < 0,\,  t_{1} \in {\bf R})\nonumber\\
&=\, (\, t_{y} < t_{x} < 0,\, s_{4,1} - t_{y} < \widetilde{t}_{w} < t_{x},  t_{x} < s_{4,1} < 0,\,  t_{1} \in {\bf R})
\end{align}
where in the last equation we have used the fact that $t_{x} < - t_{z}  + \vert t_{w} \vert < 0$ and $t_{y} < - t_{z}$.
In eqn.(\ref{r2tr4inc}) the change of co-ordinates in each region is defined in the same order inside a set as the original co-ordinates. For e.g, in ${\cal R}_{2}$, 
\begin{align}
t_{x}, t_{y}, t_{w}, t_{z}\, \rightarrow\, s_{4,1}, t_{y}, t_{w}, \widetilde{t}_{z}
\end{align}
is the invertible co-ordinate transformation. The ordering of parameters on the right hand side informs us about the precise (linear) diffeomorphism that we use in that specific region.\\
We note that the limit $t_{x} \rightarrow\, 0, t_{w} + t_{x} \rightarrow\, 0$ is not a zero of ${\cal U}\vert_{{\cal R}_{4}}$. As a result the only zero of ${\cal U}\vert_{{\cal R}_{4}}$ is when 
$t_{y}\, \rightarrow\, 0$ as that leads to $s_{4,1}\, \rightarrow\, 0$ and in this limit ${\cal U}\vert_{{\cal R}_{4}}\, \sim O(t_{y})$. 

%%%%%%%%%%%%%%%%%%%%%%%%%%%%%%%%
\subsection{On overall normalization of the two-loop two-point amplitude}\label{ricones}
There are only \emph{six} Feynman diagrams for the two-loop two-point case, but  $24$ cones. This means that each Feynman diagram must correspond to \emph{four} cones, which is exactly what we observe in section~\ref{sec:Cones_for_2-point_2-loop}. For instance, in the case of the $\log^2$ Feynman diagram, the four cones are
\begin{align}
&\mathcal{C}^1=\{\, g^{0}_{1a},\; g^{0}_{2a},\; g^{0}_{2b},\; g^{1}_{1b},\; g^{a}_{12} \,\} \\[6pt]
&\mathcal{C}^2=\{\, g^{0}_{1a},\; g^{0}_{1b},\; g^{0}_{2b},\; g^{1}_{2a},\; g^{b}_{12} \,\} \\[6pt]
&\mathcal{C}^3=\{\,g^{1}_{2b},\; g^{1}_{1b},\; g^{-1}_{1a},\; g^{0}_{2a},\;  g^{(b,1)}_{12}\,\} \\[6pt]
&\mathcal{C}^4=\{\,g^{1}_{2a},\; g^{1}_{1a},\; g^{-1}_{2b},\;  g^{0}_{1b},\; g^{(a,1)}_{12} \,\}
\end{align}
The curve integral can be mapped to these cones through
\begin{align*}
\vec{X} = a_i\,\vec{g}_i, 
\qquad a_i > 0,
\end{align*}
where $\vec{X}$ is expressed in the basis 
$(e_{X},\, e_{W},\, e_{Z},\, e_{Y},\, e_{E_1})$, with 
$\{X, W, Z, Y, E_1\}$ denoting the Schwinger coordinates.  
The cones that corresponds to the $\log^2$ graphs are $\{\mathcal{C}_1,\mathcal{C}_2,\mathcal{C}_3,\mathcal{C}_4\}$ these cones reduce to the 
\begin{align}
	&I_{\mathcal{C}_1}=\int_{\mathbb{R}^{5}}\prod_{j=1}^5 da^{(1)}_{j}\;
	\frac{a^{(1)}_{123}}{a^{(1)}_{1234}}\,
	\frac{1}{\mathcal{U}^{(1)}}\,
	\exp\!\left[\,\frac{\mathcal{F}^{(1)}}{\mathcal{U}^{(1)}}-\mathcal{Z}^{(1)}\right]
	\\[1em]
	&I_{\mathcal{C}_2}=\int_{\mathbb{R}^{5}} \prod_{j=1}^5da^{(2)}_{j}\;
	\frac{a^{(2)}_{123}}{a^{(2)}_{1234}}\,
	\frac{1}{\mathcal{U}^{(2)}}\,
	\exp\!\left[\,\frac{\mathcal{F}^{(2)}}{\mathcal{U}^{(2)}}-\mathcal{Z}^{(2)}\right]
	\\[1em]
	&I_{\mathcal{C}_3}=\int_{\mathbb{R}^{5}}\prod_{j=1}^5 da^{(3)}_{j}\;
	\frac{a^{(3)}_{4}}{a^{(3)}_{1234}}\,
	\frac{1}{\mathcal{U}^{(3)}}\,
	\exp\!\left[\,\frac{\mathcal{F}^{(3)}}{\mathcal{U}^{(3)}}-\mathcal{Z}^{(3)}\right]
	\\[1em]
	&I_{\mathcal{C}_4}=\int_{\mathbb{R}^{5}}\prod_{j=1}^5 da^{(4)}_{j}\;
	\frac{a^{(4)}_{4}}{a^{(4)}_{1234}}\,
	\frac{1}{\mathcal{U}^{(4)}}\,
	\exp\!\left[\,\frac{\mathcal{F}^{(4)}}{\mathcal{U}^{(4)}}-\mathcal{Z}^{(4)}\right]
\end{align}

\noindent
where for each $i=1,2,3,4$,
\begin{align}
	\frac{\mathcal{F}^{(i)}}{\mathcal{U}^{(i)}} - \mathcal{Z}^{(i)} &=
	- p_{1}^{2}\,\frac{a^{(i)}_{1}a^{(i)}_{2}}{a^{(i)}_{12}}
	- p_{1}^{2}\,\frac{a^{(i)}_{3}a^{(i)}_{4}}{a^{(i)}_{34}}
	- m^{2}\,a^{(i)}_{1234}
	- (p_{1}^{2}+m^{2})\,a^{(i)}_{5}\, ,\\
	\mathcal{U}^{(i)}&=a^{(i)}_{12}a^{(i)}_{34}\, .
\end{align}
are Symanzik polynomials for each cone and $a_{i\dots j}$ is short hand for $a_i+\dots+a_j$. Now, if we sum over all four $\log^2$ cones, identifying $a^{(i)}_j=a_j \,\forall\, i$  , we obtain
\begin{align}
I_{\mathcal{C}_1}+I_{\mathcal{C}_2}+I_{\mathcal{C}_3}+I_{\mathcal{C}_4}=2\int_{\mathbb{R}^{5}}\prod_{j=1}^5 da_{j}\;
\frac{1}{\mathcal{U}}\,
\exp\!\left[\,\frac{\mathcal{F}}{\mathcal{U}}-\mathcal{Z}
\right]\, ,
\label{eq:2l2p_log^2_para_form}
\end{align}
where 
\begin{align}
	\frac{\mathcal{F}}{\mathcal{U}} - \mathcal{Z} &=
	- p_{1}^{2}\,\frac{a_{1}a_{2}}{a_{12}}
	- p_{1}^{2}\,\frac{a_{3}a_{4}}{a_{34}}
	- m^{2}\,a_{1234}
	- (p_{1}^{2}+m^{2})\,a_{5},\\
	\mathcal{U}&=a_{12} \, a_{34} \, . 
\end{align}
are the Symanzik variables for a two-loop $\log^2$ graph in $4$ dimensions. Note that there a factor of \emph{two} in equation~\ref{eq:2l2p_log^2_para_form}. Hence, the curve integral is in fact equal to \emph{twice} the amplitude, as explicitly verified for the $\log^2$ diagram.

%%%%%%%%%%%%%%%%%%%%%%%%%%%%
\bibliographystyle{bibstyle}
\bibliography{curve-int-renorm}

\end{document}